\newif\ifsubmission
\newif\iffull
\else\usepackage{amsthm}\fi
\definecolor{darkgreen}{rgb}{0,0.5,0}
\newif\ifnotes
\newtheorem{theorem}{Theorem}[section]
\newtheorem{importedtheorem}[theorem]{Imported Theorem}
\newtheorem{claim}[theorem]{Claim}
\newtheorem{fact}[theorem]{Fact}
\newtheorem{lemma}[theorem]{Lemma}
\newtheorem{definition}[theorem]{Definition}
\newtheorem{remark}[theorem]{Remark}
\theoremstyle{remark}
\Crefname{importedtheorem}{Imported Theorem}{Imported Theorems}
\Crefname{theorem}{Theorem}{Theorems}
\Crefname{proposition}{Proposition}{Propositions}
\Crefname{claim}{Claim}{Claims}
\Crefname{fact}{Fact}{Fact}
\Crefname{lemma}{Lemma}{Lemmas}
\Crefname{conjecture}{Conjecture}{Conjectures}
\Crefname{corollary}{Corollary}{Corollaries}
\Crefname{construction}{Construction}{Constructions}
\Crefname{property}{Property}{Properties}
\Crefname{definition}{Definition}{Definitions}
\Crefname{assumption}{Assumption}{Assumptions}
\Crefname{notation}{Notation}{Notations}
\Crefname{question}{Question}{Questions}
\Crefname{remark}{Remark}{Remarks}
\Crefname{comment}{Comment}{Comments}
\newcommand{\secp}{\lambda}
\def\cA{{\cal A}}
\def\cD{{\cal D}}
\def\cF{{\cal F}}
\def\cH{{\cal H}}
\def\cO{{\cal O}}
\def\cQ{{\cal Q}}
\def\cR{{\cal R}}
\def\cS{{\cal S}}
\def\regA{{\cal A}}
\def\regB{{\cal B}}
\def\regC{{\cal C}}
\def\regG{{\cal G}}
\def\regK{{\cal K}}
\def\regQ{{\cal Q}}
\def\regR{{\cal R}}
\def\regS{{\cal S}}
\def\regU{{\cal U}}
\def\regV{{\cal V}}
\def\regX{{\cal X}}
\def\regY{{\cal Y}}
\def\regZ{{\cal Z}}
\def\cCK{{\mathsf C}{\mathsf K}}
\def\cPK{{\mathsf P}{\mathsf K}}
\def\bCK{{\mathbf C}{\mathbf K}}
\def\sA{{\mathsf A}}
\def\sB{{\mathsf B}}
\def\sC{{\mathsf C}}
\def\sD{{\mathsf D}}
\def\sH{{\mathsf H}}
\def\sP{{\mathsf P}}
\def\sS{{\mathsf S}}
\def\sU{{\mathsf U}}
\def\sV{{\mathsf V}}
\def\bbC{{\mathbb C}}
\def\bbF{{\mathbb F}}
\def\bbI{{\mathbb I}}
\def\bbN{{\mathbb N}}
\def\bbR{{\mathbb R}}
\newcommand{\bc}{\mathbf{c}}
\newcommand{\bB}{\mathbf{B}}
\newcommand{\bdk}{\mathbf{dk}}
\newcommand{\bck}{\mathbf{ck}}
\def\poly{{\rm poly}}
\def\negl{{\rm negl}}
\newcommand{\pk}{\mathsf{pk}}
\newcommand{\sk}{\mathsf{sk}}
\newcommand{\KeyGen}{\mathsf{KeyGen}}
\newcommand{\Setup}{\mathsf{Setup}}
\newcommand{\Prove}{\mathsf{Prove}}
\newcommand{\Com}{\mathsf{Com}}
\newcommand{\Sim}{\mathsf{Sim}}
\newcommand{\Sign}{\mathsf{Sign}}
\newcommand{\Enc}{\mathsf{Enc}}
\newcommand{\Dec}{\mathsf{Dec}}
\newcommand{\Verify}{\mathsf{Verify}}
\newcommand{\ct}{\mathsf{ct}}
\newcommand{\Eval}{\mathsf{Eval}}
\DeclareMathOperator*{\expectation}{\mathbb{E}}
\newcommand{\E}{\expectation}
\newcommand{\union}{\cup}
\newcommand{\PRF}{\mathsf{PRF}}
\newcommand{\Gen}{\mathsf{Gen}}
\newcommand{\pp}{\mathsf{pp}}
\newcommand{\sparam}{\mathsf{sp}}
\newcommand{\vk}{\mathsf{vk}}
\newcommand{\aux}{\mathsf{aux}}
\newcommand{\ketbra}[2]{\mathinner{|{#1}\rangle\,\langle{#2}|}}
\newcommand{\proref}[1]{Protocol~\protect\ref{#1}}
\newenvironment{boxfig}[2]{\begin{figure}[#1]\fbox{
    \begin{minipage}{\linewidth}
    \vspace{0.2em}\makebox[0.025\linewidth]{}    \begin{minipage}{0.95\linewidth}{{#2 }}
    \end{minipage}\vspace{0.2em}\end{minipage}}}{\end{figure}}
\newcommand{\pprotocol}[4]{
\begin{boxfig}{h!}{
\begin{center}
\textbf{#1}
\end{center}
    {\footnotesize#4}
\vspace{0.2em} } \caption{\label{#3} #2}
\end{boxfig}
}
\newcommand{\protocol}[4]{
\pprotocol{#1}{#2}{#3}{#4} }
\newcommand{\crs}{\mathsf{crs}}
\newcommand{\out}{\mathsf{out}}
\newcommand{\Ver}{\mathsf{Ver}}
\newcommand{\Open}{\mathsf{Open}}
\newcommand{\state}{\mathsf{state}}
\newcommand{\parl}{\mathsf{parl}}
\newcommand{\QFHE}{\mathsf{QFHE}}
\newcommand{\TD}{\mathsf{TD}}
\newcommand{\Exp}{\mathsf{EXP}}
\newcommand{\Tr}{\mathsf{Tr}}
\newcommand{\msk}{\mathsf{msk}}
\newcommand{\Obf}{\mathsf{Obf}}
\newcommand{\Out}{\mathsf{Out}}
\newcommand{\Invert}{\mathsf{Invert}}
\newcommand{\Ch}{\mathsf{Check}}
\newcommand{\Valid}{\mathsf{Valid}}
\newcommand{\IsValid}{\mathsf{IsValid}}
\newcommand{\CV}{\mathsf{CV}}
\newcommand{\QV}{\mathsf{QV}}
\newcommand{\PV}{\mathsf{PV}}
\newcommand{\Prep}{\mathsf{Prep}}
\newcommand{\OpenZ}{\mathsf{OpenZ}}
\newcommand{\OpenX}{\mathsf{OpenX}}
\newcommand{\DecZ}{\mathsf{DecZ}}
\newcommand{\DecX}{\mathsf{DecX}}
\newcommand{\ck}{\mathsf{ck}}
\newcommand{\dk}{\mathsf{dk}}
\newcommand{\inside}{\mathsf{in}}
\newcommand{\Tok}{\mathsf{Tok}}
\newcommand{\PFC}{\mathsf{PFC}}
\newcommand{\TCF}{\mathsf{TCF}}
\newcommand{\nonnegl}{\mathsf{non}\text{-}\mathsf{negl}}
\newcommand{\Amplify}{\mathsf{Amplify}}
\newcommand{\err}{\mathsf{err}}
\newcommand{\Maj}{\mathsf{Maj}}
\newcommand{\TestOut}{\mathsf{TestRoundOutputs}}
\newcommand{\CObf}{\mathsf{CObf}}
\newcommand{\CEval}{\mathsf{CEval}}
\newcommand{\QObf}{\mathsf{QObf}}
\newcommand{\QEval}{\mathsf{QEval}}
\newcommand{\Meas}{\mathsf{Meas}}
\newcommand{\CVGen}{\mathsf{CVGen}}
\newcommand{\CCR}{\mathsf{CCR}}
\newcommand{\co}{\mathsf{co}}
\newcommand{\PFCI}{\mathsf{PFC}\text{-}\mathsf{I}}
\newcommand{\PFCII}{\mathsf{PFC}\text{-}\mathsf{II}}
\newcommand{\MM}{\mathsf{MM}}
\newcommand{\FE}{\mathsf{FE}}
\newcommand{\CB}{\mathsf{CB}}
\newcommand{\Combine}{\mathsf{Combine}}
\newcommand{\VerGen}{\mathsf{VerGen}}
\begin{document}
\def\doi#1{\url{https://doi.org/#1}}

\title{Obfuscation of Pseudo-Deterministic Quantum Circuits}
\author[$\star$]{James Bartusek\thanks{Part of this work was done while visiting NTT Social Informatics Laboratories for an internship.}}
\author[$\ddagger$]{Fuyuki Kitagawa}
\author[$\ddagger$]{Ryo Nishimaki}
\author[$\ddagger$]{Takashi Yamakawa}
\affil[$\star$]{{\small UC Berkeley, USA}\authorcr{\small bartusek.james@gmail.com}}
\affil[$\ddagger$]{{\small NTT Social Informatics Laboratories, Tokyo, Japan}\authorcr{\small \{fuyuki.kitagawa.yh,ryo.nishimaki.zk,takashi.yamakawa.ga\}@hco.ntt.co.jp}}
\renewcommand\Authands{, }
\date{}
\maketitle
\thispagestyle{empty}

\begin{abstract}
    We show how to obfuscate pseudo-deterministic quantum circuits in the classical oracle model, assuming the quantum hardness of learning with errors. Given the classical description of a quantum circuit $Q$, our obfuscator outputs a quantum state $\ket{\widetilde{Q}}$ that can be used to evaluate $Q$ repeatedly on arbitrary inputs.
    
    Instantiating the classical oracle using any candidate post-quantum indistinguishability obfuscator gives us the first candidate construction of indistinguishability obfuscation for all polynomial-size pseudo-deterministic quantum circuits. In particular, our scheme is the first candidate obfuscator for a class of circuits that is powerful enough to implement Shor's algorithm (SICOMP 1997).
    
    Our approach follows Bartusek and Malavolta (ITCS 2022), who obfuscate \emph{null} quantum circuits by obfuscating the verifier of an appropriate classical verification of quantum computation (CVQC) scheme. We go beyond null circuits by constructing a publicly-verifiable CVQC scheme for quantum \emph{partitioning} circuits, which can be used to verify the evaluation procedure of Mahadev's quantum fully-homomorphic encryption scheme (FOCS 2018). We achieve this by upgrading the one-time secure scheme of Bartusek (TCC 2021) to a fully reusable scheme, via a publicly-decodable \emph{Pauli functional commitment}, which we formally define and construct in this work. This commitment scheme, which satisfies a notion of binding against committers that can access the receiver's standard and Hadamard basis decoding functionalities, is constructed by building on techniques of Amos, Georgiou, Kiayias, and Zhandry (STOC 2020) introduced in the context of equivocal but collision-resistant hash functions. 
    
\end{abstract}

\newpage
\thispagestyle{empty}

{
  \hypersetup{linkcolor=Violet}
  \setcounter{tocdepth}{2}
  \tableofcontents
}
\newpage

\pagenumbering{arabic}
\section{Introduction}
A program obfuscator is a ``one-way compiler'' that renders code unintelligible without harming its functionality. This concept dates back to the beginning of modern cryptography \cite{1055638}, and has since attracted much interest as a tool for protecting software against reverse-engineering, intellectual property theft, and piracy. While the theoretical foundations of program obfuscation were laid in 2001 \cite{JACM:BGIRSVY12},\footnote{The preliminary version appeared in CRYPTO 2001.} it was not until 2013 \cite{SIAMCOMP:GGHRSW16}\footnote{The preliminary version appeared in FOCS 2013.} that researchers developed a proposal for obfuscating general-purpose (classical) computation. This first candidate sparked a massive research effort that has both established program obfuscation as a ``central hub'' \cite{SIAMCOMP:SahWat21} of cryptography with countless applications, and has resulted in obfuscation schemes based on well-founded cryptographic assumptions \cite{10.1145/3406325.3451093}.

Meanwhile, the concepts of quantum information and quantum computation have had a profound impact on computer science, with stunning applications such as unconditionally secure key agreement \cite{BenBra84} and efficient integer factorization \cite{10.1137/S0097539795293172}, not to mention the promise of major advances in chemistry and physics. As the field of quantum information science matures, researchers have investigated fundamental questions pertaining to information privacy and information integrity. This has resulted in a remarkable series of feasibility results for securing quantum information and computation, e.g. encryption \cite{892142}, authentication \cite{1181969}, zero-knowledge \cite{doi:10.1137/18M1193530}, secure multi-party computation \cite{10.1145/509907.510000,C:DupNieSal10,EC:DGHMW20}, and delegation of computation \cite{10.5555/2011670.2011674,arxiv:ABEM17,Broadbent_2009,DBLP:journals/nature/ReichardtUV13,FOCS:Mahadev18b,SIAMCOMP:Mahadev22}. However, despite these efforts, the feasibility of \emph{quantum obfuscation} has remained elusive, and the following question has remained largely open.

\begin{quote}\centering
\emph{Is it possible to obfuscate quantum computation?} 
\end{quote}

Prior research has focused its efforts on definitional work \cite{arxiv:AlaFef16}, impossibility results \cite{arxiv:AlaFef16,EC:AnaLaP21,C:ABDS21}, and limited classes of quantum computation \cite{braidobf,10.1007/978-3-030-88238-9_2,DBLP:conf/innovations/BartusekM22}. The best feasibility results we had prior to this work were for obfuscating quantum circuits with logarithmically many non-Clifford gates \cite{10.1007/978-3-030-88238-9_2} and for obfuscating ``null'' quantum circuits that always output zero \cite{DBLP:conf/innovations/BartusekM22}. Neither of these classes comes close to a notion of ``general-purpose'' quantum computation, and thus the feasibility of quantum obfuscation as a tool for quantum software protection has remained wide open. 


\paragraph{Results.} We consider the class of pseudo-deterministic quantum circuits, which are quantum circuits that take a classical input and produce a fixed classical output for each input with overwhelming probability. Essentially, these circuits compute a classical truth table, and can decide any language in (non-promise) BQP. This class captures Shor's algorithm \cite{10.1137/S0097539795293172}, which is arguably the quintessential algorithm for demonstrating the power of quantum computation over classical computation.

Our main result is the following. In the \emph{classical oracle model}, the evaluator (and adversary) are given oracle access to an efficiently computable classical functionality prepared by the obfuscator.

\begin{theorem}
Assuming the quantum hardness of Learning with Errors (QLWE), there exists a VBB obfuscator for any polynomial-size pseudo-deterministic quantum circuit $Q$ in the classical oracle model, where the obfuscated program is a quantum state $\ket{\widetilde{Q}}$.
\end{theorem}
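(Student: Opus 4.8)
The plan is to combine Mahadev's quantum fully-homomorphic encryption ($\QFHE$) with a publicly-verifiable, \emph{reusable} classical verification of quantum computation ($\CVQC$) scheme, following the template of Bartusek and Malavolta for null circuits. On input the classical description of a pseudo-deterministic circuit $Q$, the obfuscator samples a $\QFHE$ key pair $(\pk,\sk)$ and $\CVQC$ parameters — a public verification key $\pvk$ and a quantum resource state $\rho$ for the prover, both independent of the description of $Q$ — sets $\ct_Q \leftarrow \QFHE.\Enc(\pk, Q)$, and outputs $\ket{\widetilde Q} = (\rho, \ct_Q, \pk, \pvk)$ together with oracle access to the classical functionality that, on input $(x, \ct_{\mathsf{out}}, \pi)$, checks whether $\pi$ is a valid $\CVQC$ proof for the statement ``$\ct_{\mathsf{out}} = \QFHE.\Eval(U_x, \ct_Q)$'', where $U_x$ is the circuit $Q' \mapsto Q'(x)$, and if so outputs $\QFHE.\Dec(\sk, \ct_{\mathsf{out}})$ and otherwise $\bot$. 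To evaluate on $x$, one homomorphically computes $\ct_{\mathsf{out}} \leftarrow \QFHE.\Eval(U_x, \ct_Q)$ using $\rho$, produces the $\CVQC$ proof $\pi$ of correct evaluation, and queries the oracle on $(x, \ct_{\mathsf{out}}, \pi)$. Correctness holds because $Q$ is pseudo-deterministic: $U_x(Q) = Q(x)$ is a fixed classical value, so with overwhelming probability $\QFHE.\Eval(U_x,\ct_Q)$ returns a ciphertext decrypting to $Q(x)$, the honest proof is accepted by completeness, and the oracle returns $Q(x)$; reusability of the $\CVQC$ scheme (its resource state and verification key support polynomially many proofs and verifications) is precisely what lets the \emph{same} $\ket{\widetilde Q}$ be reused on arbitrarily many inputs. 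One should check here that $\QFHE.\Eval$ applied to the pseudo-deterministic maps $U_x$ lies in the class of ``quantum partitioning circuits'' that the $\CVQC$ scheme supports.

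\textbf{VBB security.} I would construct a simulator with oracle access to $Q$ via two hybrids. First, replace the decryption step of the oracle: on an accepting query $(x,\ct_{\mathsf{out}},\pi)$ the oracle now outputs $Q(x)$ directly, rather than $\QFHE.\Dec(\sk,\ct_{\mathsf{out}})$. The two oracles differ only on inputs where $\pi$ is accepted but $\ct_{\mathsf{out}}$ does not decrypt to $Q(x)$, which would be an accepting proof of the false statement ``$\ct_{\mathsf{out}} = \QFHE.\Eval(U_x,\ct_Q)$''; by the adaptive, reusable soundness of $\CVQC$ together with pseudo-determinism of $Q$, no QPT adversary making polynomially many queries reaches such an input except with negligible probability. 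In this hybrid $\sk$ has disappeared from the experiment — $\QFHE.\Eval$ and the publicly-verifiable $\CVQC.\Ver$ use only public data, and $\rho$ is $Q$-independent — so we may next replace $\ct_Q$ by $\QFHE.\Enc(\pk, 0^{|Q|})$, which is indistinguishable by semantic security of $\QFHE$. The resulting experiment is run by a simulator that samples all keys itself, sets $\ct_Q \leftarrow \QFHE.\Enc(\pk,0^{|Q|})$, prepares $\rho$, and answers each accepting oracle query $(x,\ct_{\mathsf{out}},\pi)$ by forwarding $x$ to its $Q$-oracle. Since in the classical oracle model VBB amounts to showing that $\ket{\widetilde Q}$ and the oracle's functionality can be produced from $Q$'s input-output behavior, this finishes the proof.

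\textbf{The main obstacle.} Everything above is essentially bookkeeping once the reusable, publicly-verifiable $\CVQC$ for quantum partitioning circuits is in hand; building that primitive is the crux. The task is to upgrade Bartusek's one-time $\CVQC$ to full reusability, and the bottleneck is the \emph{publicly-decodable Pauli functional commitment} with which the prover commits to the quantum witness of the $\QFHE.\Eval$ computation: one needs a commitment whose binding guarantee survives an adversary that repeatedly queries both the standard-basis and the Hadamard-basis decoding functionalities of the receiver. Building it — adapting the equivocal-yet-collision-resistant hashing of Amos, Georgiou, Kiayias, and Zhandry to the quantum functional-commitment setting, and then proving this two-basis binding notion relative to such decoding oracles — is the real technical content, and the subtle point is ruling out an adversary that exploits honest-looking standard-basis decodings to break Hadamard-basis binding (or vice versa). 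Care is also needed to pin down exactly which quantum circuits the $\CVQC$ handles (the ``partitioning'' class) and to confirm that Mahadev's $\QFHE.\Eval$ on pseudo-deterministic circuits lies inside it.
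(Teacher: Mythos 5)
Your proposal matches the paper's construction and security argument essentially exactly: encrypt $Q$ under QFHE, obfuscate a "verify-then-decrypt" oracle built from a publicly-verifiable reusable CVQC for quantum partitioning circuits, and prove VBB via the same two hybrids (replace decryption by forwarding $x$ to the $Q$-oracle using soundness, then switch to an encryption of $0^{|Q|}$ using semantic security), with the publicly-decodable Pauli functional commitment correctly identified as the technical crux. The one imprecision — phrasing the verified statement as the equality $\ct_{\mathsf{out}} = \QFHE.\Eval(U_x,\ct_Q)$, which is ill-defined since $\QFHE.\Eval$ is randomized and the paper instead has the verifier output ciphertext samples that are homomorphically combined by a majority predicate — is exactly the partitioning-circuit subtlety you flag yourself, so the approach is sound.
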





\paragraph{On the classical oracle model.} The classical oracle model idealizes the notion of obfuscation for classical circuits, much like the random oracle model \cite{Bellare95randomoracles} idealizes a cryptographic hash function and the generic group model \cite{10.1007/3-540-69053-0_18} idealizes a cryptographic group. Such idealized primitives are typically not realizable in the real world, and a classical oracle is no exception. Indeed, virtual black-box (VBB) obfuscation comes for free in the classical oracle model, and it is known \cite{JACM:BGIRSVY12} that there exist (contrived) examples of circuits that provably cannot be VBB obfuscated (even with quantum information \cite{EC:AnaLaP21,C:ABDS21}). However, despite these contrived counterexamples, there is by now a fairly long history of establishing the feasibility of novel quantum-cryptographic primitives in the classical oracle model \cite{10.1145/2213977.2213983,arxiv:BenSat16,STOC:AGKZ20,C:ALLZZ21,DBLP:conf/innovations/BartusekM22}, and our result fits into this line of work.  

Moreover, \cite{JACM:BGIRSVY12} also defined a weaker notion of obfuscation called \emph{indistinguishability obfuscation}, which only requires that the obfuscations of two functionally equivalent programs are computationally indistinguishable, and which was subsequently shown by \cite{SIAMCOMP:SahWat21} and many follow-up works to be extremely powerful. Our main result is a construction of obfuscation for pseudo-deterministic quantum circuits from obfuscation of classical circuits (plus QLWE), and in order to prove security, we treat the classical obfuscation as implementing a black-box. However, one can interpret this result as \emph{heuristic} evidence that our construction gives indistinguishability obfuscation for pseudo-deterministic quantum circuits when the classical obfuscator is instantiated with a candidate post-quantum indistinguishability obfuscation scheme \cite{TCC:BGMZ18,C:CheVaiWee18,brakerski_et_al:LIPIcs.ICALP.2022.28,10.1145/3406325.3451070,EC:WeeWic21,TCC:DQVWW21}.

    


One can also appreciate our result in an oraclized world. Here, we show that, assuming QLWE,\footnote{In fact, it may be possible to remove the QLWE assumption entirely from our construction by showing that quantum fully-homomorphic encryption and dual-mode randomized trapdoor claw-free hash functions can be built from classical VBB obfuscation. We leave an exploration of this to future work.} it is possible to simulate access to a ``BQP oracle'' with just a P oracle. More precisely, the BQP oracle we implement can decide \emph{languages} in BQP, as opposed to more general promise problems.\footnote{This is because circuits for deciding promise problems are technically not pseudo-deterministic. There exist inputs that are neither yes or no instances, and thus are not guaranteed to produce a pseudo-deterministic output.}

\paragraph{Building blocks.} To obtain our main result of quantum obfuscation, we construct the following intermediate primitives that may be of independent interest.\\

\noindent\underline{Publicly-decodable Pauli functional commitments.}
\medskip \par We formally define the notion of a \emph{Pauli functional commitment}, which has appeared implicitly in many recent works, e.g. \cite{JACM:BCMVV21,SIAMCOMP:Mahadev22,Vid20-course}. These are bit commitment schemes that, when used in superposition to commit to a qubit, support opening the qubit to a measurement in either the standard or the Hadamard basis. While the only prior construction \cite{JACM:BCMVV21,SIAMCOMP:Mahadev22} of such commitments supports publicly-decodable standard basis measurements, security is completely compromised if the committer obtains access to the receiver's \emph{Hadamard} basis decoding functionality.

In this work, we describe a novel construction of Pauli functional commitments where security holds even if the committer obtains access to \emph{both} the receiver's standard and Hadamard basis decoding functionalities, and we argue security in the classical oracle model. Our construction is inspired by and unifies two lines of work: privately-decodable Pauli functional commitments \cite{JACM:BCMVV21,SIAMCOMP:Mahadev22}, and collision-resistant but equivocal hash functions \cite{FOCS:AmbRosUnr14,STOC:AGKZ20}.\\

\noindent\underline{Publicly-verifiable quantum fully-homomorphic encryption.}
\medskip \par Given the recent progress in constructing quantum homomorphic encryption schemes \cite{C:BroJef15,Dulek_2018,FOCS:Mahadev18b}, a natural question is whether the homomorphic evaluation procedure for these schemes can be \emph{verified} and with what resources.

The first work to address this question was \cite{AC:ADSS17}, who showed how to make the scheme of \cite{Dulek_2018} verifiable. Unfortunately, their verifier requires secret parameters, including the decryption key of the homomorphic encryption scheme. In a recent work, \cite{TCC:Bartusek21} showed how to obtain verifiable quantum FHE based on the scheme of \cite{FOCS:Mahadev18b}, with the following properties. The verifier is completely classical and doesn't require the decryption key of the FHE scheme, though it does require additional secret verification parameters.

In this work, we obtain the first feasibility result for \emph{publicly-verifiable} quantum fully-homomorphic encryption. Our protocol supports the classical verification of pseudo-deterministic quantum computation over the underlying plaintexts, and is proven sound in the classical oracle model. We also remark that the public parameters for our scheme are \emph{quantum} (while the verification is classical), and it is an interesting question for future work to see whether these public parameters (and, in turn, our obfuscated program) can be made completely classical.

\paragraph{Applications.} Program obfuscation has direct applications to software protection, and our results indicate that such protections may be possible to achieve in the context of quantum software. Obfuscated programs intuitively cannot be reverse-engineered, meaning that we can now protect any intellectual property or other secret information contained in the implementation of the quantum program.

In the classical and post-quantum settings, obfuscation has also been identified as a useful tool for digital watermarking \cite{JACM:BGIRSVY12,SIAMCOMP:CHNVW18,EC:KitNis22}, which allows for embedding an unremovable ``mark'' into a program, and acts as a deterrent against software piracy. Quantum information potentially allows for much \emph{stronger} forms of protection against piracy, enabling computation to be encoded into a quantum state that provably cannot be copied \cite{5231194}. However, the scope of such ``copy-protection'' schemes has so far been limited to classical functionalities~\cite{arxiv:ColMajPor20,C:ALLZZ21,C:CLLZ21,TCC:AnaKal21,C:AKLLZ22,AC:KitNis22}. In \cref{subsec:copy-protection}, we sketch how our obfuscation scheme results in a candidate for copy-protection of (unlearnable) \emph{quantum} programs, following the construction of \cite{C:ALLZZ21}. 


Another common application of obfuscation in the classical setting is to advanced forms of encryption, such as functional encryption \cite{SIAMCOMP:GGHRSW16}. In \cref{subsec:FE}, we sketch an application of our construction to functional encryption for \emph{quantum functionalities}. 

That being said, we stress that the main focus of our work is on the construction of quantum obfuscation, and we leave a more in-depth exploration of applications to future work. 

\paragraph{Open problems.} Our work raises many interesting questions on the topic of quantum obfuscation. One immediate question is whether it is possible to obfuscate \emph{all} quantum circuits with classical input and output, extending our result for pseudo-deterministic circuits. That is, can circuits that output an arbitrary distribution over classical strings be obfuscated? As explained in \cref{sec:tech-overview}, we follow the approach of \cite{DBLP:conf/innovations/BartusekM22} who consider obfuscating the verifier of an appropriate classical verification of computation protocol \cite{SIAMCOMP:Mahadev22}. Unfortunately, it is not known how to classically verify general quantum sampling circuits, at least with negligible soundness (the work of \cite{EC:CLLW22} provides a solution with weaker soundness). This appears to be one barrier for extending our approach to all quantum circuits with classical input and output.

One can also wonder about the possibility of obfuscating general quantum operations over quantum registers. That is, while our scheme is able to obfuscate quantum computation, it still only implements a ``classical'' language, albeit one whose truth table may only be (known to be) computable with a quantum circuit. Thus, this leaves open the feasibility (or impossibility) of implementing \emph{quantum} oracles, and we consider this to be a very interesting question to understand in future work.

Finally, we mention two natural open questions regarding our construction itself. First, is it possible to remove the quantum states from our construction and obtain a \emph{classical} obfuscated program? Next, can we improve on the heuristic nature of our security proof, and obtain indistinguishability obfuscation for pseudo-deterministic quantum circuits from the assumption of indistinguishability obfuscation for classical circuits?


\section{Technical Overview}\label{sec:tech-overview}

In this overview, we will describe how to obfuscate any pseudo-deterministic quantum circuit $Q$, where pseudo-deterministic means that for each input $x$ there exists an output $y$ such that $\Pr[Q(x) \to y] = 1-\negl$. That is, we describe a compiler that given the classical description of $Q$, produces an obfuscated program $\ket{\widetilde{Q}}$ that reveals as little as possible about the description of $Q$ while preserving the functionality of $Q$. Throughout this overview, we will treat such circuits as fully deterministic, associating a well-defined bit $y \coloneqq Q(x)$ to each input $x$, which has a negligible effect on our arguments. 

\subsection{Our approach: Verifying quantum partitioning circuits}

\paragraph{Fully-homomorphic encryption.} A natural approach to obfuscation involves the notion of \emph{fully-homomorphic encryption} (FHE), which allows for encoding data $x$ into a ciphertext $\Enc(x)$ so that anyone holding $\Enc(x)$ and a function $f$ can produce a ciphertext $\Enc(f(x))$. Indeed, given an FHE scheme that supports the evaluation of \emph{quantum functionalities} \cite{FOCS:Mahadev18b}, one could release an encryption $\Enc(Q)$ of the description of $Q$. Then, any evaluator with an input $x$ can obtain $\Enc(Q(x))$ by running an appropriate evaluation procedure.

This comes close to a working obfuscation scheme, except that the evaluator obtains $\Enc(Q(x))$ rather than the output $Q(x)$ in the clear. To fix this, we cannot simply release the FHE secret key $\sk$, allowing the evaluator to decrypt $\Enc(Q(x))$ and learn $Q(x)$, because this would \emph{also} allow the evaluator to decrypt $\Enc(Q)$ and learn the description of $Q$. Instead, we could release a carefully ``broken'' secret key that \emph{only} allows decryption of ciphertexts $\Enc(Q(x))$ that encrypt an honestly evaluated output $Q(x)$.

\paragraph{Reducing to classical obfuscation.} But how can we obtain such a carefully broken key? One attempt would be to release an obfuscation of the following program $C$, which has the secret key $\sk$ and the ciphertext $\Enc(Q)$ hard-coded,

\[C[\sk,\Enc(Q)](x,\ct): ~~ \text{if} ~ \Eval[\Enc(Q)](x) \to \ct, ~ \text{output } \Dec(\sk,\ct), ~ \text{and otherwise output } \bot,\] where $\Eval[\Enc(Q)](\cdot)$ is the FHE evaluation circuit that on input $x$ outputs $\Enc(Q(x))$. However, we don't know how to obfuscate $C$ since $\Eval[\Enc(Q)](\cdot)$ is a quantum circuit. 

Instead, building on observations by \cite{DBLP:conf/innovations/BartusekM22}, we could hope to construct an argument system with a \emph{classical} verifier $V$ that satisfies the following properties.

\begin{itemize}
    \item For any $x$, one can compute a ciphertext $\ct$ and a proof $\pi$ such that $V(\Enc(Q),x,\ct,\pi) = 1$.
    \item It is hard to find $(x,\ct,\pi)$ such that $V(\Enc(Q),x,\ct,\pi) = 1$ and $\Dec(\sk,\ct) \neq Q(x)$.
\end{itemize}

If such a system existed, we could instead obfuscate the following \emph{classical} program

\[\widetilde{C}[\sk,\Enc(Q)](x,\ct,\pi): ~~ \text{if} ~~ V(\Enc(Q),x,\ct,\pi) \to 1, ~ \text{output } \Dec(\sk,\ct), ~ \text{and otherwise output } \bot.\]

Crucially, this approach follows the ``verify-then-decrypt'' paradigm, where the output ciphertext is \emph{first} verified to be honest, and only then decrypted using $\sk$. A procedure that first decrypts and then verifies may not be secure since the adversary could submit dishonest ciphertexts to learn information about $\sk$.

\paragraph{Classical verification of quantum computation and its limitations.} Thus, it suffices to construct a classically-verifiable argument system for the class of quantum circuits $\Eval[\Enc(Q)]$ that take \[\Eval[\Enc(Q)](x) \to \Enc(Q(x)),\] where $\Enc$ is a quantum fully-homomorphic encryption (QFHE) scheme and $Q$ is a deterministic quantum circuit. 

As mentioned earlier, \cite{SIAMCOMP:Mahadev22} did construct a protocol for classical verification of quantum computation. Unfortunately, there are two major problems with using \cite{SIAMCOMP:Mahadev22}'s scheme for this application.

\begin{itemize}
    \item \textbf{Sampling circuits.} \cite{SIAMCOMP:Mahadev22}'s scheme only supports verification of (pseudo)-deterministic quantum circuits. However, the evaluation procedure of known QFHE schemes \cite{FOCS:Mahadev18b,C:Brakerski18} is inherently \emph{randomized}, even if the underlying computation is deterministic, meaning that the circuit that we would like to verify actually produces a \emph{sample} $\Enc(Q(x))$ from a classical distribution over ciphertexts.\footnote{While this distribution is only supported on ciphertexts that encrypt the correct output bit $Q(x)$, the random coins used for the output ciphertext will vary.}
    \item \textbf{Public verifiability.} Note that the evaluator will have (obfuscated) access to the verification function, which means that it can repeatedly query the verifier with proofs of its choice. If soundness holds even when verification is \emph{public}, then the evaluator cannot break soundness using access to this oracle. However, \cite{SIAMCOMP:Mahadev22}'s scheme is privately-verifiable, and can be broken given repeated access to the verifier.
\end{itemize}

Towards solving the first problem, \cite{EC:CLLW22} presented a scheme for classical verification of sampling circuits, though only with inverse polynomial soundness error. While interesting on its own, this renders the scheme difficult to use for our application, since a polynomial-time evaluator can eventually break soundness and thus break security of the obfuscation scheme. It appears that improving upon their result to obtain negligible soundness for classical verification of quantum sampling circuits is difficult, and could be considered a major open problem.

\paragraph{Quantum partitioning circuits.} Instead, we relax our goal. We observe that if $Q$ is deterministic, then we don't need the full power of verification of sampling circuits to verify the sampling of $\Enc(Q(x))$. Indeed, we can \emph{partition} the output space of $\Eval[\Enc(Q)](\cdot)$ into ciphertexts $\ct_0$ that decrypt to 0 and ciphertexts $\ct_1$ that decrypt to 1. Thus, each input $x$ outputs a sample from one of these two sets. That is, we can define a classical predicate $P \coloneqq \Dec(\sk,\cdot)$ such that $P(\Eval[\Enc(Q)](\cdot))$ is (pseudo)-deterministic. 

Thus, we say that $Q$ is a \emph{quantum partitioning circuit} if there exists a predicate $P$ such that $P(Q(\cdot))$ is pseudo-deterministic, and we investigate the feasibility of obtaining a classically-verifiable argument system for such partitioning circuits. Crucially for our application, the prover in the argument system cannot depend on $P$ since $P$ will contain the description of the FHE secret key.\footnote{And otherwise, this notion would trivially reduce to classical verification of pseudo-deterministic quantum circuits.} Then, we will need an argument system with (roughly) the following syntax (see \cref{subsec:partitioning-def} for a formal description).

\begin{itemize}
    \item $\Gen(1^\secp,Q) \to \pp$: The parameter generation algorithm outputs public parameters $\pp$. We allow $\pp$ to contain the description of a \emph{classical oracle}, and refer to such a protocol as being \emph{in the oracle model}. 
    \item $\Prove(\pp,Q,x) \to \pi$: The prover algorithm outputs a proof $\pi$.
    \item $\Ver(\pp,Q,x,\pi) \to q \union \{\bot\}$: The verifier checks if the proof is valid, and if so outputs a classical string $q$. 
    \item $\Out(q,P) \to b$: The output algorithm takes $q$ and the description of a predicate $P$ and outputs a bit $b$.
\end{itemize}

For soundness, we require that no computationally bounded prover can produce an $(x,\pi)$ such that $\Ver(\pp,Q,x,\pi) \to q$ and $\Out(q,P) \neq P(Q(x))$. We refer to such a protocol as a \emph{non-interactive publicly-verifiable classical verification of quantum partitioning circuits.} In \cref{sec:obfuscation}, we follow the intuition given above, and show formally how to use this type of argument system along with QFHE and VBB obfuscation of classical circuits (which is used to obfuscate the classical oracle in $\pp$) to obfuscate pseudo-deterministic quantum circuits. 

In the remainder of this overview, we will describe how to construct non-interactive publicly-verifiable classical verification of quantum partitioning circuits in the oracle model.

\subsection{Prior work: One-time soundness}

Building on \cite{SIAMCOMP:Mahadev22,EC:CLLW22}, the prior work of \cite{TCC:Bartusek21} shows how to construct non-interactive \emph{privately-verifiable} classical verification of quantum partitioning circuits,\footnote{In \cite{TCC:Bartusek21}, quantum partitioning circuits were referred to as ``quantum-classical'' circuits.} where soundness breaks down if the prover is given oracle access to the verification functionality. We refer to this security as ``one-time soundness''. We will eventually build on top of this protocol in two steps.\footnote{Breaking this into two steps is only for the purpose of the overview. In \cref{subsec:public-verification}, we perform both steps simultaneously.}
\begin{enumerate}
    \item We will first show how to obtain reusable soundness against provers that can access the verification oracle in a limited ``single instance'' setting. In this setting, there is only one input $x$ that the verification oracle will accept.
    \item We will upgrade this protocol to the fully reusable setting, thus obtaining a \emph{publicly-verifiable} protocol in the oracle model.
\end{enumerate}

In this section, we describe the protocol of \cite{TCC:Bartusek21} in some detail, as our construction will use these internal details. However, before getting into the protocol, we describe a useful abstraction that is novel to this work: a \emph{Pauli functional commitment}. We will then describe \cite{TCC:Bartusek21}'s protocol using the language of Pauli functional commitments, and, later in the overview, show how a new variation on the notion of Pauli functional commitments will be integral to our final construction.

\paragraph{Pauli functional commitments.} Bit commitment schemes traditionally satisfy a notion of binding and a notion of hiding. A \emph{functional} commitment scheme includes an additional notion of functionality, which allows the committer to open its commitment to some \emph{function} of the committed message, up to some limitations imposed by the binding property.



A Pauli functional commitment ($\PFC$) is a traditional (non-interactive) classical bit commitment scheme augmented with a particular quantum functionality property. Note that any classical bit commitment algorithm $\Com(\ck,b) \to (b,u,c),$ where $\ck$ is the commitment key, $u$ is opening information, and $c$ is the commitment string, can be used to commit to a qubit $\alpha_0\ket{0} + \alpha_1\ket{1}$ in superposition. If the commitment scheme is perfectly hiding, then measuring a commitment string $c$ would leave a remaining state of the form $\alpha_0\ket{0}\ket{u_0} + \alpha_1\ket{1}\ket{u_1}$,\footnote{Note that depending on the commitment scheme, the second register may contain a superposition over random coins / opening information.} which preserves the original qubit. A Pauli functional commitment enables the committer to then ``open'' its state to \emph{either} a standard basis measurement \emph{or} a Hadamard basis measurement of its original qubit $\alpha_0\ket{0} + \alpha_1\ket{1}$. More formally, it should satisfy the following syntax.

\begin{itemize}
    \item $\Gen(1^\secp) \to (\ck,\dk)$: $\Gen$ outputs a commitment key $\ck$ and a decoding key $\dk$.\footnote{For now, assume $\ck$ is classical, though later we will consider commitments with quantum commitment keys.}
    \item $\Com(\ck,\regB) \to (\regB,\regU,c)$: $\Com$ takes as input a single-qubit register $\regB$ and produces a classical commitment $c$ along with registers $(\regB,\regU)$, where $\regU$ holds opening information.\footnote{Whenever we say that an algorithm takes as input or outputs a register, we mean that it operates on a quantum state stored on that register.}
    \item $\OpenZ(\regB,\regU) \to u$: The standard basis opening algorithm performs a measurement on registers $(\regB,\regU)$ to produce a classical string $u$.
    \item $\OpenX(\regB,\regU) \to u$: The Hadamard basis opening algorithm performs a measurement on registers $(\regB,\regU)$ to produce a classical string $u$.
    \item $\DecZ(\dk,c,u) \to \{0,1,\bot\}$: The standard basis decoding algorithm takes the decoding key $\dk$, a commitment $c$, an opening $u$, and either decodes a bit 0 or 1, or outputs $\bot$.
    \item $\DecX(\dk,c,u) \to \{0,1,\bot\}$: The Hadamard basis decoding algorithm takes the decoding key $\dk$, a commitment $c$, an opening $u$, and either decodes a bit 0 or 1, or outputs $\bot$.
\end{itemize}

A Pauli functional commitment should satisfy \emph{functionality} as described above and some notion (depending on the application) of \emph{binding} to a classical bit. That is, binding is defined with respect to the bit output by the $\DecZ$ algorithm. A notion of hiding does not need to be explicitly considered - the properties of functionality and binding are already enough to make this primitive both non-trivial and useful.

We note that this notion has appeared implicitly in many previous works, e.g. \cite{JACM:BCMVV21,SIAMCOMP:Mahadev22,Vid20-course}. Indeed, \cite{JACM:BCMVV21,SIAMCOMP:Mahadev22} essentially showed how to construct a Pauli functional commitment that simultaneously satisfies \emph{two} binding properties from the quantum hardness of learning with errors (QLWE). In our own words, these properties are the following. 

\begin{itemize}
    \item \textbf{Dual-mode.} $\Gen(1^\secp,h)$ now takes as input a bit $h$ indicating the ``mode'', where $h=1$ is the regular mode, and $h=0$ is a \emph{perfectly binding} mode. In perfectly binding mode, for every commitment $c$ there is at most one bit $b$ such that there exists an opening $u$ with $\DecZ(\dk,c,u) = b$. This mode allows for the definition of an algorithm $\Invert(\dk,c) \to b$ that outputs the bit $b$ such that there exists $u$ with $\DecZ(\dk,c,u) = b$ (or outputs $\bot$ if such a $b$ does not exist). Importantly, the $\ck$ output on $h=0$ vs $h=1$ must be computationally indistinguishable.
    \item \textbf{Uncertainty.} For any polynomial-time adversary that outputs $(c,b,u_Z,u_X)$, it holds that 
    \begin{align*}&\Pr[\DecZ(\dk,c,u_Z) = b \ \wedge \ \DecX(\dk,c,u_X) = 0] \\ &\approx \Pr[\DecZ(\dk,c,u_Z) = b \wedge \DecX(\dk,c,u_X) = 1].\end{align*}  That is, if an adversary opens successfully to a standard basis measurement of its committed state, the Hadamard basis measurement is maximally uncertain. Note that this can be considered a binding property for the classical bit $b$ since the ability to measure in the Hadamard basis implies the ability to reflect across the Hadamard basis axis, thus influencing the standard basis measurement.
\end{itemize}

More precisely, prior work has shown how to construct a $\PFC$ satisfying the above binding properties from (what we call) a \emph{dual-mode randomized trapdoor claw-free hash function} with an \emph{adaptive hard-code bit} property. We refer to this primitive as a ``Type I'' $\PFC$ or $\PFCI$, in order to differentiate it from a ``Type II'' $\PFC$ that we will construct in this work. We also note that in the body of this work, we build our protocols directly from the underlying claw-free hash function, so that we can appeal to theorems from prior work.\footnote{However, we believe it could be interesting to re-prove prior results using the notion of $\PFC$, and we leave an exploration of this possibility to future work. That is, does a $\PFC$ that satisfies the dual-mode and uncertainty binding properties generically imply classical verification of quantum computation?} Thus the primitive of $\PFCI$ does not appear explicitly in the body. However, in the remainder of this overview, we find it more convenient to explain these protocols using the primitive of $\PFCI$. 

\paragraph{Verification of quantum partitioning circuits with one-time soundness.} Now, we describe a privately-verifiable scheme for classical verification of quantum partitioning circuits that follows from prior work \cite{SIAMCOMP:Mahadev22,EC:CLLW22,TCC:Bartusek21}.

The starting point is a particular way to prepare a history state $\ket{\psi_{Q,x}}$ of the computation $Q(x)$, due to \cite{EC:CLLW22}. Given $\ket{\psi_{Q,x}}$, the verifier can either measure certain registers in the standard basis to obtain an approximate sample $q \gets Q(x)$, or measure a random local Hamiltonian term (which involves just standard basis and Hadamard basis measurements). In \cite{TCC:Bartusek21}, the prover is instructed to prepare multiple copies of the history state, and the verifier chooses some subset for \emph{sampling} (obtaining an output sample) and the other subset for \emph{verifying} (measuring a local Hamiltonian term). If verification passes, the verifier collects the output samples $\{q_t\}_t$ and outputs the bit $b \coloneqq \mathsf{Maj}\left(\{P(q_t)\}_t\right)$, which should be equal to $P(Q(x))$ with overwhelming probability. 

Combining this approach with \cite{SIAMCOMP:Mahadev22}'s measurement protocol, applying parallel repetition, and finally applying Fiat-Shamir, we obtain the protocol described in \cref{fig:one-time}.

\protocol{Classical verification of quantum partitioning circuits with one-time soundness}{A non-interactive privately-verifiable protocol for classical verification of quantum partitioning circuits, due to \cite{SIAMCOMP:Mahadev22,EC:CLLW22,TCC:Bartusek21}. The circuit $Q$ and predicate $P$ are such that $P(Q(\cdot))$ is a pseudo-deterministic circuit.  
}{fig:one-time}{
Parameters: $\ell$ qubits per round, $r$ total rounds, $k$ Hadamard rounds.\\ 

Setup: Random oracle $H: \{0,1\}^* \to \{0,1\}^{\log\binom{r}{k}}$.\\


\underline{$\Gen(1^\secp,Q)$}
\begin{itemize}
    \item For $i \in [r]$, choose a subset $S_i \subset [\ell]$ of qubits that will be measured in the standard basis to obtain output samples. Then, sample a string $h_i = (h_{i,1},\dots,h_{i,\ell}) \in \{0,1\}^{\ell}$ of basis choices\footnote{We associate 0 with the standard basis and 1 with the Hadamard basis.} that are 0 on indices in $S_i$ and otherwise correspond to random Hamiltonian terms.
    \item For $i \in [r], j \in [\ell]$, sample $(\ck_{i,j},\dk_{i,j}) \gets \PFCI.\Gen(1^\secp,h_{i,j})$, and output
    \[\pp \coloneqq \{\ck_{i,j}\}_{i,j}, ~~ \sparam \coloneqq (\{h_i,S_i\}_i,\{\dk_{i,j}\}_{i,j}).\]
\end{itemize}
\underline{$\Prove(1^\secp,Q,\pp,x)$}
\begin{itemize}
    \item Prepare sufficiently many copies of the history state $\ket{\psi_{Q,x}}$ on register $\regB = \{\regB_{i,j}\}_{i,j}$.
    \item For $i \in [r], j \in [\ell]$, apply $\PFCI.\Com(\ck_{i,j},\regB_{i,j}) \to (\regB_{i,j},\regU_{i,j},c_{i,j})$, and let $c \coloneqq (c_{1,1},\dots,c_{r, \ell})$.
    \item Compute $T = H(c)$, where $T \in \{0,1\}^r$ has Hamming weight $k$.
    \item For $i : T_i = 0$ and $j \in [\ell]$, apply $\PFCI.\OpenZ(\regB_{i,j},\regU_{i,j}) \to u_{i,j}$.
    \item For $i : T_i = 1$ and $j \in [\ell]$, apply $\PFCI.\OpenX(\regB_{i,j},\regU_{i,j}) \to u_{i,j}$.
    \item Output $\pi \coloneqq (c,u)$, where $u \coloneqq (u_{1,1},\dots,u_{r, \ell})$.
\end{itemize}
\underline{$\Ver(1^\secp,Q,P,\sparam,x,\pi)$}
\begin{itemize}
    \item Parse $\pi = (c,u)$ as input and compute $T = H(c)$.
    \item For $i : T_i = 0$ and $j \in [\ell]$, check that $\PFCI.\DecZ(\dk_{i,j},c_{i,j},u_{i,j}) \neq \bot$.
    \item For $i : T_i = 1$ and $j \in [\ell]$:
    \begin{itemize}
        \item If $h_{i,j} = 0$, compute the bit $b_{i,j} \coloneqq \PFCI.\Invert(\dk_{i,j},c_{i,j})$, and abort if $\bot$.
        \item If $h_{i,j} = 1$, compute the bit $b_{i,j} \coloneqq \PFCI.\DecX(\dk_{i,j},c_{i,j},u_{i,j})$, and abort if $\bot$.
    \end{itemize}
    \item Apply a verification procedure to $\{b_{i,j}\}_{i: T_i = 1, j \notin S_i}$ based on the Hamiltonian for $Q(x)$. If this passes, parse the bits $\{b_{i,j}\}_{i: T_i = 1, j \in S_i}$ as a set of output samples $\{q_t\}_t$, and output $b \coloneqq \mathsf{Maj}\left(\{P(q_t)\}_t\right)$.\footnote{For technical reasons, the final output is actually computed as a ``majority of majorities'', but we ignore that detail here.}
\end{itemize}
}

In more detail, \cref{fig:one-time} consists of a number $r$ of parallel rounds, where $k$ of them are denoted ``Hadamard'' rounds, and the rest are denoted ``test'' rounds. Which rounds are Hadamard rounds are determined by a random oracle $H$ applied to the prover's Pauli functional commitments $c$. 

Each Hadamard round essentially runs a copy of the protocol described above, where the verifier obtains a number of output samples. We let $\ell$ denote the number of qubits per round, which is the number of history states per round times the number of qubits per history state. The standard basis measurements are obtained by inverting the commitments themselves (since these commitments are generated in mode $h = 0$), and the Hadamard basis measurements are obtained via the $\OpenX$ procedure. On the other hand, in the test rounds, the prover opens all of their commitments using the $\OpenZ$ procedure, and the verifier simply checks that $\DecZ$ does not reject these openings. We also note that the public and secret parameters $(\pp,\sparam)$ are generated independently of the input $x$, which was shown to be possible by an observation of \cite{TCC:ACGH20}.\footnote{Technically, $\Gen$ just needs to know the size of $Q$.}

The one-time soundness of this protocol was proven in \cite{TCC:Bartusek21}, and relies on the soundness of the underlying measurement protocol due to \cite{SIAMCOMP:Mahadev22}. While the proof in \cite{SIAMCOMP:Mahadev22} actually required an additional property of the claw-free hash function beyond dual-mode and adaptive hard-core bit, the recent work of \cite{DBLP:conf/crypto/BartusekKLMMVVY22} showed that these two properties, which correspond to the dual-mode and uncertainty properties of the $\PFC$, suffice for proving soundness.

\paragraph{Challenges with reusability.} Now, our goal is to obtain soundness even against provers that have (superposition) oracle access to the verification algorithm. We denote this algorithm $\Ver[\sparam](\cdot,\cdot)$, which has the secret parameters $\sparam$ hard-coded (and implicitly $1^\secp,Q,$ and $P$), expects $(x,\pi)$ as input, and outputs either a bit $b$ or $\bot$. 

Unfortunately, there is a simple attack on soundness in this setting. The main issue is that the secret parameters $\sparam$ hard-code the measurement bases $h = (h_1,\dots,h_r)$, and soundness of the underlying information-theoretic protocol would be completely compromised if the prover could figure out $h$. Note that in the Hadamard rounds, the strings $u_{i,j}$ corresponding to $h_{i,j} = 0$ are completely ignored by the verifier, while the strings $u_{i,j}$ corresponding to $h_{i,j} = 1$ factor into the verifier's response. This discrepancy provides a way for the prover to learn the bits of $h_{i,j}$ by querying the verifier multiple times, ultimately breaking soundness of the protocol (see \cite{DBLP:conf/innovations/BartusekM22} for a more detailed discussion of this issue).

\paragraph{Can signature tokens help?} Before coming to our solution, we discuss one promising but flawed attempt at upgrading to reusable soundness via the primitive of \emph{signature tokens} \cite{arxiv:BenSat16}. A signature token consists of a quantum signing $\ket{\sk}$ that can be used to sign a \emph{single} arbitrary message $x$, and then becomes useless. 

So suppose we included $\ket{\sk}$ in the public parameters, and ask that the prover sign its proof $\pi$ before querying $\Ver[\sparam]$. That is, $\Ver[\sparam]$ will now take as input $(x,\pi,\sigma)$, and only respond if $\sigma$ is a valid signature on $\pi$. Intuitively, if the prover tries to start collecting information from multiple malformed proofs in order to learn enough bits of $h$ to break soundness, they should fail to produce the multiple signatures required to learn this information. 

Unfortunately, this intuition is false. First, since the prover has \emph{superposition} access to the verifier, they never have to actually output a classical signature $\sigma$. Moreover, in known signature token schemes \cite{arxiv:BenSat16}, the public parameters can be used to implement a projection $\ketbra{\sk}{\sk}$ onto the original signing key. Thus, even though a prover may ``damage'' its state $\ket{\sk}$ by querying $\Ver[\sparam]$ in superposition in order to learn a single bit of information about $h$, they could then project back onto $\ket{\sk}$ via amplitude amplification. Thus, they could launch the same attacks as before, ultimately learning enough about $h$ to break soundness.
 
\subsection{Reusable soundness for a single instance}

Classically, the following is a common route for boosting one-time soundness to reusable soundness for, say, an NP argument system. Note that any \emph{fixed} instance $x$, either $x$ is a yes instance, so we don't have to worry about the prover breaking soundness with respect to $x$, or $x$ is a no instance, so by the one-time soundness of the protocol, the prover should never be able to make the verification oracle accept, rendering it useless. Thus, we can obtain reusable soundness if each instance $x$ was associated with its own pair of public and secret parameters $(\pp_x,\sparam_x)$. One method for achieving this is to fix the actual public parameters as an obfuscation of a program that takes $x$ as input and samples parameters $(\pp_x,\sparam_x)$ using randomness derived from a PRF applied to $x$ (see \cite{DBLP:conf/stoc/BitanskyGLPT15} for an example).


Although we would like to follow this approach, one difficulty is that in our setting the notion of an ``instance'' is unclear. The inputs $x$ to the circuit cannot be classified into yes and no instances, since they all produce some valid outputs. In particular, note that the attacks on reusability outlined above will work even if the prover always queries the verification oracle on the same input $x$, eventually producing a $\pi$ that causes the verifier to output $b \neq P(Q(x))$. A next attempt would be to start with some input $x$, sample $q \gets Q(x)$, and consider the pair $(x,q)$ to be an instance. However, since $Q$ is a sampling circuit, it may be the case that this particular $q$ is only sampled with small, or even negligible, probability on input $x$. Our one-time sound scheme is not equipped to prove a statement of the form, ``$q$ is in the support of the output of $Q(x)$''. Thus, we will need a different approach.

\paragraph{Committing to the history state.} Given an input $x$, we will essentially classify the \emph{history state} of the computation $Q(x)$ into ``yes'' and ``no'' instances. That is, an honestly prepared history state $\ket{\psi_{Q,x}}$ should be classified as a yes instance, while any large enough perturbation to $\ket{\psi_{Q,x}}$ should be classified as a no instance. However, looking ahead, it will be crucial that our instances are classical so that we can generate parameters by applying a PRF to the instance. Thus, what we really need is a \emph{classical commitment} to the history state. Moreover, after the state is committed, we still need it to be available for the prover to use in the one-time sound scheme. Fortunately, the prover only needs to perform standard and Hadamard basis measurements on the state (in addition to some operations that are classically controlled on the state). Thus, we have already discussed the exact primitive that we need - a Pauli functional commitment!

In \cref{fig:single-instance}, we outline a protocol where an instance $(x,\widetilde{c})$, consisting of an input $x$ and a commitment $\widetilde{c}$ to a set of history states $\ket{\psi_{Q,x}}$, is generated and fixed before the protocol begins. We use a Pauli functional commitment denoted $\PFCII$ to commit to the history states (since we will eventually require $\PFCII$ to satisfy different properties than $\PFCI$). 

We remark that correctness of this protocol relies on a couple of specific properties: (1) $\PFCI.\Com$ and $\PFCII.\Com$ are both \emph{classically controlled} on the register $\regB$, so they commute with each other, and (2) $\PFCI.\OpenZ$ (resp. $\PFCI.\OpenX$) simply measures the register $\regB$ in the standard (resp. Hadamard) basis\footnote{Though it could be performing an arbitrary operation to the $\regU$ register.} so the first bit of the string $u$ can be computed instead by applying $\PFCII.\Com$ to $\regB$ followed by $\PFCII.\OpenZ$ and $\PFCII.\DecZ$ (resp. $\PFCII.\OpenX$ and $\PFCII.\DecX$). 

Now, our goal will be to obtain reusable soundness for any fixed instance $(x,\widetilde{c})$. That is, we give the prover oracle access to $\Ver[\sparam,\widetilde{\dk},(x,\widetilde{c})](\cdot)$ where $\widetilde{\dk}$ and $(x,\widetilde{c})$ are now hard-coded and the only input is a proof $\pi$, and require that the prover cannot make the verifier output $b \neq P(Q(x))$.

\protocol{A protocol with reusable soundness for a single ``instance''}{A protocol for classical verification of quantum partitioning circuits that is reusably sound for each fixed instance $(x,\widetilde{c})$.}{fig:single-instance}{
Parameters: $\ell$ qubits per round, $r$ total rounds, $k$ Hadamard rounds.\\

Setup: Random oracle $H: \{0,1\}^* \to \{0,1\}^{\log\binom{r}{k}}$.\\

\underline{Instance generation}
\begin{itemize}
    \item \textcolor{red}{For $i \in [r], j \in [\ell]$, the verifier samples $(\widetilde{\ck}_{i,j},\widetilde{\dk}_{i,j}) \gets \PFCII.\Gen(1^\secp)$, outputs $\widetilde{\ck} \coloneqq \{\widetilde{\ck}_{i,j}\}_{i,j}$, and keeps $\widetilde{\dk} \coloneqq \{\widetilde{\dk}_{i,j}\}_{i,j}$ private.}
    \item Given an input $x$, the prover prepares sufficiently many copies of the history state $\ket{\psi_{Q,x}}$ on register $\regB = \{\regB_{i,j}\}_{i,j}$. 
    \item \textcolor{red}{For $i \in [r]$, $j \in [\ell]$, the prover applies $\PFCII.\Com(\widetilde{\ck}_{i,j},\regB_{i,j}) \to (\regB_{i,j},\widetilde{\regU}_{i,j},\widetilde{c}_{i,j})$. Then, it sets $\widetilde{c} \coloneqq (\widetilde{c}_{1,1},\dots,\widetilde{c}_{r, \ell})$ and outputs the instance $(x,\widetilde{c}).$}
\end{itemize}

\underline{$\Gen(1^\secp,Q)$}
\begin{itemize}
    \item The verifier samples $\pp = \{\ck_{i,j}\}_{i,j}, \sparam = (\{h_i,S_i\}_i,\{\dk_{i,j}\}_{i,j})$ as in \cref{fig:one-time}.
\end{itemize}

\underline{$\Prove(1^\secp,Q,\pp,x)$}
\begin{itemize}
    \item For $i \in [r], j \in [\ell]$, apply $\PFCI.\Com(\ck_{i,j},\regB_{i,j}) \to (\regB_{i,j},\regU_{i,j},c_{i,j})$, and let $c \coloneqq (c_{1,1},\dots,c_{r, \ell})$.
    \item Compute $T = H(c)$, where $T \in \{0,1\}^r$ has Hamming weight $k$.
    \item For $i : T_i = 0$ and $j \in [\ell]$, \textcolor{red}{apply $\PFCII.\OpenZ(\regB_{i,j},\widetilde{\regU}_{i,j}) \to \widetilde{u}_{i,j}$} followed by $\PFCI.\OpenZ(\regB_{i,j},\regU_{i,j}) \to u_{i,j}$. Let $u'_{i,j}$ be $u_{i,j}$ with the first bit removed.
    \item For $i : T_i = 1$ and $j \in [\ell]$, \textcolor{red}{apply $\PFCII.\OpenX(\regB_{i,j},\widetilde{\regU}_{i,j}) \to \widetilde{u}_{i,j}$} followed by $\PFCI.\OpenX(\regB_{i,j},\regU_{i,j}) \to u_{i,j}$. Let $u'_{i,j}$ be $u_{i,j}$ with the first bit removed.
    \item Output $\pi \coloneqq (c,\textcolor{red}{\widetilde{u}},u)$, where $\widetilde{u} \coloneqq (\widetilde{u}_{1,1},\dots,\widetilde{u}_{r, \ell})$ and $u \coloneqq (u'_{1,1},\dots,u'_{r, \ell})$.
\end{itemize}

\underline{$\Ver(1^\secp,Q,P,\sparam,\widetilde{\dk},(x,\widetilde{c}),\pi)$}
\begin{itemize}
    \item Parse $\pi = (c,\textcolor{red}{\widetilde{u}},u)$ and compute $T = H(c)$.
    \item For $i : T_i = 0$ and $j \in [\ell]$, \textcolor{red}{compute $b'_{i,j} \coloneqq \PFCII.\DecZ(\widetilde{\dk}_{i,j},\widetilde{c}_{i,j},\widetilde{u}_{i,j})$} and check that  $\PFCI.\DecZ(\dk_{i,j},c_{i,j},(\textcolor{red}{b'_{i,j}},u'_{i,j})) \neq \bot$.
    \item For $i : T_i = 1$ and $j \in [\ell]$:
    \begin{itemize}
        \item If $h_{i,j} = 0$, compute the bit $b_{i,j} \coloneqq \PFCI.\Invert(\dk_{i,j},c_{i,j})$, and abort if $\bot$.
        \item If $h_{i,j} = 1$, \textcolor{red}{compute $b'_{i,j} \coloneqq \PFCII.\DecX(\widetilde{\dk}_{i,j},\widetilde{c}_{i,j},\widetilde{u}_{i,j})$,} followed by the bit $b_{i,j} \coloneqq \PFCI.\DecX(\dk_{i,j},c_{i,j},(\textcolor{red}{b'_{i,j}},u'_{i,j}))$, and abort if $\bot$.
    \end{itemize}
    \item Apply a verification procedure to $\{b_{i,j}\}_{i: T_i = 1, j \notin S_i}$ based on the Hamiltonian for $Q(x)$. If this passes, parse the bits $\{b_{i,j}\}_{i: T_i = 1, j \in S_i}$ as a set of output samples $\{q_t\}_t$, and output $b \coloneqq \mathsf{Maj}\left(\{P(q_t)\}_t\right)$.
\end{itemize}
}

\paragraph{Binding.} Following the classical intuition, we would like to split $(x,\widetilde{c})$ into yes and no instances:
\begin{enumerate}
    \item ``Yes'' instance: $\widetilde{c}$ can only be opened in a way that would cause the verifier to output $b = P(Q(x))$ (or $\bot$). In this case, the prover could potentially learn the secret parameters $\sparam$ via repeated queries, but would not be able to break soundness.
    \item ``No'' instance: $\widetilde{c}$ can only be opened in a way that would cause the verifier to output $b \neq P(Q(x))$ (or $\bot$). In this case, by one-time soundness of the underlying protocol, the prover should never be able to make the verifier output anything other than $\bot$.
\end{enumerate}

Now, a crucial difference from the classical case is that a prover might launch a \emph{superposition} of both strategies, so we can't exactly classify each $(x,\widetilde{c})$ as either a yes or a no instance. However, in this case we will hope to rely on some notion of binding from the $\PFCII$ commitment scheme in order to guarantee that the prover cannot meaningfully ``mix'' these two strategies.

As discussed above, Pauli functional commitments satisfy a notion of binding to \emph{classical bits} rather than to quantum states, so we will need to capture these two options using classical openings. For the first option, the parallel repetition theorem of \cite{TCC:ACGH20,TCC:Bartusek21} can be used to show that if the verifier accepts, then \emph{many}, say 4/5, of their output samples $q_t$ from indices $\{S_i\}_{i : T_i = 1}$ must be such that $P(q_t) = P(Q(x))$. For the second option, it is clear that the verifier will only output $b \neq P(Q(x))$ if at least half of these output samples are such that $P(q_t) \neq P(Q(x))$. Thus, it suffices to show that the prover can't mix the following strategies.

\begin{enumerate}
    \item Open $\widetilde{c}$ on the positions $\{S_i\}_{i : T_i = 1}$ to samples $q_t$ such that \emph{a large fraction} (say 4/5) of them are ``honest'': $P(q_t) = P(Q(x))$.
    \item Open $\widetilde{c}$ on the positions $\{S_i\}_{i: T_i = 1}$ to samples $q_t$ such that \emph{a significant fraction} (say 1/2) of them are ``dishonest'': $P(q_t) \neq P(Q(x))$.
\end{enumerate}

Since the $\{S_i\}_i$ positions are all standard basis positions, and no string can satisfy both requirements, arguing that these strategies can't mix should now reduce to some binding property for the classical strings opened on the $\{S_i\}_{i : T_i = 1}$ positions. However, note that in \cref{fig:single-instance}, none of these positions are even opened by $\PFCII.\OpenZ$ (that is, opened in the standard basis)! Indeed, \emph{only} the test round positions are opened in the standard basis. 

Thus, we need to relate the strings opened on $\{S_i\}_{i : T_i = 1}$ to the strings opened on $\{S_i\}_{i : T_i = 0}$. Now, we note that $T$ is chosen via a random oracle applied to $c$, and $c$ already determines the only possible openings for the standard basis positions since the $\PFCI$ parameters are sampled in perfectly binding mode on these positions. Thus, it is possible to argue that the adversary can't significantly change their distribution of opened strings on test round vs. Hadamard round positions. So it suffices to show that the following strategies can't mix:

\begin{enumerate}
    \item Open $\widetilde{c}$ on the positions $\{S_i\}_{i : T_i = 0}$ to samples $q_t$ such that \emph{a large fraction} (say 3/4) of them are ``honest'': $P(q_t) = P(Q(x))$.
    \item Open $\widetilde{c}$ on the positions $\{S_i\}_{i: T_i = 0}$ to samples $q_t$ such that \emph{a significant fraction} (say 1/3) of them are ``dishonest'': $P(q_t) \neq P(Q(x))$.
\end{enumerate}

Thus, we will only need a ``vanilla'' notion of string binding for $\PFCII$, which can be reduced (see \cref{subsec:com-def} for more discussion) to a vanilla notion of single-bit binding for a quantum commitment to a classical bit. That is, given a decoding key $\widetilde{\dk}$, a commitment $\widetilde{c}$, and a bit $b$, let 
\[\Pi_{\widetilde{\dk},\widetilde{c},b} \coloneqq \sum_{\widetilde{u}:\DecZ(\widetilde{\dk},\widetilde{c},\widetilde{u}) = b}\ketbra{\widetilde{u}}{\widetilde{u}}\] be the projection onto strings $\widetilde{u}$ that open to $b$. Then for any two-part adversary $(\sC,\sU)$, where $\sC$ is the committer, and $\sU$ is the ``opener''\footnote{More precisely, $\sU$ is an algorithm that tries to break binding by rotating a state that is supported on valid openings to $b$ to a state that is supported on valid openings to $1-b$. We refer to this part of the adversary as the opener.} (modeled as a unitary), it holds that for any $b \in \{0,1\}$,

\[\E_{(\widetilde{\ck},\widetilde{\dk}) \gets \Gen(1^\secp)}\left[\big\|\Pi_{\widetilde{\dk},\widetilde{c},1-b} \sU \Pi_{\widetilde{\dk},\widetilde{c},b} \ket{\psi}\big\| : (\ket{\psi},\widetilde{c}) \gets \sC(\widetilde{\ck})\right] = \negl(\secp).\]

A couple of remarks:
\begin{itemize}

    \item Looking at \cref{fig:single-instance}, we see that this binding property should hold \emph{even} if the opener has oracle access to $\DecZ(\widetilde{\dk},\widetilde{c},\cdot)$. In fact, in the known construction of $\PFCI$ described above \cite{JACM:BCMVV21,SIAMCOMP:Mahadev22}, $\DecZ$ decoding can be \emph{public}. Moreover, this definition of binding is weaker than both the dual-mode and uncertainty properties, and thus our requirements for $\PFCII$ can \emph{so far} be satisfied by the known construction of $\PFCI$.
    \item Note that we only require binding on the standard basis positions, that is, $(i,j)$ such that $h_{i,j}$ = 0. Looking at \cref{fig:single-instance}, we see that the prover does \emph{not} have access to $\DecX(\widetilde{\dk}_{i,j},\widetilde{c}_{i,j},\cdot)$ on these positions. This is important, because the ability to perform a Hadamard basis measurement on the committed qubit implies the ability to reflect it across the $X$ (Hadamard basis) axis, thus changing its standard basis measurement. Thus, it seems difficult to design a Pauli functional commitment scheme that remains binding when the opener has access to $\DecX$.
\end{itemize}

\paragraph{Proving soundness for a single instance.} Next, we briefly discuss how soundness for a single instance can be proven based on this binding property of $\PFCII$. We start with an adversary that is assumed to be breaking soundness after a number of queries to the verification oracle. That is, they output a proof $\pi^*$ that causes the verifier to accept and output $b \neq P(Q(x))$).  We know that a significant fraction of the samples $q_t$ from positions $\{S_i\}_{i : T_i = 0}$ in $\pi^*$ must be such that $Q(q_t) \neq P(Q(x))$. Then, we replace each of the adversary's $\Ver[\sparam,\widetilde{\dk},(x,\widetilde{c})]$ queries one by one to being answered with $\bot$. While the adversary may query $\Ver[\sparam,\widetilde{\dk},(x,\widetilde{c})]$ on accepting $\pi$, we know that for such $\pi$, a large fraction of the samples $q_t$ from positions $\{S_i\}_{i : T_i = 0}$ must be such that $Q(q_t) = P(Q(x))$. Thus, by the binding of $\PFCII$, the fact that we are changing the oracle's response to such $\pi$ should have a negligible effect on the probability that the adversary continues to output $\pi^*$, since $\pi$ and $\pi^*$ contain openings to different strings and thus reside in parts of the adversary's state that have negligible overlap. After replacing all of these queries with $\bot$, we see that our adversary is actually breaking soundness of the underlying one-time sound protocol, since they no longer learn anything from their queries to $\Ver[\sparam,\widetilde{\dk},(x,\widetilde{c})]$, which completes the proof. For more details, see the discussion before the ``soundness'' part of the proof of \cref{thm:PV-soundness}.

\subsection{Public verifiability in the oracle model}

Next, we show how to obtain full-fledged public-verifiability in the oracle model. As a first attempt, we follow the classical approach, and include in the public parameters the $\PFCII$ parameters $\{\widetilde{\ck}_{i,j}\}_{i,j}$ along with a classical oracle that implements the following program $\mathsf{OGen}[k]$, which has a PRF key $k$ hard-coded.\\

\noindent\underline{$\mathsf{OGen}[k]$:}
\begin{itemize}
    \item Take an $x$ and a commitment $\widetilde{c}$ as input, and compute $s \coloneqq \PRF_k((x,\widetilde{c}))$.
    \item Compute $(\pp,\sparam) \coloneqq \Gen(1^\secp;s)$ from \cref{fig:one-time} using random coins $s$, and output $\pp$.
\end{itemize}

Unfortunately, this attempt does not result in a sound scheme. To see why, note that the adversary can query the verification oracle on multiple $(x,\widetilde{c})$, thus using it to implement the oracle $\PFCII.\DecX(\widetilde{\dk}_{i,j},\cdot,\cdot)$ for \emph{any} index $(i,j)$ of its choice. Indeed, for each index $(i,j)$, the adversary just has to find some $(x,\widetilde{c})$ that generates parameters with $h_{i,j} = 1$. As mentioned above, if the opener has access to $\PFCII.\DecX(\widetilde{\dk}_{i,j},\cdot,\cdot)$, it is not clear how to obtain any binding property for the bit on index $(i,j)$. Thus, an adversary could break soundness on a particular instance $(x,\widetilde{c})$ by querying its oracles on \emph{other} instances $(x',\widetilde{c}')$ in order to obtain access to any $\PFCII.\DecX(\widetilde{\dk}_{i,j},\cdot,\cdot)$ of its choice. 

\paragraph{Using signature tokens.} To solve this issue, we use signature tokens to make sure that the adversary's strategy on multiple distinct $(x,\widetilde{c})$ cannot ``mix''. That is, we include the signing key $\ket{\sk}$ for a signature token scheme in the public parameters, and alter $\mathsf{OGen}[k]$ as follows, where $\vk$ is the verification key for the signature token scheme.\\  

\noindent\underline{$\mathsf{OGen}[k,\textcolor{red}{\vk}]$:}
\begin{itemize}
    \item Take an $x$, a commitment $\widetilde{c}$, \textcolor{red}{and a signature $\sigma$} as input.
    \item \textcolor{red}{If $\sigma$ is a valid signature of $(x,\widetilde{c})$ under $\vk$,} compute $s \coloneqq \PRF_k((x,\widetilde{c},\textcolor{red}{\sigma}))$, \textcolor{red}{and otherwise abort.}
    \item Compute $(\pp,\sparam) \coloneqq \Gen(1^\secp;s)$ from \cref{fig:one-time} using random coins $s$, and output $\pp$.
\end{itemize}

Moreover, the verification oracle $\Ver[\vk,k]$, which now hard-codes $k$ rather than some fixed secret parameters $\sparam$, will also require a valid signature $\sigma$ on any $(x,\widetilde{c})$ that it takes as input. Intuitively, once the adversary learns the public parameters $\pp_{x,\widetilde{c},\sigma}$ corresponding to some instance $(x,\widetilde{c})$ and signature $\sigma$, it can \emph{only} access the oracles $\PFCII.\DecX(\dk_{i,j},\cdot,\cdot)$ on the specific indices $(i,j)$ such that $h_{i,j} = 1$  for the $h$ hard-coded in parameters $\pp_{x,\widetilde{c},\sigma}$. Note that this actually requires the signature token scheme to be \emph{strongly unforgeable}. That is, the adversary shouldn't even be able to produce a different signature $\sigma'$ on the same message $(x,\widetilde{c})$, since then $(x,\widetilde{c},\sigma')$ could be used to generate a fresh set of parameters with different $h$. While this notion was not proven explicitly in \cite{arxiv:BenSat16}, we note that it follows easily from their proof strategy.

To formalize this intuition, we treat the PRF as a random oracle $H$ and make use of the measure and re-program technique of \cite{C:DFMS19,C:DonFehMaj20}. If the adversary is breaking soundness, it must output a proof $\pi$ with respect to some $(x,\widetilde{c},\sigma)$. Thus, we can ``pre-measure'' one of the adversary's queries to $H$ to obtain $(x,\widetilde{c},\sigma)$, and then re-program $H((x,\widetilde{c},\sigma)) \to s$ to fresh randomness $s$, which defines fresh parameters $(\pp_{x,\widetilde{c},\sigma},\sparam_{x,\widetilde{c},\sigma})$. After this measurement, by the strong unforgeability of the signature token, the adversary won't be able to query the verification oracle on any $(x',\widetilde{c}',\sigma') \neq (x,\widetilde{c},\sigma)$, so they will only be able to access $\DecX(\widetilde{\dk}_{i,j},\cdot,\cdot)$ for $(i,j)$ such that $h_{i,j} = 1$ as defined by $\pp_{x,\widetilde{c},\sigma}$. Then, security should reduce to the single instance setting discussed above.

It is useful to note a crucial difference from the more direct but flawed approach to using signature tokens discussed earlier in the overview. There, we could never hope to use the security of the signature token, because we couldn't ``force'' the adversary to ever measure a signature (and indeed there was an attack on the attempted scheme). Here, since we are using the signature as part of the input to a random oracle, we can make use of measure-and-reprogram to first ``force'' a measurement of a signature during the security proof, and \emph{then} use signature token security.

\paragraph{The need for public decodability.} However, we have so far omitted a crucial detail. Note that \emph{before} the measurement of $(x,\widetilde{c},\sigma)$, the adversary \emph{can} access any $\DecX$ oracle of its choice. Indeed, we can't hope to prevent this, as the adversary has full access to both $\mathsf{OGen}[k,\vk]$ and $\Ver[k,\vk]$, and this measurement anyway only happens during an intermediate hybrid in the proof. 

In the reduction to the binding of $\PFCII$, this first part of the adversary corresponds to the \emph{commit} stage. Thus, we will need a Pauli functional commitment scheme where the \emph{committer} has access to both the $\DecZ$ and $\DecX$ oracles, while the \emph{opener} (necessarily) only has access to $\DecZ$.

We refer to such a commitment scheme as a \emph{Pauli functional commitment with public decodability}. Somewhat more formally, we will require the following binding property, where $\DecZ[\dk]$ (resp. $\DecX[\dk]$) is the oracle implementing the classical functionality $\DecZ(\dk,\cdot,\cdot)$ (resp. $\DecX(\dk,\cdot,\cdot)$). For any polynomial-query adversary $(\sC,\sU)$,

\[\Pr_{(\widetilde{\ck},\widetilde{\dk}) \gets \Gen(1^\secp)}\left[\big\|\Pi_{\widetilde{\dk},\widetilde{c},1-b} \sU^{\DecZ[\widetilde{\dk}]} \Pi_{\widetilde{\dk},\widetilde{c},b} \ket{\psi}\big\| = 1/\poly(\secp) : (\ket{\psi},\widetilde{c}) \gets \sC^{\DecZ[\widetilde{\dk}],\DecX[\widetilde{\dk}]}(\widetilde{\ck})\right] = \negl(\secp).\]

Unfortunately, the known construction of Pauli functional commitments \cite{JACM:BCMVV21,SIAMCOMP:Mahadev22} does not satisfy this property, which we explain in the following section. Thus, in the remainder of this overview, we demonstrate a novel approach to constructing Pauli functional commitments, and describe a construction with public decodability in the oracle model. Once we have this commitment, our construction of non-interactive publicly-verifiable classical verification of quantum partitioning circuits is complete, which also completes our construction of obfuscation for pseudo-deterministic quantum circuits.



\subsection{Pauli functional commitments with public decodability}

First, we review why the Pauli functional commitment based on claw-free hash functions \cite{JACM:BCMVV21,SIAMCOMP:Mahadev22} does not satisfy binding with public decodability. To commit to a state $\ket{\psi} = \alpha_0\ket{0} + \alpha_1\ket{1}$, the committer evaluates and measures an (approximately) two-to-one hash function $f$ in superposition to end up with a commitment $c$ and a left-over state $\alpha_0\ket{0}\ket{x_0} + \alpha_1\ket{1}\ket{x_1}$, where $x_0,x_1$ are $n$-bit strings such that $x_0$ starts with 0 and $x_1$ starts with 1. If they do this honestly, it will hold that $f(x_0) = f(x_1) = c$. Moreover, the receiver has a trapdoor for $f$ and can thus compute both $x_0$ and $x_1$ from $c$.

Now, a standard basis opening to the bit $b$ is the string $x_b$. To open $\ket{\psi}$ in the Hadamard basis, the committer measures each qubit of their left-over state in the Hadamard basis, obtaining a bit $b'$ and a string $d$. It follows that $b \coloneqq b' + d \cdot (x_0 + x_1)$\footnote{Here, and throughout this section, all arithmetic will be over $\bbF_2$.} is a decoding of the Hadamard basis measurement of $\ket{\psi}$. 
Thus, if we define $S \coloneqq \{0,x_0 + x_1\}$ to be a one-dimensional subspace of $\bbF_2^n$, access to the $\DecX$ oracle provides the committer with a membership oracle for the subspace $S^\bot$. Since $S$ is just one dimension, it is straightforward to use this oracle to learn a description of $S$, which is $x_0 + x_1$. But if the committer $\sC$ computes the string $x_0+x_1$ and passes it along with $\alpha_0\ket{0}\ket{x_0} + \alpha_1\ket{1}\ket{x_1}$ to $\sU$, the opener can first measure their state in the standard basis to obtain $(b,x_b)$, and then use $x_0 + x_1$ to compute $(1-b,x_{1-b})$, obtaining a valid opening for \emph{both} bits in the standard basis. This completely breaks any notion of binding for the commitment scheme.

\paragraph{Using a larger subspace.} To solve this issue, we follow this template but increase the dimension of $S$, thus decreasing the dimension of $S^\bot$. That is, suppose that the left-over state after a commitment to $\ket{\psi} = \alpha_0\ket{0} + \alpha_1\ket{1}$ was instead \[\alpha_0\ket{0}\ket{A_0} + \alpha_1\ket{1}\ket{A_1},\] where $A = S+v$ is a coset of a random $n/2$-dimensional subspace $S$,\footnote{Assume that $A$ and $S$ are ``balanced'', meaning that exactly half of their vectors start with 0.} $A_0$ is the affine subspace of vectors in $A$ that start with 0, and $A_1$ is the affine subspace of vectors in $A$ that start with 1. Here, we are using the notation \[\ket{A} \coloneqq \frac{1}{\sqrt{|A|}}\sum_{s \in A}\ket{s}\] for any affine subspace $A$.

It can be shown that if this state is measured in the Hadamard basis to produce $b',d$, then $b \coloneqq b' \oplus r_{d,S}$ is a decoding of the Hadamard basis measurement of $\ket{\psi}$, where we define the bit $r_{d,S} = 0$ if $d \in S^\bot$ and $r_{d,S} = 1$ if $d + (1,0,\dots,0) \in S^\bot$. Thus, the $\DecX$ oracle can be implemented just given a membership checking oracle for $S^\bot$. Moreover, now that $S^\bot$ has $n/2$ dimensions, and $S$ is random, it is no longer clear that an adversary can use oracle access to $S^\bot$ to learn a description of $S$.

\paragraph{Completing the construction.} Now, two main questions remain: (1) How do we define a commitment key $\ck$ that enables the committer to apply the map $\ket{b} \to \ket{b}\ket{A_b}$? (2) What is the actual \emph{commitment string} $c$? We will first address question (1).

Our commitment key will consist of a quantum state and a classical oracle. The $\Gen$ algorithm will sample a random $n/2$-dimensional affine subspace $A = S+v$, set $\dk = A$, and release the quantum state $\ket{A}$, which is a uniform superposition over all vectors in $A$. Note that $\ket{A} = \frac{1}{\sqrt{2}}\ket{A_0} + \frac{1}{\sqrt{2}}\ket{A_1}$, which can be seen as the ``$\ket{+}$'' state in the two-dimensional space spanned by $\ket{A_0}$ and $\ket{A_1}$. Thus, for any $b \in \{0,1\}$, we need to allow the committer to rotate the $\ket{+}$ state to the ``$\ket{b}$'' state $\ket{A_b}$. It is easy to project onto vectors that start with either 0 or 1, but we will have to implement a reflection across the $X$-axis of this space if this projection results in $\ket{A_{1-b}}$. While it is clear that this can be done given a quantum oracle implementing the projection $\ketbra{A}{A}$, it was observed by \cite{STOC:AGKZ20} that a \emph{classical} oracle for membership in $S^\bot$ suffices! Thus, as a first attempt, we will set the commitment key $\ck$ to consist of $\ket{A}$ and an oracle $O[S^\bot]$ for membership in $S^\bot$.

This brings us to our second question. So far, we have shown that a committer, given $\ck$, can perform the map

\[\alpha_0\ket{0} + \alpha_1\ket{1} \to \alpha_0\ket{0}\ket{A_0} + \alpha_1\ket{1}\ket{A_1},\]

and give this final state to the opener. However, since the opener also has access to $\ck$ and thus to $O[S^\bot]$, there is no sense in which the original state is committed, since the opener could continue to use $O[S^\bot]$ to rotate arbitrarily around the space spanned by $\ket{A_0}$ and $\ket{A_1}$.

To fix this, we use a signature token. We include the signing key $\ket{\sk}$ for a single-bit signature token scheme in $\ck$, and alter the oracle $O[S^\bot]$ so that it only responds given a valid signature on 0. The actual commitment string $c$ will then be a signature on 1. Thus, while the \emph{committer} is free to rotate around $\mathsf{span}\{\ket{A_0},\ket{A_1}\}$ using access to $S^\bot$, as soon as it outputs a valid classical commitment string $c$, the membership oracle for $S^\bot$ will become inaccessible and the \emph{opener} will intuitively be unable to make further changes to the state.

\paragraph{The proof of binding.} Now, it remains to formalize this intuition, and prove that this scheme satisfies binding with public decodability. After appealing to the security of the signature token scheme, we can reduce this to showing that for any polynomial-query adversary $(\sC,\sU)$,

\[\Pr\left[\big\|\Pi_{A_1} \sU^{O[A]} \Pi_{A_0}\ket{\psi} \big\| \geq 1/\poly(\secp) : \ket{\psi} \gets \sC^{O[A],O[S^\bot]}(\ket{A})\right] = \negl,\] 

where the probability is over a random choice of $n/2$-dimensional affine subspace $A = S+v$, and $\Pi_{A_b}$ is the projection onto vectors $s \in A_b$. Note that $\sC$ and $\sU$ have access to $O[A]$, the membership checking oracle for the affine subspace $A$ since this is needed to implement $\DecZ$, and $\sC$ has access to $O[S^\bot]$ because it is needed to implement both $\ck$ and $\DecX$.

To show this, we will follow \cite{10.1145/2213977.2213983}'s blueprint for proving security in the classical oracle model, and proceed via the following steps.

\begin{enumerate}
    \item Show that we can instead sample $A$ from a public ambient space of dimension $3n/4$, and remove $\sU$'s access to the $O[A]$ oracle.
    \item Perform a worst-case to average-case reduction over the sampling of $A$. 
    \item Have the committer apply amplitude amplification onto $\Pi_{A_0}$. At this point, we can reduce the problem to showing that for small enough $\epsilon$, there cannot exist a query-bounded $\sC$ and a unitary $\sU$ such that for \emph{all} $n/2$-dimensional affine subspaces $A$ of $\bbF_2^{3n/4}$, \[\ket{\psi_A} \in \mathsf{Im}(\Pi_{A_0}) ~~ \text{and} ~~ \big\| \Pi_{A_1}\sU \ket{\psi_A}\big\| \geq \epsilon,\] where $\ket{\psi_A} \gets \sC^{O[A],O[S^\bot]}(\ket{A})$.
    \item Apply the ``inner-product adversary method'' of \cite{10.1145/2213977.2213983}. That is, we (i) define a relation $\cR$ on pairs of affine subspaces $(A,B)$ such that $\braket{A | B} = 1/2$ for all $(A,B) \in \cR$, (ii) argue that for any collection of states $\{\ket{\psi_A}\}_A$ that satisfy the above conditions, \[\E_{(A,B) \gets \cR}[|\braket{\psi_A | \psi_B}|] \leq 1/2 - \delta\] for some large enough $\delta$, and (iii) conclude that if $\sC$ can decrease the expected inner product over $\cR$ by $\delta$, it must be making ``too many'' oracle queries, yielding a contradiction. 
\end{enumerate}

However, arguing part (ii) of this final step turns out to be significantly more challenging than analogous claims in previous work (e.g. \cite{10.1145/2213977.2213983,arxiv:BenSat16,STOC:AGKZ20}). Indeed, the condition is neither that $\ket{\psi_A}$ is some \emph{fixed} state (as in  \cite{10.1145/2213977.2213983}), or that measuring $\ket{\psi_A}$ in the standard basis yields a classical string in some well-defined set (as in \cite{arxiv:BenSat16,STOC:AGKZ20}). Rather, the condition involves reasoning about the overlap between two projectors, where one is defined via an \emph{arbitrary} rotation $\sU$. Moreover, we only have the guarantee that $\ket{\psi_A}$ is $\epsilon$-close to $\mathsf{Im}(\sU^\dagger\Pi_{A_1}\sU)$, and this value cannot be amplified to 1 (depending on $\sU$, the images of $\Pi_{A_0}$ and $\sU^\dagger\Pi_{A_1}\sU$ may not intersect at all). 

In \cref{appendix:inner-product}, we show that for our definition of $\cR$, $\delta > \epsilon^{13}$, which is enough for us to reach a contradiction and complete the proof. We proceed by contradiction, and eventually reduce to a Welch bound \cite{1055219}, which upper bounds the number of vectors of a given minimum distance that can be packed into a low-dimensional Hilbert space. We defer a further overview and details of this proof to \cref{appendix:inner-product}. This completes our proof of binding with public decodability.

\section{Preliminaries}

Let $\secp$ denote the security parameter. We write $\negl(\cdot)$ to denote any \emph{negligible} function, which is a function $f$ such that for every constant $c \in \mathbb{N}$ there exists $N \in \mathbb{N}$ such that for all $n > N$, $f(n) < n^{-c}$. We write $\nonnegl(\cdot)$ to denote any function $f$ that is not negligible. That is, there exists a constant $c$ such that for infinitely many $n$, $f(n) \geq n^{-c}$. Finally, we write $\poly(\cdot)$ to denote any polynomial function $f$. That is, there exists a constant $c$ such that for all $n \in \bbN$, $f(n) \leq n^{-c}$. For two probability distributions $D_0,D_1$ with classical support $S$, let \[\mathsf{TV}\left(D_0,D_1\right) \coloneqq \sum_{x \in S}|D_0(x) - D_1(x)|\] denote the total variation distance. For a set $S$, we let $x \gets S$ denote sampling a uniformly random element $x$ from $S$. For a classical randomized algorithm $y \gets C(x)$, we let $y \coloneqq C(x;r)$ denote running $C$ with random coins $r$.

\subsection{Quantum information}

An $n$-qubit register $\regX$ is a named Hilbert space $\bbC^{2^n}$. A pure quantum state on register $\regX$ is a unit vector $\ket{\psi}^{\regX} \in \bbC^{2^n}$. A mixed state on register $\regX$ is described by a density matrix $\rho^{\regX} \in \bbC^{2^n \times 2^n}$, which is a positive semi-definite Hermitian operator with trace 1. 

A \emph{quantum operation} $F$ is a completely-positive trace-preserving (CPTP) map from a register $\regX$ to a register $\regY$, which in general may have different dimensions. That is, on input a density matrix $\rho^{\regX}$, the operation $F$ produces $F(\rho^{\regX}) = \tau^{\regY}$ a mixed state on register $\regY$. A \emph{unitary} $U: \regX \to \regX$ is a special case of a quantum operation that satisfies $U^\dagger U = U U^\dagger = \bbI^{\regX}$, where $\bbI^{\regX}$ is the identity matrix on register $\regX$. A \emph{projector} $\Pi$ is a Hermitian operator such that $\Pi^2 = \Pi$, and a \emph{projective measurement} is a collection of projectors $\{\Pi_i\}_i$ such that $\sum_i \Pi_i = \bbI$. Throughout this work, we will often write an expression like $\Pi\ket{\psi}$, where $\ket{\psi}$ has been defined on some multiple registers, say $\regX$, $\regY$, and $\regZ$, and $\Pi$ has only been defined on a subset of these registers, say $\regY$. In this case, we technically mean $(\bbI^\regX \otimes \Pi \otimes \bbI^\regZ)\ket{\psi}$, but we drop the identity matrices for notational convenience. 

A family of quantum circuits is in general a sequence of quantum operations $\{C_\secp\}_{\secp \in \bbN}$, parameterized by the security parameter. We say that the family is \emph{quantum polynomial time} (QPT) if $C_\secp$ can be implemented with a $\poly(\secp)$-size circuit. A family of \emph{oracle-aided} quantum circuits $\{C^F_\secp\}_{\secp \in \bbN}$ have access to an oracle $F: \{0,1\}^* \to \{0,1\}^*$ that implements some classical map. That is, $C$ can apply a unitary that maps $\ket{x}\ket{y} \to \ket{x}\ket{y \oplus F(x)}$. Finally, we will sometimes also consider families of \emph{unitaries} $\{U_\secp\}_{\secp \in \bbN}$ and families of \emph{oracle-aided unitaries} $\{U^F_\secp\}_{\secp \in \bbN}$, where each operation between oracle queries is a unitary.

Let $\Tr$ denote the trace operator. For registers $\regX,\regY$, the \emph{partial trace} $\Tr^{\regY}$ is the unique operation from $\regX,\regY$ to $\regX$ such that for all $(\rho,\tau)^{\regX,\regY}$, $\Tr^{\regY}(\rho,\tau) = \Tr(\tau)\rho$. The \emph{trace distance} between states $\rho,\tau$, denoted $\TD(\rho,\tau)$ is defined as \[\TD(\rho,\tau) \coloneqq \frac{1}{2}\Tr\left(\sqrt{(\rho-\tau)^\dagger(\rho-\tau)}\right).\] The trace distance between two states $\rho$ and $\tau$ is an upper bound on the probability that any (unbounded) algorithm can distinguish $\rho$ and $\tau$. 

\begin{lemma}[Gentle measurement \cite{DBLP:journals/tit/Winter99}]\label{lemma:gentle-measurement}
Let $\rho$ be a quantum state and let $(\Pi,\bbI-\Pi)$ be a projective measurement such that $\Tr(\Pi\rho) \geq 1-\delta$. Let \[\rho' = \frac{\Pi\rho\Pi}{\Tr(\Pi\rho)}\] be the state after applying $(\Pi,\bbI-\Pi)$ to $\rho$ and post-selecting on obtaining the first outcome. Then, $\TD(\rho,\rho') \leq 2\sqrt{\delta}$.
\end{lemma}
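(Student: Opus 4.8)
The plan is to establish the slightly stronger bound $\TD(\rho,\rho') \leq \sqrt{\delta}$, which immediately implies the claimed inequality since $\sqrt{\delta} \leq 2\sqrt{\delta}$, by lifting the statement to pure states via a purification. First I would introduce an auxiliary register $\regZ$ and a purification $\ket{\psi}^{\regX\regZ}$ of $\rho$, so that $\Tr^{\regZ}(\ketbra{\psi}{\psi}) = \rho$. Applying $\Pi$ (acting on $\regX$) to $\ket{\psi}$ and renormalizing yields the pure state $\ket{\psi'} \coloneqq (\Pi \otimes \bbI^{\regZ})\ket{\psi} / \big\|(\Pi \otimes \bbI^{\regZ})\ket{\psi}\big\|$, and a short computation shows $\Tr^{\regZ}(\ketbra{\psi'}{\psi'}) = \Pi\rho\Pi / \Tr(\Pi\rho) = \rho'$; thus $\ket{\psi}$ and $\ket{\psi'}$ are purifications of $\rho$ and $\rho'$, respectively.

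Next I would lower bound the overlap of these two purifications. Since $\Pi$ is a projector, $\big\|(\Pi \otimes \bbI^{\regZ})\ket{\psi}\big\|^2 = \bra{\psi}(\Pi \otimes \bbI^{\regZ})\ket{\psi} = \Tr(\Pi\rho)$, and therefore
\[\braket{\psi | \psi'} = \frac{\bra{\psi}(\Pi \otimes \bbI^{\regZ})\ket{\psi}}{\big\|(\Pi \otimes \bbI^{\regZ})\ket{\psi}\big\|} = \sqrt{\Tr(\Pi\rho)} \geq \sqrt{1-\delta}.\]
Then I would invoke the standard identity that for pure states $\TD(\ketbra{\psi}{\psi},\ketbra{\psi'}{\psi'}) = \sqrt{1 - |\braket{\psi | \psi'}|^2}$, which combined with the above gives $\TD(\ketbra{\psi}{\psi},\ketbra{\psi'}{\psi'}) \leq \sqrt{\delta}$. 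Finally, I would apply contractivity of the trace distance under CPTP maps — in particular under the partial trace $\Tr^{\regZ}$ — to conclude
\[\TD(\rho,\rho') = \TD\big(\Tr^{\regZ}(\ketbra{\psi}{\psi}),\ \Tr^{\regZ}(\ketbra{\psi'}{\psi'})\big) \leq \TD(\ketbra{\psi}{\psi},\ketbra{\psi'}{\psi'}) \leq \sqrt{\delta} \leq 2\sqrt{\delta},\]
which is the claimed bound.

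There is essentially no obstacle here; the only points requiring care are verifying that the renormalized post-measurement purification really traces down to $\rho'$, and recalling the two textbook facts used (the pure-state formula for trace distance and contractivity of trace distance under quantum channels). As an alternative that avoids purification and yields exactly the constant $2$ appearing in the statement, one can instead write $\rho - \rho' = (1 - 1/p)\Pi\rho\Pi + \Pi\rho(\bbI - \Pi) + (\bbI - \Pi)\rho\Pi + (\bbI - \Pi)\rho(\bbI - \Pi)$ with $p \coloneqq \Tr(\Pi\rho) \geq 1-\delta$, bound each of the two ``diagonal'' terms in trace norm by $1 - p \leq \delta$, bound each of the two ``off-diagonal'' terms by $\sqrt{p(1-p)} \leq \sqrt{\delta}$ using the spectral decomposition of $\rho$ together with Cauchy--Schwarz, and apply the triangle inequality to obtain $\TD(\rho,\rho') \leq \delta + \sqrt{\delta} \leq 2\sqrt{\delta}$.
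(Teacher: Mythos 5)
The paper does not prove this lemma at all---it is imported directly from Winter's paper---so there is no in-paper argument to compare against; the only question is whether your proof stands on its own, and it does. Your purification argument is correct: the state $\ket{\psi'} \coloneqq (\Pi \otimes \bbI^{\regZ})\ket{\psi}/\|(\Pi \otimes \bbI^{\regZ})\ket{\psi}\|$ indeed traces down to $\rho'$ (since $\Tr^{\regZ}$ commutes with operators acting only on $\regX$), the overlap computation $\braket{\psi|\psi'} = \sqrt{\Tr(\Pi\rho)}$ uses idempotence of $\Pi$ correctly, and combining the pure-state formula $\TD = \sqrt{1-|\braket{\psi|\psi'}|^2}$ with contractivity of $\TD$ under the partial trace is exactly right. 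This yields the bound $\sqrt{\delta}$, which is in fact tight (take $\rho$ pure with $\|\Pi\ket{\phi}\|^2 = 1-\delta$) and strictly sharper than the stated $2\sqrt{\delta}$; the weaker constant in the literature typically comes from handling non-projective POVM elements or unnormalized post-measurement states, neither of which is at issue here. Your alternative decomposition argument is also sound: the two diagonal terms each have trace norm $1-p$, the off-diagonal terms are bounded by $\sqrt{p(1-p)}$ via H\"older/Cauchy--Schwarz on $\Pi\sqrt{\rho}\cdot\sqrt{\rho}(\bbI-\Pi)$, and with the paper's normalization $\TD = \tfrac{1}{2}\|\cdot\|_1$ this gives $\delta + \sqrt{\delta} \leq 2\sqrt{\delta}$ for $\delta \leq 1$ (the case $\delta > 1$ being vacuous). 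Either argument fully suffices for the paper's purposes.
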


We will also often make use of the following simple claim.

\begin{claim}\label{claim:max-prob}
Consider a register $\regR$ on $n$ qubits and a distribution $\cF$ over classical functions $f : \{0,1\}^n \to \{0,1\}$. For any such $f$, let $\Pi_f$ be the projection onto $x$ such that $f(x) = 1$. Then for any $\ket{\psi}$ on register $\regR$, \[\E_{f \gets \cF}\left[\big\|\Pi_f \ket{\psi} \big\|^2\right] \leq \max_x \left\{\Pr_{f \gets \cF}[f(x) = 1]\right\}.\]
\end{claim}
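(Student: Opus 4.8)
The plan is to expand $\ket{\psi}$ in the standard basis and compute the expected squared norm term by term. Write $\ket{\psi} = \sum_{x \in \{0,1\}^n} \alpha_x \ket{x}$ on register $\regR$, so that $\sum_x |\alpha_x|^2 = 1$. Since $\Pi_f$ is the projection onto the standard-basis vectors $\ket{x}$ with $f(x) = 1$, for each fixed $f$ we have $\Pi_f \ket{\psi} = \sum_{x : f(x) = 1} \alpha_x \ket{x}$, and hence $\big\|\Pi_f\ket{\psi}\big\|^2 = \sum_{x : f(x) = 1} |\alpha_x|^2 = \sum_x |\alpha_x|^2 \cdot \mathbf{1}[f(x)=1]$.

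Next I would take the expectation over $f \gets \cF$ and swap the (finite) sum with the expectation by linearity:
\[
\E_{f \gets \cF}\left[\big\|\Pi_f\ket{\psi}\big\|^2\right] = \sum_x |\alpha_x|^2 \cdot \Pr_{f \gets \cF}[f(x) = 1] \leq \left(\max_x \Pr_{f \gets \cF}[f(x)=1]\right) \cdot \sum_x |\alpha_x|^2.
\]
Since $\sum_x |\alpha_x|^2 = 1$ (the state is a unit vector), the right-hand side equals $\max_x \Pr_{f \gets \cF}[f(x)=1]$, which is exactly the claimed bound.

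There is essentially no obstacle here: the only mild subtlety is justifying the interchange of sum and expectation, which is immediate because the index set $\{0,1\}^n$ is finite. One should also note the argument goes through verbatim if $\ket{\psi}$ lives on a larger register $\regR \otimes \regS$ and $\Pi_f$ acts only on $\regR$ — one simply expands $\ket{\psi} = \sum_x \ket{x}^{\regR} \otimes \ket{\phi_x}^{\regS}$ with $\sum_x \|\ket{\phi_x}\|^2 = 1$ and repeats the computation, replacing $|\alpha_x|^2$ by $\|\ket{\phi_x}\|^2$ throughout.
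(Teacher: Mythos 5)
Your proof is correct and matches the paper's own argument exactly: expand $\ket{\psi}$ in the standard basis, swap the expectation with the finite sum, and bound the resulting convex combination by the maximum probability. The remark about larger registers is a harmless bonus, not needed for the claim as stated.
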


\begin{proof}
For any $\ket{\psi} \coloneqq \sum_x \alpha_x \ket{x}$, write 
\[\E_{f \gets \cF}\left[\big\| \Pi_f \ket{\psi}\big\|^2\right] = \E_{f \gets \cF}\left[\sum_{x : f(x) = 1}|\alpha_x|^2\right] = \sum_x \Pr_{f \gets \cF}[f(x)=1] \cdot |\alpha_x|^2 \leq \max_x \left\{\Pr_{f \gets \cF}[f(x) = 1]\right\},\] where the last inequality holds because $\{|\alpha_x|^2\}_x$ is a probability distribution.
\end{proof}

Finally, we define the notion of a pseudo-deterministic quantum ciruit.

\begin{definition}[Pseudo-deterministic quantum circuit]
\label{def: deterministic}
A family of psuedo-deterministic quantum circuits $\{Q_\secp\}_{\secp \in \mathbb{N}}$ is defined as follows. The circuit $Q_\secp$ takes as input a classical string $x \in \{0,1\}^{n(\secp)}$ and outputs a bit $b \gets Q_\secp(x)$. The circuit is pseudo-deterministic if for every sequence of classical inputs $\{x_\secp\}_{\secp \in \mathbb{N}}$, there exists a sequence of outputs $\{b_\secp\}_{\secp \in \mathbb{N}}$ such that \[\Pr[Q_\secp(x_\secp) \to b_\secp] = 1-\negl(\secp).\] We will often leave the dependence on $\secp$ implicit, and just refer to pseudo-deterministic circuits $Q$ with input $x$. In a slight abuse of notation, we will denote by $Q(x)$ the bit $b$ such that $\Pr[Q(x) \to b] = 1-\negl(\secp)$.
\end{definition}

\subsection{Obfuscation}

\begin{definition}[Virtual black-box obfuscation]\label{def:ideal-obfuscation}
  A virtual black-box (VBB) obfuscator for a family of pseudo-deterministic quantum (resp. classical) circuits is a pair of QPT algorithms $(\Obf,\Eval)$ with the following syntax.
  \begin{itemize}
      \item $\Obf(1^\secp, Q) \to \widetilde{Q}$: $\Obf$ takes as input the security parameter $1^\secp$ and the description of a quantum (resp. classical) circuit $Q$, and outputs a (potentially quantum) obfuscated circuit $\widetilde{Q}$.
      \item $\Eval(\widetilde{Q},x) \to b$: $\Eval$ takes as input an obfuscated circuit $\widetilde{Q}$ and an input $x$, and outputs a bit $b \in \{0,1\}$.
  \end{itemize}
  A VBB obfuscator should satisfy the following properties for any pseudo-deterministic (resp. classical) family of circuits $Q = \{Q_\secp\}_{\secp \in \bbN}$ with input length $n = n(\secp)$.
  \begin{itemize}
      \item \textbf{Correctness}: It holds with probability $1-\negl(\secp)$ over $\widetilde{Q} \gets \Obf(1^\secp,Q)$ that for all $x \in \{0,1\}^n$, $\Pr[\Eval(\widetilde{Q},x) \to Q(x)] = 1-\negl(\secp)$.
      \item \textbf{Security}: For any QPT adversary $\{\sA_\secp\}_{\secp \in \bbN}$, there exists a QPT simulator $\{\sS_\secp\}_{\secp \in \bbN}$ such that \[\bigg|\Pr\left[1 \gets \sA_\secp\left(\Obf(1^\secp,Q)\right)\right] - \Pr\left[1 \gets \sS_\secp^{O[Q]}\right]\bigg| = \negl(\secp),\] where $O[Q]$ is the oracle that computes the map $x \to Q(x)$.
  \end{itemize}
\end{definition}

\begin{definition}[Indistinguishability obfuscation]\label{def:iO}
An indistinguishability obfuscator (iO) for a family of pseudo-deterministic (resp. classical) circuits is a pair of QPT algorithms $(\Obf,\Eval)$ that has the same syntax and correctness properties as a VBB obfuscator and satisfies the following security property. For any QPT adversary $\{\sA_\secp\}_{\secp \in \bbN}$ and pair of functionally equivalent families of pseudo-deterministic (resp. classical) circuits $Q_0 = \{Q_{0,\secp}\}_{\secp \in \bbN}, Q_1 = \{Q_{1,\secp}\}_{\secp \in \bbN}$,

\[\left| \Pr\left[1 \gets \sA_\secp\left(\Obf(1^\secp,Q_0)\right)\right] - \Pr\left[1 \gets \sA_\secp\left(\Obf(1^\secp,Q_1)\right)\right]\right| = \negl(\secp).\]
\end{definition}

\subsection{Dual-mode randomized trapdoor claw-free hash functions}


\begin{definition}\label{def:clawfree}
  Let $\{X_\secp\}_{\secp \in \bbN}$ and $\{Y_\secp\}_{\secp \in \bbN}$ be families of finite sets. Below, we will leave the dependence of these sets on $\secp$ implicit. A \emph{dual-mode randomized trapdoor claw-free hash function} is described by a tuple of algorithms $(\Gen,\Eval,\Invert,\Ch,\IsValid)$ with the following syntax.
  \begin{itemize}
      \item $\Gen(1^\lambda,h) \to (\pk,\sk)$ is a randomized classical algorithm that takes as input a security parameter $1^\secp$ and a bit $h\in\{0,1\}$ (where $h=0$ indicates \emph{injective mode} and $h=1$ indicates \emph{2-to-1 mode}), and outputs a public key $\pk$ and a secret key $\sk$. The public key $\pk$ implicitly defines a function $f_\pk: \{0,1\} \times X \to \cD_Y$, where $\cD_Y$ is the set of probability distributions over $Y$.
 
      \item $\Eval(\pk,b) \to \ket{\psi_{\pk,b}}$ is a QPT algorithm that takes as input a public key $\pk$ and a bit $b$, and outputs a fixed pure state $\ket{\psi_{\pk,b}}^{\regX,\regY}$ on two registers $\regX$ and $\regY$, where $\regX$ is spanned by the elements of $X$ and $\regY$ is spanned by the elements of $Y$. We then define \[\Eval[\pk] \coloneqq \ketbra{0}{0}^\regB \otimes \Eval(\pk,0) + \ketbra{1}{1}^\regB \otimes \Eval(\pk,1),\] which is a map from the single qubit register $\regB$ to registers $(\regB,\regX,\regY)$.
      \item $\Invert(h,\sk,y)$ is a deterministic classical algorithm that takes as input $h\in\{0,1\}$, a secret key $\sk$, and an element $y \in Y$. If $h=0$, it outputs a pair $(b,x)\in\{0,1\}\times X$ or $\bot$. If $h=1$, it outputs two pairs $(0,x_0)$ and $(1,x_1)$ with $x_0,x_1 \in X$, or $\bot$. 
      \item $\Ch(\pk,b,x,y) \to \{\top,\bot\}$ is a deterministic classical algorithm that takes as input a public key $\pk$, a bit $b\in\{0,1\}$, an element $x \in X$, and an element $y \in Y$, and outputs either $\top$ or $\bot$.
      \item $\IsValid(x_0, x_1, d) \to \{\top,\bot\}$ is a deterministic classical algorithm that takes as input two elements $x_0, x_1 \in X$ and a string $d$, and outputs either $\top,\bot$, characterizing membership in a set that we call 
      \[\Valid_{x_0, x_1} \coloneqq \{d : \IsValid(x_0,x_1,d) = 1\}.\]

  \end{itemize}
  We require that the following properties are satisfied.
  \begin{enumerate}
      \item \textbf{\em Correctness:} 
      \begin{enumerate}
          \item 
      For all $(\pk,\sk) \in \Gen(1^\secp,0)$: For every $b\in\{0,1\}$, every $x\in X$, and every $y\in{\sf Supp}(f_{\pk}(b,x))$, 
      \[ \Invert(0,\sk,y) = (b,x).\]

      \item For all $(\pk,\sk) \in \Gen(1^\secp,1)$: For
      every $b\in\{0,1\}$, every  $x\in X$, and every $y\in{\sf Supp}(f_{\pk}(b,x))$, 
      $$ \Invert(1,\sk,y) = ((0,x_0),(1,x_1))$$ such that $x_b=x$, $y\in{\sf Supp}(f_{\pk}(0,x_0))$, and $y\in{\sf Supp}(f_{\pk}(1,x_1))$.
      
      \item For all $(\pk,\sk) \in \Gen(1^\secp,0) \cup \Gen(1^\secp,1)$, every $b\in\{0,1\}$ and every $x\in X$, it holds that $\Ch(\pk,(b,x),y) = 1$
      if and only if $y\in{\sf Supp}(f_{\pk}(b,x))$.
      
      \item For all $(\pk,\sk) \in \Gen(1^\secp,0) \cup \Gen(1^\secp,1)$ and every $b \in \{0,1\}$, it holds that \[\TD\left(\ket{\psi_{\pk,b}}^{\regX,\regY},\frac{1}{\sqrt{|X|}}\sum_{x \in X,y \in Y}\sqrt{(f_{\pk}(b,x))(y)}\ket{x}^\regX\ket{y}^\regY\right) = \negl(\secp),\] where $\ket{\psi_{\pk,b}} \gets \Eval(\pk,b)$. 
      
      
      \item For all $(\pk,\sk)\in\Gen(1^\secp,1)$ and every pair of elements $x_0,x_1 \in X$, the density of $\Valid_{x_0, x_1}$ is $1-\negl(\secp)$.      \end{enumerate}
      
      \item \textbf{\em Key indistinguishability:} For every QPT adversary $\{\sA_\secp\}_{\secp \in \bbN}$,
      \[ \Big| \Pr\left[1 \gets \sA_\secp(\pk) : (\pk,\sk) \gets \Gen(1^\secp,0)\right] - \Pr\left[ 1 \gets \sA_\secp(\pk) : (\pk,\sk) \gets \Gen(1^\secp,1) \right] \Big| = \negl(\secp).\]
      
      \item \textbf{\em Adaptive hardcore bit:} There is an efficiently computable and efficiently invertible injection $J: X \to \{0,1\}^w$ such that for every QPT adversary $\{\sA_\secp\}_{\secp \in \bbN}$,
      \begin{align*}
      & \Bigg| \Pr\left[\begin{array}{l}\Ch(\pk,b,x,y) = 1 ~~ \wedge \\ d \in \Valid_{x_0,x_1} ~~ \wedge \\ d \cdot (J(x_0) \oplus J(x_1)) = 0  \end{array}: \begin{array}{r}(\pk,\sk)\leftarrow \Gen(1^\secp,1) \\ (y,b,x,d) \gets {\sA_\secp}(\pk) \\ ((0,x_0),(1,x_1)) \coloneqq \Invert(1,\sk,y)\end{array}\right] \\
      & -\Pr\left[\begin{array}{l}\Ch(\pk,b,x,y) = 1 ~~ \wedge \\ d \in \Valid_{x_0,x_1} ~~ \wedge\\ d \cdot (J(x_0) \oplus J(x_1)) = 1  \end{array}: \begin{array}{r}(\pk,\sk)\leftarrow \Gen(1^\secp,1) \\ (y,b,x,d) \gets {\sA_\secp}(\pk) \\ ((0,x_0),(1,x_1)) \coloneqq \Invert(1,\sk,y)\end{array}\right]\Bigg| = \negl(\secp).
      \end{align*}

  \end{enumerate}
\end{definition}

The works of \cite{JACM:BCMVV21,SIAMCOMP:Mahadev22} showed that, assuming QLWE, there exists a dual-mode randomized trapdoor claw-free hash function.

\subsection{Quantum fully-homomorphic encryption}\label{subsec:QFHE}

We define quantum fully-homomorphic encryption (QFHE) with classical keys and classical encryption of classical messages. One could also define encryption for quantum states and decryption for quantum ciphertexts, but we will not need that in this work.

\begin{definition}[Quantum fully-homomorphic encryption]\label{def:qfhe}
A quantum fully-homomorphic encryption scheme $(\Gen,\Enc,\Eval,\Dec)$ consists of the following efficient algorithms.
\begin{itemize}
    \item $\Gen(1^\lambda,D) \to (\pk,\sk)$: On input the security parameter $1^\secp$ and a circuit depth $D$, the key generation algorithm returns a public key $\pk$ and a secret key $\sk$.
    \item $\Enc(\pk, x) \to \ct$: On input the public key $\pk$ and a classical plaintext $x$, the encryption algorithm returns a classical ciphertext $\ct$.
    \item $\Eval(Q, \ct) \to \widetilde{\ct}$: On input a quantum circuit $Q$ and a ciphertext $\ct$, the quantum evaluation algorithm returns an evaluated  ciphertext $\widetilde{\ct}$.
    \item $\Dec(\sk, \ct) \to x$: On input the secret key $\sk$ and a classical ciphertext $\ct$, the decryption algorithm returns a message $x$.
\end{itemize}
\end{definition}

The scheme should satisfy the standard notion of semantic security. 

\begin{definition}[Semantic security]\label{def:semantic-security}
A QFHE scheme $(\Gen,\Enc,\Eval,\Dec)$ is secure if for any QPT adversary $\{\sA_\secp\}_{\secp \in \bbN}$ and circuit depth $D$, 

\[\bigg| \Pr\left[\sA_\secp(\ct) = 1 : \begin{array}{r}(\pk,\sk) \gets \Gen(1^\secp,D) \\ \ct \gets \Enc(\pk,0)\end{array}\right] - \Pr\left[\sA_\secp(\ct) = 1 : \begin{array}{r}(\pk,\sk) \gets \Gen(1^\secp,D) \\ \ct \gets \Enc(\pk,1)\end{array}\right]\bigg| = \negl(\secp).\]
\end{definition}

We will also require the following notion of correctness for evaluation of pseudo-deterministic quantum circuits.

\begin{definition}[Evaluation Correctness]\label{def:eval correct}
A QFHE scheme $(\Gen,\Enc,\Eval,\Dec)$ is correct if for any polynomial $D(\secp)$, family of pseudo-deterministic quantum circuits $\{Q_\secp\}_{\secp \in \bbN}$ of depth $D(\secp)$, inputs $\{x_\secp\}_{\secp \in \bbN}$, security parameter $\secp$, $(\pk,\sk) \in \Gen(1^\secp,D(\secp))$, and $\ct \in \Enc(\pk,x)$,
\[
\Pr[\Dec(\sk, \Eval(Q_\secp, \ct)) = Q_\secp(x_\secp)] = 1-\negl(\secp).
\]
\end{definition}

The works of Mahadev~\cite{FOCS:Mahadev18b} and Brakerski~\cite{C:Brakerski18} show that such a QFHE scheme can be constructed from QLWE. 

\subsection{Measure and re-program}

\begin{importedtheorem}[Measure and re-program \cite{C:DFMS19,C:DonFehMaj20}]\footnote{This theorem was stated more generally in \cite{C:DFMS19,C:DonFehMaj20} to consider the drop in expectation for each specific $a^* \in A$, and also to consider a more general class of quantum predicates. }\label{thm:measure-and-reprogram}
Let $A,B$ be finite non-empty sets, and let $q \in \bbN$. Let $\sA$ be an oracle-aided quantum circuit that makes $q$ queries to a uniformly random function $H: A \to B$ and then outputs classical strings $(a,z)$ where $a \in A$. There exists a two-stage quantum circuit $\Sim[\sA]$ such that for any predicate $V$, it holds that 
\begin{align*}\Pr\left[V(a,b,z) = 1: \begin{array}{r} (a,\state) \gets \Sim[\sA] \\ b \gets B \\ z \gets \Sim[\sA](b,\state) \end{array}\right] \geq \frac{\Pr\left[V(a,H(a),z) = 1 : (a,z) \gets \sA^H\right]}{(2q+1)^2}.\end{align*}

Moreover, $\Sim[\sA]$ operates as follows.

\begin{itemize}
    \item Sample $H: A \to B$ as a $2q$-wise independent function and $(i,d) \gets (\{0,\dots,q-1\} \times \{0,1\}) \cup \{(q,0)\}$.
    \item Run $\sA$ until it has made $i$ oracle queries, answering each query using $H$. 
    \item When $\sA$ is about to make its $(i+1)$'th oracle query, measure its query registers in the standard basis to obtain $a$. In the special case that $(i,d) = (q,0)$, the simulator measures (part of) the final output register of $\sA$ to obtain $a$.
    \item The simulator receives $b \gets B$.
    \item If $d = 0$, answer $\sA$'s $(i+1)$'th query using $H$, and if $d=1$, answer $\sA$'s $(i+1)$'th query using $H[a \to b]$, which is the function $H$ except that $H(a)$ is re-programmed to $b$.
    \item Run $\sA$ until it has made all $q$ oracle queries. For queries $i+2$ through $q$, answer using $H[a \to b]$.
    \item Measure $\sA$'s output $z$.
\end{itemize}

Note that the running time of $\Sim[\sA]$ is at most $\poly(q,\log|A|,\log|B|)$ times the running time of $\sA$.

\end{importedtheorem}

\subsection{Signature tokens}\label{subsec:sig-tokens}

A signature token scheme consists of algorithms $(\Gen,\Sign,\Verify)$ with the following syntax. 

\begin{itemize}
    \item $\Gen(1^\secp) \to (\vk,\ket{\sk})$: The $\Gen$ algorithm takes as input the security parameter $1^\secp$ and outputs a classical verification key $\vk$ and a quantum signing key $\ket{\sk}$.
    \item $\Sign(b,\ket{\sk}) \to \sigma$: The $\Sign$ algorithm takes as input a bit $b \in \{0,1\}$ and the signing key $\ket{\sk}$, and outputs a signature $\sigma$.
    \item $\Verify(\vk,b,\sigma) \to \{\top,\bot\}$: The $\Verify$ algorithm takes as input a verification key $\vk$, a bit $b$, and a signature $\sigma$, and outputs $\top$ or $\bot$.
\end{itemize}

A signature token should satisfy the following definition of correctness.

\begin{definition}\label{def:token-correctness}
A signature token scheme $(\Gen,\Sign,\Verify)$ is \emph{correct} if for any $b \in \{0,1\}$,
\[\Pr\left[\Verify(\vk,b,\sigma) = \top : \begin{array}{r}(\vk,\ket{\sk}) \gets \Gen(1^\secp) \\ \sigma \gets \Sign(b,\ket{\sk})\end{array}\right] = 1-\negl(\secp).\]
\end{definition}

Next, we define notions of unforgeability. In this paper, it suffices to consider security in the \emph{oracle model}, where the adversarial signer has oracle access to the verification function, rather than to the description of the verification key $\vk$ itself.

\begin{definition}\label{def:unforgeability}
A signature token scheme $(\Gen,\Sign,\Verify)$ satisfies \emph{unforgeability} if for any oracle-aided adversary $\{\sA_\secp\}_{\secp \in \bbN}$ that makes at most $\poly(\secp)$ oracle queries,

\[\Pr\left[\begin{array}{l} \Verify(\vk,0,\sigma_0) = \top ~~ \wedge \\ \Verify(\vk,1,\sigma_1) = \top\end{array} : \begin{array}{r} (\vk,\ket{\sk}) \gets \Gen(1^\secp) \\ (\sigma_0,\sigma_1) \gets \sA_\secp^{\Verify[\vk]}(\ket{\sk}) \end{array}\right] = \negl(\secp),\] where $\Verify[\vk]$ is the functionality $\Verify(\vk,\cdot,\cdot)$.
\end{definition}

\begin{importedtheorem}[\cite{arxiv:BenSat16}]
There exists a signature token scheme in the oracle model that satisfies unforgeability.
\end{importedtheorem}


We will also require a signature token with the property of \emph{strong unforgeability}, defined as follows. 

\begin{definition}\label{def:strong-unforgeability}
A signature token scheme $(\KeyGen,\Sign,\Verify)$ satisfies \emph{strong unforgeability} if for any oracle-aided adversary $\{\sA_\secp\}_{\secp \in \bbN}$ that makes at most $\poly(\secp)$ oracle queries,

\[\Pr\left[\begin{array}{l} (b_0,\sigma_0) \neq (b_1,\sigma_1) ~~ \wedge \\ \Verify(\vk,b_0,\sigma_0) = \top ~~ \wedge \\ \Verify(\vk,b_1,\sigma_1) = \top\end{array} : \begin{array}{r} (\vk,\ket{\sk}) \gets \Gen(1^\secp) \\ (b_0,\sigma_0,b_1,\sigma_1) \gets \sA_\secp^{\Verify[\vk]}(\ket{\sk}) \end{array}\right] = \negl(\secp),\] where $\Verify[\vk]$ is the functionality $\Verify(\vk,\cdot,\cdot)$.
\end{definition}

\begin{claim}
There exists a signature token scheme in the oracle model that satisfies strong unforgeability.
\end{claim}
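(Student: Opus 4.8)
The plan is to instantiate the claim with the single-bit specialization of the subspace-state signature token of \cite{arxiv:BenSat16}, and to observe that its soundness analysis already yields the stronger guarantee. Concretely, $\Gen(1^\secp)$ samples a uniformly random $\secp$-dimensional subspace $A \subseteq \bbF_2^{2\secp}$, conditioned on $A \cap A^\perp = \emptyset \cup \{0\}$, i.e.\ $A \cap A^\perp = \{0\}$ (a constant-probability event, so rejection sampling costs only constant overhead), sets the quantum signing key to $\ket{\sk} \coloneqq \ket{A} = \frac{1}{\sqrt{|A|}}\sum_{a \in A}\ket{a}$, and sets $\vk$ to be a description of $A$. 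To sign the bit $0$, measure $\ket{A}$ in the standard basis, obtaining a vector in $A$; to sign the bit $1$, apply $H^{\otimes 2\secp}$ to $\ket{A}$ and then measure, obtaining a vector in $A^\perp$. Verification is $\Verify(\vk,0,\sigma) = \top$ iff $\sigma \in A \setminus \{0\}$, and $\Verify(\vk,1,\sigma) = \top$ iff $\sigma \in A^\perp \setminus \{0\}$. Correctness is immediate, since a standard-basis measurement of $\ket{A}$ yields $0$ with probability only $2^{-\secp}$, and similarly for $H^{\otimes 2\secp}\ket{A}$. The key observation is that oracle access to $\Verify[\vk] = \Verify(\vk,\cdot,\cdot)$ is exactly oracle access to membership testing for $A$ and for $A^\perp$ — precisely the model in which \cite{arxiv:BenSat16} prove plain unforgeability.

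Next I would reduce strong unforgeability to a combinatorial statement about subspace states. Suppose $\sA$ outputs $(b_0,\sigma_0) \neq (b_1,\sigma_1)$ with both pairs accepted. If $\sigma_0 = \sigma_1$ then $b_0 \neq b_1$, which forces $\sigma_0 \in (A \cap A^\perp) \setminus \{0\}$, a set that is empty by our choice of $A$; so we may assume $\sigma_0 \neq \sigma_1$, i.e.\ $\sigma_0,\sigma_1$ are two distinct nonzero vectors each lying in $A \cup A^\perp$. Three cases arise: (i) both in $A$; (ii) both in $A^\perp$; (iii) one in $A$ and one in $A^\perp$. Case (iii) is exactly a violation of plain unforgeability, which is ruled out by the imported theorem of \cite{arxiv:BenSat16}. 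Cases (i) and (ii) are interchanged by the map $A \leftrightarrow A^\perp$, which is realized on the signing state by a global Hadamard and under which the distribution of $A$ is invariant (it restricts to a bijection on $\secp$-dimensional subspaces of $\bbF_2^{2\secp}$ and preserves the condition $A \cap A^\perp = \{0\}$); hence it suffices to rule out case (i). That is, it suffices to show: no $\poly(\secp)$-query algorithm, given $\ket{A}$ together with membership oracles for $A$ and for $A^\perp$, outputs two distinct nonzero elements of $A$ with non-negligible probability.

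Finally I would establish this last statement via the same inner-product adversary method of \cite{10.1145/2213977.2213983} that underlies \cite{arxiv:BenSat16}'s proof of plain unforgeability. Morally, a single copy of $\ket{A}$ is a ``one-vector'' resource, while the membership oracles for $A$ and $A^\perp$ have support of density $2^{-\secp}$ and hence are useless to a $\poly(\secp)$-query algorithm for locating any element of $A$ or of $A^\perp$ not already pinned down by the state; two distinct nonzero $v,v' \in A$ would determine the $2$-dimensional subspace $\mathsf{span}\{v,v'\} \subseteq A$, exceeding what the state-plus-oracles can reveal. Technically, one reruns \cite{arxiv:BenSat16}'s hybrid argument with the target event ``output $(v,w) \in (A\setminus\{0\}) \times (A^\perp\setminus\{0\})$'' replaced by ``output distinct $v,v' \in A \setminus\{0\}$'': the oracle-elimination steps and the concluding reduction to an oracle query lower bound go through essentially verbatim, since the original argument never uses that one of the two output vectors lies in $A^\perp$ rather than $A$. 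The main obstacle is exactly this verification — checking that the existing analysis is robust to this change of target event (and to the mild conditioning $A \cap A^\perp = \{0\}$) — but no new idea beyond \cite{arxiv:BenSat16,10.1145/2213977.2213983} is needed, which is why the claim is stated without a self-contained proof.
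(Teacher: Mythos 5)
Your strategy is the same as the paper's: split strong unforgeability by a union bound into (i) a valid 0-signature together with a valid 1-signature, (ii) two distinct valid 0-signatures, (iii) two distinct valid 1-signatures; dispose of (i) via the plain unforgeability of the \cite{arxiv:BenSat16} subspace scheme; and handle (ii) (with (iii) symmetric) by rerunning the inner-product adversary method with a modified target set. One cosmetic remark: the conditioning $A \cap A^\bot = \{0\}$ is unnecessary. The sub-case $\sigma_0 = \sigma_1$, $b_0 \neq b_1$ is already an instance of the plain unforgeability game, since \cref{def:unforgeability} does not require the two output signatures to be distinct; keeping the scheme exactly as in \cite{arxiv:BenSat16} avoids having to transfer the imported theorem to a conditioned key distribution.

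The substantive issue is your claim that the argument for case (ii) ``goes through essentially verbatim, since the original argument never uses that one of the two output vectors lies in $A^\bot$ rather than $A$.'' That is not accurate. The adversary method of \cite{10.1145/2213977.2213983}, as instantiated in \cite{arxiv:BenSat16}, requires bounding $\max_{A,\,(a,b)\in\Lambda(A)}\Pr_{B\gets\cR_A}[(a,b)\in\Lambda(B)]$ for the fixed relation $\cR_A$ (subspaces $B$ with $\dim(A\cap B)=n/2-1$), and this quantity depends on the target set $\Lambda$. Redefining $\Lambda(A)$ from pairs with $a\in A\setminus\{0\}$, $b\in A^\bot\setminus\{0\}$ to pairs of distinct nonzero vectors of $A$ changes the quantity, and it must be recomputed; this recomputation is essentially the entire content of the paper's proof. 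Your intuition is the correct one --- distinct nonzero $a,b\in A$ determine the two-dimensional subspace $\{0,a,b,a+b\}$ --- and the paper turns it into a subspace-counting bound showing the probability that a random $B\in\cR_A$ contains that plane is at most $\tfrac{2^{n/2-1}-1}{2^{n/2}-1}\cdot\tfrac{2^{n/2-2}-1}{2^{n/2-1}-1}\le\tfrac14$, which suffices for the lower bound on the query count. So the plan is right, but the one step you wave through as a verbatim rerun is precisely the step that has to be redone.
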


\begin{proof}
This follows by a slight tweak to arguments in \cite{arxiv:BenSat16}. We first note that by a union bound, it suffices to show that each of the following three cases happens with negligible probability: (1) $\sA_\secp$ outputs $\sigma_0,\sigma_1$ such that $\sigma_0$ is a valid signature of 0 and $\sigma_1$ is a valid signature of 1, (2) $\sA_\secp$ outputs $\sigma_0 \neq \sigma_0'$ that are both valid signatures of 0, and (3) $\sA_\secp$ outputs $\sigma_1 \neq \sigma_1'$ that are both valid signatures of 1. The first case is already proven by \cite{arxiv:BenSat16}. 

The second case can be shown by following the proofs in \cite{arxiv:BenSat16} except for one difference: for a subspace $A < \bbF_2^n$, the ``target set'' $\Lambda(A)$ (defined on page 25 of \cite{arxiv:BenSat16}) is instead defined to consist of pairs of vectors $(a,b)$ such that $a \neq b \in A \setminus \{0^n\}$. The only change in the proof then comes in \cite[Lemma 19]{arxiv:BenSat16}, where we need to show that 
\[\max_{A \in S(n), (a,b) \in \Lambda(A)} \Pr_{B \gets \cR_A}[(a,b) \in \Lambda(B)] \leq \frac{1}{4},\] where $S(n)$ is the set of subspaces of $\bbF_2^n$ of dimension $n/2$, and for any $A \in S(n)$, $\cR_A$ is the set of $B \in S(n)$ such that $\mathsf{dim}(A \cap B) = n/2 -1$. This follows by first noting that any distinct non-zero $a,b \in A$ specify a two-dimensional subspace $\{0,a,b,a+b\}$. Then, following the proof of \cite[Lemma 19]{arxiv:BenSat16}, and defining \[G(m,k) \coloneqq \prod_{i = 0}^{k-1}\frac{2^{m-i}-1}{2^{k-i}-1}\] to be the number of subspaces of $\bbF_2^k$ of dimension $m$, we have that this expression is at most 

\[\frac{G(n/2-2,n/2-3)}{G(n/2,n/2-1)} = \frac{2^{n/2 - 1}-1}{2^{n/2}-1} \cdot \frac{2^{n/2 - 2}-1}{2^{n/2-1}-1} \leq \frac{1}{4}.\]


Finally, the third case can be proven in the same way as the second, by defining $\Lambda(A)$ as the set of $(a,b)$ such that $a \neq b \in A^\bot \setminus \{0^n\}$. 
\end{proof}

\begin{remark}
It is straightforward to extend any single-bit signature token scheme (which is described above) to a multi-bit scheme for polynomial-size messages, by signing each bit with a different invocation of the single-bit scheme.
\end{remark}
\section{Pauli Functional Commitments}\label{sec:SSC}

\subsection{Definition}\label{subsec:com-def}

A Pauli functional commitment resembles a standard bit commitment scheme with a classical receiver. However, when used to commit to a qubit $\ket{\psi} = \alpha_0\ket{0} + \alpha_1\ket{1}$ in superposition, it supports the ability to open to either a \emph{standard} or \emph{Hadamard} basis measurement of $\ket{\psi}$. A Pauli functional commitment should also satisfy some notion of binding to a classical bit.

The syntax of a Pauli functional commitment is given below. We present the syntax in the \emph{oracle model}, where the committer obtains access to an efficient classical oracle $\cCK$ as part of its commitment key. Such a scheme can be heuristically instantiated in the plain model by using a post-quantum indistinguishability obfuscator to obfuscate this oracle. We also specify that the remainder of the commitment key is a quantum state $\ket{\ck}$, but note that this is not inherent to the definition of a Pauli functional commitment.
 

\begin{definition}[Pauli functional commitment: Syntax]\label{def:PFC}
A Pauli functional commitment consists of six algorithms $(\Gen,\allowbreak\Com,\allowbreak\OpenZ,\allowbreak\OpenX,\allowbreak\DecZ,\allowbreak\DecX)$ with the following syntax.

\begin{itemize}
    \item $\Gen(1^\secp) \to (\dk,\ket{\ck},\cCK)$ is a QPT algorithm that takes as input the security parameter $1^\secp$ and outputs a classical decoding key $\dk$ and a quantum commitment key $(\ket{\ck},\cCK)$, where $\ket{\ck}$ is a quantum state on register $\regK$, and $\cCK$ is the description of a classical deterministic polynomial-time functionality $\cCK: \{0,1\}^* \to \{0,1\}^*$.
    \item $\Com_b^{\cCK}(\ket{\ck}) \to (\regU,c)$ is a QPT algorithm that is parameterized by a bit $b$ and has oracle access to $\cCK$. It applies a map from register $\regK$ (initially holding the commitment key $\ket{\ck}$) to registers $(\regU,\regC)$ and then measures $\regC$ in the standard basis to obtain a classical string $c \in \{0,1\}^*$ and a left-over state on register $\regU$. We then write \[\Com^{\cCK} \coloneqq \ketbra{0}{0} \otimes \Com_0^{\cCK} + \ketbra{1}{1} \otimes \Com_1^{\cCK}\] to refer to the map that applies the  $\Com_b^{\cCK}$ map classically controlled on a single-qubit register $\regB$ to produce a state on registers $(\regB,\regU,\regC)$, and then measures $\regC$ in the standard basis to obtain a classical string $c$ along with a left-over quantum state on registers $(\regB,\regU)$.
    \item $\OpenZ(\regB,\regU) \to u$ is a QPT measurement on registers $(\regB,\regU)$ that outputs a classical string $u$.
    \item $\OpenX(\regB,\regU) \to u$ is a QPT measurement on registers $(\regB,\regU)$ that outputs a classical string $u$.
    \item $\DecZ(\dk,c,u) \to \{0,1,\bot\}$ is a classical deterministic polynomial-time algorithm that takes as input the decoding key $\dk$, a string $c$, and a string $u$, and outputs either a bit $b$ or a $\bot$ symbol.
    \item $\DecX(\dk,c,u) \to \{0,1,\bot\}$ is a classical deterministic polynomial-time algorithm that takes as input a the decoding key $\dk$, a string $c$, and a string $u$, and outputs either a bit $b$ or a $\bot$ symbol.
\end{itemize}

\end{definition} 



\begin{definition}[Pauli functional Commitment: Correctness]\label{def:poc-correctness}
A Pauli functional commitment $(\Gen,\allowbreak\Com,\allowbreak\OpenZ,\allowbreak\OpenX,\allowbreak\DecZ,\allowbreak\DecX)$ is \emph{correct} if for any single-qubit (potentially mixed) state on register $\regB$, it holds that 
\[\mathsf{TV}\left(\mathsf{Z}(\regB), \mathsf{PFCZ}(1^\secp,\regB)\right) = \negl(\secp), ~~ \text{and} ~~ \mathsf{TV}\left(\mathsf{X}(\regB), \mathsf{PFCX}(1^\secp,\regB)\right) = \negl(\secp),\]

where the distributions are defined as follows.
\begin{itemize}
    \item $\mathsf{Z}(\regB)$ measures $\regB$ in the standard basis.
    \item $\mathsf{X}(\regB)$ measures $\regB$ in the Hadamard basis.
    \item $\mathsf{PFCZ}(1^\secp,\regB)$ samples $(\dk,\ket{\ck},\cCK) \gets \Gen(1^\secp), (\regB,\regU,c) \gets \Com^{\cCK}(\regB,\ket{\ck}), u \gets \OpenZ(\regB,\regU)$, and outputs $\DecZ(\dk,c,u)$.
    \item $\mathsf{PFCX}(1^\secp,\regB)$ samples $(\dk,\ket{\ck},\cCK) \gets \Gen(1^\secp), (\regB,\regU,c) \gets \Com^{\cCK}(\regB,\ket{\ck}), u \gets \OpenX(\regB,\regU)$, and outputs $\DecX(\dk,c,u)$.
\end{itemize}

\end{definition}

A Pauli functional commitment that satisfies \emph{binding with public decodability} allows the adversarial Committer to have oracle access to the receiver's decoding functionalities $\DecZ(\dk,\cdot,\cdot)$ and $\DecX(\dk,\cdot,\cdot)$. However, we crucially do not give the adversarial \emph{Opener} access to $\DecX(\dk,\cdot,\cdot)$.

\begin{definition}[Pauli functional commitment: Single-bit binding with public decodability]\label{def:poc-single-bit-binding}
A Pauli functional commitment $(\Gen,\Com,\allowbreak\OpenZ,\allowbreak\OpenX,\allowbreak\DecZ,\allowbreak\DecX)$ satisfies \emph{single-bit binding with public decodability} if the following holds. Given $\dk,c,$ and $b \in \{0,1\}$, let 

\[\Pi_{\dk,c,b} \coloneqq \sum_{u : \DecZ(\dk,c,u) = b}\ketbra{u}{u}.\]

Consider any adversary $\{(\sC_\secp, \sU_\secp)\}_{\secp \in \bbN}$, where each $\sC_\secp$ is an oracle-aided quantum operation, each $\sU_\secp$ is an oracle-aided unitary, and each $(\sC_\secp,\sU_\secp)$ make at most $\poly(\secp)$ oracle queries. Then for any $b \in \{0,1\}$,

\[\E\left[\bigg\|\Pi_{\dk,c,1-b}\sU_\secp^{\cCK,\DecZ[\dk]}\Pi_{\dk,c,b}\ket{\psi}\bigg\| : (\ket{\psi},c) \gets \sC_\secp^{\cCK,\DecZ[\dk],\DecX[\dk]}(\ket{\ck})\right] = \negl(\secp),\] where the expectation is over $\dk,\ket{\ck},\cCK \gets \Gen(1^\secp)$. Here, $\DecZ[\dk]$ is the oracle implementing the classical functionality $\DecZ(\dk,\cdot,\cdot)$ and $\DecX[\dk]$ is the oracle implementing the classical functionality $\DecX(\dk,\cdot,\cdot)$.

\end{definition}



Next, we extend the above single-bit binding property to a notion of \emph{string binding}.


\begin{definition}[Pauli functional commitment: String binding with public decodability]\label{def:predicate-binding} A Pauli functional commitment $(\Gen,\allowbreak\Com,\allowbreak\OpenZ,\allowbreak\OpenX,\allowbreak\DecZ,\allowbreak\DecX)$ satisfies \emph{string binding with public decodability} if the following holds for any polynomial $m = m(\secp)$ and two disjoint sets $W_0,W_1 \subset \{0,1\}^m$ of $m$-bit strings. Given a set of $m$ verification keys $\bdk = (\dk_1,\dots,\dk_m)$, $m$ strings $\bc = (c_1,\dots,c_m)$, and $b \in \{0,1\}$, define \[\Pi_{\bdk,\bc,W_b} \coloneqq \sum_{w \in W_b} \left(\bigotimes_{i \in [m]}\Pi_{\dk_i,c_i,w_i} \right).\] 
Consider any adversary $\{(\sC_\secp, \sU_\secp)\}_{\secp \in \bbN}$, where each $\sC_\secp$ is an oracle-aided quantum operation, each $\sU_\secp$ is an oracle-aided unitary, and each $(\sC_\secp,\sU_\secp)$ make at most $\poly(\secp)$ oracle queries. Then,

\begin{align*}\E\left[\bigg\|\Pi_{\bdk,\bc,W_1}\sU_\secp^{\bCK,\DecZ[\bdk]}\Pi_{\bdk,\bc,W_0}\ket{\psi}\bigg\|: (\ket{\psi},\bc) \gets \sC_\secp^{\bCK,\DecZ[\bdk],\DecX[\bdk]}(\ket{\mathbf{\bck}})\right] = \negl(\secp), \end{align*}


where the expectation is over $\{\dk_i,\ket{\ck_i},\cCK_i \gets \Gen(1^\secp)\}_{i \in [m]}$. Here, $\ket{\bck} = (\ket{\ck_1},\dots,\ket{\ck_m})$, $\bCK$ is the collection of oracles $\cCK_1,\dots,\cCK_m$, $\DecZ[\bdk]$ is the collection of oracles $\DecZ[\dk_1],\dots,\DecZ[\dk_m]$, and $\DecX[\bdk]$ is the collection of oracles $\DecX[\dk_1],\dots,\DecX[\dk_m]$. 

\end{definition}




We prove the following lemma in \cref{sec:appendix-string}.

\begin{lemma}
Any Pauli functional commitment that satisfies \emph{single-bit binding with public decodability} also satisfies \emph{string binding with public decodability}.
\end{lemma}

\subsection{Construction}\label{subsec:SSC-functionality}

Before describing our construction, we introduce some notation.


\begin{itemize}
    \item A subspace $S < \bbF_2^n$ is \emph{balanced} if half of its vectors start with 0 and the other half start with 1. Note that $S$ is balanced if and only if at least one of its basis vectors starts with 1. Thus, a random large enough (say $n/2$-dimensional) subspace is balanced with probability $1-\negl(n)$. By default, we will only consider balanced subspaces in what follows.
    \item For an affine subspace $A = S+v$ of $\bbF_2^n$, we write 
    \[\ket{S+v} \coloneqq \frac{1}{\sqrt{|S|}}\sum_{s \in S}\ket{s+v}.\]
    \item Given an affine subspace $S+v$, let $(S+v)_0$ be the set of vectors in $S+v$ that start with 0 and let $(S+v)_1$ be the set of vectors in $S+v$ that start with 1.

\end{itemize}

We describe our construction of a Pauli functional commitment in \cref{fig:PFC-construction}. 

\protocol{Pauli Functional Commitment}{A Pauli functional commitment that satisfies \emph{binding with public decodability}.}{fig:PFC-construction}{

Parameters: Polynomial $n = n(\secp) \geq \secp$.\\
Ingredients: Signature token scheme $(\Tok.\Gen,\Tok.\Sign,\Tok.\Verify)$ (\cref{subsec:sig-tokens}).\\

\begin{itemize}
    \item $\Gen(1^\secp)$: Sample a uniformly random $n/2$-dimensional balanced affine subspace $S + v$ of $\bbF_2^n$ and sample $(\vk,\ket{\sk}) \gets \Tok.\Gen(1^\secp)$. Set \[\dk \coloneqq (S,v,\vk), \ \ \ \ \ket{\ck} \coloneqq (\ket{S+v},\ket{\sk}).\] Define $\cCK$ to take as input $(\sigma,s)$ for $s \in \{0,1\}^n$ and output $\bot$ if $\Tok.\Verify(\vk,0,\sigma) = \bot$, and otherwise output 0 if $s \in S^\bot$ or 1 if $s \notin S^\bot$.
    \item $\Com_b^{\cCK}(\ket{\ck})$: 
    \begin{itemize}
        \item Parse $\ket{\ck} = (\ket{S+v}^{\regK_0},\ket{\sk}^{\regK_1})$.
        \item Coherently apply $\Tok.\Sign(1^\secp,0,\cdot)$ from the $\regK_1$ register to a fresh register $\regG$, which will now hold a superposition over signatures $\sigma$ on the bit 0.
        \item Measure the first qubit of register $\regK_0$ in the standard basis. If the result is $b$, the state on register $\regK_0$ has collapsed to $\ket{(S+v)_b}$, and we continue. Otherwise, perform a rotation from $\ket{(S+v)_{1-b}}$ to $\ket{(S+v)_b}$ by applying the operation $(H^{\otimes n})^{\regK_0}\mathsf{Ph}^{\cCK(\cdot,\cdot)}(H^{\otimes n})^{\regK_0}$ to registers $(\regK_0,\regG)$, where $\mathsf{Ph}^{\cCK(\cdot,\cdot)}$ is the map $\ket{s}^{\regK_0}\ket{\sigma}^\regG \to (-1)^{\cCK(\sigma,s)}\ket{s}^{\regK_0}\ket{\sigma}^\regG$.  
        \item Next, reverse the $\Tok.\Sign(1^\secp,0,\cdot)$ operation on $(\regK_1,\regG)$ to recover $\ket{\sk}$ on register $\regK_1$. 
        \item Finally, sample and output $c \gets \Tok.\Sign(1^\secp,1,\ket{\sk})$, along with the final state on register $\regU \coloneqq \regK_0$.
    \end{itemize}
    \item $\OpenZ(\regB,\regU)$: Measure all registers in the standard basis.
    \item $\OpenX(\regB,\regU)$: Measure all registers in the Hadamard basis.
    \item $\DecZ(\dk,c,u)$: 
    \begin{itemize}
        \item Parse $\dk = (S,v,\vk)$ and $u = (b,s)$, where $b \in \{0,1\}$ and $s \in \{0,1\}^n$.
        \item Check that $\Tok.\Verify(\vk,1,c) = \top$, and if not output $\bot$.
        \item If $s \in (S + v)_b$, output $b$, and otherwise output $\bot$. 
    \end{itemize}
    \item $\DecX(\dk,c,u)$: 
    \begin{itemize}
        \item Parse $\dk = (S,v,\vk)$ and $u = (b',s)$, where $b' \in \{0,1\}$ and $s \in \{0,1\}^n$.
        \item Check that $\Tok.\Verify(\vk,1,c) = \top$, and if not output $\bot$.
        \item If $s \in S^\bot$, then define $r \coloneqq 0$. If $s \oplus (1,0,\dots,0) \in S^\bot$, then define $r \coloneqq 1$. Otherwise, abort and output $\bot$. That is, $r$ is set to 0 if $s \in S^\bot$ and to 1 if $s \in (S_0)^\bot \setminus S^\bot$. Then, output $b \coloneqq b' \oplus r$. 
    \end{itemize}

\end{itemize}
}





\begin{theorem}
The Pauli functional commitment described in \cref{fig:PFC-construction} satisfies \emph{correctness} (\cref{def:poc-correctness}).
\end{theorem}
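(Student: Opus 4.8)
The plan is to prove correctness by a direct calculation, tracking the state through $\Com^{\cCK}$ and then applying $\OpenZ/\DecZ$ (resp. $\OpenX/\DecX$) to it. A useful first reduction: the quantities $\mathsf{Z}(\regB),\mathsf{PFCZ}(1^\secp,\regB),\mathsf{X}(\regB),\mathsf{PFCX}(1^\secp,\regB)$ are probability distributions depending affinely on the input single-qubit density matrix on $\regB$, so $\rho\mapsto\mathsf{TV}(\mathsf{Z}(\rho),\mathsf{PFCZ}(\rho))$ and $\rho\mapsto\mathsf{TV}(\mathsf{X}(\rho),\mathsf{PFCX}(\rho))$ are convex and attain their maxima at pure states; hence it suffices to establish both $\negl(\secp)$ bounds for an arbitrary pure input $\ket{\psi}=\alpha_0\ket{0}+\alpha_1\ket{1}$ on $\regB$.

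\emph{Step 1: the state after $\Com^{\cCK}$.} I would show that, on input $\ket{\psi}^{\regB}\otimes\ket{\ck}$, the map $\Com^{\cCK}$ outputs, up to $\negl(\secp)$ in trace distance, a classical string $c$ that is a valid $\Tok$ signature on the bit $1$ together with the state $\alpha_0\ket{0}^{\regB}\ket{(S+v)_0}^{\regU}+\alpha_1\ket{1}^{\regB}\ket{(S+v)_1}^{\regU}$. Tracing through $\Com_b^{\cCK}$: after coherently applying $\Tok.\Sign(1^\secp,0,\cdot)$, register $\regG$ holds a superposition supported (with $1-\negl$ weight, by signature-token correctness) on valid signatures of $0$; since $\ket{S+v}$ is balanced except with probability $\negl(\secp)$ over $\Gen$, the measurement of the first qubit of $\regK_0$ is in product with $\regB$ and hence does not disturb it, collapsing $\regK_0$ to $\ket{(S+v)_m}$ for uniform $m$, and the reflection is applied exactly when $m\neq b$. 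The crux is the sub-claim that $(H^{\otimes n})^{\regK_0}\mathsf{Ph}^{\cCK(\cdot,\cdot)}(H^{\otimes n})^{\regK_0}$ sends $\ket{(S+v)_{1-b}}$ to $\ket{(S+v)_b}$: on the valid-signature part, $\cCK(\sigma,s)=[\,s\notin S^{\bot}\,]$ independently of $\sigma$, so $\mathsf{Ph}^{\cCK}$ acts on $\regK_0$ as the computational-basis reflection $2\Pi_{S^{\bot}}-\bbI$; conjugating by $H^{\otimes n}$ yields the operator $\ket{x}\mapsto 2\big(\tfrac1{|S|}\sum_{s\in S}\ket{x\oplus s}\big)-\ket{x}$, whose $(+1)$-eigenspace consists of vectors constant on cosets of $S$, and since $(S+v)_{1-b}$ is a coset of the index-$2$ subgroup $S_0=\{s\in S:s_1=0\}$ one checks $\tfrac1{|S|}\sum_{s\in S}\ket{(S+v)_{1-b}\oplus s}=\tfrac1{\sqrt2}\ket{S+v}$, so the reflection maps $\ket{(S+v)_{1-b}}$ to $\ket{(S+v)_0}+\ket{(S+v)_1}-\ket{(S+v)_{1-b}}=\ket{(S+v)_b}$ — exactly the reflection-about-$\ket{S+v}$ trick of \cite{STOC:AGKZ20}. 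The $\sigma$-independence of the phase also means the reflection commutes with $\regG$, so reversing $\Tok.\Sign(1^\secp,0,\cdot)$ restores $\ket{\sk}$ on $\regK_1$, and $c\gets\Tok.\Sign(1^\secp,1,\ket{\sk})$ then verifies under the bit $1$ except with probability $\negl(\secp)$. The only error terms are: $S+v$ not balanced, the $\negl$-weight invalid-signature components (on which phase/reflection/uncomputation may misbehave), and the correctness error of the final signing.

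\emph{Step 2: the two decodings.} For $\mathsf{PFCZ}$, $\OpenZ$ measures $(\regB,\regU)$ in the standard basis, producing $b$ with probability $|\alpha_b|^2$ and some $s\in(S+v)_b$; then $\DecZ(\dk,c,(b,s))$ checks $\Tok.\Verify(\vk,1,c)=\top$ (true) and $s\in(S+v)_b$ (true) and outputs $b$, so $\mathsf{PFCZ}(1^\secp,\regB)$ is within $\negl(\secp)$ total variation distance of $\mathsf{Z}(\regB)$. For $\mathsf{PFCX}$, a Fourier computation gives $H^{\otimes n}\ket{(S+v)_b}=\tfrac1{\sqrt{2|S^{\bot}|}}\sum_{d\in S^{\bot}}(-1)^{v\cdot d}\big(\ket{d}+(-1)^b\ket{d+e_1}\big)$, using that $S+v$ balanced implies $S_0^{\bot}=S^{\bot}\sqcup(S^{\bot}+e_1)$ with $e_1=(1,0,\dots,0)$. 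Applying $H^{\otimes(n+1)}$ to $\alpha_0\ket{0}^{\regB}\ket{(S+v)_0}^{\regU}+\alpha_1\ket{1}^{\regB}\ket{(S+v)_1}^{\regU}$ and expanding, the measured pair $(b',s)$ always has $s\in S_0^{\bot}$ (so $\DecX$ never outputs $\bot$), and setting $r=0$ when $s\in S^{\bot}$ and $r=1$ when $s\in S^{\bot}+e_1$ exactly as in the construction, the decoded bit $b'\oplus r$ equals $0$ with probability $\tfrac12|\alpha_0+\alpha_1|^2$ and $1$ with probability $\tfrac12|\alpha_0-\alpha_1|^2$ — precisely the outcome distribution of a Hadamard-basis measurement of $\ket{\psi}$. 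Hence $\mathsf{PFCX}(1^\secp,\regB)$ is within $\negl(\secp)$ of $\mathsf{X}(\regB)$, completing the proof.

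\emph{Main obstacle.} The conceptual heart is the reflection sub-claim in Step 1 — verifying that $H^{\otimes n}$-conjugation of the $S^{\bot}$-membership phase implements \emph{exactly} the reflection about $\ket{S+v}$ inside $\mathsf{span}\{\ket{(S+v)_0},\ket{(S+v)_1}\}$ — together with the careful bookkeeping of the $\negl$ error sources (balancedness, coherent signing, uncomputation, final signature). The most calculation-heavy part is the Fourier analysis for $\mathsf{PFCX}$ in Step 2: one must track the $e_1$-shift structure of $S_0^{\bot}$ and confirm it matches the definition of $r$ in $\DecX$. Everything else is routine.
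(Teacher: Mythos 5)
Your proposal is correct and follows essentially the same route as the paper's proof: reduce to pure inputs, show that $\Com^{\cCK}$ produces $\alpha_0\ket{0}\ket{(S+v)_0}+\alpha_1\ket{1}\ket{(S+v)_1}$ together with a signature on $1$ via the \cite{STOC:AGKZ20} reflection, and then verify that standard- and Hadamard-basis opening/decoding reproduce the respective measurement statistics of $\ket{\psi}$. The only differences are presentational — you phrase the key sub-claim as a reflection about $\ket{S+v}$ where the paper carries out the same Fourier computation term by term, and you compute the joint $(b',s)$ distribution directly where the paper measures $\regU$ first and analyzes the residual qubit — and your explicit convexity reduction to pure states and bookkeeping of the negligible error sources are, if anything, slightly more careful than the paper's.
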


\begin{proof}
We will show correctness assuming that the signature token scheme $\Tok$ is perfectly correct. In reality, it may be statistically correct, but in this case we can still conclude that \cref{fig:PFC-construction} satisfies correctness, which allows for a negligible statistical distance.

We will first show that the map applied by $\Com_b^{\cCK}$ in the case that the measurement of the first qubit of $\regK_0$ is $1-b$ successfully takes $\ket{(S+v)_{1-b}} \to \ket{(S+v)_b}$. Since we are assuming perfect correctness from $\Tok$, it suffices to show that for any balanced affine subspace $\ket{S+v}$, \[H^{\otimes n}\mathsf{Ph}^{O[S^\bot]}H^{\otimes n}\ket{(S+v)_{1-b}} \to \ket{(S+v)_b},\] where $\mathsf{Ph}^{O[S^\bot]}$ is the map $\ket{s} \to (-1)^{O[S^\bot](s)}\ket{s}$, and $O[S^\bot]$ is the oracle that outputs 0 if $s \in S^\bot$ and 1 if $s \notin S^\bot$. This was actually shown in \cite{STOC:AGKZ20}, but we repeat it here for completeness. 

We will use the facts that $S_1 = S_0 + w$ for some $w$, and that $(S+v)_0 = S_0 + v_0$ and $(S+v)_1 = S_0 + v_1$ for some $v_0,v_1$ such that $v_0 + v_1 = w$. Also note that for any $s \in S^\bot$, $s \cdot w = 0$, and for any $s \in ({S_0})^\bot \setminus S^\bot$, $s \cdot w = 1$.

\begin{align*}
    H^{\otimes n}&\mathsf{Ph}^{O[S^\bot]}H^{\otimes n}\ket{(S+v)_{1-b}} \\ 
    &= H^{\otimes n} \mathsf{Ph}^{O[S^\bot]}H^{\otimes n}\frac{1}{\sqrt{2^{n/2 - 1}}}\left(\sum_{s \in S_0} \ket{s + v_{1-b}}\right) \\ 
    &= H^{\otimes n}\mathsf{Ph}^{O[S^\bot]}\frac{1}{\sqrt{2^{n/2+1}}}\left(\sum_{s \in S_0^\bot}(-1)^{s \cdot v_{1-b}}\ket{s}\right) \\ 
    &= H^{\otimes n}\mathsf{Ph}^{O[S^\bot]}\frac{1}{\sqrt{2^{n/2+1}}}\left(\sum_{s \in S^\bot} (-1)^{s \cdot w + s \cdot v_{b}}\ket{s} + \sum_{s \in S_0^\bot \setminus S^\bot} (-1)^{s \cdot w + s \cdot v_{b}} \ket{s}\right) \\ 
    &= H^{\otimes n}\mathsf{Ph}^{O[S^\bot]}\frac{1}{\sqrt{2^{n/2+1}}}\left(\sum_{s \in S^\bot} (-1)^{s \cdot v_{b}}\ket{s} + \sum_{s \in S_0^\bot \setminus S^\bot} (-1)^{1 + s \cdot v_{b}} \ket{s}\right) \\ 
    &= H^{\otimes n} \frac{1}{\sqrt{2^{n/2+1}}}\left(\sum_{s \in S^\bot} (-1)^{s \cdot v_b}\ket{s} + \sum_{s \in S_0^\bot \setminus S^\bot} (-1)^{s \cdot v_b} \ket{s}\right) \\ 
    &= H^{\otimes n}\frac{1}{\sqrt{2^{n/2+1}}}\left(\sum_{s \in S_0^\bot}(-1)^{s \cdot v_b}\ket{s}\right) \\ 
    &= \ket{(S+v)_b}.
\end{align*}

Thus, applying $\Com^{\cCK}$ to a pure state $\ket{\psi} = \alpha_0\ket{0} + \alpha_1\ket{1}$ and commitment key $\ket{\ck}$ produces (up to negligible trace distance) the state \[\ket{\psi_\Com} = \alpha_0\ket{0}\ket{(S+v)_0} + \alpha_1\ket{1}\ket{(S+v)_1},\] and a signature $c$ on the bit 1.

We continue by arguing that measuring and decoding $\ket{\psi_\Com}$ in the standard (resp. Hadamard) basis produces the same distribution as directly measuring $\ket{\psi}$ in the standard (resp. Hadamard) basis. As a mixed state is a probability distribution over pure states, this will complete the proof of correctness.

First, it is immediate that measuring $\ket{\psi_\Com}$ in the standard basis produces a bit $b$ with probability $|\alpha_b|^2$ along with a vector $s$ such that $s \in (S+v)_b$. 

Next, note that applying Hadamard to each qubit of $\ket{\psi_\Com}$ except the first results in the state 

\[\alpha_0 \ket{0} \left(\sum_{s \in S_0^\bot}(-1)^{s \cdot v_0}\ket{s}\right) + \alpha_1 \ket{1} \left(\sum_{s \in S_0^\bot}(-1)^{s \cdot v_1}\ket{s}\right),\] and thus, measuring each of these qubits (except the first) in the Hadamard basis produces a vector $s$ and a single-qubit state \[(-1)^{s \cdot v_0}\alpha_0\ket{0} + (-1)^{s \cdot v_1} \alpha_1\ket{1} = \alpha_0 \ket{0} + (-1)^{s \cdot w}\alpha_1 \ket{1}.\] So, measuring this qubit in the Hadamard basis is equivalent to measuring $\ket{\psi}$ in the Hadamard basis and masking the result with $s \cdot w$. Recalling that $s \cdot w = 0$ if $s \in S^\bot$ and $s \cdot w = 1$ if $s \in (S_0)^\bot \setminus S^\bot$ completes the proof of correctness.
\end{proof}

\subsection{Binding}\label{subsec:binding-proof}


This section is dedicated to proving the following theorem.

\begin{theorem}\label{thm:binding}
Assuming that $\Tok$ satisfies unforgeability (\cref{def:unforgeability}), the Pauli functional commitment described in \cref{fig:PFC-construction} with $n \geq 130\secp$ satisfies \emph{single-bit binding with public decodability} (\cref{def:poc-single-bit-binding}).
\end{theorem}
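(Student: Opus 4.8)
\emph{Overview.} The plan has two phases: first strip away the signature-token layer and reduce to a clean statement purely about membership oracles for affine subspaces; then prove that statement by following Aaronson and Christiano's blueprint \cite{10.1145/2213977.2213983} for lower bounds in the classical oracle model.

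\emph{Reducing away the signature token.} Without loss of generality the commitment $c$ output by $\sC$ is a valid signature on the bit $1$ (otherwise $\Pi_{\dk,c,0}=\Pi_{\dk,c,1}=0$ and the bound is trivial). The opener $\sU$ does not hold the signing key --- it was consumed in producing $c$ --- so by the unforgeability of $\Tok$ (\cref{def:unforgeability}) it cannot itself produce a valid signature on $0$: any query of $\sU$ to $\cCK$ carrying non-negligible weight on inputs $(\sigma,s)$ with $\Tok.\Verify(\vk,0,\sigma)=\top$ could be measured to yield such a signature, which together with $c$ contradicts unforgeability. Hence, by a standard hybrid argument over $\sU$'s polynomially many queries, we may replace $\sU$'s $\cCK$ oracle by the constant-$\bot$ oracle at negligible cost; what remains of $\sU$'s oracle access is $\DecZ[\dk]$, which --- given the valid signature $c$ --- is exactly a membership oracle $O[A]$ for the affine subspace $A=S+v$. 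On the committer side, $\cCK$ (invoked with a signature on $0$ obtained from $\ket{\sk}$), $\DecZ$, and $\DecX$ together yield membership oracles for $A$ and for $S^\bot$ (the information in $\DecX$ being exactly membership in $S^\bot$, since for balanced $S$ the space $S_0^\bot$ is the union of $S^\bot$ and one nonzero coset of $S^\bot$). Being generous to the adversary, it therefore suffices to show that for every poly-query $\sC$ with access to $(O[A],O[S^\bot])$ and every poly-query unitary $\sU$ with access to $O[A]$,
\[\Pr\left[\big\|\Pi_{A_1}\,\sU^{O[A]}\,\Pi_{A_0}\ket{\psi}\big\|\ge 1/\poly(\secp)\,:\,\ket{\psi}\gets\sC^{O[A],O[S^\bot]}(\ket{A})\right]=\negl(\secp),\]
over a uniformly random balanced $(n/2)$-dimensional affine subspace $A=S+v$ of $\bbF_2^n$.

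\emph{From a random instance to a worst-case instance.} Following \cite{10.1145/2213977.2213983}, the next three moves are: (i) sample $A$ inside a \emph{public} ambient subspace of dimension $3n/4$ and delete $\sU$'s $O[A]$ oracle, using that on $\mathsf{Im}(\Pi_{A_0})$ and $\mathsf{Im}(\Pi_{A_1})$ this oracle acts trivially and can be simulated; (ii) use random self-reducibility (acting by random affine isomorphisms of the ambient space) together with success amplification to upgrade ``non-negligible success on a random $A$'' to ``success probability $\ge\epsilon$ for \emph{every} $(n/2)$-dimensional affine $A$ in the ambient space,'' for a constant $\epsilon$; and (iii) apply amplitude amplification on the committer so that $\ket{\psi_A}$ lands in $\mathsf{Im}(\Pi_{A_0})$ while still $\big\|\Pi_{A_1}\sU\ket{\psi_A}\big\|\ge\epsilon$. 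This reduces the task to showing that there is no poly-query $\sC^{O[A],O[S^\bot]}$ and unitary $\sU$ such that, for all $(n/2)$-dimensional affine $A$ in the ambient space, $\ket{\psi_A}\in\mathsf{Im}(\Pi_{A_0})$ and $\big\|\Pi_{A_1}\sU\ket{\psi_A}\big\|\ge\epsilon$, where $\ket{\psi_A}\gets\sC^{O[A],O[S^\bot]}(\ket{A})$.

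\emph{The inner-product adversary method, and the main obstacle.} I would finish with the inner-product adversary method of \cite{10.1145/2213977.2213983}: fix a relation $\cR$ on pairs $(A,B)$ of admissible affine subspaces with $\braket{A|B}=1/2$; on the one hand, a poly-query $\sC$ can decrease $\E_{(A,B)\gets\cR}\big[|\braket{\psi_A|\psi_B}|\big]$ below $1/2$ by only $o(1)$, since the oracle pairs associated with $A$ and with $B$ agree on all but an exponentially small fraction of inputs; on the other hand, any family $\{\ket{\psi_A}\}$ meeting the two geometric conditions above must satisfy $\E_{(A,B)\gets\cR}\big[|\braket{\psi_A|\psi_B}|\big]\le 1/2-\delta$ with $\delta>\epsilon^{13}$, which (for $n\ge 130\secp$) is the sought contradiction. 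The crux --- and where the argument genuinely departs from \cite{10.1145/2213977.2213983,arxiv:BenSat16,STOC:AGKZ20} --- is establishing this last, purely geometric inequality: here the constraint on $\ket{\psi_A}$ is neither that it be a fixed state nor that it collapse to a classical string in a fixed set, but rather that it lie in $\mathsf{Im}(\Pi_{A_0})$ and have overlap at least $\epsilon$ with $\mathsf{Im}(\sU^\dagger\Pi_{A_1}\sU)$ for an adversarial rotation $\sU$ --- an overlap that, unlike in those works, cannot be amplified to $1$. I expect to handle this by a careful geometric analysis that reduces, in the end, to a Welch bound \cite{1055219} limiting how many near-equiangular lines fit into a low-dimensional Hilbert space; the quantitative bookkeeping producing the exponent $13$ (hence the requirement $n\ge 130\secp$) is deferred to \cref{appendix:inner-product}.
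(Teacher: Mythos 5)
Your proposal follows essentially the same route as the paper's proof: strip away the signature token via unforgeability (measuring an opener query that carries weight on valid signatures of $0$), reduce to membership oracles for $A$ and $S^\bot$, embed $A$ in a public $3n/4$-dimensional ambient space to remove the opener's oracle, pass to the worst case by a random change of basis, amplitude-amplify onto $\Pi[A_0]$, and finish with the inner-product adversary method and the Welch-bound lemma of \cref{appendix:inner-product} with the same exponent $13$. One small correction: the worst-case reduction only yields $\epsilon = 2^{-\Theta(\secp)}$ rather than a constant, which is precisely why the geometric inequality must hold for exponentially small $\epsilon$ and why the committer's query bound ($2^{2\secp}$ after amplification) must be compared against $2^{n/8}\cdot\epsilon^{13}$, forcing $n\ge 130\secp$.
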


The proof of this theorem will be identical for each choice of $b \in \{0,1\}$ in the statement of \cref{def:poc-single-bit-binding}. So, consider any adversary $(\sC,\sU)$ attacking the publicly-decodable single-bit binding game for $b=0$, where we drop the indexing by $\secp$ for notational convenience. We first show that it suffices to prove the following claim, in which $\sU$ no longer has oracle access to $\cCK$. 


\begin{claim}\label{claim:no-com-oracle}
For any $(\sC,\sU)$ where $\sC$ and $\sU$ each make $\poly(\secp)$ many oracle queries, it holds that
\[\Pr_{\dk,\ket{\ck},\cCK \gets \Gen(1^\secp)}\left[\bigg\|\Pi_{\dk,c,1}\sU^{\DecZ[\dk]}\Pi_{\dk,c,0}\ket{\psi}\bigg\|^2 \geq \frac{1}{2^\secp} : (\ket{\psi},c) \gets \sC^{\cCK,\DecZ[\dk],\DecX[\dk]}(\ket{\ck})\right] = \negl(\secp).\]
\end{claim}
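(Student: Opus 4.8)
The plan is to first reduce \cref{claim:no-com-oracle}, via the security of the signature token scheme $\Tok$, to a clean statement about bare affine-subspace membership oracles, and then run the four-step blueprint of \cite{10.1145/2213977.2213983} sketched in \cref{sec:tech-overview}. For the reduction, note first that if the commitment string $c$ output by $\sC$ is not a valid $\Tok$-signature on $1$ then $\Pi_{\dk,c,0} = \Pi_{\dk,c,1} = 0$ and the claim holds vacuously, while if it is, then parsing $u = (b,s)$ we have $\Pi_{\dk,c,b} = \ketbra{b}{b} \otimes \Pi_{A_b}$ with $A \coloneqq S+v$, $A_b$ the vectors of $A$ starting with $b$, and $\Pi_{A_b} \coloneqq \sum_{s \in A_b}\ketbra{s}{s}$. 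Next, each of $\cCK$, $\DecZ[\dk]$, $\DecX[\dk]$ is a $\Tok.\Verify$-gated version of a membership oracle for $S^\bot$ or for $A$; emulating those gates with a fresh invocation of $\Tok$, and invoking unforgeability to argue that the signature-on-$1$ gate inside $\DecZ$ gives $\sU$ nothing beyond membership in $A$ (since $\sU$ holds such a signature already), while the only way to extract something new from $\cCK$ would be to exhibit a signature on $0$ alongside $c$, the claim reduces to showing
\[\Pr\left[\big\|\Pi_{A_1}\,\sU^{O[A]}\,\Pi_{A_0}\ket{\psi}\big\|^2 \geq 2^{-\secp} : \ket{\psi} \gets \sC^{O[A],O[S^\bot]}(\ket{A})\right] = \negl(\secp),\]
where the probability is over a uniformly random balanced $n/2$-dimensional affine subspace $A = S+v$ of $\bbF_2^n$ and $O[A], O[S^\bot]$ denote membership oracles for $A$ and $S^\bot$.

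From here I would carry out the four steps of \cref{sec:tech-overview}. \textbf{(1)} Move the sampling of $A$ inside a public $3n/4$-dimensional ambient subspace; the sparsity of $A$ inside this public space lets us answer $\sU$'s polynomially many $O[A]$ queries without access to $A$, so $\sU$ becomes a fixed unitary. \textbf{(2)} A worst-case-to-average-case reduction over $A$: pre-composing $\sC$ with a uniformly random affine symmetry of the ambient space (and simulating $O[A],O[S^\bot]$ on the transformed subspace from the genuine oracles) makes the winning probability identical for every $A$, so a non-negligible average success probability becomes a per-$A$ success probability. \textbf{(3)} Amplitude amplification onto $\Pi_{A_0}$, boosting this per-$A$ non-negligible probability to $1-\negl$: this produces a polynomial-query $\sC$ and a fixed unitary $\sU$ such that for \emph{every} $n/2$-dimensional affine subspace $A$ of the ambient space, $\ket{\psi_A} \coloneqq \sC^{O[A],O[S^\bot]}(\ket{A})$ satisfies $\ket{\psi_A} \in \mathsf{Im}(\Pi_{A_0})$ and $\big\|\Pi_{A_1}\sU\ket{\psi_A}\big\| \geq \epsilon$ with $\epsilon \coloneqq 2^{-\secp/2}$. \textbf{(4)} The inner-product adversary method: fix a relation $\cR$ on pairs $(A,B)$ whose coset states satisfy $\braket{A|B} = 1/2$ (e.g.\ $A\cap B$ of dimension one less), show $\E_{(A,B)\gets\cR}\big[\,|\braket{\psi_A|\psi_B}|\,\big] \leq 1/2 - \delta$ for some $\delta > \epsilon^{13}$, and apply the standard progress argument: each oracle query of $\sC$ can change $\E_{(A,B)\gets\cR}\big[\,|\braket{\psi_A|\psi_B}|\,\big]$ by only $2^{-\Omega(n)}$, so $\sC$ would need $2^{\Omega(\secp)}$ queries, a contradiction. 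The hypothesis $n \geq 130\secp$ is exactly what makes $\epsilon^{13}$ outstrip the total progress $\poly(\secp)\cdot 2^{-\Omega(n)}$ available to $\sC$.

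The step I expect to be the main obstacle is \textbf{(4)}, specifically establishing $\E_{(A,B)\gets\cR}\big[\,|\braket{\psi_A|\psi_B}|\,\big] \leq 1/2 - \delta$. Unlike previous applications of this technique (\cite{arxiv:BenSat16,STOC:AGKZ20}), $\ket{\psi_A}$ is neither a canonical state determined by $A$ nor one whose standard-basis measurement lands in a fixed set; the only handle is $\ket{\psi_A} \in \mathsf{Im}(\Pi_{A_0})$ together with the one-sided, non-amplifiable guarantee $\big\|\Pi_{A_1}\sU\ket{\psi_A}\big\| \geq \epsilon$ for an \emph{arbitrary} rotation $\sU$ (the images of $\Pi_{A_0}$ and $\sU^\dagger\Pi_{A_1}\sU$ may fail to intersect, so $\epsilon$ cannot be pushed to $1$). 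My plan is to extract from each $A$ a pair of near-orthogonal unit vectors, namely $\ket{\psi_A}$ itself and a normalization of $\Pi_{A_1}\sU\ket{\psi_A}$, and argue that large pairwise overlaps among the $\ket{\psi_A}$ over $\cR$ would force packing too many vectors of a prescribed minimum distance into a Hilbert space of dimension $2^{O(n)}$, contradicting a Welch bound \cite{1055219}; the large exponent $13$ is the accumulated quantitative cost of these manipulations. I would relegate the full argument to \cref{appendix:inner-product}.
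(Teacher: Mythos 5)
Your proposal follows essentially the same route as the paper: reduce to the bare affine-subspace game (the paper does this purely by handing the committer $\vk$ and simulating every token-gated oracle with $O[A]$ and $O[S^\bot]$ — no unforgeability argument is needed at this stage, since $\sU$'s $\cCK$ access is already absent from the claim statement), then run the four-step blueprint of ambient-space restriction, worst-case-to-average-case via a random change of basis, amplitude amplification onto $\Pi_{A_0}$, and the inner-product adversary method culminating in a Welch bound with the same exponent $13$. One accounting slip worth fixing: after fixed-point amplitude amplification with threshold $\alpha \approx 2^{-\secp}$, the committer makes $O(\secp 2^{\secp})\cdot\poly(\secp) \leq 2^{2\secp}$ oracle queries, not polynomially many, so the final contradiction is $2^{2\secp}$ versus the adversary-method lower bound $\Omega\left(2^{n/8 - 14\secp}\right)$; it is this exponential query budget — not a $\poly(\secp)\cdot 2^{-\Omega(n)}$ total progress — that forces the hypothesis $n \geq 130\secp$.
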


\begin{lemma}
\cref{claim:no-com-oracle} implies \cref{thm:binding}.
\end{lemma}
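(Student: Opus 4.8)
The plan is to show that $\sU$'s oracle access to $\cCK$ is useless for breaking binding, so that the weaker guarantee of \cref{claim:no-com-oracle}, in which $\sU$ only queries $\DecZ[\dk]$, already yields \cref{thm:binding}. Fix any adversary $(\sC,\sU)$ of the form in \cref{def:poc-single-bit-binding} with $b=0$, and let $\sU'$ be the opener obtained from $\sU$ by answering every $\cCK$-query with $\bot$ (equivalently, replacing $\cCK$ by the trivial oracle $\widehat{\cCK}$ that always outputs $\bot$), so that $\sU'$ only queries $\DecZ[\dk]$. Applying \cref{claim:no-com-oracle} to $(\sC,\sU')$ gives $\Pr[\,\|\Pi_{\dk,c,1}\sU'^{\DecZ[\dk]}\Pi_{\dk,c,0}\ket\psi\|^2 \geq 2^{-\secp}\,] = \negl(\secp)$. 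Since the random variable $X \coloneqq \|\Pi_{\dk,c,1}\sU'^{\DecZ[\dk]}\Pi_{\dk,c,0}\ket\psi\|$ always lies in $[0,1]$ (projectors and unitaries are contractions and $\ket\psi$ is normalized), splitting the expectation on the event $\{X^2 \geq 2^{-\secp}\}$ gives $\E[X] \leq \Pr[X^2 \geq 2^{-\secp}] + 2^{-\secp/2} = \negl(\secp)$. Hence it suffices to prove $\E[\|\Pi_{\dk,c,1}\sU^{\cCK,\DecZ[\dk]}\Pi_{\dk,c,0}\ket\psi\|] \leq \E[X] + \negl(\secp)$, i.e.\ that swapping $\cCK$ for $\widehat{\cCK}$ inside $\sU$ changes the quantity only negligibly.

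By the triangle inequality and non-expansiveness of $\Pi_{\dk,c,1}$, this difference is at most $\E[\,\|\sU^{\cCK,\DecZ[\dk]}\ket\phi - \sU^{\widehat{\cCK},\DecZ[\dk]}\ket\phi\|\,]$, where $\ket\phi \coloneqq \Pi_{\dk,c,0}\ket\psi$ (averaging over $\dk,\ket{\ck},\cCK \gets \Gen(1^\secp)$ and $\sC$'s run, with the two runs coupled). The oracles $\cCK$ and $\widehat{\cCK}$ agree except on the set $\cS \coloneqq \{(\sigma,s) : \Tok.\Verify(\vk,0,\sigma) = \top\}$ of inputs carrying a valid signature of $0$. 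I would therefore invoke a standard hybrid argument over $\sU$'s queries (a one-way-to-hiding / BBBV-style bound): if this expected difference were $\geq 1/\poly(\secp)$ for infinitely many $\secp$, then measuring a uniformly random one of $\sU$'s (at most $\poly(\secp)$) $\cCK$-queries, when $\sU$ is run on $\ket\phi$ with oracle $\widehat{\cCK}$, would output some $(\sigma,s) \in \cS$ with probability $\geq 1/\poly(\secp)$ (passing from the norm to its square via $\E[Y^2] \geq \E[Y]^2$, and losing only a $1/q^2$ factor). Crucially, whenever $c$ is \emph{not} a valid signature of $1$ we have $\Pi_{\dk,c,0} = \mathbf{0}$ (since $\DecZ$ rejects outright in that case), hence $\ket\phi = 0$ and the two runs coincide; so the extracted $(\sigma,s)$ comes together with the event ``$c$ is a valid signature of $1$'' with probability $\geq 1/\poly(\secp)$.

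This yields a contradiction with unforgeability of $\Tok$ (\cref{def:unforgeability}). The reduction $\sB$, given $(\vk,\ket{\sk})$ and oracle access to $\Verify[\vk]$, samples the balanced $n/2$-dimensional affine subspace $S+v$ itself, so it knows $\dk = (S,v,\vk)$ and $\ket{\ck} = (\ket{S+v},\ket{\sk})$, and can faithfully implement $\cCK$, $\DecZ[\dk]$, and $\DecX[\dk]$ using its $\Verify[\vk]$ oracle for the signature checks and its own $S,v$ for the subspace-membership checks. It runs $\sC^{\cCK,\DecZ[\dk],\DecX[\dk]}(\ket{\ck}) \to (\ket\psi,c)$, applies $\Pi_{\dk,c,0}$, runs $\sU$ with the trivial oracle $\widehat{\cCK}$ and the simulated $\DecZ[\dk]$ while measuring a random $\cCK$-query of $\sU$ to obtain $(\sigma,s)$, and outputs the pair $(\sigma_0 \coloneqq \sigma,\ \sigma_1 \coloneqq c)$ for bits $(0,1)$. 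By the previous paragraph $\sigma$ is a valid signature of $0$ and $c$ a valid signature of $1$ simultaneously with non-negligible probability, contradicting \cref{def:unforgeability}. Therefore the expected difference is $\negl(\secp)$, and combining with the first paragraph gives $\E[\|\Pi_{\dk,c,1}\sU^{\cCK,\DecZ[\dk]}\Pi_{\dk,c,0}\ket\psi\|] \leq \E[X] + \negl(\secp) = \negl(\secp)$, which is \cref{thm:binding}.

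I expect the main obstacle to be the hybrid / one-way-to-hiding step: it must be applied ``on average'' over the entire experiment (the sampling of $S+v$, $\sC$'s run, and the coupling of the two oracle runs), the sub-normalized vector $\ket\phi$ must be tracked carefully through the norm-to-squared-norm conversion, and one must confirm that the ``find an element of $\cS$'' event genuinely co-occurs with ``$c$ is a valid signature of $1$'' — the latter being exactly what makes $(\sigma,c)$ count as a forgery. The remaining ingredients — the Markov-type reduction of the expectation to the threshold event, and the observation that $\sB$ can perfectly simulate all of the commitment's oracles because it chooses the subspace $S+v$ (and hence the whole trapdoor) itself — are routine.
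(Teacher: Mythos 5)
Your proposal is correct and follows essentially the same route as the paper: replace $\cCK$ inside $\sU$ by the always-$\bot$ oracle, bound the resulting change via a query-by-query hybrid whose per-step loss is the probability of simultaneously holding a valid signature of $0$ (extracted from a measured query) and a valid signature of $1$ (namely $c$, using that $\Pi_{\dk,c,0}=\mathbf{0}$ otherwise), reduce that event to unforgeability of $\Tok$ by having the reduction sample $S+v$ itself, and convert between the expectation and threshold-probability formulations by a Markov-type argument. The only cosmetic differences are that the paper works with squared norms and applies Markov at the end rather than converting the claim's probability bound into an expectation up front.
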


\begin{proof}

First, we note that to prove \cref{thm:binding}, it suffices to show that for any any $(\sC,\sU)$ with $\poly(\secp)$ many oracle queries and any $\epsilon(\secp) = 1/\poly(\secp)$, it holds that

\[\Pr_{\dk,\ket{\ck},\cCK \gets \Gen(1^\secp)}\left[\bigg\|\Pi_{\dk,c,1}\sU^{\cCK,\DecZ[\dk]}\Pi_{\dk,c,0}\ket{\psi}\bigg\|^2 \geq \epsilon(\secp) : (\ket{\psi},c) \gets \sC^{\cCK,\DecZ[\dk],\DecX[\dk]}(\ket{\ck})\right] = \negl(\secp).\]

To show that \cref{claim:no-com-oracle} implies the above statement, we define the oracle $O_\bot$ to always map $(\sigma,s) \to \bot$, and then argue that 

\[\E_{\substack{\dk,\ket{\ck},\cCK \gets \Gen(1^\secp) \\ (\ket{\psi},c) \gets \sC^{\cCK,\DecZ[\dk],\DecX[\dk]}(\ket{\ck})}}\left[\bigg\| \Pi_{\dk,c,1}\sU^{\cCK,\DecZ[\dk]}\Pi_{\dk,c,0}\ket{\psi} \bigg\|^2 - \bigg\| \Pi_{\dk,c,1}\sU^{O_\bot,\DecZ[\dk]}\Pi_{\dk,c,0}\ket{\psi} \bigg\|^2 \right] = \negl(\secp).\] This follows from a standard hybrid argument, by reduction to the unforgeability of the signature token scheme. That is, consider replacing each $\cCK$ oracle query with a $O_\bot$ oracle query one by one, starting with the last query. That is, we define hybrid $\cH_0$ to be \[\E_{\substack{\dk,\ket{\ck},\cCK \gets \Gen(1^\secp) \\ (\ket{\psi},c) \gets \sC^{\cCK,\DecZ[\dk],\DecX[\dk]}(\ket{\ck})}}\left[\bigg\| \Pi_{\dk,c,1}\sU^{\cCK,\DecZ[\dk]}\Pi_{\dk,c,0}\ket{\psi} \bigg\|^2 \right],\] and in hybrid $\cH_i$, we switch the $i$'th from the last query from being answered by $\cCK$ to being answered by $O_\bot$. Now, fix any $i$, and consider measuring the query register of $\sU$'s $i$'th from last query to obtain classical strings $(\sigma,s)$. Then since $\Pi_{\dk,c,0}$ is the zero projector when $c$ is not a valid signature on 1, and $\cCK$ outputs $\bot$ whenever $\sigma$ is not a valid signature on 0, we have that

\[\E[\cH_{i-1} - \cH_i] \leq \Pr[\Tok(\vk,1,c) = 1 \wedge \Tok(\vk,0,\sigma) = 1] = \negl(\secp),\] by the unforgeability of the signature token scheme. Since there are $\poly(\secp)$ many hybrids, this completes the hybrid argument.

Finally, it follows by Markov that

\begin{align*}\Pr_{\substack{\dk,\ket{\ck},\cCK \gets \Gen(1^\secp) \\ (\ket{\psi},c) \gets \sC^{\cCK,\DecZ[\dk],\DecX[\dk]}(\ket{\ck})}}&\left[\bigg\| \Pi_{\dk,c,1}\sU^{\cCK,\DecZ[\dk]}\Pi_{\dk,c,0}\ket{\psi} \bigg\|^2 - \bigg\| \Pi_{\dk,c,1}\sU^{O_\bot,\DecZ[\dk]}\Pi_{\dk,c,0}\ket{\psi} \bigg\|^2 \geq \epsilon(\secp)-\frac{1}{2^\secp}\right]\\ &\leq \frac{\negl(\secp)}{\epsilon(\secp)-1/2^\secp} = \negl(\secp),
\end{align*} which completes the proof.

\end{proof}

Now, we introduce some more notation.

\begin{itemize}
    \item Let $\cA_{k,n}$ be the set of balanced $k$-dimensional affine subspaces of $\bbF_2^n$.
    \item For an affine subspace $A = S+v$, let $O[A] : \bbF_2^n \to \{0,1\}$ be the classical functionality that outputs 1 on input $s$ iff $s \in S+v$, and let $O[A^\bot]: \bbF_2^n \to \{0,1\}$ be the classical functionality that outputs 1 on input $s$ iff $s \in S^\bot$.
    \item For an affine subspace $A = S+v$ and a bit $b \in \{0,1\}$, define the projector \[\Pi[A_b] \coloneqq \sum_{s \in (S+v)_b}\ketbra{s}{s}.\]
\end{itemize}

We will use this notation to re-define the game in \cref{claim:no-com-oracle}, and show that it suffices to prove the following claim.


\begin{claim}\label{claim:A-oracle}
For any two unitaries $(\sU_\Com,\sU_\Open)$, where $\sU_\Com$ and $\sU_\Open$ each make $\poly(\secp)$ many oracle queries, it holds that

\[\Pr_{A \gets \cA_{n/2,n}}\left[\bigg\|\Pi[A_1]\sU_\Open^{O[A]}\Pi[A_0]\ket{\psi}\bigg\|^2 \geq \frac{1}{2^\secp} : \ket{\psi} \coloneqq \sU_\Com^{O[A],O[A^\bot]}(\ket{A})\right] = \negl(\secp).\]
\end{claim}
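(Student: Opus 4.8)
The plan is to follow Aaronson–Christiano's blueprint for the classical oracle model, as outlined in the technical overview, reducing the statement in four stages. First I would perform the reduction to a \emph{public ambient space}: instead of sampling $A \gets \cA_{n/2,n}$, I sample a uniformly random balanced affine subspace $B$ of dimension $3n/4$, publish a membership oracle $O[B]$ for it, and then sample $A = S+v \gets \cA_{n/2,n}$ with $A \subseteq B$. The point is that when $A \subseteq B$, the oracle $O[A^\bot]$ can be restricted to queries inside $B^\bot$ (elsewhere it is determined by $B$), and, crucially, $\sU_\Open$'s access to $O[A]$ can be \emph{removed}: since $\Pi[A_0]$ and $\Pi[A_1]$ are already supported inside $A$, and $A \subseteq B$, any query $\sU_\Open$ would make to $O[A]$ on a vector of $B$ can be simulated up to negligible error using only the bit ``$s \in A$'' information that is implicit in the state — or, more cleanly, one argues that removing $O[A]$ can only help the adversary, and folds $O[B]$ (which is public and $A$-independent) into $\sU_\Com$. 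The net effect is that we may assume $\sU_\Open$ is a plain unitary with no oracle access, and $\sU_\Com$ has access only to $O[A], O[A^\bot]$ (plus public $B$-oracles), producing $\ket{\psi_A} \gets \sU_\Com^{O[A],O[A^\bot]}(\ket{A})$.

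Second, I would do a \textbf{worst-case to average-case reduction} over the choice of $A$ inside $B$: if the success probability (over random $A \subseteq B$) of producing a state with $\|\Pi[A_1]\sU_\Open\Pi[A_0]\ket{\psi_A}\|^2 \ge 2^{-\secp}$ were non-negligible, then by an averaging/random-self-reducibility argument (re-randomizing $A$ by a random affine shift that fixes $B$) we could assume the adversary succeeds with good probability for \emph{every} $A \subseteq B$, at the cost of a polynomial blow-up in query count. Third, I would apply \textbf{amplitude amplification}: have $\sU_\Com$ additionally amplify the component of $\ket{\psi_A}$ inside $\mathsf{Im}(\Pi[A_0])$, using $O[A]$ to implement the reflection about $\Pi[A_0]$ (this costs only polynomially many extra queries). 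After amplification we are reduced to the following sharp statement: for small enough $\epsilon = 1/\poly(\secp)$, there is no query-bounded oracle circuit $\sC$ and unitary $\sU$ such that, for \emph{all} $n/2$-dimensional balanced affine subspaces $A \subseteq B$,
\[
\ket{\psi_A} \in \mathsf{Im}(\Pi[A_0]) \qquad \text{and} \qquad \big\|\Pi[A_1]\,\sU\,\ket{\psi_A}\big\| \ge \epsilon,
\]
where $\ket{\psi_A} \gets \sC^{O[A],O[A^\bot]}(\ket{A})$.

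The final and hardest stage is the \textbf{inner-product adversary method}. I would define a relation $\cR$ on pairs $(A,B')$ of such affine subspaces with $\braket{A|B'} = 1/2$ (e.g., $A = S+v$, $B' = S'+v$ with $\dim(S \cap S') = n/2 - 1$ and the cosets arranged so the overlap is exactly $1/2$), and argue two things: (i) the \emph{honest/oracle-defined} states must have small pairwise inner product, i.e.\ any family $\{\ket{\psi_A}\}$ satisfying the two displayed conditions has $\E_{(A,B')\gets\cR}[|\braket{\psi_A|\psi_{B'}}|] \le 1/2 - \delta$ for some $\delta$; and (ii) on the other hand, a $T$-query algorithm acting on the uniform superpositions $\ket{A}$ (whose pairwise inner product is exactly $1/2$) cannot decrease the expected inner product over $\cR$ by more than $O(T/\sqrt{|\cR|\text{-related quantity}})$, the standard hybrid bound on how fast oracle queries can separate states. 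Combining (i) and (ii) forces $T$ to be super-polynomial, a contradiction. As flagged in the overview, the genuinely novel difficulty is establishing (i): the constraint is not that $\ket{\psi_A}$ is a fixed state or that its standard-basis support lies in a fixed set, but that it lies in $\mathsf{Im}(\Pi[A_0])$ \emph{and} is $\epsilon$-close to $\mathsf{Im}(\sU^\dagger \Pi[A_1] \sU)$ for an adversarial $\sU$ that is shared across all $A$ — and this closeness parameter $\epsilon$ cannot be amplified to $1$. I expect to handle this by exploiting that $\Pi[A_0]$ and $\Pi[A_1]$ live in the two-dimensional space $\mathrm{span}\{\ket{A_0},\ket{A_1}\}$ up to the affine structure, deriving a lower bound on $|\braket{\psi_A|\psi_{B'}}|$ in terms of $\epsilon$ and the geometry of $A_0,A_1,B'_0,B'_1$, and then, after extracting from the states a collection of nearly-orthogonal unit vectors indexed by the many subspaces compatible with a given constraint, invoking a \textbf{Welch bound} to cap how many such vectors fit in the relevant low-dimensional Hilbert space. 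This yields the quantitative claim that $\delta > \epsilon^{13}$ (as asserted in the cited appendix), which suffices to close the contradiction. I defer the detailed packing argument to \cref{appendix:inner-product}.
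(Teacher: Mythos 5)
Your proposal follows the same four-stage route as the paper's proof of this claim: pass to a public ambient space of dimension $3n/4$ so that $\sU_\Open$'s oracle $O[A]$ can be traded for a public one, do a worst-case to average-case reduction over $A$, amplify onto $\mathsf{Im}(\Pi[A_0])$, and finish with the inner-product adversary method closed by a Welch bound in the appendix. The architecture is right, but your parameter accounting has a genuine gap at the amplification step. The claim's threshold is $\|\cdot\|^2 \geq 2^{-\secp}$, so after the worst-case reduction the guaranteed overlap of $\ket{\psi_A}$ with $\mathsf{Im}(\Pi[A_0])$ is only $2^{-O(\secp)}$, and fixed-point amplitude amplification therefore costs $O(\secp \cdot 2^{\secp})$ oracle queries, not ``only polynomially many'' as you assert; correspondingly the reduced statement must be proved for $\epsilon = 2^{-\secp}$, not $\epsilon = 1/\poly(\secp)$. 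This means your final contradiction cannot be ``$T$ is super-polynomial'': the reduction hands the adversary method a committer making $2^{O(\secp)}$ queries, and you must verify that the lower bound $\Omega(\epsilon^{13} \cdot 2^{n/8})$ with $\epsilon = 2^{-\secp}$ actually exceeds that exponential query budget. This is exactly why the construction fixes $n \geq 130\secp$ (so that $2^{n/8 - 14\secp} > 2^{2\secp}$); without tracking this, the argument does not close.

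Two smaller slips: re-randomizing $A$ by a random affine shift inside the ambient space does not give worst-case to average-case, since a shift preserves the linear part $S$ of $A = S+v$ --- you need a uniformly random change of basis of $\bbF_2^{3n/4}$, as the paper uses. And ``removing $O[A]$ can only help the adversary'' is backwards (a weaker adversary proves nothing about the stronger one); the correct argument, which you also sketch, is that each of $\sU_\Open$'s queries to $O[A]$ can be answered by the public ambient-space oracle at a per-query cost of at most $2^{-n/4}$ in squared norm, since a fixed query point lies in the ambient space but outside $A$ with at most that probability over the choice of ambient space.
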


\begin{lemma}
\cref{claim:A-oracle} implies \cref{claim:no-com-oracle}.
\end{lemma}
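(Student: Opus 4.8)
\medskip
\noindent\textit{Proof strategy.}
The plan is to convert an adversary $(\sC,\sU)$ for the game in \cref{claim:no-com-oracle} into an adversary $(\sU_\Com,\sU_\Open)$ for the game in \cref{claim:A-oracle}. The crucial observation is that, for the commitment of \cref{fig:PFC-construction}, every oracle handed to the committer $\sC$ --- namely $\cCK$, $\DecZ[\dk]$, and $\DecX[\dk]$ --- can be answered given only the membership oracles $O[A]$ (for $A = S+v$) and $O[A^\bot]$ (for $S^\bot$) together with the token verification key $\vk$ (recall $\Tok.\Verify(\vk,\cdot,\cdot)$ is publicly polynomial-time computable once $\vk$ is known): a query $\cCK(\sigma,s)$ reduces to checking $\Tok.\Verify(\vk,0,\sigma)$ locally plus one query to $O[A^\bot](s)$; a query $\DecZ(\dk,c,(b,s))$ reduces to checking $\Tok.\Verify(\vk,1,c)$ locally, one query to $O[A](s)$, and a check that $b$ is the leading bit of $s$; and a query $\DecX(\dk,c,(b',s))$ reduces to checking $\Tok.\Verify(\vk,1,c)$ locally plus queries $O[A^\bot](s)$ and $O[A^\bot](s\oplus(1,0,\dots,0))$. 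Moreover, the opener $\sU$ has oracle access only to $\DecZ[\dk]$, so $\sU$ can be simulated using only $O[A]$ (and $\vk$) --- exactly the resource available to $\sU_\Open$. In particular, \emph{no} security property of the signature token is invoked in this step: the token is merely carried along and its verification computed locally by the reduction. (Token unforgeability enters only in the earlier reduction from \cref{thm:binding} to \cref{claim:no-com-oracle}.)

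Concretely, $\sU_\Com$ takes $\ket{A}$, fixes a token key pair $(\vk,\ket{\sk})$ --- the classical $\vk$ hard-wired into the circuit and the quantum $\ket{\sk}$ prepared from $\ket{0}$'s, so that after this token randomness is fixed non-uniformly (permissible, since \cref{claim:A-oracle} quantifies over arbitrary families of unitaries) $\sU_\Com$ is genuinely a unitary --- sets $\ket{\ck}\coloneqq(\ket{A},\ket{\sk})$, and runs $\sC$ coherently on $\ket{\ck}$, answering its three oracles as above (deferring all of $\sC$'s measurements, in particular copying $\sC$'s classical output $c$ into a fresh register $\regC$). Here $\sU_\Com$ queries both $O[A]$ (for $\DecZ$) and $O[A^\bot]$ (for $\cCK$ and $\DecX$). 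Then $\sU_\Open$ runs $\sU$ coherently, answering its $\DecZ[\dk]$ queries using $O[A]$ and local token verification. Both make $\poly(\secp)$ oracle queries, and the simulation is perfect in superposition.

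There is one format mismatch to reconcile: in \cref{fig:PFC-construction} an opening is $u=(b,s)$, with $b$ on a separate single-qubit register $\regB$ and $s$ on the $n$-qubit register $\regK_0$, so that whenever $c$ is a valid signature on $1$ one has $\Pi_{\dk,c,b}=\ketbra{b}{b}^{\regB}\otimes\Pi[A_b]^{\regK_0}$, whereas \cref{claim:A-oracle} uses $\Pi[A_b]^{\regK_0}$ alone. I would handle this by having $\sU_\Com$ apply one more cleanup unitary $V$ at the end: compute the bit $g\coloneqq\big[\Tok.\Verify(\vk,1,c)=\top\big]\wedge\big[\regB=0\big]$ into an ancilla, and conditioned on $g=0$ swap $\regK_0$ with a fresh register holding the fixed string $(1,0,\dots,0)$, which has leading bit $1$ and hence lies outside $A_0$. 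After $V$, the part of the state lying in $\mathsf{Im}(\Pi[A_0]^{\regK_0})$ is exactly $\Pi_{\dk,c,0}$ applied to $\sC$'s coherent output, so, using $\Pi_{\dk,c,1}\leq\Pi[A_1]^{\regK_0}$ and the faithfulness of the simulation of $\sU$, one obtains for every fixed $A$
\[\Big\|\Pi[A_1]\,\sU_\Open^{O[A]}\,\Pi[A_0]\,\ket{\psi}\Big\|^2 \;\geq\; \E_{c}\!\left[\Big\|\Pi_{\dk,c,1}\,\sU^{\DecZ[\dk]}\,\Pi_{\dk,c,0}\,\ket{\psi_c}\Big\|^2\right],\]
where $\ket{\psi}=\sU_\Com^{O[A],O[A^\bot]}(\ket{A})$ and the expectation is over $\sC$'s (now deferred) measurement of $c$. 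Applying \cref{claim:A-oracle} to $(\sU_\Com,\sU_\Open)$ bounds the left-hand side for all but a negligible fraction of $A$, and a routine Markov argument over $c$ --- where one uses that the bound of \cref{claim:A-oracle} in fact holds at a scale comfortably below $2^{-\secp}$ once $n\geq 130\secp$ --- converts this into the per-instance probability statement of \cref{claim:no-com-oracle}.

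I expect the main obstacle to be bookkeeping rather than any new idea: checking that the three oracle simulations are syntactically faithful under superposition queries, that the cleanup unitary $V$ really evicts the ``$\regB\neq 0$'' and ``$c$ invalid'' branches from $\mathsf{Im}(\Pi[A_0])$ while leaving the good branch untouched, and that deferring $\sC$'s measurement of $c$ costs at most the benign, Markov-absorbable factor above. The conceptual content --- that $\sC$'s extra oracles $\cCK,\DecX[\dk]$ are exactly captured by $O[A^\bot]$ while $\sU$'s single oracle $\DecZ[\dk]$ is captured by $O[A]$ --- is immediate from \cref{fig:PFC-construction}.
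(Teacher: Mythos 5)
Your reduction is correct and follows the same route as the paper: simulate $\sC$'s oracles $\cCK$, $\DecZ[\dk]$, $\DecX[\dk]$ using $O[A]$, $O[A^\bot]$, and locally generated token keys, simulate $\sU$'s single oracle with $O[A]$, and purify $\sC$ into a unitary $\sU_\Com$. The paper's own proof is a three-sentence version of this that simply asserts that relaxing $\Pi_{\dk,c,b}$ to $\Pi[A_b]$ ``only potentially increases the squared norm''; your cleanup unitary $V$ is exactly what is needed to make that assertion true (the extra $\regB\neq 0$ and invalid-$c$ components admitted by $\Pi[A_0]$ could otherwise interfere destructively after $\sU$), and your remark about invoking \cref{claim:A-oracle} at a threshold slightly below $2^{-\secp}$ likewise patches the quantitative loss from keeping $c$ in superposition, which the paper leaves implicit. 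Both patches are sound given the slack in the downstream claims, so your argument is, if anything, more complete than the paper's.
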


\begin{proof}
First, we note that re-defining $\Pi_{\dk,c,b}$ in the statement of \cref{claim:no-com-oracle} to ignore $c$ and only check for membership in the affine subspace $(S+v)_b$ only potentially increases the squared norm of the resulting vector. This means that we can ignore the string $c$ output by $\sC$. Then, we can give the committer $\vk$ in the clear, and observe that it is now straightforward for the committer to simulate its $\DecZ[\dk]$ oracle with $O[A]$, where $A$ is the affine subspace defined by $\dk$, and also to simulate its $\DecX[\dk]$ oracle with $O[A^\bot]$. Finally, we can purify any operation $\sC$ to consider a unitary $\sU_\Com$ that outputs $\ket{\psi}$.
\end{proof}

Our next step is to remove $\sU_\Open$'s oracle access to $O[A]$. We will show that it suffices to prove the following. 




\begin{claim}\label{claim:remove-oracle}
For any two unitaries $(\sU_\Com,\sU_\Open)$, where $\sU_\Com$ makes $\poly(\secp)$ many oracle queries, it holds that

\[\Pr_{A \gets \cA_{n/2,3n/4}}\left[\bigg\|\Pi[A_1]\sU_\Open\Pi[A_0]\ket{\psi}\bigg\|^2 \geq \frac{1}{2^{\secp+1}} : \ket{\psi} \coloneqq \sU_\Com^{O[A],O[A^\bot]}(\ket{A})\right] = \negl(\secp).\]
\end{claim}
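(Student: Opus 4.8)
The plan is to prove \cref{claim:remove-oracle} by contradiction, realizing steps~2--4 of the ``inner-product adversary method'' blueprint of \cite{10.1145/2213977.2213983} sketched in \cref{sec:tech-overview}. So suppose that, for infinitely many $\secp$, there is a pair of unitaries $(\sU_\Com,\sU_\Open)$, each making $\poly(\secp)$ oracle queries, with $\big\|\Pi[A_1]\sU_\Open\Pi[A_0]\ket{\psi}\big\|^2 \geq 1/2^{\secp+1}$ with probability at least $\gamma = 1/\poly(\secp)$ over $A \gets \cA_{n/2,3n/4}$, where $\ket{\psi} = \sU_\Com^{O[A],O[A^\bot]}(\ket{A})$. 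The goal is to distill from this a single unitary committer $\sC$ (making possibly exponentially, but boundedly, many queries) together with a single unitary $\sU$ such that for \emph{every} $A \in \cA_{n/2,3n/4}$, the state $\ket{\psi_A} := \sC^{O[A],O[A^\bot]}(\ket{A})$ lies in $\mathsf{Im}(\Pi[A_0])$ and satisfies $\big\|\Pi[A_1]\sU\ket{\psi_A}\big\| \geq \epsilon$ for a fixed $\epsilon = 2^{-\secp/2}/\poly(\secp)$, and then to rule out such a state family.

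The first two reduction steps are comparatively routine quantum-information manipulations. For the worst-case-to-average-case step, observe that the committer, holding $\ket{A}$ together with the membership oracles $O[A]$ and $O[A^\bot]$, can coherently sample a uniformly random invertible affine map $L$ of $\bbF_2^{3n/4}$ that fixes the first coordinate, produce $\ket{L(A)}$ from $\ket{A}$, and perfectly simulate $O[L(A)]$ and $O[L(A)^\bot]$ given the stored description of $L$; a standard symmetrization argument shows $L(A)$ is $\negl$-close to uniform on $\cA_{n/2,3n/4}$, and fixing the first coordinate guarantees that $L$ carries $\Pi[A_b]$ to $\Pi[L(A)_b]$. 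Running $\sU_\Com$ on the re-randomized instance, conjugating $\sU_\Open$ back by $L$ via a companion unitary that reads $L$ from a register, and then purifying over the choice of $L$, yields for \emph{every} $A$ that $\big\|\Pi[A_1]\sU'_\Open\Pi[A_0]\ket{\psi'}\big\|^2 \geq (\gamma-\negl)\cdot 2^{-\secp-1} =: \eta'' = 1/(2^\secp\poly(\secp))$, and in particular $\big\|\Pi[A_0]\ket{\psi'}\big\|^2 \geq \eta''$. For the second step, apply amplitude amplification---reflecting about $\mathsf{Im}(\Pi[A_0])$ using $O[A]$, and about $\ket{\psi'}$ using $\sC'$ and $(\sC')^\dagger$---to boost the $\mathsf{Im}(\Pi[A_0])$-component of $\ket{\psi'}$ to $1-\negl$, and a final projection outputs $\ket{\psi_A} := \Pi[A_0]\ket{\psi'}/\big\|\Pi[A_0]\ket{\psi'}\big\| \in \mathsf{Im}(\Pi[A_0])$; since $\big\|\Pi[A_0]\ket{\psi'}\big\| \leq 1$, this state satisfies $\big\|\Pi[A_1]\sU'_\Open\ket{\psi_A}\big\|^2 \geq \eta''$, so we set $\epsilon := \sqrt{\eta''}$ and $\sU := \sU'_\Open$. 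The amplitude amplification uses $O(1/\sqrt{\eta''}) = 2^{O(\secp)}$ rounds, so $\sC$ makes $q = 2^{O(\secp)}$ queries---this is harmless precisely because $n = 130\secp$ is much larger than $\secp$ and will dominate this blow-up in the final step.

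The core of the proof is the inner-product argument of \cite{10.1145/2213977.2213983}. One defines a relation $\cR$ pairing affine subspaces $A,B \in \cA_{n/2,3n/4}$ that agree on the first-coordinate partition and intersect in an $(n/2-1)$-dimensional affine subspace, so that $\braket{A|B} = 1/2$ for every $(A,B) \in \cR$. On the one hand, a BBBV-style hybrid argument over $\sC$'s $q$ queries---using that the $1$-sets of $O[A]$, $O[A^\bot]$, and their symmetric differences across $\cR$ all have density $2^{-\Omega(n)}$ in $\bbF_2^{3n/4}$---gives $\E_{(A,B)\gets\cR}\big[|\braket{\psi_A|\psi_B}|\big] \geq 1/2 - O(q\cdot 2^{-\Omega(n)})$, and $n = 130\secp$ is chosen so that $O(q\cdot 2^{-\Omega(n)}) < \epsilon^{13}$. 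On the other hand, the conditions $\ket{\psi_A} \in \mathsf{Im}(\Pi[A_0])$ and $\big\|\Pi[A_1]\sU\ket{\psi_A}\big\| \geq \epsilon$ are shown to force $\E_{(A,B)\gets\cR}\big[|\braket{\psi_A|\psi_B}|\big] \leq 1/2 - \delta$ with $\delta > \epsilon^{13}$---this is exactly the statement established in \cref{appendix:inner-product}, which bottoms out in a Welch bound \cite{1055219} on packings of vectors of bounded pairwise inner product in a low-dimensional Hilbert space. These two facts contradict each other, so no such $(\sC,\sU)$, and hence no such $(\sU_\Com,\sU_\Open)$, can exist.

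I expect the main obstacle to be the task-side bound $\delta > \epsilon^{13}$. In contrast to \cite{10.1145/2213977.2213983,arxiv:BenSat16,STOC:AGKZ20}, the constraint on $\ket{\psi_A}$ is neither that it equals a fixed state nor that it measures into a fixed set of strings, but rather that it lies in $\mathsf{Im}(\Pi[A_0])$ while being $\epsilon$-close to $\mathsf{Im}(\sU^\dagger\Pi[A_1]\sU)$ for an \emph{arbitrary} rotation $\sU$---a closeness that cannot be amplified to $1$ and that couples the two projectors $\Pi[A_0]$ and $\sU^\dagger\Pi[A_1]\sU$ in a way that blocks the usual information-theoretic counting. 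Quantitatively extracting enough ``$A_1$-distinguishing information'' from $\ket{\psi_A}$ while it is pinned to $\mathsf{Im}(\Pi[A_0])$, and propagating it through $\cR$ to the $1/2-\epsilon^{13}$ bound, is where essentially all of the work of \cref{appendix:inner-product} lies.
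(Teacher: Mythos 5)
Your proposal is correct and takes essentially the same route as the paper: a worst-case-to-average-case reduction by re-randomizing $A$ (the paper conjugates by a random change of basis, you by a random affine map fixing the first coordinate), fixed-point amplitude amplification onto $\Pi[A_0]$ costing $2^{O(\secp)}$ queries (affordable since $n\geq 130\secp$), and then the inner-product adversary method over the relation $\cR$, with the $\tfrac{1}{2}-\epsilon^{13}$ bound delegated to the Welch-bound theorem of \cref{appendix:inner-product} exactly as the paper does. The only bookkeeping differences are that the paper implements the reflection about the committer's input state via the identity $\ketbra{A}{A} = H^{\otimes n}\Pi[S^\bot]H^{\otimes n}\Pi[S+v]$, so that amplification genuinely uses only the classical oracles $O[A],O[A^\bot]$, and it keeps an explicit $2^{-15\secp}$ amplification error (absorbed into the final inner-product estimate) instead of your terminal projection onto $\mathsf{Im}(\Pi[A_0])$; neither changes the substance of your argument.
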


Notice that we are now sampling affine subspaces of a $3n/4$-dimensional space.

\begin{lemma}
\cref{claim:remove-oracle} implies \cref{claim:A-oracle}.
\end{lemma}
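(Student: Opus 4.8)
The plan is to prove this lemma by contraposition: from a pair $(\sU_\Com,\sU_\Open)$ violating \cref{claim:A-oracle} I will construct $(\sU'_\Com,\sU'_\Open)$ violating \cref{claim:remove-oracle}. Two things change in passing from \cref{claim:A-oracle} to \cref{claim:remove-oracle}: the ambient space shrinks from $\bbF_2^{n}$ to $\bbF_2^{3n/4}$, and the opener loses its $O[A]$ oracle. The first I handle by a random embedding together with a re-sampling argument; the second — which I expect to be the main obstacle — by a standard hybrid argument exploiting that $A$ has negligible density.

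\emph{Re-sampling and embedding.} I would first observe that $A\gets\cA_{n/2,n}$ is statistically close to $\phi(A')$, where $A'\gets\cA_{n/2,3n/4}$ and $\phi:\bbF_2^{3n/4}\hookrightarrow\bbF_2^{n}$ is a uniformly random injective linear map that preserves the first coordinate (hence preserves ``balancedness''): conditioning on the $\negl(n)$-probability bad events (a sampled subspace fails to be balanced, or passes through the origin), this holds because the number of $3n/4$-dimensional images of such a $\phi$ containing a fixed $n/2$-dimensional affine subspace depends only on $n$, by the Gaussian-binomial counting bounds already recorded. Fix an extension $\widehat\phi$ of $\phi$ to a first-coordinate-preserving automorphism of $\bbF_2^{n}$ with $\widehat\phi(x,0^{n/4})=\phi(x)$. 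Then $\sU'_\Com$, on input $\ket{A'}$ and oracles $O[A'],O[A'^{\bot}]$, samples $\phi$, sets $A^{*}\coloneqq\phi(A')$, prepares $\ket{A^{*}}=U_{\widehat\phi}(\ket{A'}\otimes\ket{0^{n/4}})$, and runs $\sU_\Com$ on $\ket{A^{*}}$: it answers a query $O[A^{*}](s)$ by $O[A'](t_{[1,3n/4]})$ when $t\coloneqq\widehat\phi^{-1}(s)$ has zero trailing $n/4$ coordinates (and by $0$ otherwise), and a query $O[(A^{*})^{\bot}](s)$ by $O[A'^{\bot}](\phi^{\top}s)$ (valid since $s\in(S^{*})^{\bot}\iff\phi^{\top}s\in(S')^{\bot}$). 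Finally $\sU'_\Com$ applies $U_{\widehat\phi}^{-1}$ to the register carrying $\sU_\Com$'s output $\ket\psi\coloneqq\sU_\Com^{O[A^{*}],O[(A^{*})^{\bot}]}(\ket{A^{*}})$ and outputs the result, declaring the first $3n/4$ qubits the main register. Pulling $\Pi[\widehat A_b]$ back through $\widehat\phi$ gives the projector onto $C_b$, where $C=\widehat\phi(\widehat A\times\bbF_2^{n/4})\supseteq A^{*}$ is a random $3n/4$-dimensional affine superspace of $A^{*}$; crucially $\ket\psi$ is prepared with no information about this extension, so the mass of $\ket\psi$ (and later of $\sU_\Open$'s output) on $C\setminus A^{*}$ is $2^{-\Omega(n)}$ in expectation over $\phi$. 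Consequently, for any unitary $\sV$, $\big\|\Pi[\widehat A_1]\,(U_{\widehat\phi}^{-1}\sV U_{\widehat\phi})\,\Pi[\widehat A_0]\,U_{\widehat\phi}^{-1}\ket\psi\big\|$ equals $\|\Pi[A^{*}_1]\sV\Pi[A^{*}_0]\ket\psi\|$ up to an additive $\negl(\secp)$ in expectation over $\phi$, and $A^{*}$ is distributed $\negl(\secp)$-close to $\cA_{n/2,n}$.

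\emph{Removing the opener's oracle, and concluding.} It remains to let $\sU'_\Open$ emulate $\sU_\Open$ with \emph{no} oracle, i.e.\ to replace its $\poly(\secp)$ queries to $O[A^{*}]$ by queries to the all-zero function $O_{\bot}$; this is the technically delicate part. I would do it query-by-query via the standard hybrid argument, bounding the accumulated error by the weight the opener's query states place on $A^{*}$. The difficulty is that the opener's input $\Pi[A^{*}_0]\ket\psi$ genuinely depends on $A^{*}$ and is itself supported inside $A^{*}$; the resolution is that $O[A^{*}]$ acts trivially (as a global phase, resp.\ the identity) on states supported entirely inside, resp.\ entirely outside, $A^{*}$, and $\ket\psi$ is produced by a $\poly(\secp)$-query committer holding essentially a single subspace state — so a further hybrid over the committer's queries, together with the fact that applying a fixed unitary to a subspace state leaves only $2^{-\Omega(n)}$ expected overlap with a freshly random $A^{*}$, bounds each per-query overlap by $\poly(\secp)\cdot2^{-n/2}+\negl(\secp)=\negl(\secp)$ using $n\ge130\secp$. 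Setting $\sU'_\Open\coloneqq U_{\widehat\phi}^{-1}\sU_\Open^{O_{\bot}}U_{\widehat\phi}$ therefore changes the success quantity by at most $\negl(\secp)$ in expectation. Since the original adversary has $\|\Pi[A^{*}_1]\sU_\Open^{O[A^{*}]}\Pi[A^{*}_0]\ket\psi\|^{2}\ge2^{-\secp}$ with non-negligible probability, and by Markov (using $n\ge130\secp$) all the above $\negl(\secp)$ losses drop this value below $2^{-\secp-1}$ only with negligible probability, the pair $(\sU'_\Com,\sU'_\Open)$ — with $\sU'_\Com$ making $\poly(\secp)$ queries and $\sU'_\Open$ making none — satisfies $\|\Pi[\widehat A_1]\sU'_\Open\Pi[\widehat A_0]\ket{\psi'}\|^{2}\ge2^{-(\secp+1)}$ with non-negligible probability, contradicting \cref{claim:remove-oracle}.
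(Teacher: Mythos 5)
Your overall strategy (a contrapositive reduction, plus a re-randomization that identifies $\bbF_2^{3n/4}$ with a random $3n/4$-dimensional subspace of $\bbF_2^n$) matches the structure of the paper's argument, and your embedding step is essentially the paper's final observation run in reverse. However, the second half of your proof has a genuine gap: you cannot replace the opener's oracle $O[A^*]$ with the all-zero oracle $O_\bot$. A query-swapping hybrid bounds the error per query by the (expected squared) mass that the query state places on the set where the two oracles \emph{differ}; for $O[A^*]$ versus $O_\bot$ that set is $A^*$ itself, and the opener's input $\Pi[A^*_0]\ket{\psi}$ is \emph{entirely supported} on $A^*_0\subset A^*$. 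So the very first hybrid step already incurs error $\Theta(1)$, not $\negl(\secp)$. Your proposed resolution does not repair this: $A^*$ is maximally correlated with the opener's input (it is not ``freshly random'' relative to it, since the committer had $\ket{A^*}$, $O[A^*]$ and $O[(A^*)^\bot]$), and the observation that the oracle acts ``as a global phase'' on states inside $A^*$ is irrelevant once the opener applies any intermediate unitary that splits the state across the boundary of $A^*$ — the resulting relative phase (or ancilla bit) is observable and the two oracles are trivially distinguishable with one query.

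The paper's fix is different and is the crux of this lemma: instead of $O_\bot$, replace the opener's $O[A]$ with $O[T\setminus\{0^n\}]$, the membership oracle for a uniformly random $3n/4$-dimensional superspace $T\supset A$. These two oracles differ only on $(T\setminus\{0^n\})\setminus A$, and for any fixed $s\notin A$ one has $\Pr_T[s\in T]\le 2^{-n/4}$; since $T$ is information-theoretically hidden from the whole adversary (the commitment key and both oracles reveal only $A$), \cref{claim:max-prob} bounds each hybrid step by $2^{-n/4}$, and Markov converts the expectation bound into the probability statement with threshold $2^{-(\secp+1)}$. Finally, fixing $T$ as the public ambient space makes $O[T\setminus\{0^n\}]$ trivially computable, so the opener needs no oracle at all, and $A$ is a uniformly random element of $\cA_{n/2,3n/4}$ inside $T$ — exactly the setting of \cref{claim:remove-oracle}. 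If you want to keep your embedding-based phrasing, the analogous move is to answer the opener's queries with the membership oracle for $\mathrm{im}(\phi)$ (known to $\sU'_\Open$) rather than with $O_\bot$; the error analysis is then the paper's.
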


\begin{proof}

Given an $n/2$-dimensional affine subspace $A$, let $T \gets \mathsf{Super}(3n/4,A)$ denote sampling a uniformly random $(3n/4)$-dimensional subspace $T$ such that $A \subset T$. Then, define $O[T \setminus \{0^n\}]$ to be the oracle that checks for membership in the set $T \setminus \{0^n\}$.

Now, we will show via a standard hybrid argument that 

\[\E_{\substack{A \gets \cA_{n/2,} \\ T \gets \mathsf{Super}(3n/4,A) \\ \ket{\psi} \coloneqq \sU_\Com^{O[A],O[A^\bot]}(\ket{A})}}\left[\bigg\| \Pi[A_1]\sU_\Open^{O[A]}\Pi[A_0]\ket{\psi}\bigg\|^2 - \bigg\| \Pi[A_1]\sU_\Open^{O[T \setminus \{0^n\}]}\Pi[A_0]\ket{\psi}\bigg\|^2\right] \leq \frac{\poly(\secp)}{2^{n/4}}.\]

Consider replacing each $O[A]$ oracle query with a $O[T \setminus \{0^n\}]$ oracle query one by one, starting with the last query. That is, we define hybrid $\cH_0$ to be \[\E_{\substack{A \gets \cA_{n/2,n} \\ T \gets \mathsf{Super}(3n/4,A) \\ \ket{\psi} \coloneqq \sU_\Com^{O[A],O[A^\bot]}(\ket{A})}}\left[\bigg\| \Pi[A_1]\sU_\Open^{O[A]}\Pi[A_0]\ket{\psi}\bigg\|^2\right],\] and in hybrid $\cH_i$, we switch the $i$'th from the last query from being answered by $O[A]$ to being answered by $O[T \setminus \{0^n\}]$. By \cref{claim:max-prob}, we have that \[\E[\cH_{i-1} - \cH_i] \leq \max_{s}\Pr_{T}[s \in (T \setminus \{0^n\}) \setminus S ] \leq \frac{1}{2^{n/4}}.\]

Since there are $\poly(\secp)$ many hybrids, this completes the hybrid argument. Now, it follows by Markov that 

\begin{align*}\Pr_{\substack{A \gets \cA_{n/2,n} \\ T \gets \mathsf{Super}(3n/4,A) \\ \ket{\psi} \coloneqq \sU_\Com^{O[A],O[A^\bot]}(\ket{A})}}&\left[\bigg\| \Pi[A_1]\sU_\Open^{O[A]}\Pi[A_0]\ket{\psi}\bigg\|^2 - \bigg\| \Pi[A_1]\sU_\Open^{O[T \setminus \{0^n\}]}\Pi[A_0]\ket{\psi}\bigg\|^2 \geq \frac{1}{2^{\secp}}-\frac{1}{2^{\secp+1}}\right]\\ &\leq \frac{\poly(\secp)2^{\secp+1}}{2^{n/4}} = \negl(\secp),\end{align*} since $n > 5\secp$. This completes the proof, since we can imagine fixing $T$ as a public ambient space of dimension $3n/4$ and sampling $A$ as a random affine subspace of $T$.

\end{proof}

Next, we perform a worst-case to average-case reduction over the sampling of $A$ and thus show that it suffices to prove the following.

\begin{claim}\label{claim:worst-case}
There do not exist two unitaries $(\sU_\Com,\sU_\Open)$, where $\sU_\Com$ makes $\poly(\secp)$ many oracle queries, such that for all $A \in \cA_{n/2,3n/4}$ it holds that \[\bigg\|\Pi[A_1]\sU_\Open\Pi[A_0]\ket{\psi_A}\bigg\|^2 \geq \frac{1}{2^{2\secp}},\] where $\ket{\psi_A} \coloneqq \sU_\Com^{O[A],O[A^\bot]}(\ket{A})$.
\end{claim}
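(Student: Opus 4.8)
The plan is to prove \cref{claim:worst-case} by contradiction, using the ``inner-product adversary method'' of Aaronson and Christiano \cite{10.1145/2213977.2213983} after a preliminary amplitude-amplification step. So suppose unitaries $(\sU_\Com,\sU_\Open)$ exist with $\sU_\Com$ making $q = \poly(\secp)$ oracle queries and $\|\Pi[A_1]\sU_\Open\Pi[A_0]\ket{\psi_A}\|^2 \geq 2^{-2\secp}$ for every $A \in \cA_{n/2,3n/4}$. \textbf{Step 1 (amplitude amplification).} Writing $p_A \coloneqq \|\Pi[A_0]\ket{\psi_A}\|^2$, the hypothesis forces $p_A \geq 2^{-2\secp}$. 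I would append to $\sU_\Com$ a round of fixed-point amplitude amplification that rotates $\ket{\psi_A}$ towards $\mathsf{Im}(\Pi[A_0])$: the reflection about $\mathsf{Im}(\Pi[A_0])$ acts only on the leading qubit, and the reflection $\bbI - 2\ketbra{\psi_A}{\psi_A} = \sU_\Com(\bbI - 2\ketbra{0}{0})\sU_\Com^\dagger$ is implementable from $\sU_\Com$, $\sU_\Com^\dagger$ and their $O[A],O[A^\bot]$ queries. Since $2^{-2\secp}$ is a \emph{known} lower bound on $p_A$, using $2^{O(\secp)}$ iterations brings the state within trace distance $2^{-10\secp}$ of $\Pi[A_0]\ket{\psi_A}/\sqrt{p_A}$; a post-selected standard-basis measurement of the leading qubit (succeeding with overwhelming probability) then yields a state $\ket{\psi_A'} \in \mathsf{Im}(\Pi[A_0])$ with $\|\Pi[A_1]\sU_\Open\ket{\psi_A'}\|^2 \geq \|\Pi[A_1]\sU_\Open\Pi[A_0]\ket{\psi_A}\|^2/p_A - 2^{-9\secp} \geq 2^{-2\secp-1}$. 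The modified committer $\sU_\Com'$ still makes only $q' = 2^{O(\secp)}$ queries, so it suffices to rule out a $q'$-query committer $\sU_\Com'$ and a single fixed unitary $\sU \coloneqq \sU_\Open$ such that, for every $A$, both $\ket{\psi_A'} \coloneqq \sU_\Com'^{O[A],O[A^\bot]}(\ket{A}) \in \mathsf{Im}(\Pi[A_0])$ and $\|\Pi[A_1]\sU\ket{\psi_A'}\| \geq \epsilon$ with $\epsilon = 2^{-\secp-1}$.

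\textbf{Step 2 (inner-product method).} Following \cite{10.1145/2213977.2213983,STOC:AGKZ20}, I would fix a relation $\cR$ on pairs $(A,B)$ of balanced $(n/2)$-dimensional affine subspaces of $\bbF_2^{3n/4}$ that share a codimension-one affine subspace — so $\braket{A|B} = 1/2$ — with linear parts $S,T$ satisfying $\dim(S\cap T) = n/2-1$ (plus mild genericity with respect to the leading coordinate). Two bounds are then in tension. On one hand, for $(A,B) \in \cR$ the oracle pairs $(O[A],O[A^\bot])$ and $(O[B],O[B^\bot])$ differ only on sets of density $2^{-\Omega(n)}$, so the standard query-magnitude argument, started from the advice states $\ket{A},\ket{B}$ (which already satisfy $\braket{A|B}=1/2$), shows that a $q'$-query committer cannot pull the average inner product far below its starting value: $\E_{(A,B)\gets\cR}\!\left[\,|\braket{\psi_A'|\psi_B'}|\,\right] \geq 1/2 - q'\cdot 2^{-\Omega(n)}$. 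On the other hand, the combinatorial core of the proof — which I would isolate and establish in \cref{appendix:inner-product} — is that \emph{any} family $\{\ket{\psi_A'}\}_A$ with $\ket{\psi_A'} \in \mathsf{Im}(\Pi[A_0])$ and $\|\Pi[A_1]\sU\ket{\psi_A'}\| \geq \epsilon$ for a single fixed $\sU$ must have its correlations pulled apart along $\cR$: $\E_{(A,B)\gets\cR}\!\left[\,|\braket{\psi_A'|\psi_B'}|\,\right] \leq 1/2 - \delta$ with $\delta > \epsilon^{13}$.

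\textbf{Step 3 (conclusion and main obstacle).} Combining the two inequalities gives $\epsilon^{13} < \delta \leq q'\cdot 2^{-\Omega(n)} = 2^{O(\secp) - \Omega(n)}$, i.e.\ $2^{-13\secp - 13} < 2^{O(\secp) - \Omega(n)}$, which is false once $n \geq 130\secp$ (the constant is picked so that the $\Omega(n)$ term dominates $13\secp + O(\secp)$), giving the contradiction that proves \cref{claim:worst-case}. I expect Step 2's bound $\delta > \epsilon^{13}$ to be the main obstacle: unlike earlier uses of the method, the constraint on $\ket{\psi_A'}$ is neither ``$\ket{\psi_A'}$ is a fixed state'' (as in \cite{10.1145/2213977.2213983}) nor ``the standard-basis support of $\ket{\psi_A'}$ lies in a fixed set'' (as in \cite{STOC:AGKZ20}); rather it bounds the overlap of the two projectors $\Pi[A_0]$ and $\sU^\dagger\Pi[A_1]\sU$, with $\sU$ arbitrary and the overlap parameter $\epsilon$ incapable of being boosted to $1$. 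The plan for that appendix is to argue by contradiction that a large average value of $|\braket{\psi_A'|\psi_B'}|$ over $\cR$ would force too many near-equiangular unit vectors into a low-dimensional Hilbert space, contradicting a Welch-type packing bound \cite{1055219}.
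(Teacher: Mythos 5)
Your proposal is correct and follows essentially the same route as the paper: fixed-point amplitude amplification onto $\mathsf{Im}(\Pi[A_0])$ (the paper's \cref{claim:amp}), then the inner-product adversary method over the relation of affine subspaces with $\dim(A_0\cap B_0)=\dim(A_1\cap B_1)=n/2-2$, with the $\delta>\epsilon^{13}$ bound deferred to a Welch-bound argument in \cref{appendix:inner-product} and the same final numerology for $n\geq 130\secp$. The only imprecision is the reflection about the initial state: it should be $\sU_\Com(\bbI-2\ketbra{A}{A})\sU_\Com^\dagger$ rather than conjugating $\bbI-2\ketbra{0}{0}$, and the paper fills this in via the identity $\ketbra{A}{A}=H^{\otimes n}\Pi[S^\bot]H^{\otimes n}\Pi[S+v]$, which is implementable from the two classical oracles.
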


\begin{lemma}
\cref{claim:worst-case} implies \cref{claim:remove-oracle}.
\end{lemma}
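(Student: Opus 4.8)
The plan is a standard worst-case to average-case (random self-reducibility) reduction. Suppose \cref{claim:remove-oracle} fails, so there is a pair of unitaries $(\sU_\Com,\sU_\Open)$, with $\sU_\Com$ making $\poly(\secp)$ oracle queries, and a non-negligible function $\mu$ such that
\[\mu(\secp) = \Pr_{A \gets \cA_{n/2,3n/4}}\Bigl[\bigl\|\Pi[A_1]\sU_\Open\Pi[A_0]\ket{\psi_A}\bigr\|^2 \geq 2^{-(\secp+1)}\Bigr], \qquad \ket{\psi_A} \coloneqq \sU_\Com^{O[A],O[A^\bot]}(\ket{A}).\]
I would then build $(\sU'_\Com,\sU'_\Open)$ that succeeds for \emph{every} $A \in \cA_{n/2,3n/4}$ with the smaller threshold $2^{-2\secp}$, contradicting \cref{claim:worst-case}. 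The randomizing object is the group $G$ of invertible \emph{affine} transformations of $\bbF_2^{3n/4}$ that fix the first coordinate, i.e.\ $(Lx)_1 = x_1$ for all $x$. Two facts about $G$ drive the reduction. (i) $G$ acts \emph{transitively} on $\cA_{n/2,3n/4}$, so for any fixed $A \in \cA_{n/2,3n/4}$ the image $L(A)$ for uniform $L\gets G$ is distributed exactly uniformly over $\cA_{n/2,3n/4}$. (ii) Writing $U_L$ for the permutation unitary $\ket{s}\mapsto\ket{Ls}$ (and $M$ for the linear part of $L$), we have $U_L\ket{A} = \ket{L(A)}$; because $L$ fixes the first coordinate it preserves the ``starts with $b$'' condition, so $U_L\Pi[A_b]U_L^\dagger = \Pi[(L(A))_b]$ for $b\in\{0,1\}$; and the oracles $O[L(A)]$ and $O[(L(A))^\bot]$ can each be simulated with a single query to $O[A]$, resp.\ $O[A^\bot]$, by conjugating the query register with $U_{L^{-1}}$, resp.\ with $U_{M^\transp}$ (using $(MS)^\bot = \{y : M^\transp y \in S^\bot\}$).

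The construction runs the average-case adversary ``coherently over $G$''. I would have $\sU'_\Com$ prepare $|G|^{-1/2}\sum_{L\in G}\ket{L}$ in an ancilla register $\regL$, and then, controlled on $\regL = L$: apply $U_L$ to turn $\ket{A}$ into $\ket{L(A)}$, run $\sU_\Com$ with its queries to $O[L(A)]$ and $O[(L(A))^\bot]$ answered by the single-query simulations above (yielding the state $\ket{\psi_{L(A)}}$), and finally apply $U_{L^{-1}}$ to the subspace register. I would set $\sU'_\Open \coloneqq \sum_L \ketbra{L}{L}_{\regL}\otimes(U_{L^{-1}}\sU_\Open U_L)$. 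Both are unitaries; $\sU'_\Com$ makes the same $\poly(\secp)$ number of oracle queries as $\sU_\Com$ (the remaining controlled operations are oracle-free), and \cref{claim:worst-case} imposes no efficiency requirement on the unitaries. Using the conjugated identity $\Pi[A_b]U_{L^{-1}} = U_{L^{-1}}\Pi[(L(A))_b]$, a direct computation shows that applying $\Pi[A_0]$, then $\sU'_\Open$, then $\Pi[A_1]$ to $\ket{\psi'_A}$ (the output of $\sU'_\Com$ with oracles $O[A],O[A^\bot]$ on input $\ket{A}$) leaves the sub-normalized state $|G|^{-1/2}\sum_L\ket{L}_{\regL}\otimes U_{L^{-1}}\Pi[(L(A))_1]\sU_\Open\Pi[(L(A))_0]\ket{\psi_{L(A)}}$, whose squared norm, by orthogonality of the $\ket{L}$ and unitarity of $U_{L^{-1}}$, equals $\E_{L\gets G}\bigl[\|\Pi[(L(A))_1]\sU_\Open\Pi[(L(A))_0]\ket{\psi_{L(A)}}\|^2\bigr]$.

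Finally, transitivity of $G$ rewrites this as $\E_{B\gets\cA_{n/2,3n/4}}[\|\Pi[B_1]\sU_\Open\Pi[B_0]\ket{\psi_B}\|^2] \geq \mu(\secp)\cdot 2^{-(\secp+1)}$. Since $\mu$ is non-negligible, there are a constant $c$ and an infinite set $I$ of security parameters with $\mu(\secp)\geq\secp^{-c}\geq 2^{1-\secp}$ for all large $\secp\in I$, and there $\mu(\secp)\cdot 2^{-(\secp+1)}\geq 2^{-2\secp}$; so $(\sU'_\Com,\sU'_\Open)$ achieves $\|\Pi[A_1]\sU'_\Open\Pi[A_0]\ket{\psi'_A}\|^2\geq 2^{-2\secp}$ for \emph{all} $A\in\cA_{n/2,3n/4}$ (for $\secp\in I$), which is the desired contradiction with \cref{claim:worst-case}. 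I expect the main obstacle to be establishing fact (i), transitivity of $G$ on $\cA_{n/2,3n/4}$: one must treat the ``balanced'' condition carefully — it is precisely the statement that the first-coordinate functional is nontrivial on the linear part, a property that $G$ preserves — and then explicitly construct a first-coordinate-preserving affine map between any two balanced $(n/2)$-dimensional affine subspaces, first matching up the codimension-one ``first-coordinate-zero'' slices together with a single ``first-coordinate-one'' generator (to fix the linear part), then choosing the translation using balancedness of the target so that $L$ stays in $G$. The rest is the routine ``run the adversary once, coherently, over the randomizing group'' calculation together with the threshold bookkeeping between $2^{-(\secp+1)}$ and $2^{-2\secp}$.
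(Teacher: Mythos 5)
Your proof is correct and follows essentially the same route as the paper's: a worst-case to average-case reduction that randomizes the hidden affine subspace by a random transformation, simulates the transformed oracles by conjugating the query register, runs the average-case adversary coherently over the random choice, and averages the resulting squared norms to get the $\nonnegl(\secp)\cdot 2^{-(\secp+1)} \geq 2^{-2\secp}$ bound. The only difference is one of care: the paper samples a generic "uniformly random change of basis" and accepts that the image of $A$ is uniform over balanced affine subspaces only up to negligible error, whereas your group of first-coordinate-preserving affine maps gives exact transitivity and exactly preserves the projectors $\Pi[A_b]$.
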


\begin{proof}

Suppose that there exists $(\sU_\Com,\sU_\Open)$ that violates \cref{claim:remove-oracle}. We define an adversary $(\widetilde{\sC},\widetilde{\sU}_\Open)$ as follows.

\begin{itemize}
    \item $\widetilde{\sC}$ takes $\ket{A}$ as input and samples a uniformly random change of basis $B$ of $\bbF_2^{3n/4}$. Define the unitary $\sU_B$ acting on $3n/4$ qubits to map $\ket{s} \to \ket{B(s)}$.
    \item Run $\sU_\Com$ on $\ket{B(A)}$. Answer each of $\sU_\Com$'s oracle queries with $\sU_B O[A] \sU_B^\dagger$ or $\sU_B O[A^\bot] \sU_B^\dagger$, where $\sU_B$ acts on the query register.
    \item Let $\ket{\psi}$ be $\sU_\Com$'s output, and output $\ket{\widetilde{\psi}} \coloneqq (\sU_B^\dagger\ket{\psi},B)$, where register $\regB$ holds $B$, which is a classical description of the change of basis.
    \item $\widetilde{\sU}_\Open$ is defined to be $\sU_{\mathsf{CoB}^{-1}} \sU_\Open \sU_{\mathsf{CoB}}$, where \[\sU_{\mathsf{CoB}} \coloneqq \frac{1}{\#B}\sum_{\bB}\sU_B \otimes \ket{B}\bra{B}^\regB, \ \ \ \text{and} \ \ \ \sU_{\mathsf{CoB}^{-1}} \coloneqq \frac{1}{\#B}\sum_{B}\sU^\dagger_B \otimes \ket{B}\bra{B}^\regB,\] where $\#B$ is the total number of change of bases $B$.
\end{itemize}

Then it holds that for any $A \in \cA_{n/2,3n/4}$,

\begin{align*}
&\Pr\left[\bigg\|\Pi[A_1]\widetilde{\sU}_\Open\Pi[A_0]\ket{\widetilde{\psi}}\bigg\|^2 \geq \frac{1}{2^{\secp+1}} : \ket{\widetilde{\psi}} \gets \widetilde{\sC}^{O[A],O[A^\bot]}(\ket{A})\right] \\
&= \Pr_{B(A) \gets \cA_{n/2,3n/4}}\left[\bigg\|\Pi[{B(A)}_1]\sU_\Open\Pi[{B(A)}_0]\ket{\psi}\bigg\|^2 \geq \frac{1}{2^{\secp+1}} : \ket{\psi} \gets \sU_\Com^{O[B(A)],O[{B(A)}^\bot]}(\ket{B(A)})\right]\\ &= \nonnegl(\secp),\end{align*}

where the final equality follows because we are assuming that $(\sU_\Com,\sU_\Open)$ violates \cref{claim:remove-oracle}, and for any fixed balanced $A$ and uniformly random $B$, it holds that $B(A)$ is a uniformly random balanced affine subspace except with $\negl(n)$ probability. Now, define $\ket{\widetilde{\psi}_B}$ to be the output of $\widetilde{\sC}$ conditioned on sampling $B$. Then define $\widetilde{\sU}_\Com$ to be a purification of $\sC$. It holds that for any fixed $A \in \cA_{n/2,3n/4}$ and $\ket{\widetilde{\psi}} \coloneqq \widetilde{\sU}_\Com^{O[A],O[A^\bot]}(\ket{A})$,

\begin{align*}
    \bigg\| \Pi[A_1]\widetilde{\sU}_\Open \Pi[A_0]\ket{\widetilde{\psi}}\bigg\|^2 = \frac{1}{\#B}\sum_{B}\bigg\| \Pi[A_1]\widetilde{\sU}_\Open\Pi[A_0]\ket{\widetilde{\psi}_B}\bigg\|^2 \geq \nonnegl(\secp) \cdot \frac{1}{2^{\secp+1}} \geq \frac{1}{2^{2\secp}},
\end{align*}

which completes the proof.
\end{proof}

Next, we perform amplitude amplification onto $\Pi[A_0]$, showing that it suffices to prove the following claim.

\begin{claim}\label{claim:amp}
There do not exist two unitaries $(\sU_\Com, \sU_\Open)$, where $\sU_\Com$ makes at most $2^{2\secp}$ oracle queries, such that for all $A \in \cA_{n/2,3n/4}$ and $\ket{\psi_A} \coloneqq \sU_\Com^{O[A],O[A^\bot]}(\ket{A})$, there exists a state $\ket{\psi_A'}$ such that

\[\big\| \ket{\psi_A} - \ket{\psi_A'} \big\| \leq \frac{1}{2^{15\secp}}, ~~ \ket{\psi'_A} \in \mathsf{Im}(\Pi[A_0]), ~~ \text{and} ~~ \big\| \Pi[A_1]U_\Open\ket{\psi_A'}\big\| \geq \frac{1}{2^{\secp}}.\]

\end{claim}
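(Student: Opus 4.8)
The plan is to argue by contradiction via the inner-product adversary method of \cite{10.1145/2213977.2213983}, following the outline in \cref{sec:tech-overview}. Suppose such a pair $(\sU_\Com,\sU_\Open)$ exists, with $\sU_\Com$ making $T \leq 2^{2\secp}$ oracle queries. First I would fix a relation $\cR$ on pairs $(A,B) \in \cA_{n/2,3n/4} \times \cA_{n/2,3n/4}$ with $\braket{A | B} = 1/2$; the natural choice is to take $(A,B) \in \cR$ exactly when $A \cap B$ is a (nonempty) affine subspace of dimension $n/2-1$, since then $\braket{A | B} = |A\cap B|/2^{n/2} = 1/2$ (the precise definition of $\cR$ is what ultimately gets fed into the quantitative bound below, and is pinned down in \cref{appendix:inner-product}). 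For each $A$ and $t \in \{0,\dots,T\}$, let $\ket{\psi_A^{(t)}}$ denote the internal state of $\sU_\Com$ immediately after its $t$-th query when run on input $\ket{A}$ with oracles $O[A], O[A^\bot]$, so that $\ket{\psi_A^{(0)}}$ depends on $A$ only through $\ket{A}$ and $\ket{\psi_A^{(T)}} = \ket{\psi_A}$. The quantity to track is the progress measure $p_t \coloneqq \E_{(A,B) \gets \cR}\!\big[\,|\braket{\psi_A^{(t)} | \psi_B^{(t)}}|\,\big]$, and the two halves of the proof are a lower bound and an upper bound on $p_T$.

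The easy half is the lower bound. Since the pre-query processing of $\sU_\Com$ is oracle-independent, $p_0 = \E_{(A,B)\gets\cR}[\,|\braket{A | B}|\,] = 1/2$. For a phase-oracle query, the per-step change satisfies $|p_t - p_{t-1}| \leq 2\,\E_{(A,B)\gets\cR}\!\big[\,|\braket{\Pi_\Delta\psi_A^{(t-1)} | \Pi_\Delta\psi_B^{(t-1)}}|\,\big]$, where $\Pi_\Delta$ projects the query-input register onto the symmetric difference of the two oracles applied ($A \triangle B$ for an $O[\cdot]$ query, $S^\bot \triangle T^\bot$ for an $O[\cdot^\bot]$ query). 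Splitting $A\triangle B = (A\setminus B) \sqcup (B\setminus A)$ makes the cross terms vanish, and for a uniformly random $B$ in the fibre $\cR_A$ the ``wrong-side'' weights $\|\Pi_{B\setminus A}\psi_A^{(t-1)}\|^2$ and $\|\Pi_{A\setminus B}\psi_B^{(t-1)}\|^2$ are negligible, since $\Pr_B[s \in B] \leq 2^{-n/4}$ for $s \notin A$; a computation in the spirit of \cref{claim:max-prob} together with Cauchy--Schwarz over $\cR$ yields $|p_t - p_{t-1}| \leq 2^{-\Omega(n)}$. Summing over the $T \leq 2^{2\secp}$ queries and using $n \geq 130\secp$ then gives $p_T \geq 1/2 - T\cdot 2^{-\Omega(n)} > 1/2 - 2^{-13\secp-1}$.

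The hard half, which is the real technical core of the section, is the matching upper bound on $p_T$ coming from the assumed structure of the final states. By hypothesis each $\ket{\psi_A}$ is $2^{-15\secp}$-close to some $\ket{\psi_A'} \in \mathsf{Im}(\Pi[A_0])$ with $\|\Pi[A_1]\sU_\Open\ket{\psi_A'}\| \geq 2^{-\secp}$; as $2^{-15\secp}$ is far below the target separation, it suffices to prove that \emph{any} family $\{\ket{\psi_A'}\}_A$ satisfying just these two conditions has $\E_{(A,B)\gets\cR}[\,|\braket{\psi_A' | \psi_B'}|\,] \leq 1/2 - \delta$ with $\delta > (2^{-\secp})^{13}$, and I would carry this out in \cref{appendix:inner-product}. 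I expect this to be the main obstacle: unlike the analogous steps in \cite{10.1145/2213977.2213983,arxiv:BenSat16,STOC:AGKZ20}, the constraint on $\ket{\psi_A'}$ is neither that it equals a fixed state nor that its standard-basis measurement lands in a prescribed set, but rather that it lies both in $\mathsf{Im}(\Pi[A_0])$ and $\epsilon$-close to $\mathsf{Im}(\sU_\Open^\dagger \Pi[A_1] \sU_\Open)$ for an \emph{arbitrary} unitary $\sU_\Open$ --- an overlap that in general cannot be amplified to $1$. The plan there is to argue by contradiction, use the two conditions to extract for each $A$ enough rigidity of $\ket{\psi_A'}$ to control pairwise inner products along $\cR$, and package the resulting family of nearly pairwise-orthogonal vectors living in a low-dimensional space into a violation of a Welch bound \cite{1055219}; getting the exponent $13$ to come out is precisely what dictates the slack between the $2^{-15\secp}$ closeness, the $2^{-2\secp}$ query budget, the $2^{-\secp}$ norm bound, and the choice $n \geq 130\secp$. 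Granting this lemma, the triangle inequality upgrades the $2^{-15\secp}$-closeness into at most a $2^{-15\secp+1}$ perturbation of each $|\braket{\psi_A | \psi_B}|$, so $p_T \leq 1/2 - \delta + 2^{-15\secp+1} < 1/2 - 2^{-13\secp-1}$, contradicting the lower bound from the easy half and completing the proof.
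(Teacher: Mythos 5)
Your proposal is correct and follows essentially the same route as the paper: a proof by contradiction via the inner-product adversary method over (essentially) the same relation of half-overlapping affine subspaces with $|\braket{A|B}|=1/2$, deferring the core bound $\E_{(A,B)\gets\cR}[|\braket{\psi'_A|\psi'_B}|]\leq \frac{1}{2}-\epsilon^{13}$ to the appendix theorem, absorbing the $2^{-15\secp}$ perturbation by a triangle inequality, and contradicting the $2^{2\secp}$ query budget. The only real difference is presentational: you re-derive the query lower bound with an explicit progress-measure/hybrid argument, whereas the paper invokes the imported adversary-method theorem of \cite{10.1145/2213977.2213983} as a black box, and it pins the relation down as $\dim(A_0\cap B_0)=\dim(A_1\cap B_1)=n/2-2$ (rather than merely $\dim(A\cap B)=n/2-1$), which is the form actually needed by the appendix theorem you defer to.
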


\begin{lemma}
\cref{claim:amp} implies \cref{claim:worst-case}.
\end{lemma}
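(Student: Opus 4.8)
The plan is to argue by contraposition. Suppose \cref{claim:worst-case} fails, so there is a pair $(\sU_\Com,\sU_\Open)$ with $\sU_\Com$ making $\poly(\secp)$ oracle queries such that for every $A \in \cA_{n/2,3n/4}$ the state $\ket{\psi_A} \coloneqq \sU_\Com^{O[A],O[A^\bot]}(\ket A)$ satisfies $\big\|\Pi[A_1]\sU_\Open\Pi[A_0]\ket{\psi_A}\big\|^2 \ge 1/2^{2\secp}$. Since $\sU_\Open$ is unitary and $\Pi[A_1]$ is a projector, this already forces $\big\|\Pi[A_0]\ket{\psi_A}\big\| \ge 1/2^\secp$. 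I will construct a pair $(\sU_\Com',\sU_\Open)$ — keeping the same opener — that violates \cref{claim:amp}, where $\sU_\Com'$ first runs $\sU_\Com$ and then performs amplitude amplification to rotate $\ket{\psi_A}$ essentially onto $\mathsf{Im}(\Pi[A_0])$.

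Because the success amplitude $\big\|\Pi[A_0]\ket{\psi_A}\big\|$ is only guaranteed to lie in $[1/2^\secp,1]$ and in general depends on $A$, textbook amplitude amplification with a fixed iteration count will not do; instead $\sU_\Com'$ uses a fixed-point variant of amplitude amplification (e.g., that of Yoder, Low, and Chuang) with amplitude lower bound $\delta \coloneqq 1/2^\secp$ and target error $\epsilon \coloneqq 2^{-16\secp}$, which uses $O(\delta^{-1}\log(1/\epsilon)) = O(\secp\,2^\secp)$ amplification steps. Each step requires one call to $\sU_\Com$, one to $\sU_\Com^\dagger$, the generalized reflection $\bbI - (1-e^{i\beta})\Pi[A_0]$ about the good subspace (one query to $O[A]$ plus a check of the first coordinate), and the generalized reflection $\bbI - (1-e^{i\alpha})\ketbra A A$ about the initial coset state $\ket A$, which is implementable from $O[A]$ and $O[A^\bot]$ using the coset-state manipulation techniques of \cite{STOC:AGKZ20} already used in the $\Com$ algorithm of \cref{fig:PFC-construction}. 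Altogether $\sU_\Com'$ makes $\poly(\secp)\cdot 2^\secp \le 2^{2\secp}$ oracle queries for all sufficiently large $\secp$ (the finitely many remaining values being immaterial, as the statement is asymptotic). Writing $\ket{\phi_A}$ for the output of $\sU_\Com'$ on $\ket A$, the amplification keeps the state inside the two-dimensional subspace spanned by $\Pi[A_0]\ket{\psi_A}$ and $(\bbI-\Pi[A_0])\ket{\psi_A}$ and guarantees $\big\|(\bbI-\Pi[A_0])\ket{\phi_A}\big\| \le \epsilon$.

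To finish, set $\ket{\psi_A'} \coloneqq \Pi[A_0]\ket{\phi_A}/\big\|\Pi[A_0]\ket{\phi_A}\big\|$, with the global phase chosen to align with $\ket{\phi_A}$ (the degenerate case $(\bbI-\Pi[A_0])\ket{\psi_A}=0$ is immediate, taking $\ket{\psi_A'}=\ket{\psi_A}$). Then $\ket{\psi_A'} \in \mathsf{Im}(\Pi[A_0])$, and an elementary computation (or \cref{lemma:gentle-measurement} with $\delta=\epsilon^2$) gives $\big\|\ket{\phi_A}-\ket{\psi_A'}\big\| \le 2\epsilon \le 2^{-15\secp}$. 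Moreover, since $\ket{\phi_A}$ lies in the above two-dimensional subspace, $\Pi[A_0]\ket{\phi_A}$ is a scalar multiple of $\Pi[A_0]\ket{\psi_A}$, so $\ket{\psi_A'}$ equals $\Pi[A_0]\ket{\psi_A}/\big\|\Pi[A_0]\ket{\psi_A}\big\|$ up to phase, whence
\[
\big\|\Pi[A_1]\sU_\Open\ket{\psi_A'}\big\| \;=\; \frac{\big\|\Pi[A_1]\sU_\Open\Pi[A_0]\ket{\psi_A}\big\|}{\big\|\Pi[A_0]\ket{\psi_A}\big\|} \;\ge\; \frac{1/2^\secp}{1} \;=\; \frac{1}{2^\secp},
\]
using $\big\|\Pi[A_0]\ket{\psi_A}\big\| \le 1$. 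Thus $(\sU_\Com',\sU_\Open)$ has, for every $A \in \cA_{n/2,3n/4}$, all three properties that \cref{claim:amp} declares impossible — a contradiction — which proves \cref{claim:worst-case}.

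The crux of the argument is the amplification step: the winning amplitude is unknown and $A$-dependent, so a fixed-point (rather than standard) amplitude amplification must be invoked and shown to fit within the $2^{2\secp}$-query budget; and the reflection about the coset state $\ket A$ — which is not a computational-basis state — must be realized using only the membership oracles $O[A]$ and $O[A^\bot]$. The remaining norm bookkeeping and the verification that the constants $2^{-15\secp}$ and $2^{2\secp}$ are met with room to spare are routine.
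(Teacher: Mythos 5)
Your proof is correct and follows essentially the same route as the paper's: fixed-point amplitude amplification (the paper invokes \cite{STOC:GSLW19}, Theorem 27) onto $\mathsf{Im}(\Pi[A_0])$ with the same parameter choices and query accounting, followed by the observation that the amplified state is negligibly far from $\Pi[A_0]\ket{\psi_A}/\big\|\Pi[A_0]\ket{\psi_A}\big\|$, which inherits the $1/2^\secp$ overlap with $\Pi[A_1]\sU_\Open$. The one step you assert rather than prove --- realizing the reflection about $\ket{A}$ from the membership oracles --- is established in the paper via the identity $\ketbra{A}{A} = H^{\otimes n}\Pi[S^\bot]H^{\otimes n}\Pi[S+v]$ (following \cite{10.1145/2213977.2213983}), rather than the AGKZ rotation trick you cite, which only reflects within $\mathsf{span}\{\ket{A_0},\ket{A_1}\}$.
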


\begin{proof}

For any binary projective measurement $(\Pi, \bbI - \Pi)$, we define $\sU_\Pi$ to be a unitary that maps $\ket{\phi} \to - \ket{\phi}$ for any $\ket{\phi} \in \mathsf{Im}(\Pi)$ and acts as the identity on all $\ket{\phi}$ orthogonal to $\Pi$. We use the following imported theorem.

\begin{importedtheorem}[Fixed-point amplitude amplification, \cite{STOC:GSLW19} Theorem 27]\label{thm:quantum-search}
There exists an oracle-aided unitary $\mathsf{Amplify}$ that is parameterized by $(\alpha,\beta)$, and has the following properties. Let $\ket{\psi}$ and $\ket{\psi_G}$ be normalized states and $\Pi$ be a projector such that $\Pi\ket{\psi} = \gamma\ket{\psi_G}$, where $\gamma \geq \alpha$. Then $\ket{\widetilde{\psi}_G} \coloneqq \mathsf{Amplify}_{\alpha,\beta}^{\sU_{\ket{\psi}\bra{\psi}},\sU_\Pi}(\ket{\psi})$ is such that $\|\ket{\psi_G} - \ket{\widetilde{\psi}_G}\| \leq \beta$, and $\mathsf{Amplify}_{\alpha,\beta}^{\sU_{\ket{\psi}\bra{\psi}},\sU_\Pi}(\ket{\psi})$ makes $O(\log(1/\beta)/\alpha)$ oracle queries.
\end{importedtheorem}

Now, suppose that $(\sU_\Com,\sU_\Open)$ violates \cref{claim:worst-case}. Set $\alpha = 1/2^\secp$, $\beta = 1/2^{15\secp}$, and define \[\widetilde{\sU}_\Com(\ket{A}) \coloneqq \Amplify_{\alpha,\beta}^{\sU_{\ket{\psi_A}\bra{\psi_A}},\sU_{\Pi[A_0]}}(\ket{\psi_A}),\] where $\ket{\psi_A} \coloneqq \sU_\Com(\ket{A})$.

We first argue that $\widetilde{\sU}_\Com$ can be implemented with just oracle access to $O[A]$ and $O[A^\bot]$. Clearly, the projector $\Pi[A_0]$ can be implemented with $O[A]$, so it remains to show how to implement the projector $\ketbra{\psi_A}{\psi_A}$. Note that \[\ketbra{\psi_A}{\psi_A} = \sU_\Com \ketbra{A}{A} \sU_\Com^\dagger,\] so it suffices to show how to implement $\ketbra{A}{A}$.

Recalling that $A = S+v$, we claim that \[\ketbra{A}{A} = H^{\otimes n}\Pi[S^\bot]H^{\otimes n}\Pi[S+v].\] The proof is essentially shown in \cite[Lemma 21]{10.1145/2213977.2213983} (in the case where $A$ is a subspace), and we repeat it here for completeness. It is clear that $H^{\otimes n}\Pi[S^\bot]H^{\otimes n}\Pi[S+v]\ket{A} = \ket{A}$, so it remains to show that for any $\ket{\psi}$ such that $\braket{\psi | A} = 0$, $H^{\otimes n}\Pi[S^\bot]H^{\otimes n}\Pi[S+v]\ket{\psi} = 0$. Write $\ket{\psi} = \sum_{s \in \{0,1\}^n}c_s \ket{s}$, where $\sum_{s \in S+v}c_s = 0$. Then 

\begin{align*}
    H^{\otimes n}\Pi[S^\bot]H^{\otimes n}\Pi[S+v]\ket{\psi} &= H^{\otimes n}\Pi[S^\bot]H^{\otimes n}\sum_{s \in S+v} c_s \ket{s} \\
    &= \frac{1}{2^{n/2}}H^{\otimes n}\Pi[S^\bot]\sum_{t \in \{0,1\}^n}\sum_{s \in S+v} (-1)^{s \cdot t}c_s\ket{t} \\
    &= \frac{1}{2^{n/2}}H^{\otimes n}\sum_{t \in S^\bot}\sum_{s \in S+v} (-1)^{s \cdot t}c_s\ket{t} \\
    &= \frac{1}{2^{n/2}}H^{\otimes n}\sum_{t \in S^\bot}\left(\sum_{s \in S+v} c_s\right)\ket{t}  = 0.
\end{align*}

Thus, $\widetilde{\sU}_\Com$ can be implemented with just oracle access to $O[A]$ and $O[A^\bot]$. Moreover, it makes at most $O(\log(1/\beta)\alpha) \cdot \poly(\secp) \leq O(\secp 2^{\secp}) \cdot \poly(\secp) \leq 2^{2\secp}$ queries to $O[A]$ and $O[A^\bot]$. 

Now, define \[\ket{\psi_A'} \coloneqq \frac{\Pi[A_0]\ket{\psi_A}}{\| \Pi[A_0]\ket{\psi_A}\|},\] so $\ket{\psi_A'} \in \mathsf{Im}(\Pi[A_0])$ by definition. By the fact that $(\sU_\Com,\sU_\Open)$ violates \cref{claim:worst-case}, we know that \[\big\| \Pi[A_1]\sU_\Open\ket{\psi_A'}\big\|^2 \geq \big\| \Pi[A_1]\sU_\Open\Pi[A_0]\ket{\psi_A}\big\|^2 \geq \frac{1}{2^{2\secp}} \implies \big\| \Pi[A_1]\sU_\Open\ket{\psi_A'}\big\| \geq \frac{1}{2^\secp}.\] Finally, by the definition of $\ket{\psi_A'}$,

\[\|\Pi[A_1]\sU_\Open\Pi[A_0]\ket{\psi_A}\|^2 \geq \frac{1}{2^{2\secp}} \implies \Pi[A_0]\ket{\psi_A} = \gamma\ket{\psi_A'} ~ \text{for} ~ \gamma \geq \frac{1}{2^{\secp}},\]
so the guarantee of \cref{thm:quantum-search} implies that \[\big\| \widetilde{\sU}_\Com(\ket{A}) - \ket{\psi_A'}\big\| \leq \frac{1}{2^{15p}}.\] Thus, $(\widetilde{\sU}_\Com,\sU_\Open)$ violates \cref{claim:amp}, which completes the proof.

\end{proof}

Finally, we prove \cref{claim:amp}, which, as we have shown, suffices to prove \cref{thm:binding}.

\begin{proof}(of \cref{claim:amp}) We will use the following imported theorem.

\begin{importedtheorem}[\cite{10.1145/2213977.2213983}]\label{thm:adversary-method}
Let $\cO$ be a set of classical functionalities $F : \{0,1\}^* \to \{0,1\}$. Let $\cR$ be a symmetric binary relation between functionalities where for every $F \in \cO$, $(F,F) \notin \cR$, and for every $F \in \cO$, there exists $G \in \cO$ such that $(F,G) \in \cR$. Moreover, for any $F \in \cO$ and $x$ such that $F(x) = 0$, suppose that \[\Pr_{G \gets \cR_F}[G(x) = 1] \leq \delta,\] where $\cR_F$ is the set of $G$ such that $(F,G) \in \cR$. Now, consider any oracle-aided unitary $\sU^F(\ket{\psi_F})$ that has oracle access to some $F \in \cO$, is initialized with some state $\ket{\psi_F}$ that may depend on $F$, makes $T$ queries, and outputs a state $\ket{\widetilde{\psi}_F}$. Then if $|\braket{\psi_F | \psi_G}| \geq c$ for all $(F,G) \in \cR$ and $\E_{(F,G) \gets \cR}[|\braket{\widetilde{\psi}_F|\widetilde{\psi}_G}|] \leq d$, then $T = \Omega\left(\frac{c-d}{\sqrt{\delta}}\right)$.
\end{importedtheorem}

Now, suppose there exists $\sU_\Com, \sU_\Open$ that violates \cref{claim:amp}. Recall that $\sU_\Com$ has access to the oracles $O[A]$ and $O[A^\bot]$, defined by the $n/2$-dimensional balanced affine subspace $A = S+v$ of $\bbF_2^{3n/4}$. We define a single functionality $F_A$ that takes as input $(b,s)$ and if $b=0$ outputs whether $s \in S + v$, and if $b=1$ outputs whether $s \in S^\bot$.


Then, we define a binary symmetric relation on functionalities $F_A, F_B$ as follows. Letting $A = S_A+v_A$ and $B = S_B+v_B$, we define $(F_A,F_B) \in \cR$ if and only if $\mathsf{dim}(A_0 \cap B_0) = n/2-2$ and $\mathsf{dim}(A_1 \cap B_1) = n/2-2$. Note that for any $(F_A,F_B) \in \cR$, $\mathsf{dim}(A \cap B) = n/2-1$.

Given $\cR$ defined this way, we see that for any fixed $F_A$ and $(b,s)$ such that $F_A(b,s) = 0$,

\begin{align*}
    \Pr_{F_B \gets \cR_{F_A}}&[F_B(b,s) = 1] \\ &\leq \max\left\{\frac{|B \setminus A|}{|\bbF_2^{3n/4} \setminus A \setminus \{0^{3n/4}\}|},\frac{|S_B^\bot \setminus S_A^\bot|}{|\bbF_2^{3n/4} \setminus S_A^\bot|}\right\} \\ &\leq \max\left\{\frac{2^{n/2-1}}{2^{3n/4} - 2^{n/2} - 1}, \frac{2^{n/4-1}}{2^{3n/4}-2^{n/4}}\right\}\\ &\leq \frac{1}{2^{n/4}}.
\end{align*}

Next, we note that $\sU_\Com^{F_A}$ is initialized with the state $\ket{A}$, and, for any $(A,B)$ such that $(F_A,F_B) \in \cR$, it holds that $|\braket{A|B}| = 1/2$. Our goal is then to bound \[\E_{(F_A,F_B) \gets \cR}\left[|\braket{\psi_A | \psi_B}|\right],\] where $\ket{\psi_A} = \sU_\Com^{F_A}(\ket{A})$. Since $(\sU_\Com, \sU_\Open)$ violates \cref{claim:amp}, we can write each $\ket{\psi_A}$ as $\ket{\psi_A'} + \ket{\psi_A^\err}$, where 

\[\big\| \ket{\psi_A^\err} \big\| \leq \frac{1}{2^{15\secp}}, ~~ \ket{\psi'_A} \in \mathsf{Im}(\Pi[A_0]), ~~ \text{and} ~~ \big\| \Pi[A_1]\sU_\Open\ket{\psi_A'}\big\| \geq \frac{1}{2^\secp}.\]

Thus, we have that \[\E_{(F_A,F_B) \gets \cR}[|\braket{\psi_A | \psi_B}] \leq \E_{(F_A,F_B) \gets \cR}[|\braket{\psi_A' | \psi_B'}|] + \frac{3}{2^{15\secp}}.\]

Now, we appeal to the following theorem, which is proven in \cref{appendix:inner-product}.

\begin{theorem}
Let $n,m,d \in \bbN, \epsilon \in (0,1/8)$ be such that $d \geq 2$ and $n-d+1 > 10\log(1/\epsilon)+6$. Let $\sU^{\regX,\regY}$ be any $(2^{n+m})$-dimensional unitary, where register $\regX$ is $2^n$ dimensions and register $\regY$ is $2^m$ dimensions. Let $\cA$ be the set of $d$-dimensional balanced affine subspaces $A = (A_0,A_1)$ of $\bbF_2^n$, where $A_0$ is the affine subspace of vectors in $A$ that start with 0 and $A_1$ is the affine subspace of vectors in $A$ that start with 1. For any $A = (A_0,A_1)$, let 

\[\Pi_{A_0} \coloneqq \sum_{v \in A_0}\ketbra{v}{v}^{\regX} \otimes \bbI^{\regY}, ~~~ \Pi_{A_1} \coloneqq \sU^\dagger\left(\sum_{v \in A_1}\ketbra{v}{v}^{\regX} \otimes \bbI^{\regY}\right)\sU.\]

Let $\cR$ be the set of pairs $(A,B)$ of $d$-dimensional affine subspaces of $\bbF_2^n$ such that $\dim(A_0 \cap B_0) = d-2$ and $\dim(A_1 \cap B_1) = d-2$. Then for any set of states $\{\ket{\psi_A}\}_A$ such that for all $A \in \cA$, $\ket{\psi_A} \in \mathsf{Im}(\Pi_{A_0})$, and $\|\Pi_{A_1}\ket{\psi_A}\| \geq \epsilon$, 
\[\E_{(A,B) \gets \cR}[|\braket{\psi_A | \psi_B}|] < \frac{1}{2}-\epsilon^{13}.\]
\end{theorem}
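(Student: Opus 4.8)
We argue by contradiction: assume there is a unitary $\sU$ and a family $\{\ket{\psi_A}\}_{A \in \cA}$ with $\ket{\psi_A} \in \mathsf{Im}(\Pi_{A_0})$ and $\|\Pi_{A_1}\ket{\psi_A}\| \geq \epsilon$ for every $A$, yet $\E_{(A,B)\gets\cR}[|\braket{\psi_A|\psi_B}|] \geq \tfrac12 - \epsilon^{13}$. The first step is to record the \emph{unconditional} bound $\E_{(A,B)\gets\cR}[|\braket{\psi_A|\psi_B}|]\leq\tfrac12$ and to read off what near‑saturation forces. Since $\ket{\psi_A}\in\mathsf{Im}(\Pi_{A_0})$, $\ket{\psi_B}\in\mathsf{Im}(\Pi_{B_0})$, and $\Pi_{A_0}\Pi_{B_0}=\Pi_{A_0\cap B_0}$ is the standard‑basis projector onto $(A_0\cap B_0)\otimes\regY$ (which is nonempty for $(A,B)\in\cR$), we have $\braket{\psi_A|\psi_B}=\bra{\psi_A}\Pi_{A_0\cap B_0}\ket{\psi_B}$ and hence $|\braket{\psi_A|\psi_B}|\leq\|\Pi_{A_0\cap B_0}\ket{\psi_A}\|\cdot\|\Pi_{A_0\cap B_0}\ket{\psi_B}\|$ by Cauchy--Schwarz. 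By the affine‑automorphism symmetry of $\cR$, a uniformly random $B\in\cR_A$ contains any fixed $v\in A_0$ inside $A_0\cap B_0$ with probability exactly $2^{d-2}/2^{d-1}=\tfrac12$, so $\E_{B\gets\cR_A}[\Pi_{A_0\cap B_0}]=\tfrac12\Pi_{A_0}$ and therefore $\E_{(A,B)\gets\cR}[\|\Pi_{A_0\cap B_0}\ket{\psi_A}\|^2]=\tfrac12$; combining this with the symmetry of $\cR$ and one further Cauchy--Schwarz over the draw of $(A,B)$ gives $\tfrac12$. Near‑saturation then forces the whole chain to be near‑tight: $\E_{(A,B)\gets\cR}\big[(\|\Pi_{A_0\cap B_0}\ket{\psi_A}\|-\|\Pi_{A_0\cap B_0}\ket{\psi_B}\|)^2\big]=O(\epsilon^{13})$, and for all but an $O(\epsilon^{6})$‑fraction of pairs the vectors $\Pi_{A_0\cap B_0}\ket{\psi_A}$ and $\Pi_{A_0\cap B_0}\ket{\psi_B}$ have norm bounded below by a constant and are $\poly(\epsilon)$‑close to being parallel up to a phase.

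The second, and hardest, step is to upgrade this per‑pair rigidity to \emph{global} structure. Writing $\ket{\psi_A}=\sum_{v\in A_0}\ket{v}^{\regX}\ket{\phi_{A,v}}^{\regY}$, the per‑pair conclusion says that $\ket{\phi_{A,v}}$ and $\ket{\phi_{B,v}}$ agree, up to an $(A,B)$‑dependent phase, for $v\in A_0\cap B_0$ on most $\cR$‑adjacent pairs. The plan is to propagate these local agreements along the highly connected graph on $\cA$ with edge set $\cR$, arriving at a single near‑orthonormal family $\{\ket{\phi_v}\}_v$ and phases $\{\theta_A\}_A$ so that, for all but a small fraction of $A$, $\ket{\psi_A}$ is $\poly(\epsilon)$‑close to $e^{i\theta_A}\tfrac{1}{\sqrt{|A_0|}}\sum_{v\in A_0}\ket{v}\ket{\phi_v}$. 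This is where the argument is genuinely more delicate than in prior uses of the inner‑product adversary method (\cite{10.1145/2213977.2213983,arxiv:BenSat16,STOC:AGKZ20}): the hypothesis on $\ket{\psi_A}$ is not that it equals a fixed state or that its standard‑basis support lies in a fixed set, but only that it is $\epsilon$‑far from being orthogonal to the rotated subspace $\mathsf{Im}(\sU^{\dagger}(\sum_{v\in A_1}\ketbra{v}{v}^{\regX}\otimes\bbI^{\regY})\sU)$, a quantity that cannot be amplified; so the local‑to‑global propagation must rely solely on near‑tightness of Cauchy--Schwarz, and one must track how the $\poly(\epsilon)$ errors and the phase ambiguities accumulate along paths. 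It is here that the Welch bound \cite{1055219} is invoked: it caps the number of pairwise‑distant unit vectors that fit (to the required accuracy) into a Hilbert space of a given dimension, and this is what forces the would‑be‑spread states $\ket{\psi_A}$ with a common $A_0$ to actually collapse onto a single $\poly(1/\epsilon)$‑dimensional subspace, thereby pinning down the global structure with error $o(\epsilon)$ on a large enough sub‑family.

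The third step derives the contradiction from this structure together with the hypothesis $\|\Pi_{A_1}\ket{\psi_A}\|\geq\epsilon$ and the dimension bound $n-d+1>10\log(1/\epsilon)+6$. Fix a $(d-1)$‑dimensional affine subspace $C_0$ of vectors starting with $0$; the $d$‑dimensional balanced affine subspaces $A$ with $A_0=C_0$ are in bijection with the cosets of $\mathrm{lin}(C_0)$ of vectors starting with $1$, so there are $M=2^{n-d}>32\,\epsilon^{-10}$ of them, and the sets $\{A_1:A_0=C_0\}$ are pairwise disjoint. Averaging the structural conclusion of Step 2 over $C_0$, choose $C_0$ so that a $1-o(1)$ fraction of these $M$ subspaces are ``good'', meaning $\ket{\psi_A}$ lies $o(\epsilon)$‑close (after a phase) to a common vector $\ket{\psi^{\ast}}_{C_0}$. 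For each good $A$, letting $\Pi'_{A_1}:=\sum_{v\in A_1}\ketbra{v}{v}^{\regX}\otimes\bbI^{\regY}$, we get $\|\Pi'_{A_1}\sU\ket{\psi^{\ast}}_{C_0}\|\geq\epsilon-o(\epsilon)$; but these projectors have pairwise orthogonal ranges, so $\sum_{\text{good }A}\|\Pi'_{A_1}\sU\ket{\psi^{\ast}}_{C_0}\|^2\leq\|\sU\ket{\psi^{\ast}}_{C_0}\|^2=1$ (and, if the good states only span a $\poly(1/\epsilon)$‑dimensional subspace rather than a single ray, the same bound up to that dimension via the Welch/rank estimate of Step 2). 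This caps the number of good $A$ by $(\epsilon-o(\epsilon))^{-2}\cdot\poly(1/\epsilon)=\poly(1/\epsilon)$, contradicting the fact that there are $\Omega(M)=\Omega(\epsilon^{-10})$ of them once the exponent $10$ in the dimension hypothesis is chosen large enough relative to the $\poly(1/\epsilon)$ losses incurred in Steps 1--2; tracking those losses is exactly what sets the exponent $13$ in the statement. The main obstacle, as flagged, is Step 2 — the local‑to‑global rigidity argument that must function despite the non‑amplifiable, $\sU$‑twisted hypothesis on the states.
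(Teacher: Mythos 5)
Your high-level architecture matches the paper's: argue by contradiction from near-saturation of the trivial $\tfrac12$ bound, extract rigidity of the states, localize to a single $A_0^*$, and then derive a contradiction from the pairwise orthogonality of the (untwisted) projectors $\sum_{v\in A_1}\ketbra{v}{v}$ together with a Welch bound. However, there is a concrete gap in your Step 1, and your Step 2 is a plan rather than a proof. In Step 1, from $\E[XY]\geq\tfrac12-\epsilon^{13}$ and $\E[X^2]=\E[Y^2]=\tfrac12$ (where $X=\|\Pi_{A_0\cap B_0}\ket{\psi_A}\|$, $Y=\|\Pi_{A_0\cap B_0}\ket{\psi_B}\|$) you can conclude $\E[(X-Y)^2]\leq 2\epsilon^{13}$, i.e., $X\approx Y$ on most pairs, but you cannot conclude that $XY$ is ``bounded below by a constant'' on most pairs: the averages are equally consistent with $X=Y$ being near $0$ on half the pairs and near $1$ on the other half (e.g., each $\ket{\psi_A}$ concentrating all its amplitude on a single vector of $A_0$). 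Ruling this out is exactly where the paper needs a further idea: every $(A,B)\in\cR$ determines a third $C$ with $(A,B),(B,C),(C,A)\in\cR$, and a geometric fact about three nonnegative unit vectors (the sum of their pairwise inner products is at most $\tfrac32$, with a robust converse) forces each of $|\alpha_A^{v}|\cdot|\alpha_B^{v}|$ to individually be close to $\tfrac12$ on most triangles. Without this three-cycle argument your claimed per-pair lower bound on the amplitude product, and hence the subsequent closeness of the $\regY$-components, does not follow.

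Your Step 2 — propagating the local agreements of the $\ket{\phi_{A,v}}$ along the graph $(\cA,\cR)$ to a single pair of reference vectors for some $A_0^*$ — is the paper's Claim on the intersection structure, and you correctly identify it as the crux, but you do not execute it. The paper does so by partitioning $\cR[(v_0^*,w_0^*)]$ into blocks of size $8$ indexed by unordered pairs $\{A_1,A_1'\}$, showing by a counting argument that these blocks exactly tile $\cR[(v_0^*,w_0^*)]$, and then fixing one reference $A_1^*$ so that for most $A_1$ there is a common neighbor $B$ certifying $|\braket{\phi^{v_0^*}_{(A_0^*,A_1)}|\phi^{v_0^*}_{(A_0^*,A_1^*)}}|\geq 1-2\epsilon^3$; no phase bookkeeping along long paths is needed because everything is routed through one reference state and inner products are taken in absolute value. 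Your Step 3 is essentially the paper's endgame (a greedy clustering of the states inside the two-dimensional subspace $\mathsf{Im}(\Pi^*)$, a Welch bound with $d=2$ to cap the number of clusters by $\epsilon^{-8}$, and orthogonality of the $\Pi_{A_1}$ to cap each cluster's size by $O(\epsilon^{-2})$, contradicting $2^{n-2}>16\epsilon^{-10}$), so once Steps 1 and 2 are repaired along the lines above the argument closes. Separately, the paper first reduces general $d$ to the two-dimensional case by quotienting by the common $(d-2)$-dimensional subspace underlying each $(A,B)\in\cR$ and conjugating by a change of basis; you work directly in general $d$, which is fine since your combinatorics already operate at the level of the two relevant cosets.
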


Setting $\epsilon = 1/2^\secp$, and noting that $3n/4 - n/2 + 1 > 11\secp > 10\log(2^\secp)+6$, this theorem implies that \[\E_{(F_A,F_B) \gets \cR}[|\braket{\psi_A' | \psi_B'}|] \leq \frac{1}{2} - \frac{1}{2^{13\secp}},\] and thus we conclude that \[\E_{(F_A,F_B) \gets \cR}[|\braket{\psi_A | \psi_B}] \leq \frac{1}{2} - \frac{1}{2^{14\secp}}.\]

Thus, by \cref{thm:adversary-method}, $\sU_\Com$ must be making \[\Omega\left(\frac{2^{n/8}}{2^{14\secp}}\right) = \Omega\left(2^{130\secp/8 - 14\secp}\right) > 2^{2\secp}\] oracle queries, recalling that $n \geq 130\secp$. However, $\sU_\Com$ was assumed to be making at most $2^{2\secp}$ queries, so this is a contradiction, completing the proof.

\end{proof}
\section{Verification of Quantum Partitioning Circuits}

\subsection{Definition}\label{subsec:partitioning-def}

A protocol for publicly-verifiable non-interactive classical verification of quantum partitioning circuits consists of the following procedures. We write the syntax in the \emph{oracle model}, where the prover obtains access to a classical oracle as part of its public key. We also specify a quantum proving key $\ket{\pk}$, but note that one could also consider the case where the proving key $\pk$ is classical.

\begin{itemize}
    \item $\Gen(1^\secp,Q) \to (\vk, \ket{\pk}, \cPK)$: The $\Gen$ algorithm takes as input the security parameter $1^\secp$ and the description of a quantum circuit $Q : \{0,1\}^{n'} \to \{0,1\}^n$, and outputs a classical verification key $\vk$ and a quantum proving key $(\ket{\pk},\cPK)$, which consists of a quantum state $\ket{\pk}$ and the description of a classical deterministic polynomial-time functionality $\cPK: \{0,1\}^* \to \{0,1\}^*$.
    \item $\Prove^{\cPK}(\ket{\pk},Q,x) \to \pi$: The $\Prove$ algorithm has oracle access to $\cPK$, takes as input the quantum proving key $\ket{\pk}$, a circuit $Q$, and an input $x \in \{0,1\}^{n'}$, and outputs a proof $\pi$.
    \item $\Ver(\vk,x,\pi) \to \{(q_1,\dots,q_m)\} \union \{\bot\}$: The classical $\Ver$ algorithm takes as input the verification key $\vk$, an input $x$, and a proof $\pi$, and either outputs a sequence of samples $(q_1,\dots,q_m)$ or $\bot$.
    \item $\Combine(b_1,\dots,b_m) \to b$: The $\Combine$ algorithm takes as input a sequence of bits $(b_1,\dots,b_m)$ and outputs a bit $b$.
\end{itemize}

The proof should satisfy the following notions of completeness and soundness.

\begin{definition}[Publicly-verifiable non-interactive classical verification of quantum partitioning circuits: Completeness]\label{def:PV-completeness}
A protocol for publicly-verifiable non-interactive classical verification of quantum partitioning circuits is \emph{complete} if for any family $\{Q_\secp,P_\secp\}_{\secp \in \bbN}$ such that $\{P_\secp \circ Q_\secp\}_{\secp \in \bbN}$ is pseudo-deterministic, and any sequence of inputs $\{x_\secp\}_{\secp \in \bbN}$, it holds that (where we leave indexing by $\secp$ implicit)


\[\Pr\left[\begin{array}{l} \Ver(\vk,x,\pi) = (q_1,\dots,q_m) \ \wedge \\ \Combine(P(q_1),\dots,P(q_m)) = P(Q(x)) \end{array}: \begin{array}{r} (\vk,\ket{\pk},\cPK) \gets \Gen(1^\secp,Q) \\ \pi \gets \Prove^{\cPK}(\ket{\pk},Q,x) \end{array}\right] = 1-\negl(\secp).\]

\end{definition}

We define soundness in the oracle model, where the adversarial prover gets access to an oracle for the functionality $\Ver(\vk,\cdot,\cdot)$. 

\begin{definition}[Publicly-verifiable non-interactive classical verification of quantum partitioning circuits: Soundness]\label{def:PV-soundness}
A protocol for publicly-verifiable non-interactive classical verification of quantum partitioning circuits is \emph{sound} if for any family $\{Q_\secp,P_\secp\}_{\secp \in \bbN}$ such that $\{P_\secp \circ Q_\secp\}_{\secp \in \bbN}$ is pseudo-deterministic, and any QPT adversarial prover $\{\sA_\secp\}_{\secp \in \bbN}$, it holds that (where we leave indexing by $\secp$ implicit)

\[\Pr\left[\begin{array}{l} \Ver(\vk,x,\pi) = (q_1,\dots,q_m) \ \wedge \\ \Combine(P(q_1),\dots,P(q_m)) = 1-P(Q(x)) \end{array}: \begin{array}{r} (\vk,\ket{\pk},\cPK) \gets \Gen(1^\secp,Q) \\ (x,\pi) \gets \sA^{\cPK,\Ver[\vk]}(\ket{\pk})  \end{array} \right] = \negl(\secp),\]

where $\Ver[\vk]$ is the classical functionality $\Ver(\vk,\cdot,\cdot) : (x,\pi) \to \{(q_1,\dots,q_m)\} \union \{\bot\}$.

\end{definition}

\subsection{$\text{QPIP}_1$ verification}\label{subsec:QPIP}

First, we recall an information-theoretic protocol for verifying quantum partitioning circuits using only single-qubit standard and Hadamard basis measurements.\footnote{Quantum interactive protocols where the verifier only requires the ability to measure single qubits have been referred to as $\text{QPIP}_1$ protocols.} This protocol is a $\secp$-wise parallel repetition of the quantum sampling verification protocol from \cite{EC:CLLW22}, and was described in \cite{TCC:Bartusek21}. Most of the underlying details of the protocol will not be important to us, but we provide a high-level description. 

The prover prepares multiple copies of a history state of the computation $Q(x)$, which is in general a sampling circuit. Each history state is prepared in a special way \cite{EC:CLLW22} to satisfy the following properties: (i) a sample approximately from the output distribution may be obtained by measuring certain registers of the state in the \emph{standard basis}, which can be achieved by adding enough dummy identity gates to ensure that the output state is a large fraction of the history state, and (ii) the history state is the \emph{unique} ground state of the Hamiltonian, and all orthogonal states have much higher energy, ensuring that the verifier can test the validity of the entire computation by testing the energy of the history state. 

Then, the verifier samples certain copies for \emph{verifying} and other copies for \emph{sampling}. In the verify copies, it samples a random Hamiltonian term, and measures in the corresponding standard and Hadamard bases, while in the sample copies, the verifier measures the output register in the standard basis. If the verifier accepts the results from measuring the verify copies, it outputs the collection of samples obtained from the sample copies. It was shown by \cite{TCC:Bartusek21} that if $Q$ is a partitioning circuit with predicate $P$, then one can set parameters so that conditioned on verification passing, it holds with \emph{overwhelming probability} that \emph{at least half} of the output samples $q_t$ are such that $P(q_t) = P(Q(x))$. We describe the formal syntax of this protocol in \cref{fig:syntax-QPIP1}, where the prover state $\ket{\psi}$ consists of sufficiently many copies of the history state, and the verifier's string $h$ of measurement bases consists of (mostly) indices used for verification as well as some indices used for sampling outputs, which we denote by $S$. By an observation of \cite{TCC:ACGH20}, the sampling of $h$ can be performed independently of the input $x$, which is reflected in the syntax of \cref{fig:syntax-QPIP1} (technically, it only needs the size $|Q|$ rather than $Q$ itself).

Next, we introduce some notation, and then state the correctness and soundness guarantees of this protocol that follow from prior work.


\begin{definition}
Define $\Maj$ to be the predicate that takes as input a set of bits $\{b_i\}_i$ and outputs the most frequently occurring bit $b$. In the event of a tie, we arbitrarily set the output to 0.
\end{definition}

\begin{definition}
For a string $x \in \{0,1\}^n$ and a subset $S \subseteq [n]$, define $x[S]$ to be the string consisting of bits $\{x_i\}_{i \in S}$.
\end{definition}

\begin{definition}
Given an $h \in \{0,1\}^n$ and an $n$-qubit state $\ket{\psi}$, let $M(h,\ket{\psi})$ denote the distribution over $n$-bit strings that results from measuring each qubit $i$ of $\ket{\psi}$ in basis $h_i$, where the bit $h_i = 0$ indicates standard basis and $h_i = 1$ indicates Hadamard basis.
\end{definition}

\begin{importedtheorem}[\cite{EC:CLLW22,TCC:Bartusek21}]\label{impthm:QV}
The protocol $\Pi^\QV$ (\cref{fig:syntax-QPIP1}) that satisfies the following properties.

\begin{itemize}
    \item \textbf{Completeness.} For any family $\{Q_\secp,P_\secp\}_{\secp \in \bbN}$ such that $\{P_\secp \circ Q_\secp,\}_{\secp \in \bbN}$ is pseudo-deterministic, and any sequence of inputs $\{x_\secp\}_{\secp \in \bbN}$,
    \[\Pr\left[\sV_\Ver^\QV(Q,x,h,m) = \top \wedge \Maj(\{P(q_t)\}_t) = P(Q(x)): \begin{array}{r}  \ket{\psi} \gets \sP^{\QV}(1^\secp,Q,x) \\ (h,S) \gets \sV_\Gen^{\QV}(1^\secp,Q) \\ m \gets M(h,\ket{\psi}) \\ \{q_t\}_{t \in [\secp]} \coloneqq m[S] \end{array}\right] = 1-\negl(\secp).\]
    \item \textbf{Soundness.} For any family $\{Q_\secp,P_\secp\}_{\secp \in \bbN}$ such that $\{P_\secp \circ Q_\secp\}_{\secp \in \bbN}$ is pseudo-deterministic, any sequence of inputs $\{x_\secp\}_{\secp \in \bbN}$, and any sequence of states $\{\ket{\psi^*_\secp}\}_{\secp \in \bbN}$,
    \[\Pr\left[\sV_\Ver^\QV(Q,x,h,m) = \top \wedge \Maj(\{P(q_t)\}_t) = 1-P(Q(x)) : \begin{array}{r}  (h,S) \gets \sV_\Gen^{\QV}(1^\secp,Q) \\ m \gets M(h,\ket{\psi^*}) \\ \{q_t\}_{t \in [\secp]} \coloneqq m[S] \end{array}\right] = \negl(\secp).\]
\end{itemize}
\end{importedtheorem}

\protocol{$\text{QPIP}_1$ protocol $\Pi^\QV = \left(\sP^\QV,\sV_\Gen^\QV,\sV_\Ver^\QV\right)$}{Syntax for a $\text{QPIP}_1$ protocol that verifies the output of a quantum partitioning circuit $Q$.}{fig:syntax-QPIP1}{

Parameters: Number of bits $n$ output by $Q$, and number of qubits $\ell = \ell(\secp)$ in the prover's state.\\

\textbf{Prover's computation}

\begin{itemize}
    \item $\sP^\QV(1^\secp,Q,x) \to \ket{\psi}:$ on input the security parameter $1^\secp$, the description of a quantum circuit $Q$, and an input $x$, the prover prepares a state $\ket{\psi}$ on $\ell$ qubits, and sends it to the verifier.
\end{itemize}

\textbf{Verifier's computation}

\begin{itemize}
    \item $\sV_\Gen^\QV(1^\secp,Q) \to (h,S):$ on input the security parameter $1^\secp$ and the description of a quantum circuit $Q$, the verifier's $\Gen$ algorithm samples a string $h \in \{0,1\}^\ell$ and a subset $S \subset [\ell]$ of size $n \cdot \secp$ with the property that for all $i \in S$, $h_i = 0$.
    \item Next, the verifier measures $m \gets M(h,\ket{\psi})$ to obtain a string of measurement results $m \in \{0,1\}^\ell$. 
    \item $\sV_\Ver^\QV(Q,x,h,m) \to \{\top,\bot\}:$ on input a circuit $Q$, input $x$, string of bases $h$, and measurement results $m$, the verifier's $\Ver$ algorithm outputs $\top$ or $\bot$.
    \item If $\top$, the verifier outputs the string $m[S]$ which is parsed as $\{q_t\}_{t \in [\secp]}$ where each $q_t \in \{0,1\}^{n}$ and otherwise the verifier outputs $\bot$.
\end{itemize}
}

\subsection{Classical verification}\label{subsec:classical-verification}

Next, we compile the above information-theoretic protocol into a classically-verifiable but computationally-sound protocol, using Mahadev's measurement protocol \cite{SIAMCOMP:Mahadev22}. The measurement protocol itself is a four-message protocol with a single bit challenge from the verifier. Then, we apply  parallel repetition and Fiat-Shamir, following \cite{TCC:ACGH20,TCC:ChiChuYam20,TCC:Bartusek21}, which results in a two-message negligibly-sound protocol in the quantum random oracle model. 

The resulting protocol $\Pi^\CV = (\sP_\Prep^\CV,\sV^\CV_\Gen,\sP_\Prove^\CV,\sP_\Meas^\CV,\sV^\CV_\Ver)$ makes use of a dual-mode randomized trapdoor claw-free hash function $(\TCF.\Gen, \allowbreak\TCF.\Eval, \allowbreak \TCF.\Invert,\allowbreak \TCF.\Ch,\allowbreak \TCF.\IsValid)$ (\cref{def:clawfree}), and is described in \cref{fig:CVQC}. We choose to explicitly split the second prover's algorithm into two parts $\sP_\Prove^\CV$ and $\sP_\Meas^\CV$ for ease of notation when we build on top of this protocol in the next section.

\protocol{Classically-verifiable protocol $\Pi^\CV = \left(\sP_\Prep^\CV,\sV^\CV_\Gen,\sP_\Prove^\CV,\sP_\Meas^\CV,\sV^\CV_\Ver\right)$}{Two-message protocol for verifying quantum partitioning circuits with a classical verifier.}{fig:CVQC}{
\textbf{Parameters}: Number of qubits per round $\ell \coloneqq \ell(\secp)$, number of parallel rounds $r \coloneqq r(\secp)$, number of Hadamard rounds $k \coloneqq k(\secp)$, and random oracle $H: \{0,1\}^* \to \{0,1\}^{\log\binom{r}{k}}$.

\begin{itemize}
    \item $\sP_\Prep^\CV(1^\secp,Q,x) \to (\regB_1,\dots,\regB_r)$: For each $i \in [r]$, prepare the state $\ket{\psi_i} \coloneqq \sP^\QV(1^\secp,Q,x)$ on register $\regB_i = (\regB_{i,1},\dots,\regB_{i,\ell})$, which we write as \[\ket{\psi_i} \coloneqq \sum_{v \in \{0,1\}^\ell}\alpha_v\ket{v}^{\regB_i}.\]
    \item $\sV_\Gen^\CV(1^\secp,Q) \to (\pp,\sparam)$: For each $i \in [r]$, sample $(h_i,S_i) \gets \sV^\QV_\Gen(1^\secp,Q)$ where $h_i = (h_{i,1},\dots,h_{i,\ell})$, and sample $\{(\pk_{i,j},\sk_{i,j}) \gets \TCF.\Gen(1^\secp,h_{i,j})\}_{j \in [\ell]}$. Then, set \[\pp \coloneqq \{\{\pk_{i,j}\}_{j \in [\ell]}\}_{i \in [r]}, \sparam \coloneqq \{h_i,S_i,\{\sk_{i,j}\}_{j \in [\ell]}\}_{i \in [r]}.\]
    \item $\sP_\Prove^\CV(\regB_1,\dots,\regB_r,\pp) \to (\regB_1,\dots,\regB_r,\{y_{i,j},z_{i,j}\}_{i \in [r],j \in [\ell]})$:
    \begin{itemize}
        \item Do the following for each $i \in [r]$: For each $j \in [\ell]$, apply $\TCF.\Eval[\pk_{i,j}](\regB_{i,j}) \to (\regB_{i,j},\regZ_{i,j},\regY_{i,j})$, resulting in the state
            \[\sum_{v \in \{0,1\}^\ell}\alpha_v\ket{v}^{\regB_i}\ket{\psi_{\pk_{i,1},v_1}}^{\regZ_{i,1},\regY_{i,1}},\dots,\ket{\psi_{\pk_{i,\ell},v_\ell}}^{\regZ_{i,\ell},\regY_{i,\ell}},\]
            and measure registers $\regY_{i,1},\dots,\regY_{i,\ell}$ in the standard basis to obtain strings $y_{i,1},\dots,y_{i,\ell}$.
        \item Compute $T \coloneqq H(y_{1,1},\dots,y_{r,\ell})$, where $T \in \{0,1\}^r$ with Hamming weight $k$.
        \item For each $i: T_i = 0$, measure $\regZ_{i,1},\dots,\regZ_{i,\ell}$ in the standard basis to obtain strings $z_{i,1},\dots,z_{i,\ell}$.
        \item For each $i: T_i = 1$, apply $J(\cdot)$ coherently to each register $\regZ_{i,1},\dots,\regZ_{i,\ell}$ and then measure in the Hadamard basis to obtain strings $z_{i,1},\dots,z_{i,\ell}$.
    \end{itemize}
    \item $\sP^\CV_\Meas(\regB_1,\dots,\regB_r) \to \{b_{i,j}\}_{i \in [r], j \in [\ell]}$: Measure registers $\{\regB_{i,j}\}_{i: T_i = 0, j \in [\ell]}$ in the standard basis to obtain bits $\{b_{i,j}\}_{i: T_i = 0, j \in [\ell]}$ and measure registers $\{\regB_{i,j}\}_{i: T_i = 1, j \in [\ell]}$ in the Hadamard basis to obtain bits $\{b_{i,j}\}_{i: T_i = 1, j \in [\ell]}$. 
    \item $\sV_\Ver^\CV(Q,x,\sparam,\pi) \to \{\{\{q_{i,t}\}_{t \in [\secp]}\}_{i : T_i = 1}\} \cup \{\bot\}$:
    \begin{itemize}
        \item Parse $\pi \coloneqq \{b_{i,j},y_{i,j},z_{i,j}\}_{i \in [r], j \in [\ell]}$ and compute $T \coloneqq H(y_{1,1},\dots,y_{r,\ell})$.
        \item For each $i: T_i = 0$ and $j \in [\ell]$, compute $\TCF.\Ch(\pk_{i,j},b_{i,j},z_{i,j},y_{i,j})$. If any are $\bot$, then output $\bot$.
        \item For each $i: T_i = 1$, do the following.
        \begin{itemize}
            \item For each $j \in [\ell]$: If $h_{i,j} = 0$, compute $\TCF.\Invert(0,\sk_{i,j},y_{i,j})$, output $\bot$ if the output is $\bot$, and otherwise parse the output as $(m_{i,j},x_{i,j})$. If $h_{i,j} = 1$, compute $\TCF.\Invert(1,\sk_{i,j},y_{i,j})$, output $\bot$ if the output is $\bot$, and otherwise parse the output as $(0,x_{i,j,0}),(1,x_{i,j,1})$. Then, check $\TCF.\IsValid(x_{i,j,0},x_{i,j,1},z_{i,j})$ and output $\bot$ if the result is $\bot$. Finally, set $m_{i,j} \coloneqq b_{i,j} \oplus z_{i,j} \cdot (J(x_{i,j,0}) \oplus J(x_{i,j,1}))$.
            \item Let $m_i  = (m_{i,1},\dots,m_{i,\ell})$, compute $\sV_\Ver^\QV(Q,x,h_i,m_i)$, output $\bot$ if the result is $\bot$, and otherwise set $\{q_{i,t}\}_{t \in [\secp]} \coloneqq m_i[S_i]$.
        \end{itemize}
        \item Output $\{\{q_{i,t}\}_{t \in [\secp]}\}_{i:T_i = 1}$.
    \end{itemize}
\end{itemize}
}

We introduce some notation needed for describing the security properties of this protocol. 

\begin{itemize}
    \item Fix a security parameter $\secp$, circuit $Q$, input $x$, and parameters $(\pp,\sparam) \in \sV_\Gen^{\CV}(1^\secp,Q)$.
    \item Based on $\sparam = \{h_i,S_i,\{\sk_{i,j}\}_{j \in [\ell]}\}_{i \in [r]}$, we define the set $S \coloneqq \{S_i\}_{i \in [r]}$. For any proof $\pi = \{b_{i,j},y_{i,j},z_{i,j}\}_{i,j}$ generated by $\sP^\CV$, we let $w \coloneqq \TestOut[\sparam](\pi)$ be a string $w \in \{0,1\}^{|S|}$ defined as follows. Let $T \coloneqq H(y_{1,1},\dots,y_{r,\ell})$. The string $w$ consists of $r$ sub-strings $w_1,\dots,w_r$, where for each $i : T_i = 0$, $w_i$ consists of the bits $\{b_{i,j}\}_{j \in S_i}$, and for each $i : T_i = 1$, $w_i = 0^{|S_i|}$. 
    \item For any predicate $P$ and bit $b \in \{0,1\}$, we define the set $D_\inside[P,b] \subset \{0,1\}^{|S|}$ to consist of $w \coloneqq (w_1,\dots,w_r)$ with the following property. There are at least $3/4$ fraction of $w_i$ such that, parsing $w_i$ as $(w_{i,1},\dots,w_{i,\secp})$, it holds that  $\Maj\left(\left\{P(w_{i,t})\right\}_{t \in [\secp]}\right) = b$. 
    \item For any predicate $P$ and bit $b \in \{0,1\}$, we define the set $D_\out[P,b] \subset \{0,1\}^{|S|}$ to consist of $w \coloneqq (w_1,\dots,w_r)$ with the following property. There are at least 1/3 fraction of $w_i$ such that, parsing $w_i$ as $(w_{i,1},\dots,w_{i,\secp})$, it holds that $\Maj\left(\left\{P(w_{i,t})\right\}_{t \in [\secp]}\right) = 1-b$.

\end{itemize}

Note that for any predicate $P$ and $b \in \{0,1\}$, $D_\inside[P,b]$ and $D_\out[P,b]$ are disjoint sets of strings.

Now, we state four properties that $\Pi^\CV$ satisfies. The proof of \cref{lemma:CV-completeness} follows immediately from the completeness of $\Pi^\QV$ (\cref{impthm:QV}) and the correctness of the dual-mode randomized trapdoor claw-free hash function (\cref{def:clawfree}). The proofs of the remaining three lemmas mostly follow from the prior work of \cite{TCC:Bartusek21}, and we show this formally in \cref{sec:appendix-CVQC}.

\begin{definition}\label{def:maj-maj}
Let $\MM_\secp$ be the predicate that takes as input a set of bits $\{\{b_{i,t}\}_{t \in [\secp]}\}_i$, and outputs the bit \[\MM_\secp(\{\{b_{i,t}\}_{t \in [\secp]}\}_i) \coloneqq \Maj\left(\left\{\Maj\left(\{b_{i,t}\}_{t \in [\secp]}\right)\right\}_i\right).\]
\end{definition}

\begin{lemma}[Completeness]\label{lemma:CV-completeness}
The protocol $\Pi^\CV$ (\cref{fig:CVQC}) with $r(\secp) = \secp^2$ and $k(\secp) = \secp$ satisfies \emph{completeness}, which stipulates that for any family $\{Q_\secp,P_\secp\}_{\secp \in \bbN}$ such that $\{P_\secp \circ Q_\secp\}_{\secp \in \bbN}$ is pseudo-deterministic and sequence of inputs $\{x_\secp\}_{\secp \in \bbN}$,

\[\Pr\left[\begin{array}{l} \sV^\CV_\Ver(Q,x,\sparam,\pi) = \{\{q_{i,t}\}_{t \in [\secp]}\}_{i : T_i = 1} ~~ \wedge\\ \MM_\secp(\{\{P(q_{i,t})\}_{t \in [\secp]}\}_{i: T_i = 1}) = P(Q(x)) \end{array}: \begin{array}{r}(\regB_1,\dots,\regB_r) \gets \sP_\Prep^\CV(1^\secp,Q,x) \\ (\pp,\sparam) \gets \sV^\CV_\Gen(1^\secp,Q) \\ \{y_{i,j},z_{i,j}\}_{i \in [r], j \in [\ell]} \gets \sP_\Prove^\CV(\regB_1,\dots,\regB_r,\pp) \\ \{b_{i,j}\}_{i \in [r], j \in [\ell]} \gets \sP_\Meas^\CV(\regB_1,\dots,\regB_r) \\ \pi \coloneqq \{b_{i,j},y_{i,j},z_{i,j}\}_{i \in [r], j \in [\ell]}\end{array}\right] = 1-\negl(\secp).\]

\end{lemma}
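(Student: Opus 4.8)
\textbf{Proof proposal for \cref{lemma:CV-completeness}.}

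The plan is to obtain completeness of $\Pi^\CV$ by composing the completeness of the underlying $\text{QPIP}_1$ protocol $\Pi^\QV$ (\cref{impthm:QV}) with the correctness of the dual-mode randomized trapdoor claw-free hash function (\cref{def:clawfree}), so that each ``Hadamard round'' $i$ (i.e.\ a round with $T_i = 1$) of $\Pi^\CV$ becomes, from the verifier's point of view, a faithful execution of $\Pi^\QV$. First I would fix the honest prover's behavior and condition on the measured values $\{y_{i,j}\}_{i,j}$, hence on $T \coloneqq H(\{y_{i,j}\})$. The main technical step is to show that, against the honest prover, (a) $\sV^\CV_\Ver$ outputs $\bot$ only with probability $\negl(\secp)$, and (b) for each round $i$ with $T_i = 1$, the string $m_i = (m_{i,1},\dots,m_{i,\ell})$ reconstructed inside $\sV^\CV_\Ver$ is distributed, up to total variation distance $\negl(\secp)$, as $M(h_i,\ket{\psi_i})$ with $\ket{\psi_i} \gets \sP^\QV(1^\secp,Q,x)$. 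Point (a) follows from the correctness items of \cref{def:clawfree}: for a test round ($T_i=0$), item 1(c) guarantees $\TCF.\Ch(\pk_{i,j},b_{i,j},z_{i,j},y_{i,j})=\top$ for the honestly produced $(b_{i,j},z_{i,j})$; for a Hadamard round ($T_i=1$), items 1(a)/1(b) guarantee that $\TCF.\Invert$ returns the correct preimage(s), and item 1(e) guarantees $z_{i,j} \in \Valid_{x_{i,j,0},x_{i,j,1}}$ except with probability $\negl(\secp)$. Point (b) is precisely the correctness of Mahadev's measurement protocol: for $h_{i,j}=0$ the inverted $m_{i,j}$ equals the standard-basis measurement outcome of $\regB_{i,j}$, and for $h_{i,j}=1$ the quantity $m_{i,j} \coloneqq b_{i,j} \oplus z_{i,j}\cdot(J(x_{i,j,0})\oplus J(x_{i,j,1}))$ equals the Hadamard-basis measurement outcome; the $\negl(\secp)$ slack comes from item 1(d), which only guarantees that $\TCF.\Eval$ outputs a state negligibly close in trace distance to the ideal one. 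Accumulating these errors over the $\poly(\secp)$ many invocations of $\TCF.\Eval$, $\TCF.\Invert$ and $\TCF.\IsValid$ costs only an additional $\negl(\secp)$.

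Granting (a) and (b), I would then invoke the completeness of $\Pi^\QV$: for each fixed round $i$ with $T_i=1$, the pair $(h_i,m_i)$ has, up to $\negl(\secp)$, exactly the distribution of $\bigl((h,S)\gets \sV^\QV_\Gen(1^\secp,Q),\, m\gets M(h,\ket{\psi})\bigr)$ with $\ket{\psi}\gets\sP^\QV(1^\secp,Q,x)$, so \cref{impthm:QV} gives that $\sV^\QV_\Ver(Q,x,h_i,m_i)=\top$ and $\Maj(\{P(q_{i,t})\}_{t\in[\secp]})=P(Q(x))$ except with probability $\negl(\secp)$, where $\{q_{i,t}\}_t = m_i[S_i]$. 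A union bound over the $k(\secp)=\secp$ Hadamard rounds (and, for (a), over the $r(\secp)=\secp^2$ rounds' $\TCF.\Ch$ checks) shows that, except with probability $\negl(\secp)$, this good event holds for \emph{every} $i$ with $T_i=1$ simultaneously; in that case $\sV^\CV_\Ver$ outputs $\{\{q_{i,t}\}_{t\in[\secp]}\}_{i:T_i=1}$ (rather than $\bot$) and $\Maj(\{P(q_{i,t})\}_t)=P(Q(x))$ for each such $i$. Finally, unwinding \cref{def:maj-maj}: if $\Maj(\{P(q_{i,t})\}_t)=P(Q(x))$ for all $i$ with $T_i=1$, then $\MM_\secp(\{\{P(q_{i,t})\}_{t\in[\secp]}\}_{i:T_i=1}) = \Maj(\{P(Q(x))\}_{i:T_i=1}) = P(Q(x))$, which is the conclusion of \cref{lemma:CV-completeness}.

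I do not expect a genuine obstacle here; the proof is a composition argument. The only places needing care are verifying that the honest prover's claw-free-based encoding exactly reproduces the $\text{QPIP}_1$ single-qubit measurement distribution — which is the content of the correctness axioms of \cref{def:clawfree} and is already carried out in \cite{SIAMCOMP:Mahadev22,TCC:Bartusek21} — and the routine bookkeeping of the $\negl(\secp)$ errors over the polynomially many rounds and gate applications. The specific parameter choices $r(\secp)=\secp^2$ and $k(\secp)=\secp$ are inessential for completeness (any polynomials work); they are pinned down by the soundness/parallel-repetition analysis elsewhere.
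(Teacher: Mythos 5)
Your proposal is correct and matches the paper's argument: the paper itself only remarks that \cref{lemma:CV-completeness} ``follows immediately from the completeness of $\Pi^\QV$ (\cref{impthm:QV}) and the correctness of the dual-mode randomized trapdoor claw-free hash function (\cref{def:clawfree})'', and your write-up is exactly that composition, spelled out round-by-round with the routine $\negl(\secp)$ bookkeeping. No gaps.
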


\begin{lemma}[Soundness]\label{lemma:soundness}
The protocol $\Pi^\CV$ (\cref{fig:CVQC}) with $r(\secp) = \secp^2$ and $k(\secp) = \secp$ satisfies \emph{soundness}, which stipulates that for any family $\{Q_\secp,P_\secp\}_{\secp \in \bbN}$ such that $\{P_\secp \circ Q_\secp\}_{\secp \in \bbN}$ is pseudo-deterministic, sequence of inputs $\{x_\secp\}_{\secp \in \bbN}$, and QPT adversary $\{\sA_\secp\}_{\secp \in \bbN}$, it holds that 
    \[\Pr\left[\begin{array}{l} \sV^\CV_\Ver(Q,x,\sparam,\pi) = \{\{q_{i,t}\}_{t \in [\secp]}\}_{i : T_i = 1} ~~ \wedge\\ \MM_\secp(\{\{P(q_{i,t})\}_{t \in [\secp]}\}_{i: T_i = 1}) = 1-P(Q(x)) \end{array}: \begin{array}{r}(\pp,\sparam) \gets \sV^\CV_\Gen(1^\secp,Q) \\ \pi \gets \sA(\pp) \end{array}\right] = \negl(\secp).\]
\end{lemma}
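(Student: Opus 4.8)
The plan is to reduce soundness of $\Pi^\CV$ to the information-theoretic soundness of $\Pi^\QV$ (\cref{impthm:QV}), using the computational soundness of Mahadev's measurement protocol as the bridge, and following the parallel-repetition-plus-Fiat-Shamir analyses of \cite{TCC:ACGH20,TCC:ChiChuYam20,TCC:Bartusek21}. First I would recall the three layers of the compilation: $\Pi^\CV$ runs $r = \secp^2$ parallel copies of Mahadev's four-message measurement protocol, uses the random oracle $H$ to Fiat-Shamir-compile each copy's single-bit challenge into the choice of which $k = \secp$ rounds are ``Hadamard rounds'' (the rest being ``test rounds''), and in each Hadamard round the extracted bits $\{m_{i,j}\}_{j \in [\ell]}$ play the role of a single-qubit measurement of the prover's state in bases $h_i$ inside the $\Pi^\QV$ protocol. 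Note that here the input $x$ is fixed (not adversarially chosen), and the prover $\sA$ does not have access to a verification oracle, so this is the ``plain'' soundness statement that the publicly-verifiable theorem will later build on.

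The core step is to argue that, conditioned on $\sV^\CV_\Ver$ accepting, the bits $\{m_{i,j}\}_j$ extracted in a typical Hadamard round $i$ are computationally indistinguishable from the outcome of honestly measuring some $\ell$-qubit state $\ket{\psi_i}$ in bases $h_i$, i.e.\ from $M(h_i,\ket{\psi_i})$. This is the content of the measurement-protocol soundness analysis: the test rounds, together with the key indistinguishability and adaptive-hardcore-bit properties of the dual-mode claw-free hash function (which, via \cite{DBLP:conf/crypto/BartusekKLMMVVY22}, suffice for the soundness of the underlying measurement protocol without Mahadev's extra property), force the prover into a state consistent with honest measurement behavior on the Hadamard rounds, except with negligible probability. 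The Fiat-Shamir step is handled via the measure-and-reprogram technique (\cref{thm:measure-and-reprogram}), so that the Hadamard-round set — although a deterministic function of the prover's first message $\{y_{i,j}\}_{i,j}$ — may be treated as sufficiently random. From here, for each fixed Hadamard round $i$ one applies the soundness clause of \cref{impthm:QV} to the extracted state $\ket{\psi_i}$: the probability that $\sV^\QV_\Ver(Q,x,h_i,m_i) = \top$ while $\Maj(\{P(q_{i,t})\}_t) = 1 - P(Q(x))$ is negligible.

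Finally I would assemble these round-wise guarantees into the ``majority of majorities'' statement. Since $\MM_\secp$ outputs $1 - P(Q(x))$ only if at least half of the accepting Hadamard rounds $i$ satisfy $\Maj(\{P(q_{i,t})\}_t) = 1 - P(Q(x))$, and since each such round is individually ``bad'' only with negligible probability by the previous paragraph, a hybrid argument that switches the extracted outcomes round by round to honest measurements — combined with a union bound over the (at most $k = \secp$) Hadamard rounds — shows that the overall bad event occurs with negligible probability. The parameters $r = \secp^2,\ k = \secp$ are chosen so that the Fiat-Shamir loss of $(2q+1)^2$, the concentration inside $\Pi^\QV$, and the number of hybrids all remain negligible overhead; this, together with the correctness of the claw-free hash function, completes the reduction.

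The main obstacle I anticipate is the interaction between Fiat-Shamir and measurement-protocol soundness: one cannot simply condition on a uniformly random Hadamard-round set, because that set depends on the prover's commitments, so care is needed to invoke measure-and-reprogram (or the rewinding-free parallel-repetition argument of \cite{TCC:ACGH20,TCC:Bartusek21}) in a way compatible with extracting a consistent state $\ket{\psi_i}$ for each Hadamard round and with the two-level majority structure of $\MM_\secp$. Most of this machinery already appears in \cite{TCC:Bartusek21}, so the work is mostly in adapting and re-deriving their analysis for the partitioning-circuit setting with an externally-supplied predicate $P$; the formal details are deferred to \cref{sec:appendix-CVQC}.
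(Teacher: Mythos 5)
You have the right ingredients (measurement-protocol soundness from the dual-mode and adaptive-hardcore-bit properties, measure-and-reprogram for Fiat--Shamir, the soundness of $\Pi^\QV$, and the parallel-repetition machinery of \cite{TCC:ACGH20,TCC:Bartusek21}), but the way you assemble them in the middle of your argument has a genuine gap. You claim that for each Hadamard round $i$ one can extract a state $\ket{\psi_i}$ whose honest measurement outcomes the verifier's decoded bits approximate, conclude that each round is individually ``bad'' with negligible probability, and finish with a round-by-round hybrid plus a union bound. The problem is that the single-copy commit-challenge-response protocol does \emph{not} have negligible soundness: a prover who is willing to fail the test challenge can make the Hadamard-challenge extraction arbitrary, so no per-round guarantee of the form ``round $i$ is bad with probability $\negl(\secp)$'' holds in isolation. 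The guarantee one actually has is conditional and global --- the extraction is honest only for provers that would pass the test challenge with probability close to $1$ --- and converting the event ``all $r-k$ test rounds accepted'' into usable per-round statements about the $k$ Hadamard rounds is precisely the content of a nontrivial parallel repetition theorem, not something a hybrid-plus-union-bound over rounds can deliver (the rounds share one entangled prover state and one commitment, so the hybrids are not independent).

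The paper's proof instead (i) isolates a single copy as a commit-challenge-response protocol $\Pi^\CCR$ and proves it has \emph{computationally orthogonal projectors}, where $\Pi_0$ is ``accept the test challenge'' and $\Pi_1$ is ``accept the Hadamard challenge \emph{and} output samples whose majority is $1-P(Q(x))$'' --- this single-copy step is where \cref{impthm:QV} and the measurement-protocol soundness are consumed, via \cite[Lemma 4.4]{TCC:Bartusek21}; (ii) invokes the black-box parallel repetition theorem (\cref{thm:parallel-repetition}) with $\delta = 1/2$ to conclude that passing all test rounds while at least half the Hadamard rounds satisfy the bad predicate happens with negligible probability, which directly handles the outer majority of $\MM_\secp$; and (iii) applies measure-and-reprogram to transfer this to the Fiat--Shamir-compiled $\Pi^\CV$. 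Your step (iii) matches the paper, but you should replace the per-round extraction and union bound with the orthogonal-projectors formulation and the parallel repetition theorem, since that is the only known route to the amplification you need here.
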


\begin{lemma}[$D_{\inside}$ if accept]\label{lemma:Q-in}
The protocol $\Pi^\CV$ (\cref{fig:CVQC}) with $r(\secp) = \secp^2$ and $k(\secp) = \secp$ satisfies the following property. For any family $\{Q_\secp,P_\secp\}_{\secp \in \bbN}$ such that $\{P_\secp \circ Q_\secp\}_{\secp \in \bbN}$ is pseudo-deterministic, sequence of inputs $\{x_\secp\}_{\secp \in \bbN}$, and QPT adversary $\{\sA_\secp\}_{\secp \in \bbN}$, it holds that 
        \[\Pr\left[\begin{array}{l}\sV^\CV_\Ver(Q,x,\sparam,\pi) \neq \bot ~~ \wedge \\ w \notin D_{\inside}[P,P (Q(x))]\end{array} : \begin{array}{r}(\pp,\sparam) \gets \sV^\CV_\Gen(1^\secp,Q) \\ \pi \gets \sA(\pp) \\ w \coloneqq \mathsf{TestRoundOutputs}[\sparam](\pi)\end{array}\right] = \negl(\secp).\]
\end{lemma}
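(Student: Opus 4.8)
The plan is to prove this exactly the way \cref{lemma:soundness} is proved, i.e., by the parallel-repetition-plus-Fiat--Shamir soundness analysis of \cite{TCC:ACGH20,TCC:Bartusek21} applied to Mahadev's measurement protocol composed with the $\QV$ protocol of \cref{impthm:QV}; since the adversary $\sA$ here is given only $\pp$, this is the same ``plain'' soundness setting as \cref{lemma:soundness}, and the full argument will be given in \cref{sec:appendix-CVQC}. The first step is to isolate a single classical string $w^*$ that depends only on the commitments $\{y_{i,j}\}_{i\in[r],j\in[\ell]}$ carried by $\pi$: for every $(i,j)$ with $j\in S_i$ we have $h_{i,j}=0$, so $\pk_{i,j}$ is sampled in injective mode, and hence (by \cref{def:clawfree}) there is at most one pair $(b,x)$ with $y_{i,j}\in\Supp(f_{\pk_{i,j}}(b,x))$; I set $w^*_{i,j}$ to this bit $b$ (defined arbitrarily if no such pair exists), and write $w^*_i\coloneqq(w^*_{i,j})_{j\in S_i}$, parsed as $\secp$ strings $w^*_{i,1},\dots,w^*_{i,\secp}\in\{0,1\}^n$ as in \cref{fig:CVQC}.

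Next I would record two consequences of the verifier accepting, i.e. of $\sV^\CV_\Ver(Q,x,\sparam,\pi)\neq\bot$. (i) In a test round $i$ (where $T_i=0$), acceptance requires $\TCF.\Ch(\pk_{i,j},b_{i,j},z_{i,j},y_{i,j})=\top$ for all $j$, which for $j\in S_i$ forces $b_{i,j}=w^*_{i,j}$; hence the $i$-th block of $w\coloneqq\TestOut[\sparam](\pi)$ equals $w^*_i$. (ii) In a Hadamard round $i$ (where $T_i=1$) and for $j\in S_i$ (where $h_{i,j}=0$), acceptance requires $\TCF.\Invert(0,\sk_{i,j},y_{i,j})\neq\bot$, and by injective-mode correctness the bit $m_{i,j}$ it returns is exactly $w^*_{i,j}$; hence the samples $\{q_{i,t}\}_{t\in[\secp]}=m_i[S_i]$ extracted in round $i$ are exactly $w^*_{i,1},\dots,w^*_{i,\secp}$. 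Thus, conditioned on acceptance, the same string $w^*$ governs both the test-round contributions to $w$ and the Hadamard-round output samples.

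The third step is the core of the argument. Call a round $i\in[r]$ \emph{bad} if $\Maj(\{P(w^*_{i,t})\}_{t\in[\secp]})\neq P(Q(x))$. On one hand, the composed-soundness analysis of \cite{SIAMCOMP:Mahadev22,TCC:ACGH20,TCC:Bartusek21} shows that, conditioned on acceptance, with overwhelming probability at most a small constant fraction (which can be driven below, say, $1/20$ by the choice of protocol parameters) of the Hadamard rounds are bad: this is where \cref{impthm:QV} enters, since in a Hadamard round the verifier extracts a genuine measurement string $m_i=M(h_i,\cdot)$ of some entangled prover state and checks $\sV^\QV_\Ver(Q,x,h_i,m_i)$. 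On the other hand, since $w^*$ (hence the set of bad rounds) depends only on the $y$-commitments while the set $T$ of Hadamard rounds is $H$ applied to those same commitments, I would invoke measure-and-reprogram (\cref{thm:measure-and-reprogram}, \cite{C:DFMS19,C:DonFehMaj20}), exactly as in the Fiat--Shamir step of \cite{TCC:ACGH20,TCC:ChiChuYam20,TCC:Bartusek21}: up to a $\poly(\secp)$ factor one measures out the relevant oracle query, fixing $w^*$ and leaving $T$ a uniformly random weight-$k$ subset of $[r]$. If more than a $1/5$-fraction of the $r=\secp^2$ rounds were bad, then $|T\cap\{\text{bad rounds}\}|\geq k/10$ with overwhelming probability by a hypergeometric tail bound (using $k=\secp$), contradicting the previous sentence; so, conditioned on acceptance, fewer than $r/5$ rounds are bad except with negligible probability. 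Finally, combining with (i): the only rounds $i$ where the block $w_i$ can differ from $w^*_i$ are the $k$ Hadamard rounds (where $w_i=0^{|S_i|}$), so the number of rounds $i$ with $\Maj(\{P(w_{i,t})\}_{t\in[\secp]})\neq P(Q(x))$ is at most $r/5+k<r/4$ for all large $\secp$; equivalently at least a $3/4$-fraction of rounds satisfy $\Maj(\{P(w_{i,t})\}_{t})=P(Q(x))$, i.e. $w\in D_\inside[P,P(Q(x))]$, giving the claimed negligible bound.

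The main obstacle is this third step: rigorously justifying that $T$ behaves like fresh randomness even though the adversary chooses its $y$-commitments adaptively as a function of the quantum random oracle, while simultaneously running the composed parallel-repetition soundness reduction for the measurement protocol and controlling the resulting soundness error. All of that machinery already exists in \cite{TCC:ACGH20,TCC:ChiChuYam20,TCC:Bartusek21}; the genuinely new content is merely identifying $w^*$, checking that it simultaneously controls the $\TestOut$ string and the Hadamard-round samples, and the elementary counting at the end.
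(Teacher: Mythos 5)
Your proposal is correct and follows essentially the same route as the paper's proof in \cref{sec:appendix-CVQC}: you identify the string $w^*$ determined by the $y$-commitments via injective-mode inversion, invoke the computationally-orthogonal-projectors/parallel-repetition machinery of \cite{TCC:Bartusek21} to bound the fraction of bad Hadamard rounds, lift Fiat--Shamir via measure-and-reprogram, and finish with a concentration bound for sampling a weight-$k$ subset without replacement. The only differences are cosmetic — the paper phrases the last step as Hoeffding applied to the gap $b-a$ between the Hadamard-round and global fractions (with thresholds $4/5$ versus $3/4+1/\secp$) rather than your contrapositive hypergeometric-tail formulation, and it organizes the Fiat--Shamir lifting as a single final reduction from the interactive protocol $\Pi^\parl$ rather than inline.
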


\begin{lemma}[$D_{\out}$ if accept wrong output]\label{lemma:Q-out}
The protocol $\Pi^\CV$ (\cref{fig:CVQC}) with $r(\secp) = \secp^2$ and $k(\secp) = \secp$ satisfies the following property. For any family $\{Q_\secp,P_\secp\}_{\secp \in \bbN}$ such that $\{P_\secp \circ Q_\secp\}_{\secp \in \bbN}$ is pseudo-deterministic, sequence of inputs $\{x_\secp\}_{\secp \in \bbN}$, and QPT adversary $\{\sA_\secp\}_{\secp \in \bbN}$, it holds that 
        \[\Pr\left[\begin{array}{l} \sV^\CV_\Ver(Q,x,\sparam,\pi) = \{\{q_{i,t}\}_{t \in [\secp]}\}_{i : T_i = 1} ~~ \wedge\\ \MM_\secp(\{\{P(q_{i,t})\}_{t \in [\secp]}\}_{i: T_i = 1}) = 1-P(Q(x)) ~~ \wedge \\ w \notin D_{\out}[P,P(Q(x))]\end{array} : \begin{array}{r}(\pp,\sparam) \gets \sV^\CV_\Gen(1^\secp,Q) \\ \pi \gets \sA(\pp,\sparam) \\ w \coloneqq \mathsf{TestRoundOutputs}[\sparam](\pi)\end{array}\right] = \negl(\secp).\]
\end{lemma}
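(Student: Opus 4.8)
The plan is to reduce this to a purely combinatorial fact about the Fiat--Shamir oracle $H$, exploiting the two features that distinguish this lemma from \cref{lemma:soundness} and \cref{lemma:Q-in}: the adversary is \emph{handed} the secret parameters $\sparam$ (so no cryptographic hiding needs to be invoked, and in particular the soundness of $\Pi^\QV$ from \cref{impthm:QV} is not used), and the claw-free keys $\pk_{i,j}$ at all sampling positions are generated in injective mode ($h_{i,j}=0$ whenever $j\in S_i$). The first step is to argue that, conditioned on $\sV^\CV_\Ver(Q,x,\sparam,\pi)\neq\bot$, the quantities in the bad event are already pinned down by $\sparam$ and the commitment components $Y\coloneqq\{y_{i,j}\}_{i\in[r],j\in[\ell]}$ of $\pi$. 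For $j\in S_i$, let $m_{i,j}(y)$ be the first component of $\TCF.\Invert(0,\sk_{i,j},y)$ (the verifier rejects if this is $\bot$). Using correctness items (a) and (c) of \cref{def:clawfree} --- in injective mode every $y$ in the range has a unique preimage bit, and $\TCF.\Ch$ accepts only the true preimage --- on any accepting proof the test-round bits $\{b_{i,j}\}_{j\in S_i}$ (for $T_i=0$) equal $\{m_{i,j}(y_{i,j})\}_{j\in S_i}$, and the Hadamard-round output samples $\{q_{i,t}\}_{t\in[\secp]}$ (for $T_i=1$) are exactly $\{m_{i,j}(y_{i,j})\}_{j\in S_i}$ re-parsed into $\secp$ blocks of $n$ bits. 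Hence each round $i$ carries a well-defined ``committed vote'' $v_i(Y)\coloneqq\Maj(\{P(q_{i,t})\}_{t\in[\secp]})$, a deterministic function of $(Y,\sparam,Q,P,x)$; set $B(Y)\coloneqq\{i:v_i(Y)=1-P(Q(x))\}$ and $\beta(Y)\coloneqq|B(Y)|$.

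Next I would unpack the bad event into a relationship between $H(Y)$ and $B(Y)$. The event $w\notin D_\out[P,P(Q(x))]$ says fewer than $r/3$ rounds cast a $(1-P(Q(x)))$-vote in $w$; since the Hadamard rounds contribute the fixed strings $w_i=0^{|S_i|}$ and number at most $k=\secp$, this forces $\beta(Y)<r/3+k$ (on an accepting proof, where test-round $w_i$ coincides with the committed bits). Meanwhile $\MM_\secp(\{P(q_{i,t})\}_{i:T_i=1})=1-P(Q(x))$ together with the identification of the $q_{i,t}$ above says at least $\lceil k/2\rceil$ of the $k$ Hadamard rounds lie in $B(Y)$, i.e.\ $|T\cap B(Y)|\ge k/2$ for $T=H(Y)$. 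Thus the bad event is contained in the event that $\sA$ outputs a proof whose $Y$ satisfies $\beta(Y)<r/3+k$ and $|H(Y)\cap B(Y)|\ge k/2$.

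Then I would fix any $Y$ with $\beta(Y)<r/3+k$ and bound, over a uniformly random weight-$k$ subset $T$ of $[r]$ (the distribution of $H(Y)$), the probability that $|T\cap B(Y)|\ge k/2$: with $r=\secp^2$ and $k=\secp$, the mean is $\E|T\cap B(Y)|=k\beta(Y)/r<k/3+1$, and since $k/2-(k/3+1)=\Omega(\secp)$, a Chernoff/Hoeffding tail bound for sampling without replacement gives $2^{-\Omega(\secp)}=\negl(\secp)$ --- it is precisely the gap between the $D_\out$ density threshold $1/3$ and the majority threshold $1/2$ that drives this. Finally, since for each $Y$ the set of ``dangerous'' oracle answers $T$ is a fixed function of $Y$ and $\sparam$ of relative density $\le\negl(\secp)$, a QPT adversary making $\poly(\secp)$ quantum queries to $H$ can produce a proof with a dangerous commitment vector with probability at most $\poly(\secp)^2\cdot\negl(\secp)=\negl(\secp)$, by a standard quantum-search bound in the QROM (e.g.\ the one-way-to-hiding lemma, or viewing $Y\mapsto[\,H(Y)\text{ dangerous for }Y\,]$ as a random Boolean function that is $1$ with probability $\le\negl(\secp)$ at each point and invoking the generic hardness of quantum search).

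The step I expect to be the main obstacle is this last one: one must isolate the ``dangerous input'' predicate so that it depends on $H$ only through the single value $H(Y)$ (being otherwise a fixed function of $Y$ and $\sparam$), and then apply a quantum-search / one-way-to-hiding statement that handles input-dependent target sets together with the adversary's adaptivity and quantum access. A secondary point to nail down is the accounting in the second step --- the $\le k$ slack the Hadamard rounds contribute to $w$, and the ceiling/tie conventions in $\Maj$ and $\MM_\secp$ --- so that the constants feed cleanly into the concentration bound.
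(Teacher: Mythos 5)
Your proposal is correct, and its statistical core is exactly the paper's: conditioned on acceptance, the injective-mode keys at the sampling positions pin the per-round votes down as a deterministic function of $\sparam$ and the first-message strings $y$; the event $w \notin D_{\out}[P,P(Q(x))]$ caps the global fraction of bad votes at $1/3 + k/r$; the $\MM_\secp$ condition forces at least half of the $k$ challenged rounds to be bad; and Hoeffding for sampling without replacement over a uniformly random weight-$k$ challenge kills the gap between $1/3$ and $1/2$. The only divergence is how the Fiat--Shamir oracle is handled. The paper first proves the statement for the interactive parallel-repeated protocol $\Pi^\parl$ (\cref{lemma:Q-out-4-msg}), where $T$ is genuinely sampled after $y$ is fixed, and then lifts it to $\Pi^\CV$ by applying \cref{thm:measure-and-reprogram} with the predicate $V$ encoding the bad event; you instead propose a direct generic-search / one-way-to-hiding bound for the input-dependent sparse relation $R_\sparam(Y,T)$. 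The step you flag as the main obstacle is not one: the search bound you need is an immediate corollary of \cref{thm:measure-and-reprogram} as imported in the paper --- the simulator $\Sim[\sA]$ measures one query to obtain $Y$, receives a fresh uniform $T$, and then succeeds with probability at most the Hoeffding tail, so the adversary's success probability is at most $(2q+1)^2 \cdot e^{-\Omega(\secp)}$. Your route saves the detour through $\Pi^\CCR$ and $\Pi^\parl$ for this particular lemma (which, as you correctly observe, needs no computational assumption since $\sA$ is handed $\sparam$), whereas the paper needs that scaffolding anyway for \cref{lemma:soundness} and \cref{lemma:Q-in}, which do invoke the computationally-orthogonal-projectors property and the parallel repetition theorem. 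Your accounting of the $\le k$ slack from the Hadamard rounds' placeholder strings $w_i = 0^{|S_i|}$ and of the tie convention in $\Maj$ also matches the paper's ($a \le 1/3 + 1/\secp$ versus $b \ge 1/2$).
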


Note that in this final lemma, $\sA_\secp$ is given access to $\sparam$, so this does not trivially follow from soundness.

\subsection{Public verification}\label{subsec:public-verification}

Next, we compile the above protocol into a \emph{publicly-verifiable} protocol for quantum partitioning circuits in the oracle model. We will use the following ingredients in addition to $\Pi^\CV$ (\proref{fig:CVQC}).

\begin{itemize}
    \item A Pauli functional commitment $\PFC = (\PFC.\Gen,\allowbreak\PFC.\Com,\allowbreak\PFC.\OpenZ,\allowbreak\PFC.\OpenX,\allowbreak\PFC.\DecZ,\allowbreak\PFC.\DecX)$ that satisfies \emph{string binding with public decodability} (\cref{def:predicate-binding}).
    \item A strongly unforgeable signature token scheme $\Tok = (\Tok.\Gen,\Tok.\Sign,\Tok.\Verify)$ (\cref{def:strong-unforgeability}).
    \item A pseudorandom function $F_k$ secure against superposition-query attacks \cite{6375347}.
\end{itemize}

\protocol{Publicly-verifiable protocol $\Pi^{\PV} = \left(\PV.\Gen, \PV.\Prove, \PV.\Ver, \PV.\Out\right)$}{Publicly-verifiable non-interactive classical verification of quantum partitioning circuits.}{fig:publicly-verifiable}{
\textbf{Parameters}: Let $\secp$ be the security parameter and define parameters $(\ell,r,k)$ as in $\Pi^\CV$ (\cref{fig:CVQC}).
\begin{itemize}
    \item $\PV.\Gen(1^\secp,Q) \to (\vk,\ket{\pk},\cPK)$:
    \begin{itemize}
        \item Sample $\{(\dk_{i,j},\ket{\ck_{i,j}},\cCK_{i,j}) \gets \PFC.\Gen(1^\secp)\}_{i \in [r], j \in [\ell]}$.
        \item Sample $(\vk_\Tok,\ket{\sk_\Tok}) \gets \Tok.\Gen(1^\secp)$.
        \item Sample PRF keys $k_1,k_2 \gets \{0,1\}^\secp$.
        \item Define the functionality $\sH(\cdot) \coloneqq F_{k_1}(\cdot)$, which will be used as the random oracle $H$ in $\Pi^\CV$.
        \item Define the functionality $\CVGen(\cdot)$ as follows, where its input is parsed as $(x,c,\sigma)$. 
        \begin{itemize}
            \item If $\Tok.\Verify(\vk_\Tok,(x,c),\sigma) = \top$ then continue, and otherwise return $\bot$.
            \item Compute $(\pp,\sparam) \coloneqq \sV_\Gen^\CV(1^\secp,Q;F_{k_2}(x,c,\sigma))$ and output $\pp$. 
        \end{itemize} 
        \item Set $\vk \coloneqq (Q,k_1,k_2,\vk_\Tok,\{\dk_{i,j}\}_{i \in [r],j \in [\ell]})$, $\ket{\pk} \coloneqq (\ket{\sk_\Tok},\{\ket{\ck_{i,j}}\}_{i \in [r], j \in [\ell]})$, and $\cPK \coloneqq (\sH,\CVGen,\{\cCK_{i,j}\}_{i \in [r], j \in [\ell]}).$
    \end{itemize}
    
    \item $\PV.\Prove^{\cPK}(\ket{\pk},Q,x) \to \pi$:
    \begin{itemize}
        \item Prepare $\ket{\psi_1}^{\regB_1},\dots,\ket{\psi_r}^{\regB_r} \gets \sP_\Prep^\CV(1^\secp,Q,x)$.
        \item For each $i \in [r], j \in [\ell]$ apply $\PFC.\Com^{\cCK_{i,j}}(\regB_{i,j},\ket{\ck_{i,j}}) \to (\regB_{i,j},\regU_{i,j},c_{i,j})$ (see \cref{def:PFC}).
        \item Set $c \coloneqq (c_{1,1},\dots,c_{r,\ell})$, compute $\sigma \gets \Tok.\Sign((x,c),\ket{\sk_\Tok})$, and compute $\pp \coloneqq \CVGen(x,c,\sigma)$.
        \item Apply $\sP^\CV_\Prove(\regB_1,\dots,\regB_r,\pp) \to (\regB_1,\dots,\regB_r,\{y_{i,j},z_{i,j}\}_{i \in [r], j \in [\ell]})$, and define $T \coloneqq \sH(y_{1,1},\dots,y_{r,\ell})$
        \item For each $i: T_i = 0, j \in [\ell]$, apply $\PFC.\OpenZ(\regB_{i,j},\regU_{i,j}) \to u_{i,j}$.
        \item For each $i: T_i = 1, j \in [\ell]$, apply $\PFC.\OpenX(\regB_{i,j},\regU_{i,j}) \to u_{i,j}$.
        \item Set $\pi \coloneqq (c,\sigma,\{u_{i,j},y_{i,j},z_{i,j}\}_{i \in [r], j \in [\ell]})$.
    \end{itemize}
        
    \item $\PV.\Ver(\vk,x,\pi) \to \{\{\{q_{i,t}\}_{t \in [\secp]}\}_{i : T_i = 1}\} \cup \{\bot\}$:   
        \begin{itemize}
            \item Parse $\vk \coloneqq (Q,k_1,k_2,\vk_\Tok,\{\dk_{i,j}\}_{i \in [r],j \in [\ell]})$ and $\pi \coloneqq (c,\sigma,\mu)$. 
            \item If $\Tok.\Verify(\vk_\Tok,(x,c),\sigma) = \top$, then set $(\pp,\sparam) \coloneqq \sV_\Gen^\CV(1^\secp,Q;F_{k_2}(x,c,\sigma))$, and let $\{h_i\}_{i \in [r]}$ be the string of basis choices defined by $\sparam$. Otherwise, return $\bot$.
            \item Parse $\mu$ as $\{u_{i,j},y_{i,j},z_{i,j}\}_{i \in [r], j \in [\ell]}$, and define $T \coloneqq F_{k_1}(y_{1,1},\dots,y_{r,\ell})$.
            \item For all $i: T_i = 0, j \in [\ell]$, compute $b_{i,j} \coloneqq \PFC.\DecZ(\dk_{i,j},c_{i,j},u_{i,j})$, and return $\bot$ if $b_{i,j} = \bot$.
            \item For all $i: T_i = 1, j \in [\ell]$ such that $h_{i,j} = 1$, compute $b_{i,j} \coloneqq \PFC.\DecX(\dk_{i,j},c_{i,j},u_{i,j})$, and return $\bot$ if $b_{i,j} = \bot$.
            \item For all $i: T_i = 1, j \in [\ell]$ such that $h_{i,j} = 0$, set $b_{i,j} = 0$.
            \item Let $\widetilde{\pi} \coloneqq \{b_{i,j},y_{i,j},z_{i,j}\}_{i \in [r], j \in [\ell]}$ and return $\{\{q_{i,t}\}_{t \in [\secp]}\}_{i : T_i = 1} \coloneqq \sV_\Ver^\CV(Q,x,\sparam,\widetilde{\pi})$.
        \end{itemize}
        
    \item $\PV.\Combine \equiv \MM_\secp$ (see \cref{def:maj-maj}).

\end{itemize}
}

\begin{theorem}\label{thm:PV-soundness}
The protocol $\Pi^\PV$ (\cref{fig:publicly-verifiable}) satisfies \cref{def:PV-completeness} and \cref{def:PV-soundness}.
\end{theorem}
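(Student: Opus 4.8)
# Proof Proposal for Theorem (Soundness and Completeness of $\Pi^\PV$)

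\textbf{Completeness.} The plan is to trace the honest execution and invoke the completeness lemmas of the component protocols. First I would observe that the honest prover produces a commitment string $c$, signs $(x,c)$ honestly (so $\Tok.\Verify$ passes by signature-token correctness), and obtains $\pp$ from $\CVGen$ exactly as $\sV_\Gen^\CV$ would with coins $F_{k_2}(x,c,\sigma)$. Then, using the correctness of the Pauli functional commitment (\cref{def:poc-correctness}), the honest $\PFC.\Com$ followed by $\PFC.\OpenZ$/$\PFC.\OpenX$ and then $\PFC.\DecZ$/$\PFC.\DecX$ reproduces, up to negligible TV distance, the standard/Hadamard basis measurements of the underlying $\regB$ registers — which is exactly what $\sP_\Meas^\CV$ does. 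So the reconstructed proof $\widetilde\pi$ is distributed negligibly close to an honest $\Pi^\CV$ proof, and completeness of $\Pi^\CV$ (\cref{lemma:CV-completeness}) finishes the argument (noting $\PV.\Combine = \MM_\secp$). The only subtlety is the correctness caveat about the $h_{i,j}=0$ Hadamard-round positions being set to $0$ rather than decoded; this matches the structure of $\sV_\Ver^\CV$, which recovers $m_{i,j}$ via $\TCF.\Invert$ on those positions and XORs in the claimed $b_{i,j}$, so setting $b_{i,j}=0$ there is consistent with how the honest prover's contribution is folded in. I would spell out this bookkeeping carefully but it is routine.

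\textbf{Soundness — overall strategy.} The plan follows the two-step outline sketched in the technical overview (\cref{subsec:public-verification}), but carried out in a single hybrid sequence. Start with a QPT adversary $\sA$ that, with non-negligible probability, outputs $(x,\pi)$ with $\PV.\Ver(\vk,x,\pi)$ accepting and $\MM_\secp(\cdots) = 1-P(Q(x))$, given oracle access to $\cPK$ and $\Ver[\vk]$. \textbf{Step 1: force a signature measurement.} Treat $F_{k_2}$ as a random oracle $H_2$ (justified by PRF security against superposition queries) and apply measure-and-reprogram (\cref{thm:measure-and-reprogram}) to one of $\sA$'s queries to $H_2$: with $1/(2q+1)^2$ loss, we extract $(x^*,c^*,\sigma^*)$ such that $\sigma^*$ is a valid signature of $(x^*,c^*)$, reprogram $H_2(x^*,c^*,\sigma^*) \to s^*$ fresh, and thereby fix a single set of parameters $(\pp^*,\sparam^*)$ — in particular a single basis string $\{h^*_{i,j}\}$. \textbf{Step 2: isolate the $\DecX$ access.} By \emph{strong} unforgeability of $\Tok$ (\cref{def:strong-unforgeability}), after this measurement $\sA$ cannot get any other $(x',c',\sigma') \neq (x^*,c^*,\sigma^*)$ past the signature check in either $\CVGen$ or $\PV.\Ver$; hence, post-measurement, $\sA$'s effective access to the Hadamard-decoding oracles $\PFC.\DecX[\dk_{i,j}]$ is confined to indices $(i,j)$ with $h^*_{i,j}=1$ — exactly the positions where the verifier itself uses $\DecX$. \textbf{Step 3: reduce to single-instance soundness.} Now we are in the "single instance" setting: $(x^*,\widetilde c^{\,*}=c^*)$ is fixed, and we argue as in the overview before \cref{thm:PV-soundness}. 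Replace $\sA$'s $\Ver[\vk]$ queries one-by-one with $\bot$-answers. For each replaced query, the adversary only learns something if it submits an accepting $\pi$; by \cref{lemma:Q-in} such $\pi$ has its test-round output string $w \in D_\inside[P,P(Q(x^*))]$, whereas the target $\pi^*$ that breaks soundness has $w^* \in D_\out[P,P(Q(x^*))]$ by \cref{lemma:Q-out} (using that here the reduction does know $\sparam^*$, which is why \cref{lemma:Q-out}'s "$\sA$ given $\sparam$" form is needed). Since $D_\inside$ and $D_\out$ are disjoint sets of \emph{standard-basis} strings on the $\{S_i\}_{i:T_i=0}$ positions, the string-binding property of $\PFC$ with public decodability (\cref{def:predicate-binding}) — whose \emph{opener} only needs $\DecZ$ access, which is the situation here — implies the $\bot$-substitution perturbs $\sA$'s final state by a negligible amount per query, hence negligibly total over $\poly(\secp)$ queries. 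After all substitutions, $\sA$ learns nothing from $\Ver[\vk]$, so it is breaking the (non-reusable) soundness of $\Pi^\CV$ directly, contradicting \cref{lemma:soundness}.

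\textbf{Connecting the test-round and sample-round strings.} One important technical point, already flagged in the overview: the verifier's output bits come from the \emph{Hadamard}-round sample positions $\{S_i\}_{i:T_i=1}$, but binding must be applied on the \emph{standard}-basis positions, which are the \emph{test}-round positions $\{S_i\}_{i:T_i=0}$. I would bridge this exactly as in the overview: $T = H(y_{1,1},\dots,y_{r,\ell})$ is a random oracle (here $F_{k_1}$, again treated as a random oracle via PRF security) applied to the $y$-values, and the $y$-values together with the perfectly-binding-mode $\TCF$ structure on the relevant positions pin down the only possible standard-basis openings; therefore the adversary cannot meaningfully shift its distribution of opened strings between test and Hadamard rounds. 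This is what lets us translate "$w^* \in D_\out$ on Hadamard sample positions" (\cref{lemma:Q-out}) into a statement about the test-round string $w$ that \cref{lemma:Q-in} controls, and it is precisely the content of \cref{lemma:Q-in} and \cref{lemma:Q-out} being stated in terms of $\mathsf{TestRoundOutputs}$.

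\textbf{Main obstacle.} I expect the hardest part to be Step 3's binding reduction: formalizing that replacing an accepting-$\pi$ query's answer with $\bot$ has negligible effect. This requires (i) setting up the committer $\sC$ and opener $\sU$ for the string-binding game so that $\sC$ simulates everything $\sA$ sees up to the $\Ver[\vk]$ query being modified — including the $\DecX$ oracles, which $\sC$ is allowed — while $\sU$ handles the rest with only $\DecZ$ access; (ii) arguing that an accepting $\pi$ forces the state into $\mathsf{Im}(\Pi_{\bdk,\bc,D_\inside})$ while $\pi^*$ forces $\mathsf{Im}(\Pi_{\bdk,\bc,D_\out})$, so the overlap is governed by $\|\Pi_{\bdk,\bc,W_1}\sU^{\DecZ}\Pi_{\bdk,\bc,W_0}\ket\psi\|$; and (iii) carefully handling the order of operations so that the measure-and-reprogram step, the signature-token invocation, and the per-query hybrids compose without circularity — in particular making sure the reprogramming of $H_2$ (and the fixing of $\sparam^*$) happens "before" the binding reduction so that the opener really is restricted to $\DecZ$. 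There is also the quantitative check $n \geq 130\secp$ propagating through \cref{thm:binding}, but that is just parameter arithmetic.
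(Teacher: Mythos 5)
Your proposal is correct and follows essentially the same route as the paper's proof: the same commute-and-apply-correctness argument for completeness, and for soundness the same sequence of replacing $F_{k_2}$ by a random oracle, measure-and-reprogram to fix $(x^*,c^*,\sigma^*)$, strong unforgeability to confine post-measurement queries, a per-query hybrid replacing accepting answers by $\bot$ justified by \cref{lemma:Q-in}, \cref{lemma:Q-out}, and string binding with public decodability (with the committer/opener split exactly as you describe), and a final reduction to \cref{lemma:soundness}. The only quibble is in your completeness aside: for $i:T_i=1$ with $h_{i,j}=0$ the verifier does not XOR in $b_{i,j}$ at all — $\sV_\Ver^\CV$ recovers $m_{i,j}$ purely from $\TCF.\Invert$, so those bits are simply ignored, which is why setting them to $0$ is harmless.
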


\begin{proof} We argue completeness (\cref{def:PV-completeness}) and soundness (\cref{def:PV-soundness}).\\

\noindent\underline{Completeness.} Consider some circuit $Q$, input $x$, and sample $(\vk,\ket{\pk},\cPK) \gets \PV.\Gen(1^\secp,Q)$. By the correctness of $\Tok$ (\cref{def:token-correctness}), we know that the call to $\CVGen$ during $\PV.\Prove^{\cPK}(\ket{\pk},Q,x)$ only outputs $\bot$ with $\negl(\secp)$ probability. Also, by the security of the PRF, we can answer this query using uniformly sampled random coins $s$ in place of $F_{k_2}(x,c,\sigma)$. 

Now, imagine sampling $s$ and fixing $(\pp,\sparam) \coloneqq \sV_\Gen^\CV(1^\secp,Q;s)$ before computing $\PV.\Prove^{\cPK}(\ket{\pk},Q,x)$. Then, since $\pp$ no longer depends on $c$, we can move the application of each $\PFC.\Com^{\cCK_{i,j}}(\regB_{i,j},\ket{\ck_{i,j}})$ past the computation of $\pp$, and thus right before $\sP_\Prove^\CV(\regB_1,\dots,\regB_r,\pp)$. Moreover, since both $\PFC.\Com$ and $\sP_\Prove^\CV$ are \emph{classically controlled} on registers $\regB_1,\dots,\regB_r$, and otherwise operate on disjoint registers, we can further commute each $\PFC.\Com$ past $\sP_\Prove^\CV$.

Then, the bits $\{b_{i,j}\}_{i,j}$ for $i: T_i = 0$ computed during $\PV.\Ver(\vk,x,\pi)$ are now computed by applying $\PFC.\Com, \PFC.\OpenZ$, and $\PFC.\DecZ$ in succession to $\regB_{i,j}$, and the bits $\{b_{i,j}\}_{i,j}$ for $i: T_i = 1, h_{i,j} = 1$ computed during $\PV.\Ver(\vk,x,\pi)$ are now computed by applying $\PFC.\Com, \PFC.\OpenX$, and $\PFC.\DecX$ in succession to $\regB_{i,j}$. Thus, by the correctness of $\PFC$ (\cref{def:poc-correctness}), we can replace these operations by directly measuring $\regB_{i,j}$ in the standard (resp. Hadamard) basis. Now, completeness follows directly from the completeness of $\Pi^\CV$ (\cref{lemma:CV-completeness}), since the remaining bits $\{b_{i,j}\}_{i,j}$ for $i: T_i = 1, h_{i,j} = 0$ (which are arbitrarily set to 0 in $\PV.\Ver$) are ignored by $\sV_\Ver^\CV$, and the rest of $\widetilde{\pi}$ is now computed by applying $\sP_\Prove^\CV$ followed by $\sP_\Meas^\CV$ to $\regB_1,\dots,\regB_r$.\\


\noindent\underline{Soundness.} Before getting into the formal proof, we provide a high-level overview. We will go via the following steps.

\begin{itemize}
    \item $\sA_1$: Begin with an adversary $\sA_1$ that is assumed to violate soundness of the protocol. Thus, with $\nonnegl(\secp)$ probability, it's final (classical) output consists of an input $x^*$ and a proof $\pi^*$ such that $\PV.\Ver(\vk,x^*,\pi^*) \neq \bot$ and $\PV.\Out(\PV.\Ver(\vk,x^*,\pi^*),P) \neq P(Q(x^*))$.
    \item $\sA_2$: Replace $F_{k_2}$ with a random oracle, and call the resulting oracle algorithm $\sA_2$.
    \item $\sA_3$: Apply Measure-and-Reprogram (\cref{thm:measure-and-reprogram}) to obtain a two-stage adversary $\sA_3$, where the first stage outputs $x^*$, a $\PFC$ commitment $c^*$, and a token signature $\sigma^*$, and the second stage outputs the remainder $\mu^*$ of the proof $\pi^* \coloneqq (c^*,\sigma^*,\mu^*)$. The parameters $(\pp_{x^*,c^*,\sigma^*},\sparam_{x^*,c^*,\sigma^*})$ for $\Pi^\CV$ are re-sampled at the beginning of the second stage.
    \item $\sA_4$: Use the strong unforgeability of the signature token scheme (\cref{def:strong-unforgeability}) to argue that during the second stage of $\sA_3$, all queries to $\PV.\Ver$ except for $(x^*,c^*,\sigma^*)$ can be ignored. Call the resulting adversary $\sA_4$.
    \item $D_\out[P,P(Q(x^*))]$: Appeal to \cref{lemma:Q-out} to show that whenever $\sA_4$ breaks soundness, its output yields a proof $\widetilde{\pi}$ for $\Pi^\CV$ such that \[\TestOut[\sparam_{x^*,c^*,\sigma^*}](\widetilde{\pi}) \in D_\out[P,P(Q(x^*))].\]  
    \item $\cH_0,\dots,\cH_p$: Define a hybrid for each of the $p = \poly(\secp)$ queries that the second stage of $\sA_4$ makes to $\PV.\Ver$. In each hybrid $\iota$, begin answering query $\iota$ with $\bot$, and let $\Pr[\cH_\iota = 1]$ be the probability that $\sA_4$ still breaks soundness.
    \item $\Pr[\cH_0 = 1] = \nonnegl(\secp)$: This has already been proven, by assumption that $\sA_1$ breaks soundness with $\nonnegl(\secp)$ probability, and the hybrids above.
    \item $\Pr[\cH_p = 1] = \negl(\secp)$: This is implied by the soundness of $\Pi^\CV$ (\cref{lemma:soundness}) because in this experiment, $\sA_4$ does not have access to $\sparam_{x^*,c^*,\sigma^*}$ before producing its final proof.
    \item $\Pr[\cH_\iota = 1] \geq \Pr[\cH_{\iota - 1} = 1] - \negl(\secp)$: This is proven in two parts. 
    
    \begin{enumerate}
        \item By \cref{lemma:Q-in}, we can say that since $\sA_4$ does not have access to $\sparam_{x^*,c^*,\sigma^*}$ before preparing its $\iota$'th query, each classical basis state in the query superposition that is not answered with $\bot$ yields a proof $\widetilde{\pi}$ for $\Pi^\CV$ such that \[\TestOut[\sparam_{x^*,c^*,\sigma^*}](\widetilde{\pi}) \in D_\inside[P,P(Q(x^*))].\] 
        \item We appeal to the string binding with public decodability of $\PFC$ (\cref{def:predicate-binding}) to show that replacing these answers with $\bot$ only affects the probability that $\sA_4$ breaks soundness by a negligible amount. 
        
        This follows because any part of the query that contains $\PFC$ openings for a string in $D_\inside[P,P(Q(x^*))]$ cannot have noticeable overlap with the part of the state (after running the rest of $\sA_4$) that contains $\PFC$ openings for a string in $D_\out[P,P(Q(x^*))]$. Otherwise, we can prepare an adversarial committer, where the part of $\sA_4$ up to query $\iota$ is the ``Commit'' stage, and the remainder of $\sA_4$ is the ``Open'' stage. Crucially, since all queries to $\PV.\Ver$ except $(x^*,c^*,\sigma^*)$ are ignored during the Open stage, we do not have to give the Open stage access to the receiver's Hadamard basis decoding functionalities on the indices that are checked by $D_\inside[P,P(Q(x^*))]$ and $D_\out[P,P(Q(x^*))]$, which are all standard basis positions with respect to the parameters $(\pp_{x^*,c^*,\sigma^*},\sparam_{x^*,c^*,\sigma^*})$.
    \end{enumerate}
    \item This completes the proof, as the previous three bullet points produce a contradiction.
\end{itemize}

Now we provide the formal proof. Suppose there exists $Q, P$ and $\sA_1^{\cPK,\PV.\Ver[\vk]}$ that violates \cref{def:PV-soundness}, where we have dropped the indexing by $\secp$ for convenience. Our first step will be to replace the PRF $F_{k_2}(\cdot)$ with a random oracle $G$. Note that $\sA_1$ only has polynomially-bounded oracle access to this functionality, so this has a negligible affect on the output of $\sA_1$ \cite{6375347}. This defines an oracle algorithm $\sA_2^G$ based on $\sA_1^{\cPK,\PV.\Ver[\vk]}$ that operates as follows.

\begin{itemize}
    \item Sample $(\vk,\ket{\pk},\cPK)$ as in $\PV.\Gen(1^\secp,Q)$, except $F_{k_2}(\cdot)$ is replaced with $G(\cdot)$.
    \item Run $\sA_1^{\cPK,\PV.\Ver[\vk]}(\ket{\pk})$, forwarding calls to $G$ (which occur as part of calls to $\CVGen$ and $\PV.\Ver[\vk]$) to the external random oracle $G$.
    \item Measure $\sA_1$'s output $(x^*,\pi^*)$, parse $\pi^*$ as $(c^*,\sigma^*,\mu^*)$ and output $a \coloneqq (x^*,c^*,\sigma^*)$ and $\aux \coloneqq (\mu^*,\vk)$.
\end{itemize}

\protocol{Functionalities used in the proof of \cref{thm:PV-soundness}}{Description of functionalities used in the proof of \cref{thm:PV-soundness}.}{fig:hybrids}{
\textbf{Fixed parameters}: Security parameter $\secp$, circuit $Q$, and predicate $P$.

\begin{itemize}

    \item $\PV.\Ver[\vk](x,\pi)$: Same as $\PV.\Ver(\vk,x,\pi)$.

    \item $\PV.\Ver[\vk,s](x,\pi)$: Same as $\PV.\Ver[\vk](x,\pi)$ except that $s$ is used instead of $F_{k_2}(x,c,\sigma)$ when generating $(\pp,\sparam) \coloneqq \sV_\Gen^{\CV}(1^\secp,Q;s)$.
    
    \item $\PV.\Ver[\vk,s,(x^*,c^*,\sigma^*)](x,\pi)$: Same as $\PV.\Ver[\vk,s](x,\pi)$, except that after the input is parsed as $x$ and $\pi \coloneqq (c,\sigma,\mu)$, output $\bot$ if \[(x,c,\sigma) \neq (x^*,c^*,\sigma^*).\]
    
    \item $\PV.\Ver[\vk,s,(x^*,c^*,\sigma^*),\mathsf{in}](x,\pi)$: Same as $\PV.\Ver[\vk,s,(x^*,c^*,\sigma^*)](x,\pi)$ except that after $\widetilde{\pi} \coloneqq \{b_{i,j},y_{i,j},z_{i,j}\}_{i \in [r], j \in [\ell]}$ has been computed, output $\bot$ if \[\TestOut[\sparam]\left(\widetilde{\pi}\right) \notin D_{\mathsf{in}}[P,P(Q(x))].\]
    
    \item $\PV.\Ver[\vk,s,(x^*,c^*,\sigma^*),\out](x,\pi)$: Same as $\PV.\Ver[\vk,s,(x^*,c^*,\sigma^*)](x,\pi)$ except that after $\widetilde{\pi} \coloneqq \{b_{i,j},y_{i,j},z_{i,j}\}_{i \in [r], j \in [\ell]}$ has been computed, output $\bot$ if \[\TestOut[\sparam]\left(\widetilde{\pi}\right) \notin D_{\out}[P,P(Q(x))].\]

    \item $V(a,s,\aux)$:
    \begin{itemize}
        \item Parse $a \coloneqq (x^*,c^*,\sigma^*)$ and $\aux \coloneqq (\mu^*,\vk)$.
        \item Compute $q \coloneqq \PV.\Ver[\vk,s](x^*,(c^*,\sigma^*,\mu^*))$.
        \item Output 1 iff $q \neq \bot$ and $\PV.\Out(q,P) = 1 - P(Q(x))$.
    \end{itemize}
    
    \item $V[\out](a,s,\aux)$:
    \begin{itemize}
        \item Parse $a \coloneqq (x^*,c^*,\sigma^*)$ and $\aux \coloneqq (\mu^*,\vk)$.
        \item Compute $q \coloneqq \PV.\Ver[\vk,s,(x^*,c^*,\sigma^*),\out](x^*,(c^*,\sigma^*,\mu^*))$.
        \item Output 1 iff $q \neq \bot$ and $\PV.\Out(q,P) = 1 - P(Q(x))$.
    \end{itemize}

\end{itemize}

}

Note that $\sA_2$ makes $p = \poly(\secp)$ total queries to $G$, since $\sA_1$ makes $\poly(\secp)$ queries. Now, define $V$ as in \cref{fig:hybrids}. Then since $\sA_1$ breaks soundness, \[\Pr\left[V(a,G(a),\aux) = 1 : (a,\aux) \gets \sA_2^G\right] = \nonnegl(\secp).\]

Next, since $p = \poly(\secp)$, by \cref{thm:measure-and-reprogram} there exists an algorithm $\sA_3 \coloneqq \Sim[\sA_2]$ such that

\[\Pr\left[V((x^*,c^*,\sigma^*),s,(\mu^*,\vk)) = 1 : \begin{array}{r} ((x^*,c^*,\sigma^*),\state) \gets \sA_3 \\ s \gets \{0,1\}^\secp \\ (\mu^*,\vk) \gets \sA_3(s,\state)\end{array}\right] = \nonnegl(\secp).\]

Moreover, $\sA_3$ operates as follows.

\begin{itemize}
    \item Sample $G$ as a $2p$-wise independent function and $(i,d) \gets (\{0,\dots,p-1\} \times \{0,1\}) \cup \{(p,0)\}$.
    \item Run $\sA_2$ for $i$ oracle queries, answering each query using the function $G$. 
    \item When $\sA_2$ is about to make its $(i+1)$'th oracle query, measure its query register in the standard basis to obtain $a \coloneqq (x^*,c^*,\sigma^*)$. In the special case that $(i,d) = (p,0)$, just measure (part of) the final output register of $\sA_2$ to obtain $a$.
    \item Receive $s$ externally.
    \item If $d = 0$, answer $\sA_2$'s $(i+1)$'th query with $G$. If $d=1$, answer $\sA_2$'s $(i+1)$'th query instead with $G[(x^*,c^*,\sigma^*) \to s]$.
    \item Run $\sA_2$ until it has made all $p$ queries to $G$. For queries $i+2$ through $p$, answer with $G[(x^*,c^*,\sigma^*) \to s]$.
    \item Measure $\sA_2$'s output $\aux \coloneqq (\mu^*,\vk)$.
\end{itemize}

Recall that $\sA_3$ is internally running $\sA_1$, who expects oracle access to $\sH$, $\CVGen$, $\{\cCK_{i,j}\}_{i,j}$ and $\PV.\Ver[\vk]$. These oracle queries will be answered by $\sA_3$. Next, we define $\sA_4$ to be the same as $\sA_3$, except that after $(x^*,c^*,\sigma^*)$ is measured by $\sA_3$, $\sA_1$'s queries to $\PV.\Ver[\vk]$ are answered instead with $\PV.\Ver[\vk,s,(x^*,c^*,\sigma^*)]$ from \cref{fig:hybrids}. 

\begin{claim}\label{claim:A4}
\[\Pr\left[V((x^*,c^*,\sigma^*),s,(\mu^*,\vk)) = 1 : \begin{array}{r} ((x^*,c^*,\sigma^*),\state) \gets \sA_4 \\ s \gets \{0,1\}^\secp \\ (\mu^*,\vk) \gets \sA_4(s,\state)\end{array}\right] = \nonnegl(\secp).\]
\end{claim}

\begin{proof}
We can condition on $\Tok.\Ver(\vk_\Tok,(x^*,c^*),\sigma^*) = \top$, since otherwise $V$ would output 0. Then, by the strong unforgeability of $\Tok$ (\cref{def:strong-unforgeability}), once $(x^*,c^*,\sigma^*)$ is measured, $\sA_1$ cannot produce any query that has noticeable amplitude on any $(x,c,\sigma)$ such that \[(x,c,\sigma) \neq (x^*,c^*,\sigma^*) ~~ \text{and} ~~ \Tok.\Ver(\vk_\Tok,(x,c),\sigma) = \top.\] But after $(x^*,c^*,\sigma^*)$ is measured and $s$ is sampled, $\PV.\Ver[\vk]$ and $\PV.\Ver[\vk,s,(x^*,c^*,\sigma^*)]$ can only differ on $(x,c,\sigma)$ such that \[(x,c,\sigma) \neq (x^*,c^*,\sigma^*) ~~ \text{and} ~~ \Tok.\Ver(\vk_\Tok,(x,c),\sigma) = \top.\] Thus, since $\sA_1$ only has polynomially-many queries, changing the oracle in this way can only have a negligible affect on the final probability, which completes the proof.
\end{proof}

Next, we claim the following, where $V[\out]$ is defined in \proref{fig:hybrids}.

\begin{claim}\label{claim:hybrid0}
\[\Pr\left[V[\out]((x^*,c^*,\sigma^*),s,(\mu^*,\vk)) = 1 : \begin{array}{r} ((x^*,c^*,\sigma^*),\state) \gets \sA_4 \\ s \gets \{0,1\}^\secp \\ (\mu^*,\vk) \gets \sA_4(s,\state)\end{array}\right] = \nonnegl(\secp).\]
\end{claim}

\begin{proof}
First, if we replace the PRF $F_{k_1}(\cdot)$ with an external random oracle $H$, then the probabilities in \cref{claim:A4} and \cref{claim:hybrid0} remain the same up to a negligible difference \cite{6375347}. Next, note that the only event that differentiates \cref{claim:A4} and \cref{claim:hybrid0} is when $\sA_4$ outputs $(x^*,c^*,\sigma^*,\mu^*)$ such that

\[q \neq \bot ~~ \wedge ~~ \Out_\secp[P](q) = 1 - P(Q(x^*)) ~~ \wedge ~~ \TestOut[\sparam](\widetilde{\pi}) \notin D_{\out}[P,P(Q(x^*))],\] where $(\pp,\sparam) \coloneqq \sV_\Gen^{\CV}(1^\secp,Q;s)$, $\widetilde{\pi} \coloneqq \{b_{i,j},y_{i,j},z_{i,j}\}_{i \in [r], j \in [\ell]}$ is computed during \[\PV.\Ver[\vk,s,(x^*,c^*,\sigma^*)](x^*,(c^*,\sigma^*,\mu^*)),\] and $q \coloneqq \sV_\Ver^\CV(Q,x^*,\sparam,\widetilde{\pi})$. If this event occurs with noticeable probability, there must be some fixed $x^*$ such that it occurs with noticeable probability conditioned on $x^*$. However, this would contradict \cref{lemma:Q-out}. Thus, the difference in probability must be negligible, completing the proof.
\end{proof}

Finally, we will define a sequence of hybrids $\{\cH_\iota\}_{\iota \in [0,p]}$ based on $\sA_4$. Hybrid $\cH_\iota$ is defined as follows.

\begin{itemize}
    \item Run $((x^*,c^*,\sigma^*),\state) \gets \sA_4$.
    \item Sample $s \gets \{0,1\}^\secp$.
    \item Run $(\mu^*,\vk) \gets \sA_4(s,\state)$ with the following difference. Recall that at some point, $\sA_4$ begins using the oracle $G[(x^*,c^*,\sigma^*) \to s]$ while answering $\sA_1$'s queries. For the first $\iota$ times that $\sA_1$ queries $\PV.\Ver[\vk,s,(x^*,c^*,\sigma^*)]$ after this point, respond using the oracle $O_\bot$ that outputs $\bot$ on every input.
    \item Output $V[\out]((x^*,c^*,\sigma^*),s,(\mu^*,\vk))$.
\end{itemize}

Note that \cref{claim:hybrid0} is stating exactly that $\Pr[\cH_0 = 1] = \nonnegl(\secp)$. Next, we have the following claim.

\begin{claim}\label{claim:hybridq}
$\Pr[\cH_p = 1] = \negl(\secp)$.
\end{claim}

\begin{proof}
First, if we replace the PRF $F_{k_1}(\cdot)$ with an external random oracle $H$, then the probability remains the same up to a negligible difference \cite{6375347}. Now, the claim follows by a reduction to the soundness of $\Pi^\CV$ (\cref{lemma:soundness}). Note that $\sA_4$ never needs to know the $\sparam$ such that $(\pp,\sparam) \coloneqq \sV_\Gen^{\CV}(1^\secp,Q;s)$, since all of the (at most $p$) calls that $\sA_1$ makes to $\PV.\Ver[\vk,s,(x^*,c^*,\sigma^*)]$ once $G$ is programmed so that $G[(x^*,c^*,\sigma^*) \to s]$ are answered with $O_\bot$. Thus, we can view $\sA_4^H$ as an adversarial prover for $\Pi^\CV$, where the first stage of $\sA_4^H$ outputs $x^*$, and the second stage receives $\pp$ and outputs $\widetilde{\pi} \coloneqq \{b_{i,j},y_{i,j},z_{i,j}\}_{i,j}$ (which can be computed from $\mu^*$). By the definition of the predicate $V[\out]$, the probability that $\cH_p = 1$ is at most the probability that $\MM_\secp[P](q) = 1-P(Q(x))$, where $q \coloneqq \sV_\Ver^\CV(Q,x^*,\sparam,\widetilde{\pi})$, which by \cref{lemma:soundness} must be $\negl(\secp)$.
\end{proof}

Finally, we prove the following \cref{lemma:intermediate-hybrids}. Since $p = \poly(\secp)$, this contradicts \cref{claim:hybrid0} and \cref{claim:hybridq}, which completes the proof.

\end{proof}

\begin{claim}\label{lemma:intermediate-hybrids}
For any $\iota \in [p]$, $\Pr[\cH_\iota = 1] \geq \Pr[\cH_{\iota - 1} = 1] - \negl(\secp)$.
\end{claim}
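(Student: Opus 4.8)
The plan is to prove $\Pr[\cH_\iota = 1] \geq \Pr[\cH_{\iota-1} = 1] - \negl(\secp)$ by reduction to the string binding with public decodability of $\PFC$ (\cref{def:predicate-binding}). The only difference between $\cH_{\iota-1}$ and $\cH_\iota$ is how the $\iota$'th query of $\sA_1$ to $\PV.\Ver[\vk,s,(x^*,c^*,\sigma^*)]$ (made after $G$ is re-programmed) is answered: in $\cH_{\iota-1}$ it is answered honestly, and in $\cH_\iota$ it is answered with $\bot$. First I would argue that I can instead compare $\cH_\iota$ with a further variant $\cH_\iota'$ in which the $\iota$'th query is answered by $\PV.\Ver[\vk,s,(x^*,c^*,\sigma^*),\inside]$ (the functionality from \cref{fig:hybrids} that outputs $\bot$ unless the test-round outputs land in $D_\inside[P,P(Q(x))]$). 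The point is that at the time of the $\iota$'th query, $\sA_4$ has not yet received $\sparam_{x^*,c^*,\sigma^*}$ in the clear (it is only ever used inside the $\PV.\Ver$ oracle, and after re-programming all other $(x,c,\sigma)$ queries are dead), so by \cref{lemma:Q-in} every accepting branch of that query superposition carries a $\Pi^\CV$-proof $\widetilde{\pi}$ with $\TestOut[\sparam](\widetilde\pi) \in D_\inside[P,P(Q(x))]$ except with negligible weight; hence replacing the honest oracle by the $\inside$-restricted one on query $\iota$ changes the final acceptance probability by $\negl(\secp)$ (one has to be slightly careful here and invoke a gentle-measurement / almost-projective-measurement argument, since the oracle acts in superposition — this is standard and I would phrase it via \cref{lemma:gentle-measurement} applied to the projector onto the ``$D_\inside$ or reject'' subspace).

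Next comes the heart of the argument. I would set up an adversary $(\sC,\sU)$ against string binding with public decodability of $\PFC$ for the parameters $\bdk = \{\dk_{i,j}\}$ on the standard-basis positions determined by $(\pp_{x^*,c^*,\sigma^*}, \sparam_{x^*,c^*,\sigma^*})$, i.e. the positions $(i,j)$ with $T_i^* = 0$ together with $(i,j)$ with $T_i^* = 1, j \in S_i$ (using the structure of $\TestOut$ and $D_\inside, D_\out$: the relevant bits are all read off via $\PFC.\DecZ$, never $\PFC.\DecX$). The committer $\sC$ runs the entire execution of $\cH_\iota'$ up to and including the $\iota$'th query to $\PV.\Ver$, outputting the post-query state together with the commitment strings $\bc = c^*$. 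It needs oracle access to $\bCK$, $\DecZ[\bdk]$, and $\DecX[\bdk]$: it uses $\bCK$ to implement the $\cCK_{i,j}$ parts of $\cPK$; it uses $\DecZ[\bdk]$ and $\DecX[\bdk]$ to simulate the $\PV.\Ver$ oracle on all queries before query $\iota$ (this is where full decoding access is genuinely needed — the first part of the adversary can touch any $\DecX$), and also to implement the $\inside$-restricted oracle on query $\iota$ itself. The opener $\sU$ runs the remainder of $\cH_\iota'$ (the rest of $\sA_4(s,\state)$), and then measures the output proof $\mu^*$, extracting the $\Pi^\CV$-proof $\widetilde\pi^*$ and hence the string $\TestOut[\sparam](\widetilde\pi^*)$. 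Crucially, after query $\iota$ the remaining $\PV.\Ver$ queries of $\sA_4$ are all answered with $\bot$ (in $\cH_\iota$, $\cH_\iota'$), so $\sU$ needs \emph{no} $\DecX$ access — it only needs $\DecZ[\bdk]$ (to recompute the bits in $\widetilde\pi^*$ and to detect the $D_\out$ event), exactly matching the syntax of \cref{def:predicate-binding} where the opener has $\bCK$ and $\DecZ[\bdk]$ but not $\DecX[\bdk]$. Here I must double-check: does $\sU$ need $\bCK$? Since $\PFC.\Com$ uses $\cCK_{i,j}$, and the remainder of $\sA_4$ does not re-commit, $\sU$ may still need $\bCK$ to answer $\sA_1$'s direct queries to $\{\cCK_{i,j}\}$; this is allowed (the definition gives $\sU$ access to $\bCK$). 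If $\cH_\iota'$ and $\cH_\iota$ differed by a non-negligible amount, then with non-negligible probability the state after query $\iota$ is (noticeably) supported on openings to $W_0 \coloneqq D_\inside[P,P(Q(x^*))]$ — because it was not answered with $\bot$ and \cref{lemma:Q-in} forces $D_\inside$ — while the post-$\sU$ state is (noticeably) supported on openings to $W_1 \coloneqq D_\out[P,P(Q(x^*))]$, since by \cref{claim:hybrid0} / \cref{lemma:Q-out} the event $V[\out]=1$ means $\TestOut[\sparam](\widetilde\pi^*) \in D_\out$. Since $D_\inside[P,b]$ and $D_\out[P,b]$ are disjoint, this contradicts string binding with public decodability, giving a $\negl(\secp)$ bound on $\Pr[\cH_{\iota-1}=1] - \Pr[\cH_\iota = 1]$.

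The step I expect to be the main obstacle is making precise the claim that ``the $\iota$'th oracle query, answered in superposition, only removes negligible weight when restricted to $D_\inside$-or-reject'' and simultaneously that ``the difference between $\cH_{\iota-1}$ and $\cH_\iota$ is captured by the norm of $\Pi_{\bdk,\bc,W_1}\sU\,\Pi_{\bdk,\bc,W_0}\ket\psi$'' — i.e. bridging between a query oracle that acts as a controlled unitary and the clean two-projector binding expression. The right way to handle this is to model the $\PV.\Ver$ oracle on query $\iota$ as a projective measurement with outcomes ``reject'' vs. ``accept with test-round string $w$'', observe (via \cref{lemma:Q-in}) that the ``accept with $w \notin D_\inside$'' outcome has negligible probability, so up to $\negl$ we may replace the oracle by the projective measurement $\{\Pi_{\bdk,\bc,W_0},\,\bbI - \Pi_{\bdk,\bc,W_0}\}$ (on the relevant standard-basis registers) followed by forwarding; then answering with $\bot$ corresponds to discarding the $\Pi_{\bdk,\bc,W_0}$-branch, and the change in final acceptance probability is controlled by $\big\|\Pi_{\bdk,\bc,W_1}\sU\,\Pi_{\bdk,\bc,W_0}\ket\psi\big\|$ after noting that acceptance in $\cH_\iota$ (via $V[\out]$) forces the final state into $\Pi_{\bdk,\bc,W_1}$. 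One subtlety to be careful about: the query register of $\sA_1$ to $\PV.\Ver$ may be entangled with the $\PFC$ opening registers, and $\PV.\Ver$ reads $c$ from the query, not from a fixed $\bc$ — I would handle this by first arguing (via strong unforgeability, as in \cref{claim:A4}, and the fact that after re-programming only $(x^*,c^*,\sigma^*)$ survives) that the relevant queries have $c = c^*$ up to negligible weight, so $\bc$ is effectively fixed. Everything else is bookkeeping: matching up the oracle accesses of $(\sC,\sU)$ with those permitted in \cref{def:predicate-binding}, and invoking $p = \poly(\secp)$ to conclude.
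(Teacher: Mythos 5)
Your proposal follows essentially the same route as the paper's proof: an intermediate hybrid in which query $\iota$ is answered by the $D_\inside$-restricted verifier (justified by \cref{lemma:Q-in}), followed by a reduction to string binding with public decodability in which the committer is the execution up to query $\iota$, the opener is the remainder, and the contradiction comes from the disjointness of $D_\inside[P,b]$ and $D_\out[P,b]$. The decomposition you sketch at the end --- split the pre-query state by $\Pi_{W_0}$ on the query register, note the two oracles agree on the orthogonal part, and bound the difference by the cross term $\|\Pi_{W_1}\sU\Pi_{W_0}\ket{\psi}\|$ --- is exactly the paper's calculation.

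One statement in your write-up is factually wrong and would trip you up if taken literally: it is \emph{not} the case that ``after query $\iota$ the remaining $\PV.\Ver$ queries of $\sA_4$ are all answered with $\bot$'' in $\cH_\iota$ or $\cH_{\iota-1}'$. By the definition of the hybrids, only queries $1,\dots,\iota$ are answered with $O_\bot$ in $\cH_\iota$; queries $\iota+1,\dots,p$ are answered with the real $\PV.\Ver[\vk,s,(x^*,c^*,\sigma^*)]$, which internally invokes $\PFC.\DecX(\dk_{i,j},\cdot,\cdot)$ at positions with $h_{i,j}=1$. So the opener \emph{does} need Hadamard decoding for some keys. The reason the reduction nevertheless type-checks --- and the reason you should give --- is that the binding challenge keys $\bdk$ are placed only at the sample positions $S=\{S_i\}_{i\in[r]}$ (which is also the correct set of positions, rather than ``$T_i=0$ rounds plus the sample positions of $T_i=1$ rounds''): these all have $h_{i,j}=0$ by construction of $\sV^\QV_\Gen$, so neither the real verifier nor the $\inside$-restricted one ever calls $\DecX$ there, and the keys at all other positions are sampled internally by the reduction, which may freely use their $\DecX$. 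Two smaller points of bookkeeping: the sets $W_0=D_\inside[P,P(Q(x^*))]$ and $W_1=D_\out[P,P(Q(x^*))]$ depend on the bit $P(Q(x^*))$, so you must split into the two cases $b\in\{0,1\}$ and apply the binding definition to each; and the cross-term bound has to be established for both openers (the one answering query $\iota$ with the $\inside$-restricted oracle and the one answering with $\bot$), since both cross terms appear in the difference of squared norms.
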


\begin{proof}

Throughout this proof, when we refer to ``query $\iota$'' in some hybrid, we mean the $\iota$'th query that $\sA_1$ makes to $\PV.\Ver[\vk,x,(x^*,c^*,\sigma^*)]$ after $\sA_4$ has begun using the oracle $G[(x^*,c^*,\sigma^*) \to s]$ (if such a query exists).

Now, we introduce an intermediate hybrid $\cH_{\iota-1}'$ which is the same as $\cH_{\iota-1}$ except that query $\iota$ is answered with the functionality $\PV.\Ver[\vk,s,(x^*,c^*,\sigma^*),\mathsf{in}]$ defined in \proref{fig:hybrids}.

So, it suffices to show that 
\begin{itemize}
    \item $\Pr[\cH_{\iota-1}' = 1] \geq \Pr[\cH_{\iota-1} = 1] - \negl(\secp)$, and 
    \item $\Pr[\cH_{\iota} = 1] \geq \Pr[\cH_{\iota-1}' = 1] - \negl(\secp)$.
\end{itemize}

We note that the only difference between the three hybrids is how query $\iota$ is answered:
\begin{itemize}
    \item In $\cH_{\iota - 1}$, query $\iota$ is answered with $\PV.\Ver[\vk,s,(x^*,c^*,\sigma^*)]$.
    \item In $\cH_{\iota - 1}'$, query $\iota$ is answered with $\PV.\Ver[\vk,s,(x^*,c^*,\sigma^*),\mathsf{in}]$.
    \item In $\cH_\iota$, query $\iota$ is answered with $O_\bot$.
\end{itemize}

Now, the proof is completed by appealing to the following two claims.
\end{proof}

\begin{claim}
$\Pr[\cH_{\iota-1}' = 1] \geq \Pr[\cH_{\iota-1} = 1] - \negl(\secp).$
\end{claim}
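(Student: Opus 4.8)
The plan is to argue that the only difference between $\cH_{\iota-1}$ and $\cH_{\iota-1}'$ lies in the branch of query $\iota$ whose evaluation of $\PV.\Ver[\vk,s,(x^*,c^*,\sigma^*)]$ returns a non-$\bot$ value $q$ but whose test-round outputs $w = \TestOut[\sparam](\widetilde\pi)$ fall outside $D_{\inside}[P,P(Q(x^*))]$. By \cref{lemma:Q-in}, the $\sparam$ sampled in the second stage is independent of everything that happens before query $\iota$ (since $\sparam_{x^*,c^*,\sigma^*}$ is freshly sampled from $s$, which is drawn only after the first stage of $\sA_4$, and all $\PV.\Ver$ queries before this point are answered by a functionality that does not use $\sparam$), so the configuration of the query state that $\sA_1$ feeds into query $\iota$ can be viewed as the output of a QPT adversary that does not see $\sparam$. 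First I would fix the input $x^*$ (by averaging, there is a fixed value of $x^*$ on which the relevant difference is non-negligible), so that $P(Q(x^*))$ is a fixed bit. Then I would set up the reduction to \cref{lemma:Q-in}: a prover that runs the pre-query-$\iota$ portion of $\sA_4$, measures the query register of query $\iota$ in the standard basis to extract a candidate proof $\widetilde\pi$, and checks whether $\sV_\Ver^\CV$ accepts it with $w\notin D_{\inside}[P,P(Q(x^*))]$. \cref{lemma:Q-in} says this happens with negligible probability.

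The subtle point is that query $\iota$ is a \emph{superposition} query, so I cannot literally measure it without disturbing the execution. The standard way around this (cf.\ the ``almost-as-good-as-new'' / gentle measurement lemma, \cref{lemma:gentle-measurement}) is to observe that the binary-outcome measurement ``does this classical basis state of the query yield an accepting $\widetilde\pi$ with $w\notin D_{\inside}$?'' succeeds with only negligible weight on the query state — this is exactly the quantity bounded by \cref{lemma:Q-in}, applied to the adversary obtained by hard-coding the pre-query-$\iota$ computation. Consequently, projecting the query state onto the complementary subspace (i.e.\ basis states that either make $\PV.\Ver[\vk,s,(x^*,c^*,\sigma^*)]$ return $\bot$, or yield $w \in D_{\inside}$) changes it by negligible trace distance, and on that subspace $\PV.\Ver[\vk,s,(x^*,c^*,\sigma^*)]$ and $\PV.\Ver[\vk,s,(x^*,c^*,\sigma^*),\inpcor]$ agree by definition of the latter (which only additionally rejects exactly the basis states $w \notin D_{\inside}[P,P(Q(x))]$ that survive the first check). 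Since answering with two oracles that agree on the relevant subspace, up to a negligible-weight perturbation, changes the probability of $\sA_4$'s final event by at most a negligible amount, we conclude $|\Pr[\cH_{\iota-1}=1]-\Pr[\cH_{\iota-1}'=1]| = \negl(\secp)$, which gives the claimed one-sided bound.

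The main obstacle I expect is making the reduction to \cref{lemma:Q-in} clean in the presence of the remaining oracles. In particular, the ``adversary'' for $\Pi^\CV$ that we build out of the pre-query-$\iota$ stage of $\sA_4$ must be a QPT algorithm that takes $\pp$ (but not $\sparam$) and produces a proof; I need to verify that all oracle calls it makes before query $\iota$ — to $\sH$, to $\CVGen$, to $\{\cCK_{i,j}\}_{i,j}$, and to $\PV.\Ver$ via $O_\bot$ on the first $\iota-1$ relevant queries — can be simulated given only $\pp$ and the $\PFC$ commitment keys $\ket{\ck_{i,j}}$, none of which reveal $\sparam$. This is true because $\CVGen$ itself only outputs $\pp$ (it internally samples $\sparam$ but discards it), $\sH = F_{k_1}$ is independent of $\sparam$, the $\cCK_{i,j}$ are part of $\PFC.\Gen$ output and independent of $\sparam$, and the first $\iota - 1$ calls to $\PV.\Ver$ are answered by $O_\bot$. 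The only genuinely new care is that $\sparam$ must be re-sampled \emph{after} this stage finishes — which is exactly the structure of $\sA_4$ inherited from the measure-and-reprogram step, so this lines up. I also need $\sV_\Ver^\CV$'s acceptance decision and the test-round-output computation to be efficiently checkable from $\widetilde\pi$ and $\sparam$, which is immediate from their definitions.

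\begin{proof}
By averaging over the distribution of $x^*$ produced by the first stage of $\sA_4$, it suffices to show the bound conditioned on any fixed value $x^*$, which we now fix. The hybrids $\cH_{\iota-1}$ and $\cH_{\iota-1}'$ are identical except in how query $\iota$ is answered: in $\cH_{\iota-1}$ it is answered by $\PV.\Ver[\vk,s,(x^*,c^*,\sigma^*)]$, and in $\cH_{\iota-1}'$ by $\PV.\Ver[\vk,s,(x^*,c^*,\sigma^*),\inpcor]$. By the definition of the latter functionality in \cref{fig:hybrids}, the two oracles differ only on inputs $(x,(c,\sigma,\mu))$ for which $\PV.\Ver[\vk,s,(x^*,c^*,\sigma^*)]$ computes some $\widetilde\pi$ with $\sV_\Ver^\CV(Q,x,\sparam,\widetilde\pi) \neq \bot$ but $\TestOut[\sparam](\widetilde\pi) \notin D_{\inside}[P,P(Q(x))]$.

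Consider the state of $\sA_1$'s system right before query $\iota$. The portion of $\sA_4$'s execution up to this point — call it $\sA_4^{\le\iota}$ — takes as input $\ket{\pk}$, uses only $\sH$, $\CVGen$, $\{\cCK_{i,j}\}_{i,j}$, and answers the first $\iota-1$ relevant $\PV.\Ver$ queries with $O_\bot$; crucially, none of these depend on the value $\sparam$ such that $(\pp,\sparam) \coloneqq \sV_\Gen^\CV(1^\secp,Q;s)$ (the $\CVGen$ functionality outputs only $\pp$, discarding $\sparam$, and $s$ is sampled only after the first stage of $\sA_4$). Thus $\sA_4^{\le\iota}$ together with a standard-basis measurement of query $\iota$'s input register defines a QPT adversary $\sB$ for $\Pi^\CV$ that, given $\pp$ (but not $\sparam$), produces an input $x$ and proof $\widetilde\pi$. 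By \cref{lemma:Q-in}, the probability that $\sB$ outputs $(x,\widetilde\pi)$ with $\sV_\Ver^\CV(Q,x,\sparam,\widetilde\pi) \neq \bot$ and $\TestOut[\sparam](\widetilde\pi) \notin D_{\inside}[P,P(Q(x))]$ is $\negl(\secp)$.

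Let $\Pi_{\mathsf{bad}}$ be the projector onto the span of those classical basis states of query $\iota$'s input register on which the two oracles $\PV.\Ver[\vk,s,(x^*,c^*,\sigma^*)]$ and $\PV.\Ver[\vk,s,(x^*,c^*,\sigma^*),\inpcor]$ disagree; this is precisely the set on which $\sB$ would ``win'' as above. The previous paragraph shows that, on the query-$\iota$ state $\ket{\phi}$, $\| \Pi_{\mathsf{bad}} \ket{\phi}\|^2 = \negl(\secp)$. Answering query $\iota$ with either of the two oracles acts identically on $(\bbI - \Pi_{\mathsf{bad}})\ket{\phi}$, so the post-query states in the two hybrids are within trace distance $\negl(\secp)$ of each other. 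Since the remainder of the execution of $\sA_4$ (and the final computation of $V[\out]$) is the same in both hybrids, and trace distance does not increase under the subsequent CPTP maps, $|\Pr[\cH_{\iota-1}' = 1] - \Pr[\cH_{\iota-1} = 1]| = \negl(\secp)$, which gives the claim.
\end{proof}
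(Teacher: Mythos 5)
Your proof takes essentially the same route as the paper's: both reduce to \cref{lemma:Q-in} by observing that the two oracles disagree only on query-$\iota$ inputs that yield an accepting $\widetilde{\pi}$ with $\TestOut[\sparam](\widetilde{\pi}) \notin D_{\inside}[P,P(Q(x^*))]$, and that the pre-query-$\iota$ portion of $\sA_4$ is a valid $\Pi^\CV$ prover because it never touches $\sparam$ (all earlier relevant queries being answered by $O_\bot$). The only step you omit is the paper's initial replacement of the PRF $F_{k_1}$ by a true random oracle, which is needed so that the challenge $T$ in your reduction is computed with the same oracle as in \cref{lemma:Q-in}; this is a standard negligible-loss switch and does not affect the rest of your argument.
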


\begin{proof}

First, if we replace the PRF $F_{k_1}(\cdot)$ with an external random oracle $H$, then $\Pr[\cH_{\iota - 1} = 1]$ and $\Pr[\cH_{\iota - 1}' = 1]$ remain the same up to negligible difference \cite{6375347}. Now, this follows from a reduction to \cref{lemma:Q-in}. Indeed, note that if $|\Pr[\cH_{\iota-1}' = 1] - \Pr[\cH_{\iota-1} = 1]| = \nonnegl(\secp)$, then in $\cH_{\iota-1}$, $\sA_1$'s $\iota$'th query must have noticeable amplitude on $(x^*, \pi^* = (c^*,\sigma^*,\mu^*))$ such that 
\[q \neq \bot ~~ \wedge ~~ \TestOut[\sparam](\widetilde{\pi}) \notin D_{\inside}[P,P(Q(x^*))],\] where $(\pp,\sparam) \coloneqq \sV_\Gen^{\CV}(1^\secp,Q;s)$, $\widetilde{\pi} \coloneqq \{b_{i,j},y_{i,j},z_{i,j}\}_{i \in [r], j \in [\ell]}$ is computed during \[\PV.\Ver[\vk,s,(x^*,c^*,\sigma^*)](x^*,(c^*,\sigma^*,\mu^*)),\] and $q \coloneqq \sV_\Ver^\CV(Q,x^*,\sparam,\widetilde{\pi})$. However, $\sA_4$ never needs to know $\sparam$ prior to this query, since all of the calls that $\sA_1$ makes to $\PV.\Ver[\vk,s,(x^*,c^*,\sigma^*)]$ once $G$ is programmed so that $G[(x^*,c^*,\sigma^*) \to s]$ are answered with $O_\bot$. Thus, we can view $\sA_4^H$ has an adversarial prover for $\Pi^\CV$, where the first part of $\sA_4^H$ outputs $x^*$, and the second part receives $\pp$ and outputs $\widetilde{\pi} \coloneqq \{b_{i,j},y_{i,j},z_{i,j}\}_{i,j}$ (which can be computed from $\mu^*$). Then, by \cref{lemma:Q-in}, the above event occurs with negligible probability.

\end{proof}

\begin{claim}
$\Pr[\cH_{\iota} = 1] \geq \Pr[\cH_{\iota-1}' = 1] - \negl(\secp)$
\end{claim}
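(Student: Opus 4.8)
The plan is to reduce this claim to the \emph{string binding with public decodability} property of $\PFC$ (\cref{def:predicate-binding}), using the $D_\inside$ and $D_\out$ sets as the two disjoint string sets $W_0, W_1$. Recall the only difference between $\cH_{\iota-1}'$ and $\cH_\iota$ is that query $\iota$ is answered by $\PV.\Ver[\vk,s,(x^*,c^*,\sigma^*),\mathsf{in}]$ in the former and by $O_\bot$ in the latter. So if $|\Pr[\cH_\iota = 1] - \Pr[\cH_{\iota-1}' = 1]|$ is non-negligible, there must be a ``large'' component of query $\iota$'s input superposition on proofs $\widetilde{\pi}$ whose test-round outputs land in $D_\inside[P,P(Q(x^*))]$ — i.e. $\PFC$ openings to a string $w_\inside \in D_\inside$ on the $\{S_i\}_{i: T_i = 0}$ positions — such that reprogramming these proofs to $\bot$ measurably changes whether $\sA_4$ ultimately outputs a proof whose test-round outputs land in $D_\out[P,P(Q(x^*))]$ (as required by $V[\out]$), i.e. $\PFC$ openings to some $w_\out \in D_\out$ on the same positions.

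The key steps, in order: (i) First replace the PRF $F_{k_1}$ with a random oracle $H$ at negligible cost \cite{6375347}, and similarly fix that the only query affected is query $\iota$, which occurs after $G$ is reprogrammed. (ii) Set up an adversarial committer/opener pair $(\sC, \sU)$ for $\PFC$: the committer $\sC$ runs $\sA_4$ from the beginning up to the point just before query $\iota$ — in particular it has internally generated the $\PFC$ commitment keys $\{\ket{\ck_{i,j}}, \cCK_{i,j}\}$ and produced the commitments $c^* = \{c_{i,j}\}$ — and outputs the query-$\iota$ register together with its internal state as $\ket{\psi}$, and outputs $\bc = \{c_{i,j}\}_{(i,j) \in \text{relevant positions}}$. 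The opener $\sU$ runs the remainder of $\sA_4(s,\state)$, measures the final output $\mu^*$, extracts $\widetilde{\pi}$, and checks whether $\TestOut[\sparam](\widetilde\pi) \in D_\out$. (iii) Argue that the relevant positions — those in $\{S_i\}_{i : T_i = 0}$ for a $T$ determined by the (classically-bound, since the $\PFCI$ parameters are in injective mode) $y$-values inside the query — are all \emph{standard-basis} positions with respect to $\sparam_{x^*,c^*,\sigma^*}$, so the opener does \emph{not} need access to $\PFC.\DecX$ on those indices. This is where the strong-unforgeability step of $\sA_4$ pays off: since all $\PV.\Ver$ queries other than $(x^*,c^*,\sigma^*)$ are answered with $\bot$ during the opener's run, the opener only needs to evaluate $\PV.\Ver[\vk,s,(x^*,c^*,\sigma^*)]$, which invokes $\PFC.\DecZ$ on the $\{S_i\}_{i:T_i=0}$ positions and $\PFC.\DecX$ only on the $h_{i,j}=1$ positions, which are disjoint from the $S_i$ positions by construction of $\sV^\QV_\Gen$. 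The committer, on the other hand, is allowed access to both $\PFC.\DecZ$ and $\PFC.\DecX$, which it needs to simulate $\cCK$, $\PV.\Ver[\vk]$, and $\CVGen$ (note $\CVGen$ runs $\sV_\Gen^\CV$ which may reveal $h$, but that's fine — it's part of the commit stage).

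Then, (iv) invoke \cref{def:predicate-binding} with $W_0 = D_\inside[P, P(Q(x^*))] \cap (\text{projection onto } \{S_i\}_{i:T_i=0})$ and $W_1 = D_\out[P, P(Q(x^*))]$ restricted the same way; these are disjoint since $D_\inside$ and $D_\out$ are disjoint. The binding bound says $\|\Pi_{\bdk,\bc,W_1} \sU^{\bCK,\DecZ[\bdk]} \Pi_{\bdk,\bc,W_0}\ket{\psi}\| = \negl(\secp)$, which translates (via a gentle-measurement / \cref{lemma:gentle-measurement} argument) to saying that projecting the query-$\iota$ register onto the ``$D_\inside$-opening'' subspace and then reprogramming it to $\bot$ has only negligible effect on the probability that the final output opens to a $D_\out$ string — which is exactly the event measured by $V[\out]$. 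One subtlety I would handle carefully: $T$ (hence which positions count as ``standard-basis $S_i$ positions'') depends on the $y$-values \emph{in each basis state of the query superposition}, so I'd either condition on each value of $T$ separately (a polynomial sum since $T$ has Hamming weight $k$ out of $r$ — actually $\binom{r}{k}$ is not polynomial, so instead) argue position-by-position, or more cleanly observe that the $\PFCI$ commitments $c^*$ (in injective mode) already pin down $y$ and hence $T$ as a function of $c^*$ alone, so conditioning on the fixed $c^*$ output by the committer fixes $T$. That last observation is the cleanest route and is the one I'd take. The main obstacle is getting the oracle-access bookkeeping exactly right — convincingly showing that the opener stage genuinely never touches $\PFC.\DecX$ on any $S_i$ position — since the whole point of the construction (and of inserting signature tokens) was to engineer precisely this, so the proof here is really about carefully invoking the earlier structural setup rather than new technical work.
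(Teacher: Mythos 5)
Your overall strategy coincides with the paper's: split $\sA_4$ at query $\iota$ into a committer (everything before the query is answered) and an opener (everything after), observe that the opener never needs $\PFC.\DecX$ on the sampling positions because all $\PV.\Ver$ queries other than $(x^*,c^*,\sigma^*)$ are killed and $h_{i,j}=0$ whenever $j\in S_i$, and invoke string binding with public decodability on the disjoint sets $D_\inside$ and $D_\out$. The genuine gap is your resolution of the $T$-dependence. The commitment $c^*$ is the $\PFC$ commitment to the history state; it is \emph{not} the tuple of TCF images $y_{i,j}$. The $y$-values are produced by $\sP^\CV_\Prove$ \emph{after} $c$ is fixed and $\pp$ is obtained, they are separate components of $\pi=(c,\sigma,\{u_{i,j},y_{i,j},z_{i,j}\}_{i,j})$, and the adversary may place a fresh superposition over them in each query and in its final output $\mu^*$. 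Hence $T=F_{k_1}(y_{1,1},\dots,y_{r,\ell})$ is \emph{not} pinned down by $c^*$, your ``cleanest route'' fails, and your fallback of summing over $T$ is, as you note, exponential. The fix is to not fix $T$ at all: take the committed positions to be \emph{all} of $S=\{S_i\}_{i\in[r]}$ (every such position has $h_{i,j}=0$ regardless of $T$, so the opener-side oracle bookkeeping is unchanged), define $D_\inside[P,b]$ and $D_\out[P,b]$ as disjoint subsets of $\{0,1\}^{|S|}$, and let the map $\TestOut[\sparam]$ --- which zeroes out the blocks with $T_i=1$ --- absorb the dependence on $T$. The projectors fed to \cref{def:predicate-binding} are then sums, over $w$ in these fixed sets, of tensor products of single-position $\DecZ$-opening projectors over all of $S$, with no reference to $T$ anywhere in the reduction.

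Two smaller issues. First, in a reduction to binding the committer must \emph{receive} $\ket{\ck_{i,j}}$ and oracle access to $\cCK_{i,j}$, $\DecZ[\dk_{i,j}]$, $\DecX[\dk_{i,j}]$ for $(i,j)\in S$ from the external challenger, generating keys itself only for $(i,j)\notin S$; ``internally generated the commitment keys'' must be read that way or the reduction is vacuous. Second, \cref{lemma:gentle-measurement} is not the right tool for converting the binding bound into a bound on $\Pr[\cH_{\iota-1}'=1]-\Pr[\cH_{\iota}=1]$: one writes the committer's output as $\ket{\psi}=\Pi_\inside\ket{\psi}+(\bbI-\Pi_\inside)\ket{\psi}$, notes that the two continuations $\sU$ and $\widetilde{\sU}$ act identically on the second component (both answer it with $\bot$), and bounds the resulting cross terms by $3\big\|\Pi_\out\sU\Pi_\inside\ket{\psi}\big\|+3\big\|\Pi_\out\widetilde{\sU}\Pi_\inside\ket{\psi}\big\|$, each negligible by binding applied separately to $b=0$ and $b=1$ (needed since $x^*$, and hence $P(Q(x^*))$, is not fixed).
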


\begin{proof}

We will show this by reduction to the string binding with public decodability property of $\PFC$. Recall from \cref{subsec:classical-verification} that based on any $(\pp,\sparam) \in \sV_\Gen^\CV(1^\secp,Q)$, we define a subset of indices $S \coloneqq \{S_i\}_{i \in [r]} \subset [r] \times [\ell]$ by the subsets $\{S_i\}_{i \in [r]}$ defined by $\sparam$. This subset $S$ is used in turn to define the predicates $D_\mathsf{in}[P,b]$ and $D_\out[P,b]$. Throughout this proof, we will always let $S$ be defined based on $(\pp,\sparam) \coloneqq \sV_\Gen^\CV(1^\secp,Q;s)$, where the coins $s$ will always be clear from context. We also define $m \coloneqq |S|$, which we assume is the same for all coins $s$.

Now we define an oracle-aided operation $\sC$ as follows.

\begin{itemize}
    \item $\sC$ takes as input $\{\ket{\ck_\tau}\}_{\tau \in [m]}$, where $\{\dk_\tau, \ket{\ck_\tau}, \cCK_\tau \gets \PFC.\Gen(1^\secp)\}_{\tau \in [m]}$.
    \item $\sC$ samples $s \gets \{0,1\}^\secp$ and sets $(\pp,\sparam) \coloneqq \sV^\CV_\Gen(1^\secp,Q;s)$. For $(i,j) \notin S$, sample $\dk_{i,j},\ket{\ck_{i,j}},\cCK_{i,j} \gets \PFC.\Gen(1^\secp)$. Let $f : [m] \to S$ be an arbitrary bijection, and re-define $\{\dk_\tau,\ket{\ck_\tau},\cCK_\tau\}_{\tau \in [m]}$ as $\{\dk_{f(\tau)},\ket{\ck_{f(\tau)}},\cCK_{f(\tau)}\}_{\tau \in [m]}$.
    \item $\sC$ runs $\sA_4$ as defined by $\cH_{\iota-1}'$ until right before query $\iota$ is answered. All queries to $\cCK_{i,j}$, $\PFC.\DecZ[\dk_{i,j}]$, or $\PFC.\DecX[\dk_{i,j}]$ for $(i,j) \in S$ are forwarded to external oracles. 
\end{itemize}

That is, we can write the operation of $\sC$ as

\[\ket{\psi} \gets \sC^{\bCK,\PFC.\DecZ[\bdk],\PFC.\DecX[\bdk]}(\ket{\bck}),\] where $\bCK$ is the collection oracles $\cCK_1,\dots,\cCK_m$, $\ket{\bck} = (\ket{\ck_1},\dots,\ket{\ck_m})$, $\PFC.\DecZ[\bdk]$ is the collection of oracles $\PFC.\DecZ[\dk_1],\allowbreak\dots,\allowbreak\PFC.\DecZ[\dk_m]$, and $\PFC.\DecX[\bdk]$ is the collection of oracles $\PFC.\DecX[\dk_1],\allowbreak\dots,\allowbreak\PFC.\DecX[\dk_m]$.

Next, we define an oracle-aided unitary $\sU$ as follows.

\begin{itemize}
    \item $\sU$ takes as input the state $\ket{\psi}$ output by $\sC$.
    \item It coherently runs the remainder of $\sA_4$ as defined by $\cH'_{\iota-1}$. Any queries to $\cCK_{i,j}$ or $\PFC.\DecZ[\dk_{i,j}]$ for $(i,j) \in S$ are forwarded to external oracles. Note that this portion of $\sA_4$ does not require access to the Hadamard basis decoding oracles $\PFC.\DecX[\dk_{i,j}]$ for $(i,j) \in S$. This follows because for each such $(i,j)$, $h_{i,j} = 0$, which means that $\PV.\Ver[\vk,s,(x^*,c^*,\sigma^*),\mathsf{in}]$ only requires access to the standard basis decoding oracles at these positions.
\end{itemize}

That is, we can write the operation of $\sU$ as 

\[\ket{\psi'} \coloneqq \sU^{\bCK,\PFC.\DecZ[\bdk]}(\ket{\psi}).\]

Now, we give a name to three registers of the space operated on by $\sU$, as follows. 

\begin{itemize}
    \item $\regQ$ is the query register for $\sA_1$'s $\iota$'th query. That is, the state $\ket{\psi}$ contains a superposition over strings $(x,\pi)$ on register $\regQ$.
    \item $\regA$ holds classical information $(\vk,s,x^*,c^*,\sigma^*)$ that has been sampled previously by $\sC$. Thus, the state $\ket{\psi}$ contains a standard basis state on register $\regA$, and $\sU$ is classically controlled on this register. 
    \item $\regV$ is the register that is measured to produce the string $\mu^*$ output at the end of $\sA_4$'s operation. Thus, the state $\ket{\psi'}$ contains a superposition over $\mu^*$ on register $\regV$.
\end{itemize}

We also define $\widetilde{\sU}$ to be identical to $\sU$ except that it runs the remainder of $\sA_4$ as defined by $\cH_{\iota}$. Note that the only difference between $\sU$ and $\widetilde{\sU}$ is how query $\iota$ is answered at the very beginning.

Next, we define the following two projectors.

\begin{align*}
&\Pi_\inside^{\regQ,\regA} \coloneqq \sum_{\substack{\begin{array}{c}(x,\pi),(\vk,s,x^*,c^*,\sigma^*) ~~ \text{s.t.}\\ \PV.\Ver[\vk,s,(x^*,c^*,\sigma^*),\inside](x,\pi) \neq \bot\end{array}}} \ketbra{(x,\pi),(\vk,s,x^*,c^*,\sigma^*)}{(x,\pi),(\vk,s,x^*,c^*,\sigma^*)} \\
&\Pi_\out^{\regA,\regV} \coloneqq \sum_{\substack{\begin{array}{c}(\vk,s,x^*,c^*,\sigma^*),\mu^* ~~ \text{s.t.}\\ V[\out]((x^*,c^*,\sigma^*),s,(\mu^*,\vk)) = 1\end{array}}} \ketbra{(\vk,s,x^*,c^*,\sigma^*),\mu^*}{(\vk,s,x^*,c^*,\sigma^*),\mu^*}
\end{align*}

Now, observe that \[\Pr[\cH_{\iota-1}' = 1] = \E_{\bCK,\bdk,\ket{\bck}}\left[\Big\| \Pi_{\out}^{\regA,\regV}\sU^{\bCK,\PFC.\DecZ[\bdk]}\ket{\psi}\Big\|^2 : \ket{\psi} \gets \sC^{\bCK,\PFC.\DecZ[\bdk],\PFC.\DecX[\bdk]}(\ket{\bck})\right],\] and \[\Pr[\cH_{\iota} = 1] = \E_{\bCK,\bdk,\ket{\bck}}\left[\Big\| \Pi_{\out}^{\regA,\regV}\widetilde{\sU}^{\bCK,\PFC.\DecZ[\bdk]}\ket{\psi}\Big\|^2 : \ket{\psi} \gets \sC^{\bCK,\PFC.\DecZ[\bdk],\PFC.\DecX[\bdk]}(\ket{\bck})\right].\]

Furthermore, for any state $\ket{\psi}$ output by $\sC$, we can write $\ket{\psi} \coloneqq \ket{\psi_{\mathsf{in}}} + \ket{\psi_{\mathsf{in}}^\bot}$, where $\ket{\psi_{\mathsf{in}}} \coloneqq \Pi_{\mathsf{in}}^{\regQ,\regA}\ket{\psi}$. Notice that for any such $\ket{\psi_{\mathsf{in}}^\bot}$, it holds that $\sU\ket{\psi_{\mathsf{in}}^\bot} = \widetilde{\sU}\ket{\psi_{\mathsf{in}}^\bot}$, since query $\iota$ is answered with $\bot$ on both states and $\sU$ and $\widetilde{\sU}$ are otherwise identical. Thus, defining \begin{align*}
    &\Pi_{\out,\sU} \coloneqq \left(\sU^{\bCK,\PFC.\DecZ[\bdk]}\right)^\dagger\Pi_{\out}\left(\sU^{\bCK,\PFC.\DecZ[\bdk]}\right),\\
    &\Pi_{\out,\widetilde{\sU}} \coloneqq \left(\widetilde{\sU}^{\bCK,\PFC.\DecZ[\bdk]}\right)^\dagger\Pi_{\out}\left(\widetilde{\sU}^{\bCK,\PFC.\DecZ[\bdk]}\right),
\end{align*} we have that for any $\ket{\psi} \coloneqq \ket{\psi_{\mathsf{in}}} + \ket{\psi_{\mathsf{in}}^\bot}$, 

\begin{align*}
    &\Big\| \Pi_{\out,\sU}(\ket{\psi_{\mathsf{in}}} + \ket{\psi_{\mathsf{in}}^\bot})\Big\|^2 - \Big\| \Pi_{\out,\widetilde{\sU}}(\ket{\psi_{\mathsf{in}}} + \ket{\psi_{\mathsf{in}}^\bot})\Big\|^2 \\ 
    &~~ = \bra{\psi_{\mathsf{in}}}\Pi_{\out,\sU}\ket{\psi_{\mathsf{in}}} + \bra{\psi_{\mathsf{in}}}\Pi_{\out,\sU}\ket{\psi_{\mathsf{in}}^\bot} + \bra{\psi_{\mathsf{in}}^\bot}\Pi_{\out,\sU}\ket{\psi_{\mathsf{in}}} \\ &~~~~~~~ - \bra{\psi_{\mathsf{in}}}\Pi_{\out,\widetilde{\sU}}\ket{\psi_{\mathsf{in}}} - \bra{\psi_{\mathsf{in}}}\Pi_{\out,\widetilde{\sU}}\ket{\psi_{\mathsf{in}}^\bot} - \bra{\psi_{\mathsf{in}}^\bot}\Pi_{\out,\widetilde{\sU}}\ket{\psi_{\mathsf{in}}} \\
    &~~ \leq 3\Big\| \Pi_{\out,\sU}\ket{\psi_{\mathsf{in}}}\Big\| + 3\Big\| \Pi_{\out,\widetilde{\sU}}\ket{\psi_{\mathsf{in}}}\Big\|.
\end{align*}

So, we can bound $\Pr[\cH_{\iota - 1}' = 1] - \Pr[\cH_{\iota} = 1]$ by


\begin{align*}
    \E_{\bCK,\bdk,\ket{\bck}}\left[3\Big\| \Pi_{\out,\sU}\ket{\psi_{\mathsf{in}}}\Big\| + 3\Big\| \Pi_{\out,\widetilde{\sU}}\ket{\psi_{\mathsf{in}}}\Big\| : \begin{array}{r}\ket{\psi} \gets \sC^{\bCK,\PFC.\DecZ[\bdk],\PFC.\DecX[\bdk]}(\ket{\bck}) \\ \ket{\psi} \coloneqq \ket{\psi_{\mathsf{in}}} + \ket{\psi_{\mathsf{in}}^\bot}\end{array}\right],
\end{align*} 

and thus it suffices to show that 

\[\E_{\bCK,\bdk,\ket{\bck}}\left[\Big\| \Pi_{\out}\sU^{\bCK,\PFC.\DecZ[\bdk]}\Pi_{\mathsf{in}}\ket{\psi}\Big\|^2 : \ket{\psi} \gets \sC^{\bCK,\PFC.\DecZ[\bdk],\PFC.\DecX[\bdk]}(\ket{\bck})\right] = \negl(\secp),\]

and

\[\E_{\bCK,\bdk,\ket{\bck}}\left[\Big\| \Pi_{\out}\widetilde{\sU}^{\bCK,\PFC.\DecZ[\bdk]}\Pi_{\mathsf{in}}\ket{\psi}\Big\|^2 : \ket{\psi} \gets \sC^{\bCK,\PFC.\DecZ[\bdk],\PFC.\DecX[\bdk]}(\ket{\bck})\right] = \negl(\secp).\]

The rest of this proof will be identical in either case, so we consider $\sU$. Towards proving this, we first recall that $s$ is sampled uniformly at random at the very beginning of $\sC$, and the rest of $\sC$ and $\sU$ are classically controlled on $s$. So, let $\sC_s$ be the same as $\sC$ except that it is initialized with the string $s$. Then it suffices to show that for any fixed $s$,

\[\E_{\bCK,\bdk,\ket{\bck}}\left[\Big\| \Pi_{\out}\sU^{\bCK,\PFC.\DecZ[\bdk]}\Pi_{\mathsf{in}}\ket{\psi}\Big\|^2 : \ket{\psi} \gets \sC_s^{\bCK,\PFC.\DecZ[\bdk],\PFC.\DecX[\bdk]}(\ket{\bck})\right] = \negl(\secp).\]

Now, we observe that the register $\regA$ output by $\sC$ contains a standard basis state holding $(\vk,s,(x^*,c^*,\sigma^*))$, where $c^* \coloneqq \{c^*_{i,j}\}_{i \in [r], j \in [\ell]}$. Define commitments $\bc \coloneqq \{c^*_{i,j}\}_{(i,j) \in S}$ and write the output of $\sC_s$ as $(\ket{\psi},\bc)$ to make these commitments explicit. Then, define the following predicates, where $f$ is the bijection from $[m] \to S$ defined earlier.\\

\noindent $\widetilde{D}_{\mathsf{in}}[\bdk,\bc]$:
\begin{itemize}
    \item Take as input $(b,\pi)$, where $\pi$ is parsed as $(\cdot,\cdot,\{u_{i,j},y_{i,j},z_{i,j}\}_{i \in [r], j \in [\ell]})$.
    \item Output 1 if for some $w \in D_{\mathsf{in}}[P,b]$ and all $(i,j) \in S$, $w_{f^{-1}(i,j)} = \PFC.\DecZ(\dk_{i,j},c^*_{i,j},u_{i,j}).$
\end{itemize}

\noindent $\widetilde{D}_\out[\bdk,\bc]$:
\begin{itemize}
    \item Take as input $(b,\mu^*)$, where $\mu^*$ is parsed as $\{u_{i,j},y_{i,j},z_{i,j}\}_{i \in [r], j \in [\ell]}$.
    \item Output 1 if for some $w \in D_\out[P,b]$ and all $(i,j) \in S$, $w_{f^{-1}(i,j)} = \PFC.\DecZ(\dk_{i,j},c^*_{i,j},u_{i,j}).$
\end{itemize}

Next, we define the following two projectors.

\begin{align*}
    &\Pi_{\bdk,\bc,\mathsf{in}}^{\regQ,\regA} \coloneqq \sum_{\substack{\begin{array}{c}(\cdot,\pi),(\cdot,\cdot,x^*,\cdot,\cdot) ~~ \text{s.t.}\\ \widetilde{D}_\inside[\bdk,\bc](P(Q(x^*)),\pi) = 1\end{array}}} \ketbra{(\cdot,\pi),(\cdot,\cdot,x^*,\cdot,\cdot)}{(\cdot,\pi),(\cdot,\cdot,x^*,\cdot,\cdot)} \\
    &\Pi_{\bdk,\bc,\out}^{\regA,\regV} \coloneqq \sum_{\substack{\begin{array}{c}(\cdot,\cdot,x^*,\cdot,\cdot),\mu^* ~~ \text{s.t.}\\ \widetilde{D}_\out[\bdk,\bc](P(Q(x^*)),\mu^*) = 1\end{array}}} \ketbra{(\cdot,\cdot,x^*,\cdot,\cdot),\mu^*}{(\cdot,\cdot,x^*,\cdot,\cdot),\mu^*}
\end{align*}




Note that $\Pi_{\mathsf{in}}^{\regQ,\regA} \leq \Pi_{\bdk,\bc,\mathsf{in}}^{\regQ,\regA}$ and $\Pi_{\out}^{\regA,\regV} \leq \Pi_{\bdk,\bc,\out}^{\regA,\regV}$, and thus it suffices to show that 

\[\E_{\bCK,\bdk,\ket{\bck}}\left[\Big\|\Pi_{\bdk,\bc,\out}^{\regA,\regV} \sU^{\bCK,\PFC.\DecZ[\bdk]} \Pi_{\bdk,\bc,\mathsf{in}}^{\regQ,\regA}\ket{\psi}\Big\|^2 : (\ket{\psi},\bc) \gets \sC_s^{\bCK,\PFC.\DecZ[\bdk],\PFC.\DecX[\bdk]}(\ket{\bck})\right] = \negl(\secp).\]





Finally, for each $b \in \{0,1\}$, we define

\begin{align*}
    \Pi_{\bdk,\bc,\inside,b}^\regQ \coloneqq \sum_{(\cdot,\pi) : \widetilde{D}_{\mathsf{in}}[\bdk,\bc](b,\pi) = 1} \ketbra{(\cdot,\pi)}{(\cdot,\pi)}, ~~ \Pi_{\bdk,\bc,\out,b}^\regV \coloneqq \sum_{\rho^* : \widetilde{D}_\out[\bdk,\bc](b,\mu^*) = 1}\ketbra{\mu^*}{\mu^*}.
\end{align*}

In fact, these projectors now only operate on the sub-registers of $\regQ$ and $\regV$ that hold the strings \[\{u_{i,j}\}_{(i,j) \in S} = \{u_{f(\tau)}\}_{\tau \in [m]}.\] Naming these sub-registers $\regQ' = (\regQ_1,\dots,\regQ_m)$ and $\regV' = (\regV_1,\dots,\regV_m)$, we can write 

\[\Pi_{\bdk,\bc,\mathsf{in},b}^{\regQ'} \coloneqq \sum_{w \in D_\mathsf{in}[P,b]} \left( \bigotimes_{\tau \in [m]} \Pi_{\dk_{\tau},c^*_\tau,w_\tau}^{\regQ_\tau} \right), \ \ \, \Pi_{\bdk,\bc,\out,b}^{\regV'} \coloneqq \sum_{w \in D_\out[P,b]} \left( \bigotimes_{\tau \in [m]} \Pi_{\dk_{\tau},c^*_\tau,w_\tau}^{\regV_\tau} \right),\] where

\[\Pi_{\dk_\tau,c^*_\tau,w_\tau} \coloneqq \sum_{u : \PFC.\DecZ(\dk_\tau,c^*_\tau,u) = w_\tau} \ketbra{u}{u}.\]

Now, to complete the proof, we note that

\begin{align*}
    & ~~\E_{\bCK,\bdk,\ket{\bck}}\left[\Big\|\Pi_{\bdk,\bc,\out} \sU^{\bCK,\PFC.\DecZ[\bdk]} \Pi_{\bdk,\bc,\mathsf{in}}\ket{\psi}\Big\|^2 : (\ket{\psi},\bc) \gets \sC_s^{\bCK,\PFC.\DecZ[\bdk],\PFC.\DecX[\bdk]}(\ket{\bck})\right] \\
    &\leq \E_{\bCK,\bdk,\ket{\bck}}\left[\Big\|\Pi_{\bdk,\bc,\out,0} \sU^{\bCK,\PFC.\DecZ[\bdk]} \Pi_{\bdk,\bc,\mathsf{in},0}\ket{\psi}\Big\|^2 : (\ket{\psi},\bc) \gets \sC_s^{\bCK,\PFC.\DecZ[\bdk],\PFC.\DecX[\bdk]}(\ket{\bck})\right] \\
    & ~~~~ +\E_{\bCK,\bdk,\ket{\bck}}\left[\Big\|\Pi_{\bdk,\bc,\out,1} \sU^{\bCK,\PFC.\DecZ[\bdk]} \Pi_{\bdk,\bc,\mathsf{in},1}\ket{\psi}\Big\|^2 : (\ket{\psi},\bc) \gets \sC_s^{\bCK,\PFC.\DecZ[\bdk],\PFC.\DecX[\bdk]}(\ket{\bck})\right],
\end{align*}


and by the string binding with public decodability of $\PFC$ (\cref{def:predicate-binding}), and the fact that $D_{\mathsf{in}}[P,b]$ and $D_{\out}[P,b]$ are disjoint sets of strings, we have that for any $b \in \{0,1\}$,

\[\E_{\bCK,\bdk,\ket{\bck}}\left[\Big\|\Pi_{\bdk,\bc,\out,b} \sU^{\bCK,\PFC.\DecZ[\bdk]} \Pi_{\bdk,\bc,\mathsf{in},b}\ket{\psi}\Big\| : (\ket{\psi},\bc) \gets \sC_s^{\bCK,\PFC.\DecZ[\bdk],\PFC.\DecX[\bdk]}(\ket{\bck})\right] = \negl(\secp).\] 

\end{proof}

\subsection{Application: Publicly-Verifiable QFHE}

Now, we apply our general framework for verification of quantum partitioning circuits to the specific case of quantum fully-homomorphic encryption (QFHE). First, we define the notion of publicly-verifiable QFHE for pseudo-deterministic circuits. We write the syntax in the \emph{oracle model}, where the parameters used for proving and verifying include an efficient classical oracle $\mathsf{PP}$. Such a scheme can be heuristically instantiated in the plain model by using post-quantum indistinguishability obfuscation to obfuscate this oracle.

\begin{definition}[Publicly-verifiable QFHE for pseudo-deterministic circuits]\label{def:PVQFHE}
A publicly-verifiable quantum fully-homomorphic encryption scheme for pseudo-deterministic circuits consists of the following algorithms $(\Gen,\Enc,\VerGen,\Eval,\Ver,\Dec)$.
\begin{itemize}
    \item $\Gen(1^\lambda,D) \to (\pk,\sk)$: On input the security parameter $1^\secp$ and a circuit depth $D$, the key generation algorithm returns a public key $\pk$ and a secret key $\sk$.
    \item $\Enc(\pk, x) \to \ct$: On input the public key $\pk$ and a classical plaintext $x$, the encryption algorithm outputs a ciphertext $\ct$.
    \item $\VerGen(\ct,Q) \to (\ket{\pp},\mathsf{PP})$: On input a ciphertext $\ct$ and the description of a quantum circuit $Q$, the verification parameter generation algorithm returns public parameters $(\ket{\mathsf{pp}},\mathsf{PP})$, where $\mathsf{PP}$ is the description of a classical deterministic polynomial-time functionality.
    \item $\Eval^{\mathsf{PP}}(\ct, \ket{\mathsf{pp}}, y) \to (\widetilde{\ct},\pi)$: The evaluation algorithm has oracle access to $\mathsf{PP}$, takes as input a ciphertext $\ct$, a quantum state $\ket{\mathsf{pp}}$, and a classical string $y$, and outputs a ciphertext $\widetilde{\ct}$ and proof $\pi$.
    \item $\Ver^{\mathsf{PP}}(y,\widetilde{\ct},\pi) \to \{\top,\bot\}$: The classical verification algorithm has oracle access to $\mathsf{PP}$, takes as input a string $y$, a ciphertext $\widetilde{\ct}$, and a proof $\pi$, and outputs either $\top$ or $\bot$.
    \item $\Dec(\sk, \ct) \to x$: On input the secret key $\sk$ and a classical ciphertext $\ct$, the decryption algorithm returns a message $x$.
\end{itemize}

These algorithms should satisfy the following properties.

\begin{itemize}
    \item \textbf{Correctness.} For any family $\{Q_\secp,x_\secp,y_\secp\}_{\secp \in \bbN}$ where $Q_\secp$ takes two inputs, $\{Q_\secp(x_\secp,\cdot)\}_{\secp \in \bbN}$ is pseudo-deterministic, and $Q_\secp$ has depth $D=D(\secp)$, it holds that 
    \[\Pr\left[\begin{array}{l}\Ver^{\mathsf{PP}}(y,\widetilde{\ct},\pi) = \top ~~ \wedge\\ \Dec(\sk,\widetilde{\ct}) = Q(x,y)\end{array} : \begin{array}{r}(\pk,\sk) \gets \Gen(1^\secp,D) \\ \ct \gets \Enc(\pk,x) \\  (\ket{\mathsf{pp}},\mathsf{PP}) \gets \VerGen(\ct,Q) \\ (\widetilde{\ct},\pi) \gets \Eval^{\mathsf{PP}}(\ct,\ket{\mathsf{pp}},y) \end{array}\right] = 1-\negl(\secp).\]
    \item \textbf{Security.} For any QPT adversary $\{\sA_\secp\}_{\secp \in \bbN}$, depth $D=D(\secp)$, and messages $\{x_{\secp,0},x_{\secp,1}\}_{\secp \in \bbN}$,
    \begin{align*}
    &\bigg| \Pr\left[\sA(\pk,\ct) = 1 : \begin{array}{r}(\pk,\sk) \gets \Gen(1^\secp,D) \\ \ct \gets \Enc(\pk,x_0)  \end{array}\right]\\ &- \Pr\left[\sA(\pk,\ct) = 1 : \begin{array}{r}(\pk,\sk) \gets \Gen(1^\secp,D) \\ \ct \gets \Enc(\pk,x_1)\end{array}\right]\bigg| = \negl(\secp)
    \end{align*}
    \item \textbf{Soundness.} For any QPT adversary $\{\sA_\secp\}_{\secp \in \bbN}$, depth $D=D(\secp)$, and family $\{Q_\secp,x_\secp\}_{\secp \in \bbN}$, where $Q_\secp$ takes two inputs and $\{Q_\secp(x_\secp,\cdot)\}_{\secp \in \bbN}$ is pseudo-deterministic, 
    \[\Pr\left[\begin{array}{l} \Ver^{\mathsf{PP}}(y,\widetilde{\ct},\pi) = \top ~~ \wedge \\ \Dec(\sk,\widetilde{\ct}) \neq Q(x,y)\end{array} : \begin{array}{r}(\pk,\sk) \gets \Gen(1^\secp,D) \\ \ct \gets \Enc(\pk,x) \\ (\ket{\pp},\mathsf{PP}) \gets \VerGen(\ct,Q) \\  (y,\widetilde{\ct},\pi) \gets \sA^{\mathsf{PP}}(\ct,\ket{\mathsf{pp}})\end{array}\right] = \negl(\secp).\] 
\end{itemize}

\end{definition}

We will now construct publicly-verifiable QFHE for pseudo-deterministic circuits from the following ingredients.

\begin{itemize}
    \item A quantum fully-homomorphic encryption scheme $(\QFHE.\Gen, \QFHE.\Enc, \QFHE.\Eval, \QFHE.\Dec)$ (\cref{subsec:QFHE}).
    \item A protocol for publicly-verifiable non-interactive classical verification of quantum partitioning circuits in the oracle model $(\PV.\Gen,\PV.\Prove,\PV.\Verify,\PV.\Combine)$ (\cref{subsec:public-verification}).
\end{itemize}

Our construction goes as follows.

\begin{itemize}

    \item $\PV\QFHE.\Gen(1^\secp,D)$: Same as $\QFHE.\Gen(1^\secp,D)$.
    \item $\PV\QFHE.\Enc(\pk,x)$: Same as $\QFHE.\Enc(\pk,x)$.
    \item $\PV\QFHE.\VerGen(\ct,Q)$: 
        \begin{itemize}
            \item Define the quantum circuit $E[\ct]: y \to \QFHE.\Eval(Q(\cdot,y),\ct)$.
            \item Sample $(\PV.\vk,\ket{\PV.\pk},\PV.\mathsf{PK}) \gets \PV.\Gen(1^\secp,E[\ct])$. 
            \item Let $\mathsf{VK}(y,\pi)$ be the following classical functionality. First, run $\PV.\Ver(\PV.\vk,y,\pi)$. Output $\bot$ if the output was $\bot$. Otherwise, parse the output as $(\ct_1,\dots,\ct_m)$, compute $\widetilde{\ct} \coloneqq \QFHE.\Eval(\PV.\Combine,(\ct_1,\dots,\ct_m))$, and output $\widetilde{\ct}$.\footnote{Here, we are using the fact that $\QFHE.\Eval$ is a deterministic classical functionality when evaluating a deterministic classical functionality.} 
            \item Output $\ket{\mathsf{pp}} \coloneqq \ket{\PV.\pk}, \mathsf{PP} \coloneqq \left(\PV.\mathsf{PK},\mathsf{VK}\right).$
        \end{itemize}
    \item $\PV\QFHE.\Eval^{\mathsf{PP}}(\ct,\ket{\mathsf{pp}},y)$: 
    \begin{itemize}
        \item Run $\pi \gets \PV.\Prove^{\PV.\mathsf{PK}}(\ket{\mathsf{pp}},E[\ct],y)$.
        \item Compute $\widetilde{\ct} = \mathsf{VK}(y,\pi)$, and output $(\widetilde{\ct},\pi)$.
    \end{itemize}
    \item $\PV\QFHE.\Ver^{\mathsf{PP}}(y,\widetilde{\ct},\pi)$: Output $\top$ iff $\mathsf{VK}(y,\pi) = \widetilde{\ct}$.
    \item $\PV\QFHE.\Dec(\sk,\ct)$: Same as $\QFHE.\Dec(\sk,\ct)$.

\end{itemize}

\begin{theorem}
The scheme described above satisfies \cref{def:PVQFHE}.
\end{theorem}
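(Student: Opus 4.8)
The plan is to derive each of the three properties --- correctness, security, and soundness --- of $\PV\QFHE$ (\cref{def:PVQFHE}) from the corresponding property of the underlying ingredients, namely the correctness/security of $\QFHE$ (\cref{def:qfhe,def:semantic-security,def:eval correct}) and the completeness/soundness of the publicly-verifiable classical verification of quantum partitioning circuits protocol $\Pi^\PV$ (\cref{def:PV-completeness,def:PV-soundness}, established in \cref{thm:PV-soundness}). The key observation linking the two is that, for a fixed ciphertext $\ct \gets \QFHE.\Enc(\pk,x)$, the quantum circuit $E[\ct] : y \mapsto \QFHE.\Eval(Q(\cdot,y),\ct)$ is a \emph{quantum partitioning circuit} with respect to the predicate $P \coloneqq \QFHE.\Dec(\sk,\cdot)$: since $Q(x,\cdot)$ is pseudo-deterministic, evaluation correctness of $\QFHE$ (\cref{def:eval correct}) guarantees that $\Dec(\sk, E[\ct](y)) = Q(x,y)$ with overwhelming probability, so $P \circ E[\ct]$ is pseudo-deterministic even though $E[\ct]$ itself is a sampling circuit. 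This is exactly the setting $\Pi^\PV$ was built for.

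First I would prove \textbf{correctness}. Fix a family $\{Q_\secp,x_\secp,y_\secp\}$ as in the definition and sample the various keys honestly. By evaluation correctness of $\QFHE$, $P \circ E[\ct]$ is pseudo-deterministic with $P(E[\ct](y)) = Q(x,y)$ w.o.p. Then completeness of $\Pi^\PV$ (\cref{def:PV-completeness}) says that running $\pi \gets \PV.\Prove^{\PV.\mathsf{PK}}(\ket{\mathsf{pp}},E[\ct],y)$ yields, w.o.p., $\PV.\Ver(\PV.\vk,y,\pi) = (\ct_1,\dots,\ct_m)$ with $\PV.\Combine(P(\ct_1),\dots,P(\ct_m)) = P(E[\ct](y)) = Q(x,y)$. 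Since $\mathsf{VK}(y,\pi)$ recomputes exactly $\widetilde{\ct} \coloneqq \QFHE.\Eval(\PV.\Combine,(\ct_1,\dots,\ct_m))$ --- using that $\QFHE.\Eval$ is a deterministic classical functionality when evaluating the deterministic classical functionality $\PV.\Combine$ --- one more application of evaluation correctness gives $\Dec(\sk,\widetilde{\ct}) = \PV.\Combine(P(\ct_1),\dots,P(\ct_m)) = Q(x,y)$, and $\PV\QFHE.\Ver$ accepts because $\widetilde{\ct} = \mathsf{VK}(y,\pi)$ by construction (the evaluator outputs precisely the $\widetilde{\ct}$ that $\mathsf{VK}$ would recompute). \textbf{Security} is immediate: $\PV\QFHE.\Gen$ and $\PV\QFHE.\Enc$ are literally $\QFHE.\Gen$ and $\QFHE.\Enc$, the adversary in \cref{def:PVQFHE} receives only $(\pk,\ct)$, and none of the verification machinery touches $\sk$; so semantic security of $\QFHE$ (\cref{def:semantic-security}) transfers verbatim.

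The main work is \textbf{soundness}, which I would prove by reduction to soundness of $\Pi^\PV$ (\cref{def:PV-soundness}). Given a $\PV\QFHE$ adversary $\sA^{\mathsf{PP}}$ that, with non-negligible probability, outputs $(y,\widetilde{\ct},\pi)$ with $\Ver^{\mathsf{PP}}(y,\widetilde{\ct},\pi) = \top$ and $\Dec(\sk,\widetilde{\ct}) \neq Q(x,y)$, I build a prover $\sB^{\PV.\mathsf{PK},\PV.\Ver[\PV.\vk]}(\ket{\PV.\pk})$ for $\Pi^\PV$ against the circuit $E[\ct]$ and predicate $P = \QFHE.\Dec(\sk,\cdot)$ as follows. $\sB$ samples $(\pk,\sk)$, $\ct \gets \Enc(\pk,x)$ itself, and must simulate $\sA$'s oracle $\mathsf{PP} = (\PV.\mathsf{PK},\mathsf{VK})$: the $\PV.\mathsf{PK}$ part is forwarded to $\sB$'s own oracle, and each query to $\mathsf{VK}(y',\pi')$ is answered by querying $\sB$'s $\PV.\Ver[\PV.\vk]$ oracle on $(y',\pi')$ and post-processing with $\QFHE.\Eval(\PV.\Combine,\cdot)$ --- note $\sB$ knows $\sk$ but in fact does not even need it here, only the public transformation. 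When $\sA$ outputs $(y,\widetilde{\ct},\pi)$, $\sB$ outputs $(y,\pi)$. The acceptance condition $\Ver^{\mathsf{PP}}(y,\widetilde{\ct},\pi) = \top$ means $\mathsf{VK}(y,\pi) = \widetilde{\ct} \neq \bot$, so $\PV.\Ver(\PV.\vk,y,\pi) = (\ct_1,\dots,\ct_m)$ for some tuple and $\widetilde{\ct} = \QFHE.\Eval(\PV.\Combine,(\ct_1,\dots,\ct_m))$. The ``bad'' event $\Dec(\sk,\widetilde{\ct}) \neq Q(x,y)$ then forces $\PV.\Combine(P(\ct_1),\dots,P(\ct_m)) = \Dec(\sk,\widetilde{\ct}) \neq Q(x,y) = P(E[\ct](y))$ --- the last equality being evaluation correctness of $\QFHE$, which contributes only a negligible loss --- so $\sB$ breaks $\Pi^\PV$ soundness with non-negligible probability, a contradiction. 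The step I expect to require the most care is verifying that $\sB$'s simulation of the $\mathsf{VK}$ oracle is perfectly faithful under quantum (superposition) queries and that the $\QFHE.\Eval$ post-processing is genuinely a classical deterministic map applied reversibly in superposition; this is where one must lean on the remark in the construction that $\QFHE.\Eval$ behaves deterministically on classical deterministic functionalities, and one should double-check that $\sB$ makes only polynomially many oracle queries (inherited one-to-one from $\sA$) so that the reduction stays within the QPT regime required by \cref{def:PV-soundness}.
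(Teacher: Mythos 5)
Your proposal is correct and takes essentially the same route as the paper, which proves this theorem in three lines: correctness from QFHE evaluation correctness plus completeness of $\Pi^\PV$, security verbatim from semantic security, and soundness by observing that $\QFHE.\Dec(\sk,\cdot)\circ E[\ct]$ is pseudo-deterministic and that the $\mathsf{VK}$ oracle is just the $\PV.\Ver[\vk]$ oracle with deterministic post-processing. You have simply spelled out the reduction the paper leaves implicit; no gaps.
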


\begin{proof}
Correctness follows immediately from the evaluation correctness of QFHE (\cref{def:eval correct}) and the completeness of $\PV$ (\cref{def:PV-completeness}). Security follows immediately from the semantic security of QFHE (\cref{def:semantic-security}). Soundness follows immediately from the correctness of QFHE (\cref{def:eval correct}) and soundness  of $\PV$ (\cref{def:PV-soundness}), since $\QFHE.\Dec(\sk,\cdot) \circ E[\ct]$ is pseudo-deterministic and the $\mathsf{VK}$ oracle is nothing but the $\PV.\Ver[\vk]$ oracle plus post-processing. 
\end{proof}


\section{Quantum Obfuscation}\label{sec:obfuscation}

\subsection{Construction}

In this section, we construct virtual black-box (VBB) obfuscation for pseudo-deterministic quantum circuits from the following ingredients.

\begin{itemize}
    \item A VBB obfuscator $(\CObf,\CEval)$ for classical circuits (\cref{def:ideal-obfuscation}).
    \item A publicly-verifiable QFHE for pseudo-deterministic circuits in the oracle model $(\Gen,\allowbreak\Enc,\allowbreak\VerGen,\allowbreak\Eval,\allowbreak\Ver,\allowbreak\Dec)$ (\cref{def:PVQFHE}).
\end{itemize}

The construction is given in \cref{fig:QObf2}.

\protocol{Obfuscation scheme $(\QObf,\QEval)$ for pseudo-deterministic quantum circuits
}{Obfuscation for pseudo-deterministic quantum circuits.}{fig:QObf2}{
\begin{itemize}
    \item $\QObf(1^\secp,Q)$:
    \begin{itemize}
        \item Let $U$ be the universal quantum circuit that takes as input the description of a circuit of size $|Q|$ and an input of size $n$, where $n$ is the length of an input to $Q$. Let $D$ be the depth of $U$.
        \item Sample $(\pk,\sk) \gets \Gen(1^\secp,D)$, $\ct \gets \Enc(\pk,Q)$, and $(\ket{\pp},\mathsf{PP}) \gets \VerGen(\ct,U)$.
        \item Let $\mathsf{DK}(x,\widetilde{\ct},\pi)$ be the following functionality. First, run $\Ver^{\mathsf{PP}}(x,\widetilde{\ct},\pi)$. If the output was $\bot$, then output $\bot$, and otherwise output $\Dec(\sk,\widetilde{\ct})$.
        \item Sample $\widetilde{\mathsf{PP}} \gets \CObf(1^\secp,\mathsf{PP})$ and $\widetilde{\mathsf{DK}} \gets \CObf(1^\secp,\mathsf{DK})$.
        \item Output $\widetilde{Q} \coloneqq \left(\ct,\ket{\pp},\widetilde{\mathsf{PP}},\widetilde{\mathsf{DK}}\right)$. 
    \end{itemize}
    \item $\QEval(\widetilde{Q},x)$:
    \begin{itemize}
        \item Parse $\widetilde{Q}$ as $\left(\ct,\ket{\pp},\widetilde{\mathsf{PP}},\widetilde{\mathsf{DK}}\right)$.
        \item Compute $(\widetilde{\ct},\pi) \gets \Eval^{\widetilde{\mathsf{PP}}}(\ct,\ket{\pp},x)$.
        \item Output $b \coloneqq \widetilde{\mathsf{DK}}(x,\widetilde{\ct},\pi)$.
    \end{itemize}
\end{itemize}
}

\begin{theorem}\label{thm:obfuscation}
$(\QObf,\QEval)$ described in \cref{fig:QObf2} is a virtual black-box obfuscator for pseudo-deterministic quantum circuits, satisfying \cref{def:ideal-obfuscation}.
\end{theorem}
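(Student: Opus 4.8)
\textbf{Proof plan for \cref{thm:obfuscation}.}
The plan is to verify the three properties of \cref{def:ideal-obfuscation} (correctness and security) for the construction in \cref{fig:QObf2}, reducing each to a property of the two building blocks: the VBB obfuscator $(\CObf,\CEval)$ for classical circuits and the publicly-verifiable QFHE scheme $(\Gen,\Enc,\VerGen,\Eval,\Ver,\Dec)$ of \cref{def:PVQFHE}. First I would handle correctness. Fix a pseudo-deterministic quantum circuit $Q$ with input length $n$ and an input $x \in \{0,1\}^n$. Unwinding $\QEval(\widetilde{Q},x)$, it computes $(\widetilde{\ct},\pi) \gets \Eval^{\widetilde{\mathsf{PP}}}(\ct,\ket{\pp},x)$ and outputs $\widetilde{\mathsf{DK}}(x,\widetilde{\ct},\pi)$. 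By the correctness of $\CObf$, with overwhelming probability over $\widetilde{\mathsf{PP}},\widetilde{\mathsf{DK}}$, these obfuscated oracles agree with $\mathsf{PP}$ and $\mathsf{DK}$ on all inputs, so I may replace them by the plain functionalities. Then the correctness property of publicly-verifiable QFHE (\cref{def:PVQFHE}) applied to the universal circuit $U$, the plaintext $Q$, and input $x$ guarantees that $\Ver^{\mathsf{PP}}(x,\widetilde{\ct},\pi) = \top$ and $\Dec(\sk,\widetilde{\ct}) = U(Q,x) = Q(x)$ with probability $1-\negl(\secp)$; hence $\mathsf{DK}(x,\widetilde{\ct},\pi) = \Dec(\sk,\widetilde{\ct}) = Q(x)$ as required. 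The pseudo-determinism of $Q$ (hence of $U(Q,\cdot)$) is exactly what lets us invoke \cref{def:PVQFHE}'s correctness, which is stated for pseudo-deterministic $Q_\secp(x_\secp,\cdot)$.

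Next I would prove security. Given a QPT adversary $\sA$ attacking the obfuscation, I need a QPT simulator $\sS$ with only oracle access to $O[Q]$ (the map $x \mapsto Q(x)$) whose output is negligibly close. The strategy is a sequence of hybrids. In the first step, use the security of $\CObf$ (twice, via the VBB simulator for classical circuits) to replace the adversary's access to $\widetilde{\mathsf{PP}}$ and $\widetilde{\mathsf{DK}}$ with oracle access to the underlying functionalities $\mathsf{PP}$ and $\mathsf{DK}$; concretely, $\sA$ together with $(\ct,\ket{\pp})$ becomes an adversary that queries $\mathsf{PP}$ and $\mathsf{DK}$ as oracles, and the VBB simulator for the pair of classical circuits produces an essentially indistinguishable transcript given only oracle access. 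Note that $\mathsf{PP} = (\PV.\mathsf{PK},\mathsf{VK})$ and $\mathsf{VK}$ is just $\PV.\Ver[\vk]$ with deterministic post-processing, while $\mathsf{DK}(x,\widetilde{\ct},\pi)$ runs $\Ver^{\mathsf{PP}}$ and then $\Dec(\sk,\widetilde{\ct})$ — this is the ``verify-then-decrypt'' structure. The second step is the heart of the argument: I want to argue that oracle access to $\mathsf{DK}$ (and $\mathsf{PP}$) together with $\ct = \Enc(\pk,Q)$ reveals nothing about $Q$ beyond its truth table. By the soundness of the publicly-verifiable QFHE (\cref{def:PVQFHE}), any accepting query $(x,\widetilde{\ct},\pi)$ to $\mathsf{DK}$ must satisfy $\Dec(\sk,\widetilde{\ct}) = U(Q,x) = Q(x)$ except with negligible probability; hence I can replace $\mathsf{DK}$'s decryption step by a functionality that, on an accepting proof, returns $Q(x)$ computed via the $O[Q]$ oracle — this is where the truth-table-only leakage comes in. After this replacement, $\sk$ is no longer used by any oracle, so I can invoke the semantic security of QFHE to replace $\ct \gets \Enc(\pk,Q)$ with $\ct \gets \Enc(\pk,0^{|Q|})$ (or any fixed string of the right length). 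At this point the entire view of $\sA$ — $(\ct,\ket{\pp},\mathsf{PP},\text{modified }\mathsf{DK})$ — can be produced by a QPT machine given only $O[Q]$ and $|Q|$: it samples its own keys, encrypts a dummy, runs $\VerGen$, and answers $\mathsf{DK}$/$\mathsf{PP}$ queries using $O[Q]$. This machine is the simulator $\sS$.

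There is one subtlety I must handle carefully in the hybrid where I replace $\mathsf{DK}$'s decryption by an $O[Q]$-based functionality: soundness of \cref{def:PVQFHE} is a \emph{non-adaptive, single-shot} statement — it bounds the probability that an adversary with oracle access to $\mathsf{PP}$ produces a single accepting $(y,\widetilde{\ct},\pi)$ with $\Dec(\sk,\widetilde{\ct}) \neq Q(x,y)$ — whereas $\sA$ makes polynomially many $\mathsf{DK}$ queries, possibly in superposition. So the replacement must be done query-by-query via a hybrid argument over $\sA$'s $\mathsf{DK}$-queries: for the $i$-th query, I change the oracle's behavior on the ``bad'' part of the query superposition (accepting proofs with $\Dec(\sk,\widetilde{\ct}) \neq Q(x)$) to instead return, say, $\bot$ or the correct $Q(x)$; the indistinguishability of consecutive hybrids follows because that bad part has negligible weight by soundness — more precisely, one extracts a single-shot soundness adversary by measuring the $i$-th query. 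I expect this query-by-query reduction from the soundness of publicly-verifiable QFHE, carried out against a superposition-querying adversary, to be the main obstacle; it requires care that the oracles answered in earlier queries (which depend on $\sk$ in the unmodified hybrids) do not interfere with the reduction, and that measuring one query register does not spoil the adversary's later behavior in a way that breaks the extraction. The rest of the argument — correctness, the $\CObf$ replacements, and semantic security — is routine given the stated lemmas and definitions. I would organize the write-up as: (i) correctness; (ii) statement of the hybrid sequence $\Hyb_0,\dots,\Hyb_t$ for security; (iii) the $\CObf$-based hybrids; (iv) the soundness-based query-by-query hybrids; (v) the semantic-security hybrid; (vi) identification of the final hybrid with the simulator.
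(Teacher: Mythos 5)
Your proposal is correct and follows essentially the same route as the paper: correctness via the correctness of $\CObf$ and of the publicly-verifiable QFHE, and security via the hybrid sequence (real) $\to$ (classical-VBB oracle access to $\mathsf{PP},\mathsf{DK}$) $\to$ ($\mathsf{DK}$ answered by verify-then-$O[Q]$, justified by PV-QFHE soundness) $\to$ (encrypt $0^{|Q|}$, justified by semantic security), with the last hybrid being the simulator. The superposition-query subtlety you flag is handled in the paper exactly as you suggest — by measuring a random $\mathsf{DK}$ query to extract a single-shot soundness violation — and your observation that the modified ($\sk$-free) answering can be used for the earlier queries in the reduction resolves the interference concern you raise.
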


\begin{proof}
First, correctness follows immediately from the correctness of the VBB obfuscator (\cref{def:ideal-obfuscation}) and the correctness of the publicly-verifiable QFHE scheme (\cref{def:PVQFHE}). Note that even though the evaluation procedure may include measurements, an evaluator could run coherently, measure just the output bit $b$, and reverse. By Gentle Measurement (\cref{lemma:gentle-measurement}), this implies the ability to run the obfuscated program on any $\poly(\secp)$ number of inputs.

Next, we show security. For any QPT adversary $\{\sA_\secp\}_{\secp \in \bbN}$, we define a simulator $\{\sS_\secp\}_{\secp \in \bbN}$ as follows, where $\{\widetilde{\sA}_\secp\}_{\secp \in \bbN}$ is the simulator for the classical obfuscation scheme $(\CObf,\CEval)$, defined based on $\{\sA_\secp\}_{\secp \in \bbN}$.
\begin{itemize}
    \item Sample $(\pk,\sk) \gets \Gen(1^\secp,D)$, $\ct \gets \Enc(\pk,0^{|Q|})$, and $(\ket{\pp},\mathsf{PP}) \gets \VerGen(\ct,U)$.
    \item Run $\widetilde{\sA}_\secp^{\mathsf{PP},\mathsf{DK}}(\ct,\ket{\pp})$, answering $\mathsf{PP}$ calls honestly, and $\mathsf{DK}$ calls as follows. 
    \begin{itemize}
        \item Take $(x,\widetilde{\ct},\pi)$ as input.
        \item Run $\Ver^{\mathsf{PP}}(x,\widetilde{\ct},\pi)$. If the output was $\bot$ then output $\bot$.
        \item Otherwise, forward $x$ to the external oracle $O[Q]$, and return the result $b = O[Q](x)$.
    \end{itemize}
    \item Output $\widetilde{\sA}_\secp$'s output.
\end{itemize}

Now, for any circuit $Q$, we define a sequence of hybrids.

\begin{itemize}
    \item $\cH_0$: Sample $\widetilde{Q} \gets \QObf(1^\secp,Q)$ and run $\sA_\secp(1^\secp,\widetilde{Q})$.
    \item $\cH_1$: Sample $(\ct,\ket{\pp},\mathsf{PP},\mathsf{DK})$ as in $\QObf(1^\secp,Q)$, and run $\widetilde{\sA}_\secp^{\mathsf{PP},\mathsf{DK}}(\ct,\ket{\pp})$.
    \item $\cH_2$: Same as $\cH_1$, except that calls to $\mathsf{DK}$ are answered as in the description of $\sS_\secp$.
    \item $\cH_3$: Same as $\cH_2$, except that we sample $(\ct,\ket{\pp},\mathsf{PP}) \gets \Enc(\pk,0^{|Q|},U)$. This is $\sS_\secp$.
\end{itemize}

We complete the proof by showing the following.

\begin{itemize}
    \item $|\Pr[\cH_0 = 1] - \Pr[\cH_1 = 1]| = \negl(\secp)$. This follows from the security of the classical obfuscation scheme $(\CObf,\CEval)$.
    \item $|\Pr[\cH_1 = 1] - \Pr[\cH_2 = 1]| = \negl(\secp)$. Suppose otherwise. Then there must exist some query made by $\widetilde{\sA}_\secp$ to $\mathsf{DK}$ with noticeable amplitude on $(x,\widetilde{\ct},\pi)$ such that $\mathsf{DK}$ does not return $\bot$ but $\Dec(\sk,\widetilde{\ct}) \neq Q(x)$. Thus, we can measure a random one of the $\poly(\secp)$ many queries made by $\widetilde{\sA}_\secp$ to obtain such an $(x,\widetilde{\ct},\pi)$, which violates the soundness of the publicly-verifiable QFHE scheme (\cref{def:PVQFHE}).
    \item $|\Pr[\cH_2 = 1] - \Pr[\cH_3 = 1]| = \negl(\secp)$. Since $\sk$ is no longer used in $\cH_2$ to respond to $\mathsf{DK}$ queries, this follows directly from the security of the publicly-verifiable QFHE scheme (\cref{def:PVQFHE}).
\end{itemize}

\end{proof}

\subsection{Application: Copy-protection}\label{subsec:copy-protection}

We sketch an application of our obfuscation scheme to copy-protection of \emph{quantum} programs. Let $(\QObf,\QEval)$ be a VBB obfuscation scheme for pseudo-deterministic quantum circuits, and let $F_k$ be a pseudo-random function secure against superposition-query attacks. In~\cref{fig:copy-protection}, we describe \cite{C:ALLZZ21}'s construction of a software copy-protection scheme, generalized to copy-protect pseudo-deterministic quantum circuits.

\protocol{Quantum copy-protection scheme \cite{C:ALLZZ21}}{A description of the quantum copy protection scheme from \cite{C:ALLZZ21}, where the Generate algorithm may now take as input the description of a pseudo-deterministic quantum functionality.}{fig:copy-protection}{
\begin{itemize}
    \item $\mathsf{Setup}(1^\secp) \to \sk$: 
    \begin{itemize}
        \item Take as input the security parameter $1^\secp$.
        \item Sample a uniformly random subspace $S < \bbF_2^\secp$ of dimension $\secp/2$.
        \item Sample a PRF key $k \gets \{0,1\}^\secp$.
        \item Out $\sk \coloneqq (S,k)$.
    \end{itemize}
    \item $\mathsf{Generate}(\sk,Q) \to \widehat{Q}$: 
    \begin{itemize}
        \item Take as input $\sk = (S,k)$ and the description of a pseudo-deterministic quantum circuit $Q$.
        \item Let $O_1$ be the functionality that takes $(x,v)$ as input and outputs $Q(x) \oplus F_{k}(x)$ if $v \in S \setminus \{0\}$, and $\bot$ otherwise.
        \item Let $O_2$ be the functionality that takes $(x,v)$ as input and outputs $F_{k}(x)$ if $v \in S^\bot \setminus \{0\}$, and $\bot$ otherwise.
        \item Sample $\widetilde{O}_1 \gets \QObf(1^\secp,O_1)$ and $\widetilde{O}_2 \gets \QObf(1^\secp,O_2)$
        \item Output $\widehat{Q} \coloneqq \left(\ket{S},\widetilde{O}_1,\widetilde{O}_2\right)$.
    \end{itemize}
    \item $\mathsf{Compute}(\widehat{Q},x) \to y$:
    \begin{itemize}
        \item Parse $\widehat{Q}$ as $\ket{S},\widetilde{O}_1,\widetilde{O}_2$, where $\ket{S}$ is on register $\regS$.
        \item Apply $\QEval(\widetilde{O}_1,\cdot)$ coherently to register $\regS$, measure the output to obtain $y_1$, and reverse the computation of $\QEval(\widetilde{O}_1,\cdot)$.
        \item Apply $H^{\otimes \secp}$ to register $\regS$, apply $\QEval(\widetilde{O}_2,\cdot)$ coherently to register $\regS$, measure the output to obtain $y_2$, reverse the computation of $\QEval(\widetilde{O}_2,\cdot)$, and finally apply $H^{\otimes \secp}$ to register $\regS$.
        \item Output $y \coloneqq y_1 \oplus y_2$.
    \end{itemize}

\end{itemize}
}

We refer the reader to \cite{C:ALLZZ21} for definitions of (generalized) quantum unlearnable function families and anti-piracy of quantum copy-protection schemes. Here, we observe that if $Q$ is a pseudo-deterministic circuit, then both $O_1$ and $O_2$ are as well, and thus they can be obfuscated by our scheme. Finally, it is straightforward to see that any classical functionality $f$ sampled from a distribution $\cF$ can be replaced with a pseudo-deterministic quantum functionality $Q$ sampled from a distribution $\cQ$ in the definitions and proofs from \cite{C:ALLZZ21}. Thus, we can generalize their main theorem as follows.

\begin{theorem}
(Corollary of \cite[Theorem 4]{C:ALLZZ21} and \cref{thm:obfuscation}) Let $\cQ$ be a family of pseudo-deterministic quantum ciruits that is $\gamma$-quantum-unlearnable with respect to distribution $\cD$ (where $\gamma$ is a non-negligible function of $\secp$). Then \proref{fig:copy-protection} is a copy protection scheme for $\cQ,\cD$ that has $(\gamma(\secp)-1/\poly(\secp))$-anti-piracy security, for any polynomial $\poly(\secp)$.
\end{theorem}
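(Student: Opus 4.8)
The plan is to follow the blueprint of \cite{C:ALLZZ21} essentially verbatim, observing that the only place where their construction uses obfuscation is to obfuscate the two functionalities $O_1$ and $O_2$, and that our Theorem~\ref{thm:obfuscation} provides exactly the obfuscation guarantee their proof needs, now applied to pseudo-deterministic quantum circuits rather than classical circuits. First I would verify that $O_1$ and $O_2$ are pseudo-deterministic quantum circuits whenever $Q$ is: $O_1$ computes $Q(x) \oplus F_k(x)$ when $v \in S \setminus \{0\}$ and $\bot$ otherwise, so for each fixed input $(x,v)$ the output is a fixed bit string with overwhelming probability (the only randomness comes from evaluating $Q$, which is pseudo-deterministic, and $F_k$ is deterministic); the same holds for $O_2$. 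Hence both are in the class that $(\QObf,\QEval)$ can handle, and \cref{fig:copy-protection} is a well-defined scheme.

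Next I would set up the anti-piracy security reduction. Given a pirate that, on input $\widehat{Q} = (\ket{S}, \widetilde{O}_1, \widetilde{O}_2)$, produces two freeloaders each correctly computing $Q$ on a $\gamma(\secp)$-fraction of inputs drawn from $\cD$ with probability noticeably above $\gamma(\secp) - 1/\poly(\secp)$, I would invoke the VBB security of $(\QObf,\QEval)$ (\cref{def:ideal-obfuscation}) to replace $\widetilde{O}_1$ and $\widetilde{O}_2$ with calls to simulators $\sS_1^{O[O_1]}, \sS_2^{O[O_2]}$ that have only oracle access to the functionalities $O[O_1]$ and $O[O_2]$. This is the step where our main theorem is used: after this replacement, the pirate's view is generated by an algorithm whose only access to $Q$ is through these two oracles, each of which in turn only reveals $Q(x) \oplus F_k(x)$ (resp. $F_k(x)$) gated on subspace membership. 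At this point the analysis becomes information-theoretically identical to the classical setting of \cite{C:ALLZZ21}: one argues that quantum-unlearnability of $\cQ$ together with pseudorandomness of $F_k$ against superposition queries implies the gated oracles leak nothing useful beyond black-box access, and then the monogamy-of-entanglement property of the subspace state $\ket{S}$ (the core technical ingredient of \cite{C:ALLZZ21}, which we import wholesale) shows that two independent freeloaders cannot both simultaneously use the two oracles, contradicting the assumed success probability.

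The last bookkeeping step is to confirm that all the definitions and lemmas in \cite{C:ALLZZ21} that are stated for classical functionality families $(\cF,\cD)$ go through mutatis mutandis when $\cF$ is replaced by a pseudo-deterministic quantum family $\cQ$ with the same output behavior — in particular the definition of $\gamma$-quantum-unlearnability and the anti-piracy experiment only ever refer to the input/output truth table of the sampled function, not to how it is computed, so the substitution is transparent. Assembling these pieces yields the stated corollary, with the $1/\poly(\secp)$ loss in the anti-piracy parameter coming precisely from the $\negl(\secp)$ simulation error of the obfuscator amplified across the polynomially many oracle queries the pirate and freeloaders make, together with whatever slack \cite{C:ALLZZ21}'s own reduction already incurs.

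The main obstacle I anticipate is not any single hard calculation but rather making the "transparent substitution" genuinely rigorous: one must check that replacing a classical obfuscated circuit with a \emph{quantum} obfuscated circuit inside \cite{C:ALLZZ21}'s scheme does not disturb the parts of their proof that manipulate the obfuscated programs coherently (the \textsf{Compute} algorithm already runs $\QEval$ coherently and reverses it, which is fine by Gentle Measurement, \cref{lemma:gentle-measurement}), and that the simulator-based VBB replacement can be performed while the pirate still holds the subspace state $\ket{S}$ entangled with its workspace. Since VBB security (\cref{def:ideal-obfuscation}) is stated for arbitrary QPT adversaries and composes in the standard way, this should go through, but it is the point in the argument most deserving of care, and I would write it out as an explicit hybrid argument paralleling the hybrids $\cH_0,\dots,\cH_3$ in the proof of \cref{thm:obfuscation}.
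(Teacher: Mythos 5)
Your proposal is correct and takes essentially the same route as the paper: the paper's own argument for this corollary is just the observation that $O_1$ and $O_2$ inherit pseudo-determinism from $Q$ (so \cref{thm:obfuscation} applies to them) and that the definitions and proofs of \cite{C:ALLZZ21} only reference the input/output behavior of the sampled functionality, so substituting a pseudo-deterministic quantum family is transparent. Your additional care about coherent evaluation via Gentle Measurement and the hybrid structure of the VBB replacement is a more detailed write-up of the same argument, not a different one.
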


\subsection{Application: Functional encryption}\label{subsec:FE}

We sketch an application of our obfuscation scheme to functional encryption for pseudo-deterministic quantum functionalities. Let $(\QObf,\QEval)$ be a VBB obfuscation scheme for pseudo-deterministic quantum circuits,\footnote{For this application, we technically only require the weaker notion of indistinguishability obfuscation (\cref{def:iO}).} let $(\Gen,\Enc,\Dec)$ be a (post-quantum) public-key encryption scheme, and let $(\Setup,\Prove,\Verify)$ be a (post-quantum) statistically simulation sound non-interactive zero-knowledge proof system (SSS-NIZK). We refer the reader to \cite{SIAMCOMP:GGHRSW16} for preliminaries on SSS-NIZK, and for definitions of functional encryption.

Consider the following construction of functional encryption for pseudo-deterministic quantum functionalities.

\begin{itemize}
    \item $\FE.\Setup(1^\secp)$: Sample $(\pk_1,\sk_1) \gets \Gen(1^\secp)$, $(\pk_2,\sk_2) \gets \Gen(1^\secp)$, $\crs \gets \Setup(1^\secp)$, and output $\pp \coloneqq (\pk_1,\pk_2,\crs)$ and $\msk \coloneqq \sk_1$.
    \item $\FE.\KeyGen(\msk,Q)$: On input the master secret key $\msk$ and the description of a pseudo-deterministic quantum circuit $Q$, define the following pseudo-deterministic quantum circuit $C[Q,\crs,\sk_1]$.
    \begin{itemize}
        \item Take $(\ct_1,\ct_2,\pi)$ as input.
        \item Check that $\pi$ is a valid SSS-NIZK proof under $\crs$ that there exists $(m,r_1,r_2)$ such that $\ct_1 = \Enc(\pk_1,m;r_1)$ and $\ct_2 = \Enc(\pk_2,m;r_2)$.
        \item If so, output $Q(\Dec(\sk_1,\ct_1))$, and otherwise output $\bot$.
    \end{itemize}
    Finally, sample and output $\sk_Q \gets \QObf(1^\secp,C[Q,\crs,\sk_1])$.
    \item $\FE.\Enc(\pp,m)$: Sample $r_1,r_2 \gets \{0,1\}^\secp$, compute $\ct_1 \coloneqq \Enc(\pk_1,m;r_1)$, $\ct_2 \coloneqq \Enc(\pk_2,m;r_2)$, compute a SSS-NIZK proof $\pi$ that there exists $(m,r_1,r_2)$ such that $\ct_1 = \Enc(\pk_1,m;r_1)$ and $\ct_2 = \Enc(\pk_2,m;r_2)$, and output $\ct \coloneqq (\ct_1,\ct_2,\pi)$.
    \item $\FE.\Dec(\sk_Q,\ct)$: Run the obfuscated program $\sk_Q$ on input $\ct$ to obtain the output.
\end{itemize}

It is straightforward to extend the definitions and proofs in Section 6 of \cite{SIAMCOMP:GGHRSW16} to consider functional encryption and obfuscation of pseudo-deterministic quantum circuits. As a result, we obtain the following theorem.

\begin{theorem}[Corollary of \cite{SIAMCOMP:GGHRSW16} Section 6 and \cref{thm:obfuscation}] 
The above construction is a functional encryption scheme satisfying indistinguishability security for the class of polynomial-size pseudo-deterministic quantum functionalities.
\end{theorem}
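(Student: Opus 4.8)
The final statement to prove is that the $\FE$ construction built from $\QObf$, a post-quantum public-key encryption scheme, and an SSS-NIZK is a functional encryption scheme with indistinguishability security for pseudo-deterministic quantum functionalities. The plan is to essentially lift the classical proof of \cite{SIAMCOMP:GGHRSW16} (Section 6) to the quantum setting, observing that every step of their argument goes through with only syntactic changes once we know that (i) the circuits being obfuscated, namely $C[Q,\crs,\sk_1]$, are pseudo-deterministic quantum circuits whenever $Q$ is, and (ii) we have an obfuscator (in fact only $\iO$, per the footnote) for this class, supplied by \cref{thm:obfuscation}.

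First I would verify item (i): since $Q$ is pseudo-deterministic, and the wrapper circuit $C[Q,\crs,\sk_1]$ performs a deterministic classical NIZK check, a deterministic classical decryption $\Dec(\sk_1,\cdot)$, and then feeds the result into $Q$ (outputting $\bot$ on check failure), the composition is pseudo-deterministic: on every fixed input $(\ct_1,\ct_2,\pi)$ the classical preprocessing is fixed, so the output is whatever $Q$ outputs on the fixed decrypted message (up to the negligible error of $Q$), or $\bot$. Hence $\QObf$ from \cref{fig:QObf2} applies. Next I would set up the standard hybrid argument over an adversary's challenge: in the real game the challenge ciphertext encrypts $m_0$ under both $\pk_1,\pk_2$ with an honest proof; we move to a hybrid where $\ct_2$ encrypts $m_1$ and the proof is simulated; here we use the soundness/simulation-soundness of the SSS-NIZK to argue that, under the simulated $\crs$, no adversary-produced proof can certify an inconsistent pair, so that $C[Q,\crs,\sk_1]$ and $C[Q,\crs,\sk_2]$ (the variant decrypting with $\sk_2$) are functionally equivalent on all inputs. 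Then by $\iO$ security (\cref{def:iO}, which follows from \cref{def:ideal-obfuscation}) the obfuscations of these two circuits are indistinguishable, letting us switch the key-generation circuit to decrypt with $\sk_2$. After that switch, semantic security of the first encryption scheme lets us change $\ct_1$ to encrypt $m_1$, and we unwind the hybrids symmetrically to reach the game where the challenge encrypts $m_1$. Since $Q(m_0) = Q(m_1)$ for all function keys queried (the admissibility condition of FE), all these games produce the same output distribution from the adversary's view, completing the argument.

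The main obstacle I anticipate is making the functional-equivalence step fully rigorous in the quantum-circuit setting. Classically, ``functionally equivalent'' means the two circuits compute the same function, and $\iO$ demands exactly this. Here the two circuits $C[Q,\crs,\sk_1]$ and $C[Q,\crs,\sk_2]$ are only equal as pseudo-deterministic functionalities — on each input they agree on the overwhelming-probability output bit — which is precisely the notion of functional equivalence that \cref{def:iO} is stated for (it quantifies over ``functionally equivalent families of pseudo-deterministic circuits''), so this should be exactly what is needed, but one must be careful that the set of inputs on which statistical simulation soundness fails is empty (not just negligible) under the simulated $\crs$, as is guaranteed by \emph{statistical} simulation soundness; otherwise the circuits would differ on some inputs and $\iO$ would not apply. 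A secondary subtlety is that $\FE.\Dec$ must be runnable coherently and reversibly on the adversary's (possibly entangled) ciphertext register; as in the proof of \cref{thm:obfuscation}, this follows from running $\QEval$ coherently, measuring only the output, and uncomputing, invoking Gentle Measurement (\cref{lemma:gentle-measurement}). Beyond these points, the argument is a direct transcription of \cite{SIAMCOMP:GGHRSW16}, and I would simply cite that section for the combinatorial details of the hybrid bookkeeping rather than reproducing them.
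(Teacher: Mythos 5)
Your proposal is correct and follows essentially the same route as the paper, which itself only observes that the construction and hybrid argument of \cite{SIAMCOMP:GGHRSW16} Section 6 carry over verbatim once one notes that $C[Q,\crs,\sk_1]$ is pseudo-deterministic and that \cref{thm:obfuscation} supplies the needed (indistinguishability) obfuscator for that class. Your additional care about the pseudo-deterministic notion of functional equivalence in \cref{def:iO} and about coherent, reversible evaluation via Gentle Measurement matches the paper's implicit treatment.
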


\ifsubmission
\bibliographystyle{splncs04}
\else
\bibliographystyle{alpha}
\fi

\bibliography{abbrev3,custom,crypto}

\appendix

\section{Remaining Proofs from \cref{subsec:com-def}}\label{sec:appendix-string}

\begin{lemma}
Any Pauli functional commitment that satisfies \emph{single-bit binding with public decodability} also satisfies \emph{string binding with public decodability}.
\end{lemma}

\begin{proof}

For this proof, we will need a couple of different binding definitions, as well as a couple of imported theorems.

\begin{definition}[Collapse binding]
A Pauli functional commitment $(\Gen,\allowbreak\Com,\allowbreak\OpenZ,\allowbreak\OpenX,\allowbreak\DecZ,\allowbreak\DecX)$ satisfies \emph{collapse binding} if the following holds. For any adversary $\sA \coloneqq \{(\sC_\secp,\sU_\secp)\}_{\secp \in \bbN}$, where each of $\sC_\secp$ and $\sU_\secp$ are oracle-aided quantum operations that make at most $\poly(\secp)$ oracle queries, define the experiment $\Exp^\sA_\CB(\secp)$ as follows.
\begin{itemize}
    \item Sample $\dk,\ket{\ck},\cCK \gets \Gen(1^\secp)$.
    \item Run $\sC_\secp^{\cCK,\DecZ[\dk],\DecX[\dk]}(\ket{\ck})$ until it outputs a commitment $c$ and a state on registers $(\regB,\regU,\regA)$.
    \item Sample $b \gets \{0,1\}$. If $b = 0$, do nothing, and otherwise measure $(\regB,\regU)$ with $\{\Pi_{\dk,c,0},\Pi_{\dk,c,1}\}$.\footnote{These projectors are defined in 
    \cref{def:poc-single-bit-binding}.}
    \item Run $\sU_\secp^{\cCK,\DecZ[\dk]}(\regB,\regU,\regA)$ until it outputs a bit $b'$. The experiment outputs 1 if $b = b'$.
\end{itemize}

We say that $\sA$ is \emph{valid} if the state on $(\regB,\regU)$ output by $\sC_\secp$ is in the image of $\Pi_{\dk,c,0} + \Pi_{\dk,c,1}$. Then, it must hold that for all valid adversaries $\sA$, 
\[\left|\Pr\left[\Exp_\CB^\sA(\secp) = 1\right]-\frac{1}{2}\right| = \negl(\secp).\]

\end{definition}

\begin{definition}[Unique message binding]
A Pauli functional commitment $(\Gen,\allowbreak\Com,\allowbreak\OpenZ,\allowbreak\OpenX,\allowbreak\DecZ,\allowbreak\DecX)$ satisfies \emph{unique message binding} if for any polynomial $m(\secp)$ and any adversary $\{(\sC_\secp,\sU_\secp)\}_{\secp \in \bbN}$, where each of $\sC_\secp$ and $\sU_\secp$ are oracle-aided quantum operations that make at most $\poly(\secp)$ oracle queries, the following experiment outputs 1 with probability $\negl(\secp)$.
\begin{itemize}
    \item Sample $\{\dk_i,\ket{\ck_i},\cCK_i \gets \Gen(1^\secp)\}_{i \in [m]}$.
    \item Run $\sC_\secp^{\bCK,\DecZ[\bdk],\DecX[\bdk]}(\ket{\bck})$ until it outputs a commitment $\bc \coloneqq (c_1,\dots,c_m)$, a message $x_1 \in \{0,1\}^m$, and a state on registers $(\regB_1,\regU_1,\dots,\regB_m,\regU_m,\regA)$.
    \item For each $i \in [m]$, apply $\Pi_{\dk_i,c_i,x_{1,i}}$ to $(\regB_i,\regU_i)$ and abort and output 0 if this projection rejects.
    \item Run $\sU_\secp^{\bCK,\DecZ[\bdk]}(\regB_1,\regU_1,\dots,\regB_m,\regU_m,\regA)$ until it outputs a message $x_2 \in \{0,1\}^m$, and a state on registers $(\regB_1,\regU_1,\dots,\regB_m,\regU_m)$. If $x_1 = x_2$, abort and output 0.
    \item For each $i \in [m]$, apply $\Pi_{\dk_i,c_i,x_{2,i}}$ to $(\regB_i,\regU_i)$ and abort and output 0 if this projection rejects. Otherwise, output 1.
\end{itemize}
\end{definition}

\begin{importedtheorem}[\cite{9996877}]\label{impthm:LMS}
Any commitment that satisfies collapse binding also satisfies unique message binding.
\end{importedtheorem}

\begin{importedtheorem}[\cite{cryptoeprint:2022/786}]\label{impthm:DS}
Let $\sD$ be a projector, $\Pi_0,\Pi_1$ be orthogonal projectors, and $\ket{\psi} \in \mathsf{Im}\left(\Pi_0+\Pi_1\right)$. Then,

\[\|\Pi_1\sD\Pi_0\ket{\psi}\|^2 + \|\Pi_0\sD\Pi_1\ket{\psi}\|^2 \geq \frac{1}{2}\left(\|\sD\ket{\psi}\|^2 - \left(\|\sD\Pi_0\ket{\psi}\|^2 + \|\sD\Pi_1\ket{\psi}\|^2 \right)\right)^2.\]
\end{importedtheorem}

Given these imported theorems, the proof of our lemma is quite straightforward.

\begin{itemize}
    \item First, we establish using \cref{impthm:DS} that any Pauli functional commitment that satisfies single-bit binding also satisfies collapse binding. To see this, suppose there exists an adversary $(\sC,\sU)$ that breaks collapse binding, let $\Pi_0 = \Pi_{\dk,c,0}$, $\Pi_1 = \Pi_{\dk,c,1}$, let $\sD$ be a projective implementation of $\sU^{\cCK,\DecZ[\dk]}$, and let $\ket{\psi}$ be the state of the collapse binding experiment that is output by $\sC^{\cCK,\DecZ[\dk],\DecX[\dk]}$. Then the RHS of \cref{impthm:DS} is half the squared advantage of the adversary in the collapse binding game. This implies that at least one of the terms on the LHS is non-negligible, which immediately implies that this adversary can be used to break the single-bit binding game.
    \item Next, appealing to \cref{impthm:LMS}, we see that any Pauli functional commitment that satisfies single-bit binding also satisfies unique message binding.
    \item Finally, suppose there is a Pauli functional commitment that is single-bit binding, but there exists an adversary that breaks the string binding of this commitment for some pair of disjoint sets $W_0,W_1$. We define an experiment where we insert a measurement of $\DecZ(\bdk,\bc,\cdot)$ applied to the state $\Pi_{\bdk,\bc,W_0}\ket{\psi}$, which by definition will return some string $x_0 \in W_0$. By the collapse binding of the commitment, inserting this measurement will only have a negligible affect on the experiment. But now, since $W_0$ and $W_1$ are disjoint sets, this adversary breaks the unique message binding of the commitment. This completes the proof.
\end{itemize}

\end{proof}
\section{Remaining Proofs from \cref{subsec:binding-proof}}\label{appendix:inner-product}

In this appendix, we prove the following theorem.

\begin{theorem}\label{thm:inner-product}
Let $n,m,d \in \bbN, \epsilon \in (0,1/8)$ be such that $d \geq 2$ and $n-d+1 > 10\log(1/\epsilon)+6$. Let $\sU^{\regX,\regY}$ be any $(2^{n+m})$-dimensional unitary, where register $\regX$ is $2^n$ dimensions and register $\regY$ is $2^m$ dimensions. Let $\cA$ be the set of $d$-dimensional balanced affine subspaces $A = A_0 \cup A_1$ of $\bbF_2^n$, where $A_0$ is the affine subspace of vectors in $A$ that start with 0 and $A_1$ is the affine subspace of vectors in $A$ that start with 1. For any $A = A_0 \cup A_1$, let 

\[\Pi_{A_0} \coloneqq \sum_{v \in A_0}\ketbra{v}{v}^{\regX} \otimes \bbI^{\regY}, ~~~ \Pi_{A_1} \coloneqq \sU^\dagger\left(\sum_{v \in A_1}\ketbra{v}{v}^{\regX} \otimes \bbI^{\regY}\right)\sU.\]

Let $\cR$ be the set of pairs $(A,B)$ of $d$-dimensional affine subspaces of $\bbF_2^n$ such that $\dim(A_0 \cap B_0) = d-2$ and $\dim(A_1 \cap B_1) = d-2$. Then for any set of states $\{\ket{\psi_A}\}_A$ such that for all $A \in \cA$, $\ket{\psi_A} \in \mathsf{Im}(\Pi_{A_0})$, and $\|\Pi_{A_1}\ket{\psi_A}\| \geq \epsilon$, 
\[\E_{(A,B) \gets \cR}[|\braket{\psi_A | \psi_B}|] < \frac{1}{2}-\epsilon^{13}.\]
\end{theorem}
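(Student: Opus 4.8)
The plan is to proceed by contradiction: suppose there is a collection $\{\ket{\psi_A}\}_A$ with $\ket{\psi_A}\in\mathsf{Im}(\Pi_{A_0})$, $\|\Pi_{A_1}\ket{\psi_A}\|\geq\epsilon$, and $\E_{(A,B)\gets\cR}[|\braket{\psi_A|\psi_B}|]\geq 1/2-\epsilon^{13}$, and derive a packing bound contradiction via a Welch-type inequality. The first step is to understand what the two constraints on each individual $\ket{\psi_A}$ force. Since $\ket{\psi_A}$ lies in the image of $\Pi_{A_0}$ (a standard-basis subspace projector on register $\regX$), it is supported on basis vectors $v\in A_0$; the condition $\|\Pi_{A_1}\ket{\psi_A}\|\geq\epsilon$ says that $\sU\ket{\psi_A}$ has at least $\epsilon$-weight on basis vectors $v\in A_1$. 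I would first distill from these facts a statement purely about inner products $\braket{\psi_A|\psi_B}$ for $(A,B)\in\cR$: the key point is that when $\dim(A_0\cap B_0)=d-2$, the affine subspaces $A_0$ and $B_0$ are ``spread out'' — $A_0\cup B_0$ spans an affine space one dimension larger — so one cannot have too many states, each supported on a small piece $A_0$, all pairwise close. Symmetrically the $\Pi_{A_1}$ condition (pulled through $\sU$) constrains the pieces $A_1$.

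The second, and I expect the main, step is to combine the two constraints quantitatively. Each constraint alone only caps pairwise inner products at roughly $1/2$ on average (that is exactly the value $\braket{A|B}=1/2$ achieved by the uniform superposition states), so neither is enough. The trick must be that the ``$A_0$-support'' structure and the ``$\sU$-then-$A_1$-support'' structure pull against each other: a state that is heavily concentrated so as to keep $\braket{\psi_A|\psi_B}$ large on the $A_0$-side must, via $\sU$, disperse on the $A_1$-side, and the $\epsilon$ lower bound on the $A_1$-weight prevents the ``cheating'' configuration. I would formalize this by looking at the component of $\ket{\psi_A}$ inside $\mathsf{Im}(\Pi_{A_1})$ — call it $\ket{\phi_A} := \Pi_{A_1}\ket{\psi_A}/\|\Pi_{A_1}\ket{\psi_A}\|$ — and bounding $\E_{(A,B)}[|\braket{\phi_A|\phi_B}|]$ from below using the assumed bound on $\E[|\braket{\psi_A|\psi_B}|]$ together with $\|\Pi_{A_1}\ket{\psi_A}\|\geq\epsilon$ and the fact that $\ket{\psi_A}-\ket{\phi_A}$ lies in $\mathsf{Im}(\bbI-\Pi_{A_1})$. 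This is where the large exponent $\epsilon^{13}$ in the statement comes from: each time we pass between $\ket{\psi_A}$ and $\ket{\phi_A}$ or apply Cauchy--Schwarz/triangle inequalities, we lose polynomial factors in $\epsilon$, and the bookkeeping needs enough slack to still beat $1/2$ at the end.

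The third step is the geometric/packing finale. We now have a family of unit vectors — think of $\{\ket{\psi_A}\}$ or a suitably normalized family — that are each supported (in an appropriate basis, possibly after applying $\sU$ to one side) on an affine subspace of dimension $d-1$ inside $\bbF_2^{n-1}$, with the relation $\cR$ guaranteeing that the pairwise ``overlap supports'' $A_0\cap B_0$ have codimension exactly $1$ inside each. The number of distinct $A_0$ is $\Theta(2^{(n-d)(d-1)})$ or so, and each pair in $\cR$ has inner product bounded below; the Welch bound \cite{1055219} caps the number of vectors in a Hilbert space of a given dimension whose pairwise inner products are bounded away from the maximal value. I would set up the Welch bound against the effective dimension of the span of the relevant states — which is at most $\binom{\text{something}}{\leq}2^{d-1}\cdot 2^m$ or bounded by the number of standard-basis vectors any $\ket{\psi_A}$ or $\sU\ket{\psi_A}$ can touch — and show that the condition $n-d+1>10\log(1/\epsilon)+6$ makes the Welch count too large to be consistent with the lower bound $\E[|\braket{\phi_A|\phi_B}|]\geq\text{(something)}-O(\epsilon^{13})$, a contradiction. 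The hard part throughout will be the middle step: making the ``$A_0$-concentration trades off against $A_1$-dispersion through $\sU$'' intuition into a clean inequality, since $\sU$ is adversarial and we have essentially no control over it except that it is unitary — so the argument must go through the two projectors $\Pi_{A_0}$ and $\Pi_{A_1}=\sU^\dagger(\cdots)\sU$ abstractly, using only their rank/geometry relative to the relation $\cR$, which is precisely why a Welch-bound (frame-theoretic) argument, rather than a direct query-counting one, seems to be the right tool.
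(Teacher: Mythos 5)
Your high-level shape (contradiction, then a packing/Welch argument) matches the paper, but the specific mechanism you propose for the ``main'' middle step does not work, and it is precisely the step where all the content lives. You suggest normalizing the $\Pi_{A_1}$-component, $\ket{\phi_A} \coloneqq \Pi_{A_1}\ket{\psi_A}/\|\Pi_{A_1}\ket{\psi_A}\|$, and lower-bounding $\E[|\braket{\phi_A|\phi_B}|]$ from the hypothesis $\E[|\braket{\psi_A|\psi_B}|]\geq 1/2-\epsilon^{13}$. This cannot succeed: the guarantee is only $\|\Pi_{A_1}\ket{\psi_A}\|\geq\epsilon$, so the $\Pi_{A_1}$-component may have norm exactly $\epsilon$ and then contributes at most $O(\epsilon^2)$ to $\braket{\psi_A|\psi_B}$; the largeness of the latter tells you nothing about the renormalized components. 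The way the $\epsilon$ lower bound actually bites is different: for a \emph{fixed} $A_0$, the projectors $\Pi_{A_1}$ over all $A_1$ with $(A_0,A_1)\in\cA$ are \emph{mutually orthogonal}, so a single unit vector can have $\epsilon$-overlap with at most $O(1/\epsilon^2)$ of them. The whole point of the argument is to manufacture a situation where exponentially many states would have to be (nearly) the same vector while overlapping distinct orthogonal $\Pi_{A_1}$'s.

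Getting to that situation requires structural steps your proposal omits. The paper first reduces to $d=2$ by observing that each $(A,B)\in\cR$ determines a common $(d-2)$-dimensional subspace $S$ with $A_0\cap B_0$ a coset of $S$, partitioning $\cR$ by $S$, and quotienting. In the two-dimensional case it decomposes each state by its two-point support $A_0=\{v_0,w_0\}$ as $\ket{\psi_A}=\alpha_A^{v_0}\ket{v_0}\ket{\phi_A^{v_0}}+\alpha_A^{w_0}\ket{w_0}\ket{\phi_A^{w_0}}$ (note: these $\ket{\phi}$'s come from the $A_0$-support decomposition, not from $\Pi_{A_1}$), then uses the fact that every $(A,B)\in\cR$ extends to a triangle $(A,B),(B,C),(C,A)\in\cR$ together with a robust three-vector inequality to show the amplitudes are nearly $1/\sqrt{2}$ and the residual states nearly coincide for most related pairs. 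A counting argument over the graph of $\cR$ then locates one $A_0^*$ and a fixed rank-two projector $\Pi^*$ such that about $2^{n-2}$ of the states $\ket{\psi_{(A_0^*,A_1)}}$ are within $O(\epsilon^3)$ of $\mathsf{Im}(\Pi^*)$. Only now do orthogonality of the $\Pi_{A_1}$'s (bounding each near-identical cluster by $O(1/\epsilon^2)$) and the Welch bound applied \emph{in dimension 2} (bounding the number of clusters by $1/\epsilon^8$) give $2^{n-2}\leq 16/\epsilon^{10}$, contradicting $n>10\log(1/\epsilon)+6$. Your Welch-bound step, aimed at a vague ``effective dimension of the span,'' has no concrete target without this preparatory concentration argument.
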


We will first simplify the problem by reducing to the case where each $A$ is two-dimensional, consisting of just four vectors. This case is proven later in \cref{subsec:two-dim}. In the reduction, which follows below, we begin with the observation that each $(A,B) \in \cR$ consists of six cosets of a particular $(d-2)$-dimensional subspace $S$. Then, we partition $\cR$ based on this underlying subspace, and prove the claim separately for each $S$. Finally, the process of sampling $(A,B)$ from $\cR$ conditioned on an underlying subspace $S$ can be seen as sampling $A$ and $B$ as two-dimensional spaces in the subspace of cosets of $S$.

\begin{proof}(of \cref{thm:inner-product})
First, note that for any $(A,B) \in \cR$, $A_0 \cap B_0$ is an intersection of affine subspaces, so is an affine subspace itself. So, we write $A_0 \cap B_0 = S+v_0$ for some $(d-2)$-dimensional subspace $S$. Since all vectors in $S+v_0$ start with 0, it must be the case that all vectors in $S$ start with 0 and $v_0$ starts with 0. Moreover, $A = A_0 \cup A_1$ and $B = B_0 \cup B_1$ are both cosets of superspaces of $S$, and thus we can write 

\[A = (S + v_0) \cup (S + w_0) \cup (S + v_1) \cup (S + w_1), ~~ B = (S + v_0) \cup (S+ u_0) \cup (S + v_1) \cup (S + u_1)\] for $v_0, w_0, u_0$ that start with 0, $v_1, w_1, u_1$ that start with 1, and where $v_0 + w_0 = v_1 + w_1$ and $v_0 + u_0 = v_1 + u_1$.

Now, for any $(d-2)$-dimensional subspace $S \coloneqq \mathsf{span}(z_1,\dots,z_{d-2})$ such all vectors in $S$ start with 0, let $z_{d-1},\dots,z_n$ be such that $(z_1,\dots,z_n)$ is an orthonormal basis of $\bbF_2^n$ and $z_{d-1}$ is the only basis vector that starts with 1. Define the subspace $\co(S) \coloneqq \mathsf{span}(z_{d-1},\dots,z_n)$. Furthermore, let $\co(S)_0$ be the subspace of vectors in $\co(S)$ that start with 0, and let $\co(S)_1$ be the affine subspace of vectors in $\co(S)$ that starts with 1.



Then we can sample from $\cR$ by first sampling a random $(d-2)$-dimensional subspace $S$ such that all vectors in $S$ start with 0, then sampling distinct $v_0,w_0,u_0 \gets \co(S)_0$ and distinct $v_1,w_1,u_1 \gets \co(S)_1$ such that $v_0 + w_0 = v_1 + w_1$ and $v_0 + u_0 = v_1 + u_1$, and finally setting \[A = (S + v_0) \cup (S + w_0) \cup (S + v_1) \cup (S + w_1), ~~ B = (S + v_0) \cup (S+ u_0) \cup (S + v_1) \cup (S + u_1)\]

For any subspace $S$, let $\cR[S]$ be the set of $(A,B) \in \cR$ such that $A_0 \cap B_0$ is a coset of $S$. Thus, it suffices to prove that for \emph{each fixed} $S$, \[\E_{(A,B) \gets \cR[S]}[|\braket{\psi_A | \psi_B}|] < \frac{1}{2}-\epsilon^{13}.\]

Now consider any fixed $S$. For each $A$ that could be sampled by $\cR[S]$, we write \[A = (S+v_0) \cup (S+w_0) \cup (S+v_1) \cup (S+w_1)\] for $v_0,w_0 \in \co(S)_0$ and $v_1,w_1 \in \co(S)_1$ such that $v_0 + w_0 = v_1 + w_1$. Moreover, we can express $v_0,w_0$ as $(0,v_0'),(0,w_0') \in \bbF_2^{n-d+2}$ and $v_1,w_1$ as $(1,v_1'),(1,w_1') \in \bbF_2^{n-d+2}$  in the $(z_{d-1},\dots,z_n)$-basis. Thus we can associate each $A$ with vectors $v_0',w_0',v_1',w_1' \in \bbF_2^{n-d+1}$ such that $v_0'+w_0' = v_1' + w_1'$. 

Let $\sU_{S,\co(S)}$ be the unitary that implements the change of basis $(e_1,\dots,e_n) \to (z_1,\dots,z_n)$, where the $e_i$ are the standard basis vectors, and let \[\widetilde{\sU} \coloneqq \left(\sU_{S,\co(S)} \otimes \bbI^\regY \right)\sU^{\regX,\regY}\left({\sU_{S,\co(S)}^\dagger \otimes \bbI^\regY}\right).\]

Then, re-defining

\begin{align*}
&\ket{\widetilde{\psi}_A} \coloneqq \sU_{S,\co(S)}\ket{\psi_A}, \\
&\widetilde{\Pi}_{A_0} \coloneqq \bbI^{\otimes d-2} \otimes \ketbra{0}{0} \otimes \left(\ketbra{v_0'}{v_0'} + \ketbra{w_0'}{w_0'}\right) \otimes \bbI^{\regY},\\ 
&\widetilde{\Pi}_{A_1} \coloneqq \widetilde{\sU}^\dagger \left(\bbI^{\otimes d-2} \otimes \ketbra{1}{1} \otimes \left(\ketbra{v'_1}{v'_1} + \ketbra{w'_1}{w'_1}\right) \otimes \bbI^{\regY}\right)\widetilde{\sU},
\end{align*}

we have that $\ket{\widetilde{\psi}_A} \in \mathsf{Im}(\widetilde{\Pi}_{A_0})$ and $\|\widetilde{\Pi}_{A_1}\ket{\widetilde{\psi}_A}\| \geq \epsilon$ for all $A$ that could be sampled by $\cR[S]$. Moreover, we can replace the projections on the $d-1$'st qubit with identities, defining 

\begin{align*}
&\widetilde{\Pi}'_{A_0} \coloneqq \bbI^{\otimes d-1} \otimes \left(\ketbra{v_0'}{v_0'} + \ketbra{w_0'}{w_0'}\right) \otimes \bbI^{\regY},\\ 
&\widetilde{\Pi}'_{A_1} \coloneqq \widetilde{\sU}^\dagger \left(\bbI^{\otimes d-1} \otimes \left(\ketbra{v'_1}{v'_1} + \ketbra{w'_1}{w'_1}\right) \otimes \bbI^{\regY}\right)\widetilde{\sU},
\end{align*}

and still have that $\ket{\widetilde{\psi}_A} \in \mathsf{Im}(\widetilde{\Pi}'_{A_0})$ and $\|\widetilde{\Pi}'_{A_1}\ket{\widetilde{\psi}_A}\| \geq \epsilon$ for all $A$ that could be sampled by $\cR[S]$. Thus, we have reduced this problem to the ``two-dimensional'' case, which is covered in the next section. Since $n-d+1 > 10\log(1/\epsilon)+6$, \cref{thm:two-dimensional} implies that 

\[\E_{(A,B) \gets \cR[S]}[|\braket{\widetilde{\psi}_A | \widetilde{\psi}_B}|] < \frac{1}{2} - \epsilon^{13},\] which implies that

\[\E_{(A,B) \gets \cR[S]}[|\braket{\psi_A | \psi_B}|] < \frac{1}{2}-\epsilon^{13},\] completing the proof.



\end{proof}

\subsection{Two-dimensional case}\label{subsec:two-dim}
\begin{theorem}\label{thm:two-dimensional}
Let $n,m \in \bbN, \epsilon \in (0,1/8)$ be such that $n > 10\log(1/\epsilon)+6$. Let $\sU^{\regX,\regY}$ be a $(2^{n+m})$-dimensional unitary, where register $\regX$ is $2^n$ dimensions and register $\regY$ is $2^m$ dimensions. Let $\cA$ be the set of pairs of sets $(\{v_0,w_0\},\{v_1,w_1\})$ such that $v_0,w_0,v_1,w_1 \in \bbF_2^n$ and $v_0 + w_0 = v_1 + w_1$.\footnote{Note that this theorem is not strictly the two-dimensional version of \cref{thm:inner-product}, since $\cA$ is not exactly defined to be the set of two-dimensional affine subspaces. Rather it consists of pairs of two sets $\{v_0,w_0\},\{v_1,w_1\}$ where the vectors are arbitrary but satisfy $v_0 + w_0 = v_1 + w_1$. That is, $v_0,w_0,v_1,w_1$ here play the role of $v_0',w_0',v_1',w_1'$ in the proof of \cref{thm:inner-product}, and in particular $v_0,w_0$ do not necessarily start with 0 and $v_1,w_1$ do not necessarily start with 1.} We will write any $A \in \cA$ as $A \coloneqq (A_0, A_1)$, where $A_0 \coloneqq \{v_0,w_0\}$ and $A_1 = \{v_1,w_1\}$. For any such $A$, let \[\Pi_{A_0} \coloneqq \left(\ketbra{v_0}{v_0} + \ketbra{w_0}{w_0}\right)^{\regX} \otimes \bbI^{\regY}, \ \ \ \ \Pi_{A_1} \coloneqq \sU^\dagger \left(\left(\ketbra{v_1}{v_1} + \ketbra{w_1}{w_1}\right)^{\regX} \otimes \bbI^{\regY}\right)\sU.\]


Let $\cR$ be the set of pairs $(A,B)$ such that $|A_0 \cap B_0| = 1$ and $|A_1 \cap B_1| = 1$. Then for any set of states $\{\ket{\psi_A}\}_A$ such that for all $A \in \cA$, $\ket{\psi_A} \in \mathsf{Im}(\Pi_{A_0})$ and $\| \Pi_{A_1}\ket{\psi_A}\| \geq \epsilon$, \[\E_{(A,B) \gets \cR}[|\braket{\psi_A | \psi_B}|] < \frac{1}{2} - \epsilon^{13}.\]
\end{theorem}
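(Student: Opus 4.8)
\textbf{Proof proposal for \cref{thm:two-dimensional}.}

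The plan is to argue by contradiction: suppose there is a unitary $\sU$ and a family $\{\ket{\psi_A}\}_{A \in \cA}$ satisfying the stated constraints but with $\E_{(A,B) \gets \cR}[|\braket{\psi_A | \psi_B}|] \geq 1/2 - \epsilon^{13}$, and derive a packing bound violation. The starting observation is that for any $(A,B) \in \cR$, the ambient ``standard-basis'' projectors overlap in a controlled way: since $A_0$ and $B_0$ share exactly one vector (say $v_0$), and $\ket{\psi_A} \in \mathsf{Im}(\Pi_{A_0})$, $\ket{\psi_B} \in \mathsf{Im}(\Pi_{B_0})$, we can write $\ket{\psi_A} = \alpha_A \ket{v_0}\ket{\phi^A_{v_0}} + \beta_A \ket{w_0}\ket{\phi^A_{w_0}}$ and similarly for $B$ with $v_0$ in common and a distinct second vector. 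Hence $\braket{\psi_A | \psi_B} = \alpha_A^* \alpha_B \braket{\phi^A_{v_0} | \phi^B_{v_0}}$, so $|\braket{\psi_A | \psi_B}| \leq |\alpha_A|\,|\alpha_B| \leq 1$. The point of the constraint $\|\Pi_{A_1}\ket{\psi_A}\| \geq \epsilon$ is to force the ``$v_0$ mass'' $|\alpha_A|$ not to be too close to $1$ too often, or more precisely to force a genuine spread that the inner-product adversary method of \cite{10.1145/2213977.2213983} can detect.

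First I would set up the reduction to the adversary method (\cref{thm:adversary-method}) as sketched in \cref{subsec:binding-proof}: define, for each $A$, a functionality $F_A$ encoding membership in $A_0$ (and, on a flag bit, membership in $\sU^\dagger(A_1)$ — but here $\sU$ is fixed, so really just $A_0$ and $A_1$), and note $\cR$ as defined gives $|\braket{A | B}| = 1/2$ for the associated ``input states'' (here the relevant inner product of the two affine-subspace indicator structure is $1/2$ because $A$ and $B$ share half their vectors). Actually, for \cref{thm:two-dimensional} itself we do not re-invoke \cref{thm:adversary-method} — that was done one level up in the proof of \cref{claim:amp}. So here the real work is purely the combinatorial/geometric estimate: bound $\E_{(A,B) \gets \cR}[|\braket{\psi_A | \psi_B}|]$ directly. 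The key steps I would carry out, in order: (1) decompose each $\ket{\psi_A}$ along $\Pi_{A_0}$ as above, writing $p_A \coloneqq |\alpha_A|^2$ for the probability mass on the shared-type vector; (2) observe that for a random $(A,B) \gets \cR$, conditioned on the common vector $v_0 \in A_0 \cap B_0$, the quantity $|\braket{\psi_A|\psi_B}| \leq \sqrt{p_A}\sqrt{p_B}$, and by symmetry/exchangeability of which of the two vectors of $A_0$ is the shared one, $\E[\sqrt{p_A p_B}]$ is governed by $\E_A[\sqrt{p_A(\text{shared vector})}]$ — this is where I need to be careful about how $\cR$ distributes the ``shared vs. unshared'' role; (3) use the constraint $\|\Pi_{A_1}\ket{\psi_A}\| \geq \epsilon$ to show that the state cannot be supported too heavily on a single standard-basis vector of $A_0$ consistently across the relation, i.e. to produce a quantitative ``anti-concentration'' that pushes the average strictly below $1/2$ by a polynomial-in-$\epsilon$ amount; (4) assemble these into $\E[|\braket{\psi_A|\psi_B}|] < 1/2 - \epsilon^{13}$, tracking where the exponent $13$ comes from (amplitude amplification of an $\epsilon$ lower bound into something usable, squaring losses, and the union/averaging losses in steps (2)–(3)). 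The condition $n > 10\log(1/\epsilon) + 6$ should enter when I need enough vectors $v_0,w_0,v_1,w_1$ in $\bbF_2^n$ so that the relation $\cR$ is rich enough and the various conditional distributions behave like their idealized versions up to error $\ll \epsilon^{13}$.

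The main obstacle — and this is explicitly flagged in \cref{subsec:binding-proof} as being harder than analogous arguments in \cite{10.1145/2213977.2213983,arxiv:BenSat16,STOC:AGKZ20} — is step (3): the constraint on $\ket{\psi_A}$ is not that it is a fixed state, nor that measuring it yields a string in a fixed set, but rather that it has $\epsilon$ overlap with $\mathsf{Im}(\Pi_{A_1}) = \mathsf{Im}(\sU^\dagger(\cdot)\sU)$ for an \emph{arbitrary} $\sU$, and this overlap cannot be amplified to $1$. So I cannot simply collapse $\ket{\psi_A}$ onto a canonical form. The way I would handle this is to reduce to a Welch bound \cite{1055219}: suppose toward contradiction that the average inner product is $\geq 1/2 - \epsilon^{13}$; then the family $\{\ket{\psi_A}\}$ (or a suitable $\epsilon$-amplified/normalized variant of the $\Pi_{A_1}$-projected states $\widehat{\ket{\psi_A}} \coloneqq \Pi_{A_1}\ket{\psi_A}/\|\Pi_{A_1}\ket{\psi_A}\|$) forms a large set of unit vectors with pairwise inner products bounded away from their ``generic'' value, all living in a low-dimensional space cut out by the $A_1$-type constraints; the Welch bound then caps how many such vectors can coexist, contradicting $|\cA|$ being large. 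Fleshing out exactly which projected/amplified vectors to feed into Welch, and showing their pairwise inner products are simultaneously (a) lower-bounded via the $\epsilon$-overlap hypothesis and (b) upper-bounded via the assumed-large expected $|\braket{\psi_A|\psi_B}|$, is the technical heart; I expect this to require a careful two-sided estimate and is where the bulk of the $\epsilon$-exponent is spent. I defer the detailed bookkeeping to the computation below.
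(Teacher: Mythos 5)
Your skeleton — proof by contradiction, decomposition of each $\ket{\psi_A}$ along $\Pi_{A_0}$, and a Welch bound at the end — matches the paper's proof, but the proposal never executes the plan (the promised ``computation below'' is absent), and the middle of the plan as described points in the wrong direction. Your step (3) frames the hypothesis $\|\Pi_{A_1}\ket{\psi_A}\| \geq \epsilon$ as producing an ``anti-concentration'' that directly pushes the average inner product below $1/2$. The actual argument runs the other way: assuming the average is $\geq 1/2 - \epsilon^{13}$, one first deduces \emph{concentration} — that $|\alpha_A^{v_{A,B}}|\cdot|\alpha_B^{v_{A,B}}|$ is close to $1/2$ for most pairs (\cref{claim:alphas}) and that the residual states $\ket{\phi_A^{v}}$ are pairwise nearly aligned (\cref{claim:phis}), hence that exponentially many of the $\ket{\psi_{(A_0^*,A_1)}}$ lie near a single \emph{two-dimensional} subspace. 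Only then does the $\epsilon$-overlap condition enter: for fixed $A_0^*$ the projectors $\Pi_{A_1}$ over the $2^{n-1}$ admissible $A_1$ are mutually orthogonal, and $2^{n-2}$ states each having $\epsilon$ overlap with a distinct one of them cannot all sit near a two-dimensional space.

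Two concrete ingredients needed to make this work are missing from your outline. First, the amplitude claim is not a consequence of ``symmetry/exchangeability'': it uses the fact that every $(A,B)\in\cR$ determines a third $C$ with $(A,B),(B,C),(C,A)\in\cR$, reducing to a robustness statement about three nonnegative unit vectors in $\bbR^3$ (\cref{fact:robustness}); without this triangle structure neither the bound $\E[|\alpha_A^{v_{A,B}}||\alpha_B^{v_{A,B}}|]\le 1/2$ nor its stability follows. Second, upgrading \emph{pairwise} closeness of the $\ket{\phi_A^{v}}$'s (which holds only for most pairs in $\cR$) to the existence of a \emph{single} anchor pair $(\ket{\tau^{v_0^*}},\ket{\tau^{w_0^*}})$ shared by $2^{n-2}$ of the $A_1$'s requires the counting argument over the incidence sets $\cR[(v_0^*,w_0^*),\{A_1,A_1'\}]$ in \cref{claim:intersection}. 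Finally, the vectors fed to Welch are not the normalized $\Pi_{A_1}$-projections you suggest but the projections onto the two-dimensional anchor subspace $\mathsf{Im}(\Pi^*)$; Welch is applied with ambient dimension $2$ to bound the number of well-separated cluster representatives by $1/\epsilon^{8}$, while the orthogonality of the $\Pi_{A_1}$'s separately bounds each cluster's size by $16/\epsilon^2$, and it is the product of these two bounds against $2^{n-2}$ that contradicts $n > 10\log(1/\epsilon)+6$.
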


First, we provide a high-level overview the proof. We note that it is easy to show that \[\E_{(A,B) \gets \cR}[|\braket{\psi_A | \psi_B}|] \leq \frac{1}{2},\] which only requires the condition that for all $A \in \cA$, $\ket{\psi_A} \in \mathsf{Im}(\Pi_{A_0})$. Adding the condition that $\|\Pi_{A_1}\ket{\psi_A}\| \geq \epsilon$ should intuitively only decrease this expected inner product, since many of the $\Pi_{A_1}$ are orthogonal. In particular, for any $A_0$, all the $\Pi_{A_1}$ such that $(A_0,A_1) \in \cA$ are orthogonal. To formalize this intuition, we proceed by contradiction, and assume that \[\E_{(A,B) \gets \cR}[|\braket{\psi_A | \psi_B}|] \geq \frac{1}{2} - \epsilon^{13}.\]

For each $A = (\{v_0,w_0\},\{v_1,w_1\})$, we will write $\ket{\psi_A}$ as \[\ket{\psi_A} \coloneqq \alpha_A^{v_0}\ket{v_0}^\regX\ket{\phi_A^{v_0}}^\regY + \alpha_A^{w_0} \ket{w_0}^\regX\ket{\phi_A^{w_0}}^\regY,\] and note that 
\[\E_{(A,B) \gets \cR}[|\braket{\psi_A | \psi_B}|] \leq \E_{(A,B) \gets \cR}[|\alpha_A^{v_{A,B}}| \cdot |\alpha_B^{v_{A,B}}| \cdot |\braket{\phi_A^{v_{A,B}}|\phi_B^{v_{A,B}}}|],\] where $\{v_{A,B}\} \coloneqq A_0 \cap B_0$.

Then, we proceed via the following steps.

\begin{enumerate}
    \item If we only require that $\ket{\psi_A} \in \mathsf{Im}(\Pi_{A_0})$, then one way to obtain the maximum expected inner product of $1/2$ is to set each $|\alpha_A^{v_0}| = 1/\sqrt{2}$ and for each $v_0$, let all $\ket{\phi_A^{v_0}}$ be the same vector. Then, each $|\alpha_A^{v_{A,B}}| \cdot |\alpha_B^{v_{A,B}}| = 1/2$ and each $|\braket{\phi_A^{v_{A,B}}|\phi_B^{v_{A,B}}}| = 1$. We show that this way of defining the $\alpha_A^{v_0}$ is ``robust'' in the sense that if the expected inner product is \emph{close} to 1/2, then for  \emph{many} of the $(A,B)$, $|\alpha_A^{v_{A,B}}| \cdot |\alpha_B^{v_{A,B}}|$ is \emph{close} to $1/2$ (\cref{claim:alphas}).
    \item We show that Step 1 implies that this way of defining $\ket{\phi_A^{v_0}}$ is also ``robust'', in the sense that for \emph{many} of the $(A,B)$, $|\braket{\phi_A^{v_{A,B}}|\phi_B^{v_{A,B}}}|$ is \emph{close} to 1 (\cref{claim:phis}). Thus, this property must be satisfied if our expected inner product is at least $1/2 - \epsilon^{13}$.
    \item By analyzing the graph of ``connections'' induced by $\cR$ between the elements of $\cA$, we show that Step 2 implies that there must exist \emph{some} $A_0^* = \{v_0^*,w_0^*\}$ with the following property. There any \emph{many} (exponential in $n$) states \[\left\{\ket{\psi_{(A_0^*,A_1)}} \coloneqq \alpha_{(A_0^*,A_1)}^{v_0^*} \ket{v_0^*}\ket{\phi_{(A_0^*,A_1)}^{v_0^*}} + \alpha_{(A_0^*,A_1)}^{w_0^*} \ket{w_0^*}\ket{\phi_{(A_0^*,A_1)}^{w_0^*}}\right\}_{A_1 : (A_0^*,A_1) \in \cA}\] such that the $\{\ket{\phi_{(A_0^*,A_1)}^{v_0^*}}\}$ are all close to each other, \emph{and} the $\{\ket{\phi_{(A_0^*,A_1)}^{w_0^*}}\}$ are all close to each other (\cref{claim:intersection}).
    \item Step 3 implies that there exists a \emph{large} (exponential in $n$) collection of states $\ket{\psi_{(A_0^*,A_1)}}$ such that (i) all $\ket{\psi_{(A_0^*,A_1)}}$ are \emph{close} to the \emph{same two-dimensional subspace}, and (ii) each $\ket{\psi_{(A_0^*,A_1)}}$ has $\epsilon$ overlap with a \emph{different orthogonal subspace} $\Pi_{A_1}$. We complete the proof by showing that this is impossible when $n$ is large enough compared to $1/\epsilon$. This relies on a Welch bound, which bounds the number of distinct vectors of some minimum distance from each other that can be packed into a low-dimensional subspace.
\end{enumerate}

\begin{proof}(of \cref{thm:two-dimensional})
Assume that \[\E_{(A,B) \gets \cR}[|\braket{\psi_A | \psi_B}|] \geq \frac{1}{2} - \epsilon^{13}.\] Using the fact that each $\ket{\psi_A} \in \mathsf{Im}(\Pi_{A_0})$, write each \[\ket{\psi_A} \coloneqq \alpha_A^{v_0} \ket{v_0}^\regX\ket{\phi_A^{v_0}}^\regY + \alpha_A^{w_0} \ket{w_0}^\regX\ket{\phi_A^{w_0}}^\regY,\] where $A_0 = \{v_0,w_0\}$. For any $(A,B) \in \cR$, define $\{v_{A,B}\} = A_0 \cap B_0$. 
Then, we have the following series of inequalities.

\begin{align*}
    \frac{1}{2} - \epsilon^{13} &\leq \E_{(A,B) \gets \cR}[|\braket{\psi_A | \psi_B}|] \\ &= \E_{(A,B) \gets \cR}[|{\alpha_A^{v_{A,B}}}^* \alpha_B^{v_{A,B}}\braket{\phi_A^{v_{A,B}}|\phi_B^{v_{A,B}}}|]\\ &\leq \E_{(A,B) \gets \cR}[|\alpha_A^{v_{A,B}}| \cdot |\alpha_B^{v_{A,B}}| \cdot |\braket{\phi_A^{v_{A,B}}|\phi_B^{v_{A,B}}}|] \\ &\leq \E_{(A,B) \gets \cR}[|\alpha_A^{v_{A,B}}| \cdot |\alpha_B^{v_{A,B}}|].
\end{align*}

Next, we show the following.

\begin{claim}\label{claim:alphas}
\[\Pr_{(A,B) \gets \cR}\left[|\alpha_A^{v_{A,B}}| \cdot |\alpha_B^{v_{A,B}}| \geq \frac{1}{2} - 2\epsilon^2\right] \geq 1-\epsilon^6.\]
\end{claim}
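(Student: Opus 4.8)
The plan is to reverse-engineer the bound on $|\alpha_A^{v_{A,B}}|\cdot|\alpha_B^{v_{A,B}}|$ from the fact that its expectation over $\cR$ is at least $\tfrac12-\epsilon^{13}$, using that each individual term is at most $\tfrac12$. First I would record the deterministic bound: for any $(A,B)\in\cR$, since $\ket{\psi_A}$ and $\ket{\psi_B}$ are unit vectors supported on $A_0=\{v_0,w_0\}$ and $B_0$ respectively, and $A_0\cap B_0=\{v_{A,B}\}$, we have $|\alpha_A^{v_{A,B}}|\le 1$ and $|\alpha_B^{v_{A,B}}|\le 1$; but more usefully, by AM–GM and normalization $|\alpha_A^{v_0}|^2+|\alpha_A^{w_0}|^2=1$, the product $|\alpha_A^{v_{A,B}}|\cdot|\alpha_B^{v_{A,B}}|\le \tfrac12(|\alpha_A^{v_{A,B}}|^2+|\alpha_B^{v_{A,B}}|^2)\le\tfrac12$. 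So the random variable $X\coloneqq|\alpha_A^{v_{A,B}}|\cdot|\alpha_B^{v_{A,B}}|$ satisfies $X\le\tfrac12$ pointwise and $\E[X]\ge\tfrac12-\epsilon^{13}$.

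The core step is then a Markov-type (reverse Markov) argument on the non-negative random variable $Y\coloneqq\tfrac12-X\ge0$, which has $\E[Y]\le\epsilon^{13}$. By Markov's inequality, $\Pr[Y\ge 2\epsilon^2]\le\E[Y]/(2\epsilon^2)\le \epsilon^{13}/(2\epsilon^2)=\tfrac12\epsilon^{11}\le\epsilon^6$ (using $\epsilon<1/8$ so that $\tfrac12\epsilon^5\le 1$, hence $\tfrac12\epsilon^{11}\le\epsilon^6$). Unwinding, $\Pr[X<\tfrac12-2\epsilon^2]=\Pr[Y>2\epsilon^2]\le\epsilon^6$, which is exactly the claim $\Pr_{(A,B)\gets\cR}[|\alpha_A^{v_{A,B}}|\cdot|\alpha_B^{v_{A,B}}|\ge\tfrac12-2\epsilon^2]\ge 1-\epsilon^6$.

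The one subtlety I want to be careful about is that the chain of inequalities displayed just before the claim already establishes $\E_{(A,B)\gets\cR}[X]\ge\tfrac12-\epsilon^{13}$, so I can take that as given and need only (i) justify $X\le\tfrac12$ pointwise and (ii) run the reverse Markov bound. Step (i) is the only place I need to touch the structure of the states: it follows purely from $\ket{\psi_A},\ket{\psi_B}$ being unit vectors and the coefficient normalization, independent of the $\ket{\phi}$ registers. I don't anticipate a real obstacle here — the main thing to get right is bookkeeping with the exponents of $\epsilon$ and invoking $\epsilon<1/8$ to absorb constants, so that the final bound comes out as $1-\epsilon^6$ rather than something like $1-\tfrac12\epsilon^{11}$; since $\tfrac12\epsilon^{11}\le\epsilon^6$ for all $\epsilon\in(0,1/8)$ (indeed for all $\epsilon\le 1$), this is immediate. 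So the proof is short: state the pointwise bound, apply Markov to $\tfrac12-X$, and simplify.

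\begin{proof}(of \cref{claim:alphas})
From the series of inequalities established above, $\E_{(A,B)\gets\cR}\big[|\alpha_A^{v_{A,B}}|\cdot|\alpha_B^{v_{A,B}}|\big]\ge \tfrac12-\epsilon^{13}$. Moreover, for any fixed $(A,B)\in\cR$, since $\ket{\psi_A}$ is a unit vector with $|\alpha_A^{v_0}|^2+|\alpha_A^{w_0}|^2=1$ we have $|\alpha_A^{v_{A,B}}|\le 1$, and similarly $|\alpha_B^{v_{A,B}}|\le 1$; by the AM–GM inequality together with normalization,
\[|\alpha_A^{v_{A,B}}|\cdot|\alpha_B^{v_{A,B}}| \le \frac{|\alpha_A^{v_{A,B}}|^2+|\alpha_B^{v_{A,B}}|^2}{2}\le \frac{1+1}{2}\cdot\frac{1}{2} = \frac12,\]
where the last step uses $|\alpha_A^{v_{A,B}}|^2\le |\alpha_A^{v_0}|^2+|\alpha_A^{w_0}|^2 = 1$ applied to each term. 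Hence the non-negative random variable $Y\coloneqq \tfrac12 - |\alpha_A^{v_{A,B}}|\cdot|\alpha_B^{v_{A,B}}|$ has $\E_{(A,B)\gets\cR}[Y]\le \epsilon^{13}$. By Markov's inequality,
\[\Pr_{(A,B)\gets\cR}\Big[Y\ge 2\epsilon^2\Big] \le \frac{\E[Y]}{2\epsilon^2} \le \frac{\epsilon^{13}}{2\epsilon^2} = \frac{\epsilon^{11}}{2} \le \epsilon^6,\]
where the final inequality uses $\epsilon\in(0,1/8)$, so that $\tfrac12\epsilon^{5}\le 1$ and thus $\tfrac12\epsilon^{11}\le \epsilon^6$. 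Therefore
\[\Pr_{(A,B)\gets\cR}\Big[|\alpha_A^{v_{A,B}}|\cdot|\alpha_B^{v_{A,B}}|\ge \frac12 - 2\epsilon^2\Big] = \Pr_{(A,B)\gets\cR}\big[Y\le 2\epsilon^2\big] \ge 1-\epsilon^6,\]
as claimed.
\end{proof}
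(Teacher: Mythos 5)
Your proof has a fatal gap at step (i): the pointwise bound $|\alpha_A^{v_{A,B}}|\cdot|\alpha_B^{v_{A,B}}|\le\tfrac12$ is simply false. Normalization gives $|\alpha_A^{v_{A,B}}|^2+|\alpha_A^{w_0}|^2=1$, so $|\alpha_A^{v_{A,B}}|$ can be as large as $1$ (take $\ket{\psi_A}=\ket{v_{A,B}}\ket{\phi}$), and likewise for $B$; then the product equals $1$. Your displayed chain
$\frac{|\alpha_A^{v_{A,B}}|^2+|\alpha_B^{v_{A,B}}|^2}{2}\le \frac{1+1}{2}\cdot\frac12$
inserts an unjustified factor of $\tfrac12$ — the correct conclusion of that line is $\le 1$, not $\le\tfrac12$. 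Without a valid pointwise upper bound of $\tfrac12$, the variable $Y=\tfrac12-X$ can be negative and Markov's inequality does not apply; and knowing only $X\le 1$ pointwise together with $\E[X]\ge\tfrac12-\epsilon^{13}$ gives no useful concentration via reverse Markov.

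The paper's proof supplies exactly the structure you are missing. It observes that every $(A,B)\in\cR$ sits in a unique ``triangle'' $\{(C_1,C_2),(C_2,C_3),(C_3,C_1)\}\subset\cR$, so that sampling from $\cR$ factors through sampling a triangle and then a uniform edge. The key input is then \cref{fact:robustness}: for the three unit vectors in $\bbR_{\ge 0}^3$ built from the coefficients of $\ket{\psi_{C_1}},\ket{\psi_{C_2}},\ket{\psi_{C_3}}$ (whose supports overlap pairwise in exactly one coordinate), the sum of the three pairwise products is at most $\tfrac32$. This yields the bound $E[C_1,C_2,C_3]\le\tfrac12$ \emph{per triangle}, which is the legitimate replacement for your false pointwise bound and is what licenses Markov (applied to $\tfrac12-E[C_1,C_2,C_3]$ at the triangle level, with exponents $\epsilon^{12}/\epsilon^6=\epsilon^6$). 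Finally — and this is a second idea absent from your proposal — the ``moreover'' part of \cref{fact:robustness} converts the conclusion of Markov (triangle sum $\ge\tfrac32-3\epsilon^6$) into the statement that \emph{each} of the three individual products is $\ge\tfrac12-2\epsilon^2$; this robustness step is a genuine geometric fact, not a formal consequence of the averaging. You would need both ingredients to repair the argument.
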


\begin{proof}

First, note that for any $(A,B) \in \cR$ where $A = (\{v_0,w_0\},\{v_1,w_1\})$ and $B = (\{v_0,u_0\},\{v_1,u_1\})$, the set $C = (\{w_0,u_0\},\{w_1,u_1\}) \in \cA$. This follows because 

\begin{align*}
    &A \in \cA \implies v_0 + w_0 = v_1 + w_1 \implies w_0 = v_0 + v_1 + w_1 \\
    &B \in \cA \implies v_0 + u_0 = v_1 + w_1 \implies u_0 = v_0 + v_1 + u_1, \\
    &\text{so} ~~ w_0 + u_0 = w_1 + u_1 \implies C \in \cA.
\end{align*}

This means that each $(A,B) \in \cR$ uniquely define a $C \in \cA$ such that all \[(A,B),(B,C),(C,A) \in \cR.\]

Thus, we will imagine sampling $(A,B) \gets \cR$ as follows. First, sample distinct $v_0,w_0,u_0 \gets \bbF_2^n$. Then, sample $v_1,w_1,u_1$ such that \[C_1 \coloneqq (\{v_0,w_0\},\{v_1,w_1\}), ~~ C_2 \coloneqq (\{v_0,u_0\},\{v_1,u_1\}), ~~ C_3 \coloneqq (\{w_0,u_0\},\{w_1,u_1\}) \in \cA.\] Let $(C_1,C_2,C_3) \gets \cS$ denote this sampling procedure. Finally, choose \[(A,B) \gets \cR[C_1,C_2,C_3] \coloneqq \{(C_1,C_2),(C_2,C_3),(C_3,C_1)\}.\]




Let \[E[C_1,C_2,C_3] \coloneqq \E_{(A,B) \gets \cR[C_1,C_2,C_3]}\left[|\alpha_A^{v_{A,B}}| \cdot |\alpha_B^{v_{A,B}}|\right].\]

Then, 

\[\E_{(A,B) \gets \cR}\left[|\alpha_A^{v_{A,B}}| \cdot |\alpha_B^{v_{A,B}}|\right] = \E_{(C_1,C_2,C_3) \gets \cS}\left[E[C_1,C_2,C_3]\right] \geq \frac{1}{2} - \epsilon^{13} > \frac{1}{2} - \epsilon^{12}.\]

Now, given any $(C_1,C_2,C_3)$ and corresponding 
\begin{align*}
&\ket{\psi_{C_1}} \coloneqq \alpha_{C_1}^{v_0}\ket{v_0}\ket{\phi_{C_1}^{v_0}} + \alpha_{C_1}^{w_0}\ket{w_0}\ket{\phi_{C_1}^{w_0}},\\ 
&\ket{\psi_{C_2}} \coloneqq \alpha_{C_2}^{v_0}\ket{v_0}\ket{\phi_{C_2}^{v_0}} + \alpha_{C_2}^{u_0}\ket{u_0}\ket{\phi_{C_2}^{u_0}},\\ 
&\ket{\psi_{C_3}} \coloneqq \alpha_{C_3}^{w_0}\ket{w_0}\ket{\phi_{C_3}^{w_0}} + \alpha_{C_3}^{u_0}\ket{u_0}\ket{\phi_{C_3}^{u_0}}, 
\end{align*}

we have that 

\[E[C_1,C_2,C_3] \leq \frac{1}{3}\left(|\alpha_{C_1}^{v_0}| \cdot |\alpha_{C_2}^{v_0}| + |\alpha_{C_1}^{w_0}| \cdot |\alpha_{C_3}^{w_0}| + |\alpha_{C_2}^{u_0}| \cdot |\alpha_{C_3}^{u_0}|\right).\]

By \cref{fact:robustness}, $E[C_1,C_2,C_3] \leq 1/2$, so by Markov, 

\[\Pr_{(C_1,C_2,C_3) \gets \cS}\left[\frac{1}{2} - E[C_1,C_2,C_3] \geq \epsilon^6\right] \leq \epsilon^6 \implies \Pr_{(C_1,C_2,C_3) \gets \cS}\left[E[C_1,C_2,C_3] \geq \frac{1}{2} - \epsilon^6\right] \geq 1 - \epsilon^6.\]

Moreover, whenever $E[C_1,C_2,C_3] \geq 1/2 - \epsilon^6$, we have that  

\[|\alpha_{C_1}^{v_0}| \cdot |\alpha_{C_2}^{v_0}| + |\alpha_{C_1}^{w_0}| \cdot |\alpha_{C_3}^{w_0}| + |\alpha_{C_2}^{u_0}| \cdot |\alpha_{C_3}^{u_0}| \geq \frac{3}{2} - \frac{6\epsilon^6}{2},\] so by \cref{fact:robustness}, \[|\alpha_{C_1}^{v_0}| \cdot |\alpha_{C_2}^{v_0}|, ~~ |\alpha_{C_1}^{w_0}| \cdot |\alpha_{C_3}^{w_0}|, ~~ |\alpha_{C_2}^{u_0}| \cdot |\alpha_{C_3}^{u_0}| \geq \frac{1}{2} - 2\epsilon^2,\] which completes the proof of the claim.

\end{proof}

\begin{claim}\label{claim:phis}
\[\Pr_{(A,B) \gets \cR}\left[|\braket{\phi_A^{v_{A,B}} | \phi_B^{v_{A,B}}}| \geq 1-\epsilon^6\right] \geq 1-2\epsilon^6.\]
\end{claim}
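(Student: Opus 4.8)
\textbf{Plan for the proof of \cref{claim:phis}.} The goal is to upgrade the conclusion of \cref{claim:alphas} — that the magnitudes $|\alpha_A^{v_{A,B}}| \cdot |\alpha_B^{v_{A,B}}|$ are close to $1/2$ for most pairs $(A,B) \in \cR$ — into a statement about the \emph{inner products} of the corresponding $\regY$-registers. The basic mechanism is that the chain of inequalities already established gives
\[
\frac{1}{2} - \epsilon^{13} \;\leq\; \E_{(A,B) \gets \cR}\!\left[|\alpha_A^{v_{A,B}}| \cdot |\alpha_B^{v_{A,B}}| \cdot |\braket{\phi_A^{v_{A,B}}|\phi_B^{v_{A,B}}}|\right],
\]
and each factor $|\alpha_A^{v_{A,B}}| \cdot |\alpha_B^{v_{A,B}}|$ is at most $1/2$ by normalization (each $|\alpha_A^{v_0}|^2 + |\alpha_A^{w_0}|^2 = 1$, so $|\alpha_A^{v_{A,B}}| \leq 1$ and likewise for $B$, but in fact the product is bounded by $1/2$ via AM–GM over the two pairs — this is essentially \cref{fact:robustness}). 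Combined with $|\braket{\phi_A^{v_{A,B}}|\phi_B^{v_{A,B}}}| \leq 1$, the expectation of the product being at least $1/2 - \epsilon^{13}$ forces the product of the three factors to be close to its maximum of $1/2$ on average.

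First I would invoke \cref{claim:alphas} to restrict attention to the ``good'' event $\mathcal{G}$ that $|\alpha_A^{v_{A,B}}| \cdot |\alpha_B^{v_{A,B}}| \geq \tfrac{1}{2} - 2\epsilon^2$, which holds with probability at least $1 - \epsilon^6$ over $(A,B) \gets \cR$. On the complement (probability $\leq \epsilon^6$) I bound the integrand trivially by $1/2$. Then
\[
\frac{1}{2} - \epsilon^{13} \;\leq\; \E\!\left[|\alpha_A^{v_{A,B}}||\alpha_B^{v_{A,B}}| \,|\braket{\phi_A^{v_{A,B}}|\phi_B^{v_{A,B}}}| \,\big|\, \mathcal{G}\right]\Pr[\mathcal{G}] + \frac{1}{2}\Pr[\neg\mathcal{G}],
\]
so $\E[\,|\alpha_A^{v_{A,B}}||\alpha_B^{v_{A,B}}||\braket{\phi_A^{v_{A,B}}|\phi_B^{v_{A,B}}}| \mid \mathcal{G}] \geq \tfrac{1}{2} - \epsilon^{13} - \tfrac{1}{2}\epsilon^6$ after dividing by $\Pr[\mathcal{G}] \geq 1 - \epsilon^6$ (and absorbing the resulting lower-order corrections). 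Conditioned on $\mathcal{G}$, I replace $|\alpha_A^{v_{A,B}}||\alpha_B^{v_{A,B}}|$ by its lower bound $\tfrac{1}{2} - 2\epsilon^2$ in the worst case — but more carefully, since this factor is at most $1/2$, I get $\E[|\braket{\phi_A^{v_{A,B}}|\phi_B^{v_{A,B}}}| \mid \mathcal{G}] \geq 1 - O(\epsilon^6)$ by dividing through. From $\E[1 - |\braket{\phi}|] = O(\epsilon^6)$ on $\mathcal{G}$, Markov gives that $|\braket{\phi_A^{v_{A,B}}|\phi_B^{v_{A,B}}}| \geq 1 - \epsilon^6$ fails with probability $O(\epsilon^{6})/\epsilon^6 = O(1)$ — so I need to be a little more generous with the exponents, e.g. first establishing $\E[1 - |\braket{\phi}|\mid\mathcal{G}] \leq \epsilon^{12}$ (using that $\epsilon^{13} + \epsilon^6/2$ divided out carefully, together with the $2\epsilon^2$ slack multiplied by something small, lands below $\epsilon^{12}$ once $\epsilon < 1/8$), then Markov yields failure probability $\leq \epsilon^{6}$ within $\mathcal{G}$, and a union bound with $\Pr[\neg\mathcal{G}] \leq \epsilon^6$ gives the claimed $1 - 2\epsilon^6$. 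I would track the constants honestly and pick exponents so that the $1/\poly$ bookkeeping closes — the paper's choice of $\epsilon^{13}$ as the target deficit is presumably calibrated precisely so this works.

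The main obstacle — really the only nontrivial point — is getting the arithmetic of the exponents to line up: the slack $2\epsilon^2$ in the $\alpha$-estimate is \emph{much larger} than the $\epsilon^{13}$ deficit we started with, so one cannot naively write ``$\tfrac{1}{2} - 2\epsilon^2$ times $|\braket{\phi}|$ is at least $\tfrac{1}{2} - \epsilon^{13}$ hence $|\braket{\phi}| \geq 1 - O(\epsilon^{13}/\epsilon^2)$'' pointwise — that would be a bound of the form $1 - O(\epsilon^{11})$ pointwise, which is false for individual pairs. The resolution is that the bound is only \emph{on average}: the $2\epsilon^2$ slack can only be saturated on an $\epsilon^6$-fraction of pairs (that is exactly what \cref{claim:alphas} buys), so in the averaged inequality the contribution of the $2\epsilon^2$ term is itself $O(\epsilon^8)$, small enough. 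So the key step is to handle the $\alpha$-factor and the $\phi$-factor \emph{together inside one expectation}, never separating them pointwise, and only at the very end applying Markov to the single quantity $1 - |\braket{\phi_A^{v_{A,B}}|\phi_B^{v_{A,B}}}|$. Once that is set up correctly the rest is routine; I would write it as: bound the full integrand above by $\tfrac12$ everywhere, bound it below by $(\tfrac12 - 2\epsilon^2)|\braket{\phi}|$ on $\mathcal G$, combine to get $\E[1 - |\braket{\phi}| \mid \mathcal G] = O(\epsilon^{12})$, and conclude by Markov plus a union bound over $\neg\mathcal G$.
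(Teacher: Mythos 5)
Your overall strategy---combine \cref{claim:alphas} with the near-maximality of the expected product $\E\bigl[|\alpha_A^{v_{A,B}}||\alpha_B^{v_{A,B}}||\braket{\phi_A^{v_{A,B}}|\phi_B^{v_{A,B}}}|\bigr]$ and finish with Markov---is the right one, but the exponent arithmetic does not close, and the step where you assert that it does is exactly where the proof lives. Writing $X \coloneqq |\alpha_A^{v_{A,B}}||\alpha_B^{v_{A,B}}|$ and $Y \coloneqq |\braket{\phi_A^{v_{A,B}}|\phi_B^{v_{A,B}}}|$, your decomposition pays the additive penalty $\tfrac12\Pr[\neg\cG]$ up front; since \cref{claim:alphas} only guarantees $\Pr[\neg\cG]\le\epsilon^6$, the best this route can ever yield is $\E[1-Y\mid\cG]=O(\epsilon^6)$. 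That $\epsilon^6$ term is irreducible in your decomposition---no ``careful division'' removes it, and it dwarfs the $\epsilon^{13}$ deficit you started from---so your hoped-for intermediate bound $\E[1-Y\mid\cG]\le\epsilon^{12}$ is unobtainable, and Markov at threshold $\epsilon^6$ returns a failure probability of $O(1)$. A secondary issue: $X\le\tfrac12$ is \emph{not} a pointwise bound (both $|\alpha_A^{v_{A,B}}|$ and $|\alpha_B^{v_{A,B}}|$ can simultaneously be close to $1$); only $\E[X]\le\tfrac12$ holds, and it comes from the three-cycle argument in the proof of \cref{claim:alphas} via \cref{fact:robustness}, not from AM--GM on a single pair.

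The paper avoids both problems by never conditioning. From $\E[X]\le\tfrac12$ and $\E[XY]\ge\tfrac12-\epsilon^{13}$, linearity gives $\E[X(1-Y)]\le\epsilon^{13}$ \emph{globally}. One then argues by contradiction: if $\Pr[1-Y>\epsilon^6]>2\epsilon^6$, a union bound with \cref{claim:alphas} shows the event $\{1-Y>\epsilon^6\}\cap\{X\ge\tfrac12-2\epsilon^2\}$ has probability at least $\epsilon^6$, and on it $X(1-Y)>(\tfrac12-2\epsilon^2)\epsilon^6$, so $\E[X(1-Y)]>\epsilon^6\cdot\epsilon^6\cdot(\tfrac12-2\epsilon^2)\ge\epsilon^{12}/4>\epsilon^{13}$, a contradiction. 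The point you are missing is that the bad event for $Y$ contributes \emph{quadratically} in $\epsilon^6$ (once through its probability and once through the value of $1-Y$ on it), which is what beats the $\epsilon^{13}$ budget; your setup only ever extracts one factor of $\epsilon^6$. (Equivalently and in your language: since $X(1-Y)\ge 0$ everywhere and $X\ge\tfrac14$ on $\cG$, one has $\E[(1-Y)\mathbf{1}_{\cG}]\le 4\,\E[X(1-Y)]\le 4\epsilon^{13}$ with no $\Pr[\neg\cG]$ penalty; Markov at threshold $\epsilon^6$ gives $4\epsilon^7\le\epsilon^6$, and a union bound with $\Pr[\neg\cG]\le\epsilon^6$ finishes.)
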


\begin{proof}

First, note that the proof of \cref{claim:alphas} also shows that \[\E_{(A,B) \gets \cR}\left[|\alpha_A^{v_{A,B}}| \cdot |\alpha_B^{v_{A,B}}|\right] \leq \frac{1}{2},\] since each $E[C_1,C_2,C_3] \leq 1/2$.

By our assumption that \[\E_{(A,B) \gets \cR}\left[|\alpha_A^{v_{A,B}}| \cdot |\alpha_B^{v_{A,B}}| \cdot |\braket{\phi_A^{v_{A,B}} | \phi_B^{v_{A,B}}}|\right] \geq \frac{1}{2}-\epsilon^{13}\] and linearity of expectation, 
\[\E_{(A,B) \gets \cR}\left[|\alpha_A^{v_{A,B}}| \cdot |\alpha_B^{v_{A,B}}| \cdot \left(1 - |\braket{\phi_A^{v_{A,B}} | \phi_B^{v_{A,B}}}|\right)\right] \leq \epsilon^{13}.\]

Now, assume for contradiction that \[\Pr_{(A,B) \gets \cR}\left[|\braket{\phi_A^{v_{A,B}} | \phi_B^{v_{A,B}}}| < 1-\epsilon^6\right] > 2\epsilon^6 \implies \Pr_{(A,B) \gets \cR}\left[1-|\braket{\phi_A^{v_{A,B}} | \phi_B^{v_{A,B}}}| > \epsilon^6\right] > 2\epsilon^6.\]

By \cref{claim:alphas}, this implies that \[\Pr_{(A,B) \gets \cR}\left[\left(1 - |\braket{\phi_A^{v_{A,B}} | \phi_B^{v_{A,B}}}| > \epsilon^6\right) \wedge \left(|\alpha_A^{v_{A,B}}| \cdot |\alpha_B^{v_{A,B}}| \geq \frac{1}{2} - 2\epsilon^2\right) \right] \geq \epsilon^6.\]

But then,
\[\E_{(A,B) \gets \cR}\left[|\alpha_A^{v_{A,B}}| \cdot |\alpha_B^{v_{A,B}}| \cdot \left(1 - |\braket{\phi_A^{v_{A,B}} | \phi_B^{v_{A,B}}}|\right)\right] > \epsilon^6 \cdot \epsilon^6 \cdot \left(\frac{1}{2} - 2\epsilon^2\right) \geq \frac{\epsilon^{12}}{4} > \epsilon^{13},\] whenever $\epsilon < 1/4$.

\end{proof}

\begin{claim}\label{claim:intersection}
There exists an $A_0^* = \{v_0^*,w_0^*\}$ and two unit vectors $\ket{\tau^{v_0^*}}, \ket{\tau^{w_0^*}}$ such that the following holds. Let \[\left\{\ket{\psi_{(A_0^*,A_1)}} \coloneqq \alpha_{(A_0^*,A_1)}^{v_0^*} \ket{v_0^*}\ket{\phi_{(A_0^*,A_1)}^{v_0^*}} + \alpha_{(A_0^*,A_1)}^{w_0^*} \ket{w_0^*}\ket{\phi_{(A_0^*,A_1)}^{w_0^*}}\right\}_{A_1: (A_0^*,A_1) \in \cA}\] be the set of $2^{n-1}$ states indexed by $A_1$ such that $(A_0^*,A_1) \in \cA$.\footnote{Note that there are $2^{n-1}$ possible states because the $A_1$ partition of the set $\bbF_2^n$ into disjoint unordered pairs of vectors, where each pair $\{v_1,w_1\}$ is such that $v_1 + w_1 = v_0^* + w_0^*$.} Then there exists a set $\cA_1^*$ of size at least $2^{n-2}$ such that for all $A_1 \in \cA_1^*$, \[|\braket{\phi_{(A_0^*,A_1)}^{v_0^*} | \tau^{v_0^*}}| \geq 1-2\epsilon^3 ~~ \text{and} ~~ |\braket{\phi_{(A_0^*,A_1)}^{w_0^*} | \tau^{w_0^*}}| \geq 1-2\epsilon^3.\]
\end{claim}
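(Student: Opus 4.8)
\textbf{Proof plan for \cref{claim:intersection}.}

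The plan is to use the structure of the graph $G$ on vertex set $\cA$, where $(A,B)$ is an edge iff $(A,B) \in \cR$, together with the conclusion of \cref{claim:phis} that an overwhelming fraction of edges are ``good'' (meaning $|\braket{\phi_A^{v_{A,B}} | \phi_B^{v_{A,B}}}| \geq 1-\epsilon^6$). First I would pass to the subgraph $G'$ obtained by deleting all ``bad'' edges; since at most a $2\epsilon^6$ fraction of edges of $G$ are bad, a standard averaging (Markov over vertices or over the first coordinate $A_0$) shows that there must exist \emph{some} fixed pair $A_0^* = \{v_0^*, w_0^*\}$ such that, among the $2^{n-1}$ vertices $(A_0^*, A_1) \in \cA$, a large fraction have the property that ``most'' of their $\cR$-neighbors are connected by good edges. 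More precisely, I would set up the counting so that there is a set $\cA_1'$ of at least, say, $(1 - O(\epsilon^3)) 2^{n-1}$ choices of $A_1$ such that the vertex $(A_0^*,A_1)$ participates in good edges with at least a $(1 - O(\epsilon^3))$ fraction of the neighbors that \emph{share} the vector $v_0^*$ (respectively $w_0^*$) in their degree-zero coordinate.

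Next I would exploit the fact that neighbors in $\cR$ sharing $v_0^*$ are abundant and ``well-connected'': for $(A_0^*, A_1), (A_0^*, A_1')$ both sharing $v_0^*$ in the sense above, there is a short path (in fact length two, through a common neighbor) of good edges, so that by the triangle inequality for the angle between unit vectors, $|\braket{\phi_{(A_0^*,A_1)}^{v_0^*} | \phi_{(A_0^*,A_1')}^{v_0^*}}|$ is close to $1$. Here I would use the elementary fact (cf.\ the robustness facts referenced as \cref{fact:robustness}) that if $|\braket{\phi|\chi}| \geq 1 - \delta$ and $|\braket{\chi|\xi}| \geq 1 - \delta$ then $|\braket{\phi|\xi}| \geq 1 - O(\delta)$, iterated a bounded number of times along the path. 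This shows the vectors $\{\ket{\phi_{(A_0^*,A_1)}^{v_0^*}}\}_{A_1 \in \cA_1^*}$ all lie within a small ball of each other; picking any one of them (or their ``average direction'') as $\ket{\tau^{v_0^*}}$, and likewise defining $\ket{\tau^{w_0^*}}$ from the $w_0^*$-side, gives the claimed $|\braket{\phi_{(A_0^*,A_1)}^{v_0^*} | \tau^{v_0^*}}| \geq 1 - 2\epsilon^3$ and similarly for $w_0^*$, after tracking constants and absorbing them into the exponent. The size bound $|\cA_1^*| \geq 2^{n-2}$ follows by taking $\cA_1^*$ to be the intersection of the ``good for $v_0^*$'' set and the ``good for $w_0^*$'' set, each of density $\geq 1 - O(\epsilon^3) \geq 3/4$, so their intersection has density $\geq 1/2$.

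The main obstacle I anticipate is the bookkeeping of which neighbors of a vertex $(A_0^*, A_1)$ share $v_0^*$ versus $w_0^*$, and ensuring the averaging argument can be arranged so that a \emph{single} $A_0^*$ works simultaneously for both $v_0^*$ and $w_0^*$ and for a $1/2$ fraction of the $A_1$'s. The subtlety is that ``good edge'' is a statement about the shared vector $v_{A,B}$, so when I walk a path $(A_0^*,A_1) \to C \to (A_0^*,A_1')$ I must make sure the shared vectors along the two edges are \emph{both} equal to $v_0^*$ (not, e.g., one edge sharing $v_0^*$ and the other sharing $w_0^*$), which constrains the choice of the intermediate vertex $C$. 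I would handle this by choosing $C$ of the form $(\{v_0^*, u_0\}, \cdot)$ for a fresh $u_0$, so that $C$ shares $v_0^*$ with both $(A_0^*,A_1)$ and $(A_0^*,A_1')$; counting how many such $u_0$ (equivalently how many such $C$) yield good edges on \emph{both} sides is again a union bound over two events each of probability $O(\epsilon^6)$, which is fine. Once the path structure is pinned down, the rest is the triangle inequality and constant-chasing.
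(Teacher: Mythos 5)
Your proposal is correct and follows essentially the same route as the paper's proof: average over the choice of $A_0^*$ (and over both orderings $(v_0^*,w_0^*)$, $(w_0^*,v_0^*)$) to isolate a pair where almost all edges of $\cR$ restricted to that pair are good, connect any two vertices $(A_0^*,A_1)$, $(A_0^*,A_1')$ by a length-two path through a common neighbor whose degree-zero part contains $v_0^*$ so that both edges share $v_0^*$, apply the inner-product triangle inequality, fix $\ket{\tau^{v_0^*}}$ to be one of the $\ket{\phi^{v_0^*}_{(A_0^*,A_1^*)}}$, and intersect the two $3/4$-density sets. The paper pins down your union-bound bookkeeping by showing the eight-element sets $\cR[(v_0^*,w_0^*),\{A_1,A_1'\}]$ exactly partition $\cR[(v_0^*,w_0^*)]$, but this is the same argument.
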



\begin{proof}

For each ordered pair $(v_0,w_0)$ where $v_0 \neq w_0 \in \bbF_2^n$, define \[\cR[(v_0,w_0)] \coloneqq \left\{(A,B) \in \cR : A_0 = \{v_0,w_0\} \wedge v_{A,B} = v_0\right\}.\] 
Then \cref{claim:phis} implies that there exists some set $\{v_0^*,w_0^*\}$ such that \begin{align*}&\Pr_{(A,B) \gets \cR[(v_0^*,w_0^*)]}\left[|\braket{\phi_A^{v_{A,B}} | \phi_B^{v_{A,B}}}| = |\braket{\phi_A^{v_0^*} | \phi_B^{v_0^*}}|  \geq 1-\epsilon^6\right] \geq 1 - 4\epsilon^6, ~~ \text{and} \\ &\Pr_{(A,B) \gets \cR[(w_0^*,v_0^*)]}\left[|\braket{\phi_A^{v_{A,B}} | \phi_B^{v_{A,B}}}| = |\braket{\phi_A^{w_0^*} | \phi_B^{w_0^*}}| \geq 1-\epsilon^6\right] \geq 1 - 4\epsilon^6.\end{align*}

Let $A_0^* = \{v_0^*,w_0^*\}$, let $\cA_1 \coloneqq \{\{v_1,w_1\}\}_{v_1 + w_1 = v_0^* + w_0^*}$ be the set of $A_1$ such that $(A_0^*,A_1) \in \cA$, let

\[\left\{\ket{\psi_{(A^*_0,A_1)}} \coloneqq \alpha_{(A^*_0,A_1)}^{v_0^*}\ket{v_0^*}\ket{\phi_{(A^*_0,A_1)}^{v_0^*}} + \alpha_{(A^*_0,A_1)}^{w_0^*}\ket{w_0^*}\ket{\phi_{(A^*_0,A_1)}^{w_0^*}}\right\}_{A_1 \in \cA_1},\]


and let \[\cA_1^{\times 2} = \{\{A_1,A_1'\}\}_{A_1 \neq A_1' \in \cA_1}.\] Note that by the definition of $\cA_1$, for any $\{A_1,A_1'\}\in \cA_1^{\times 2}$, it holds that $A_1\cap A_1'=\emptyset$. Now, we will argue that there exists a vector $\ket{\tau^{v_0^*}}$ and a set $\cA_1^{v_0^*}$ of size at least $\frac{3}{4}2^{n-1}$ such that for all $A_1 \in \cA_1^*$,

\[|\braket{\phi_{(A_0^*,A_1)}^{v_0^*} | \tau^{v_0^*}}| \geq 1-2\epsilon^3.\]

Consider any $\{A_1,A_1'\} \in \cA_1^{\times 2}$, where $A_1 = \{v_1,w_1\}$ and $A_1' = \{v_1',w_1'\}$. There are exactly four $B$ such that \[((A_0^*,A_1),B) \in \cR[(v_0^*,w_0^*)] ~~ \text{and} ~~ ((A_0^*,A_1'),B) \in \cR[(v_0^*,w_0^*)],\] which are\footnote{Note that $v_0^* + v_1 + v_1' \neq w_0^*$ since otherwise $w_1 = v_1 + (v_0^* + w_0^*) = v_1'$ and $w_1' = v_1 + (v_0^* + w_0^*) = v_1$ which would mean that $A_1 = A_1'$. Thus, for the first $B$ listed, $((A_0^*,A_1),B) \in \cR[(v_0^*,w_0^*)]$, and a similar argument holds for the rest of the $B$.} 
\begin{align*}
    B \in \begin{Bmatrix} \begin{array}{l} (\{v_0^*,v_0^*+v_1+v_1'\},\{v_1,v_1'\}), \\
    (\{v_0^*,v_0^*+w_1+w_1'\},\{w_1,w_1'\}), \\
    (\{v_0^*,v_0^* + v_1 + w_1'\}, \{v_1,w_1'\}), \\
    (\{v_0^*,v_0^* + w_1 + v_1'\}, \{w_1,v_1'\})\end{array}\end{Bmatrix}.
\end{align*}

Define \[\cR[(v_0^*,w_0^*),\{A_1,A_1'\}] \coloneqq \{((A_0^*,A_1),B)\}_B \cup \{((A_0^*,A_1'),B)\}_B\] where the indexing is over the four $B$ such that \[((A_0^*,A_1),B) \in \cR[(v_0^*,w_0^*)] ~~ \text{and} ~~ ((A_0^*,A_1'),B) \in \cR[(v_0^*,w_0^*)].\]

Note that for any two $\{A_1,A_1'\} \neq \{\widetilde{A}_1,\widetilde{A}_1'\} \in \cA_1^{\times 2}$, the sets $\cR[(v_0^*,w_0^*),\{A_1,A_1'\}]$ and $\cR[(v_0^*,w_0^*),\{\widetilde{A}_1,\widetilde{A}_1'\}]$ are disjoint, which can be seen by noting that $B_1$ always includes one vector from $A_1$ and one from $A_1'$.



Next, we claim that 

\[\cR[(v_0^*,w_0^*)]  = \bigcup_{\{A_1,A_1'\} \in \cA_1^{\times 2}} \cR[(v_0^*,w_0^*),\{A_1,A_1'\}],\] which follows from a counting argument. First, \[\bigg| \bigcup_{\{A_1,A_1'\} \in \cA_1^{\times 2}} \cR[(v_0^*,w_0^*),\{A_1,A_1'\}]\bigg| = 8 \cdot \binom{2^{n-1}}{2} = 2^{2n} - 2^{n+1}.\] Then, counting $|\cR[(v_0^*,w_0^*)]|$ directly, we can choose from any of the $2^{n-1}$ possible $A_1$, any $2^n - 2$ of the possible $B_0$, and then, given $B_0$, the two possible $B_1$ that intersect $A_1$.  Thus, \[\big| \cR[(v_0^*,w_0^*)]\big| = 2^{n-1} \cdot (2^n - 2) \cdot 2 = 2^{2n} - 2^{n+1}.\] This establishes that the sets 

\[\left\{\cR[(v_0^*,w_0^*),\{A_1,A_1'\}]\right\}_{\{A_1,A_1'\} \in \cA_1^{\times 2}}\]

partition $\cR[(v_0^*,w_0^*)]$ equally into sets of size 8. Thus,\footnote{Here, we show that there exists a large fraction of $\{A_1,A_1'\}$ such that \emph{all} $(A,B) \in \cR[(v_0^*,w_0^*),\{A_1,A_1'\}]$ are ``good'', meaning that $|\braket{\phi_A^{v_{A,B}} | \phi_B^{v_{A,B}}}| \geq 1-\epsilon^6$. As we will see later, it would have sufficed to prove the slightly weaker claim that there exists a large fraction of $\{A_1,A_1'\}$ such that \emph{at least} 5/8 of the $(A,B) \in \cR[(v_0^*,w_0^*),\{A_1,A_1'\}]$ are good. This is because for each such $\{A_1,A_1'\}$, we will just need a single $B$ (rather that all four) such that $((A_0^*,A_1),B)$ \emph{and} $((A_0^*,A_1'),B)$ are good.}  \[\Pr_{\{A_1,A_1'\} \gets \cA_1^{\times 2}}\left[\forall (A,B) \in \cR[(v_0^*,w_0^*),\{A_1,A_1'\}], |\braket{\phi_A^{v_0^*} | \phi_B^{v_0^*}}| \geq 1-\epsilon^6\right] \geq 1-32\epsilon^6,\]


which means that there exists some $A_1^* = \{v_1^*,w_1^*\}$ such that

\[\Pr_{A_1 \gets \cA_1 \setminus \{A_1^*\}} \left[\forall (A,B) \in \cR[(v_0^*,w_0^*),\{A_1^*,A_1\}], |\braket{\phi_A^{v_0^*} | \phi_B^{v_0^*}}| \geq 1-\epsilon^6\right] \geq 1 - 32\epsilon^6 \geq \frac{7}{8},\] which holds for all $\epsilon \leq 1/8$. 

Let $\cA_1^{v_0^*}$ be the set of $A_1$ such that \[\forall (A,B) \in \cR[(v_0^*,w_0^*),\{A_1^*,A_1\}], |\braket{\phi_A^{v_0^*} | \phi_B^{v_0^*}}| \geq 1-\epsilon^6,\] and note that $|\cA_1^{v_0^*}| \geq \frac{7}{8}(2^{n-1}-1) > \frac{3}{4}2^{n-1}$.


Now consider any $A_1  = \{v_1,w_1\} \in \cA_1^{v_0^*}$, and note that for $B = (\{v_0^*,v_0^*+v_1^*+v_1\},\{v_1^*,v_1\})$, we have that \[((A_0^*,A_1^*),B), ((A_0^*,A_1),B) \in \cR[(v_0^*,w_0^*),\{A_1^*,A_1\}].\]

Thus, we know that \[|\braket{\phi_{(A_0^*,A_1^*)}^{v_0^*} | \phi^{v_0^*}_{B}}| \geq 1-\epsilon^6, ~~ \text{and} ~~ |\braket{\phi_{(A_0^*,A_1)}^{v_0^*} | \phi^{v_0^*}_{B}}| \geq 1 - \epsilon^6,\]

so by \cref{fact:inner-product}, 

\[|\braket{\phi_{(A_0^*,A_1)}^{v_0^*} | \phi_{(A_0^*,A_1^*)}^{v_0^*}}| \geq (1-\epsilon^6)^2 - \sqrt{2\epsilon^6} \geq 1-2\epsilon^3.\]

Then if we set $\ket{\tau^{v_0^*}} \coloneqq \ket{\phi_{(A_0^*,A_1^*)}^{v_0^*}}$, we have that for all $A_1 \in \cA_1^{v_0^*}$, 

\[|\braket{\phi_{(A_0^*,A_1)}^{v_0^*} | \tau^{v_0^*}}| \geq 1 - 2\epsilon^3.\]

Finally, repeating the analysis for $\cR[(w_0^*,v_0^*)]$, there exists a $\ket{\tau^{w_0^*}}$ and a set $\cA_1^{w_0^*}$ of size at least $\frac{3}{4}2^{n-1}$ such that for all $A_1 \in \cA_1^{w_0^*}$, 
\[|\braket{\phi_{(A_0^*,A_1)}^{w_0^*} | \tau^{w_0^*}}| \geq 1 - 2\epsilon^3.\]

Thus, setting $\cA^*_1 \coloneqq \cA_1^{v_0^*} \cap \cA_1^{w_0^*}$ (which has size $\geq 2^{n-2}$) completes the proof.

\end{proof}

Finally, we can reach a contradiction by using the fact that for any fixed $A_0^*$, all of the $\Pi_{A_1}$ such that $(A_0^*,A_1) \in \cA$ are  orthogonal, which follows from the definition of the $\Pi_{A_1}$. 

Now, define the rank-two projector 

\[\Pi^* \coloneqq \ket{v_0^*}\ketbra{\tau^{v_0^*}}{\tau^{v_0^*}}\bra{v_0^*} + \ket{w_0^*}\ketbra{\tau^{w_0^*}}{\tau^{w_0^*}}\bra{w_0^*}.\]

By \cref{claim:intersection} and the assumption of the theorem, for each $A_1 \in \cA_1^*$ we know that 

\[\| \Pi^* \ket{\psi_{(A_0^*,A_1)}}\| \geq 1-2\epsilon^3 ~~ \text{and} ~~ \| \Pi_{A_1} \ket{\psi_{(A_0^*,A_1)}}\| \geq \epsilon.\]

For each $A_1 \in \cA_1^*$, define

\[\ket{\psi^*_{A_1}} \coloneqq \frac{\Pi^*\ket{\psi_{(A_0^*,A_1)}}}{\|\Pi^*\ket{\psi_{(A_0^*,A_1)}}\|}.\]

Thus, since $|\braket{\psi_{A_1}^* | \psi_{(A_0^*,A_1)}}| \geq 1-2\epsilon^3$ and $\|\Pi_{A_1}\ket{\phi_{(A_0^*,A_1)}}\| \geq \epsilon$, by \cref{fact:inner-product} (second part) it holds that \[\| \Pi_{A_1}\ket{\psi_{A_1}^*}\| \geq \epsilon(1-2\epsilon^3) - 2\epsilon^{3/2} \geq \frac{\epsilon}{2},\] which holds for all $\epsilon \leq 1/8$. 


Consider the following algorithm, which will eventually select all $\{\ket{\psi^*_{A_1}}\}_{A_1 \in \cA_1^*}$.

\begin{enumerate}
    \item Set $i = 1$.
    \item Select an arbitrary (not yet selected) $\ket{\psi^*_{A_1}}$, and define $\ket{\psi_i} \coloneqq \ket{\psi^*_{A_1}}$.
    \item Select all (not yet selected) $\ket{\psi^*_{A_1}}$ such that $|\braket{\psi^*_{A_1} | \psi_i}| \geq 1-\epsilon^4$.
    \item Set $i = i+1$ and go back to Step 2.
\end{enumerate}

First, we claim that in each invocation of Step 3, we select at most $16/\epsilon^2$ vectors. To see this, note that for each $\ket{\psi^*_{A_1}}$ selected in Step 3 during the $i$'th loop of the procedure, $|\braket{\psi_{A_1}^* | \psi_i}| \geq 1-\epsilon^4$ and $\|\Pi_{A_1}\ket{\psi^*_{A_1}}\| \geq \epsilon/2$. Thus, by \cref{fact:inner-product} (second part), 

\[\|\Pi_{A_1}\ket{\psi_i}\| \geq \frac{\epsilon}{2}(1-\epsilon^4)-\sqrt{2}\epsilon^2 \geq \frac{\epsilon}{4},\] which holds for all $\epsilon \leq 1/8$. Since the $\Pi_{A_1}$ are all orthogonal, and $\ket{\psi_i}$ has a component of at least $\epsilon^2/16$ squared norm on each, we conclude that there can be at most $16/\epsilon^2$ such $A_1$. 

Second, let $I$ be the value of $i$ when the procedure terminates. Note that the $\{\ket{\psi_i}\}_{i \in [I]}$ are all in the image of a two-dimensional subspace $\mathsf{Im}(\Pi^*)$, and for all $i \neq j,$ $|\braket{\psi_i | \psi_j}| < 1-\epsilon^4$. 

Now, we use a Welch bound.

\begin{importedtheorem}[\cite{1055219}]
Let $\{x_1,\dots,x_I\}$ be unit vectors in $\bbC^d$, and define $c = \max_{i \neq j}|\braket{x_i | x_j}|$. Then for every $k \in \bbN$,

\[c^{2k} \geq \frac{1}{I-1}\left(\frac{I}{\binom{k+d-1}{k}} - 1\right).\]
\end{importedtheorem}

Setting $d = 2$ and $k = I/2 - 1$, we have that \[\frac{1}{I-1} \leq (1-\epsilon^4)^{I-2} \leq e^{-\epsilon^4 (I-2)} \implies \frac{1}{\epsilon^4} \geq \frac{I-2}{\ln(I-1)} \geq \sqrt{I} \implies I \leq \frac{1}{\epsilon^8}.\]

Putting these two facts together, we have that the size of $\cA_1^*$ is at most $16/\epsilon^{10}$, meaning that \[2^{n-2} \leq \frac{16}{\epsilon^{10}} \implies 2^n \leq \frac{64}{\epsilon^{10}},\] and contradicting the fact that $n > 10\log(1/\epsilon) + 6$.

\end{proof}

\subsection{Useful facts}



\begin{fact}\label{fact:inner-product}
Let $\ket{\phi_a},\ket{\phi_b}$ be complex unit vectors such that $|\braket{\phi_a | \phi_b}| \geq 1-\alpha$. Then the following hold.
\begin{enumerate}
    \item If $\ket{\phi_c}$ is a complex unit vector such that $|\braket{\phi_b | \phi_c}| \geq \beta$, then $|\braket{\phi_a | \phi_c}| \geq \beta(1-\alpha) - \sqrt{2\alpha}$.
    \item If $\Pi$ is a projector such that $\| \Pi \ket{\phi_b}\| \geq \beta$, then $\| \Pi \ket{\phi_a}\| \geq \beta(1-\alpha) - \sqrt{2\alpha}$.
\end{enumerate}

\end{fact}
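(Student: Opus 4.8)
The plan is to reduce both parts to a single elementary orthogonal decomposition. Write $\ket{\phi_a} = \gamma \ket{\phi_b} + \ket{\phi_a^\perp}$, where $\gamma \coloneqq \braket{\phi_b | \phi_a}$ and $\ket{\phi_a^\perp}$ is orthogonal to $\ket{\phi_b}$. Since $\ket{\phi_a}$ and $\ket{\phi_b}$ are unit vectors, $\|\ket{\phi_a^\perp}\|^2 = 1 - |\gamma|^2 = 1 - |\braket{\phi_a|\phi_b}|^2$, and the hypothesis $|\braket{\phi_a|\phi_b}| \geq 1-\alpha$ gives $\|\ket{\phi_a^\perp}\|^2 \leq 1 - (1-\alpha)^2 = 2\alpha - \alpha^2 \leq 2\alpha$, i.e. $\|\ket{\phi_a^\perp}\| \leq \sqrt{2\alpha}$. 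Note also that $|\gamma| = |\braket{\phi_a|\phi_b}| \geq 1-\alpha$, and that working throughout with absolute values means no global phases need to be tracked.

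For the first part, expand $\braket{\phi_a | \phi_c} = \overline{\gamma}\braket{\phi_b|\phi_c} + \braket{\phi_a^\perp | \phi_c}$. By the triangle inequality and Cauchy--Schwarz, $|\braket{\phi_a|\phi_c}| \geq |\gamma|\cdot|\braket{\phi_b|\phi_c}| - \|\ket{\phi_a^\perp}\| \cdot \|\ket{\phi_c}\| \geq (1-\alpha)\beta - \sqrt{2\alpha}$, which is exactly the claimed bound.

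For the second part, apply $\Pi$ to the same decomposition: $\Pi\ket{\phi_a} = \gamma\,\Pi\ket{\phi_b} + \Pi\ket{\phi_a^\perp}$. By the triangle inequality for the norm, $\|\Pi\ket{\phi_a}\| \geq |\gamma|\cdot\|\Pi\ket{\phi_b}\| - \|\Pi\ket{\phi_a^\perp}\|$, and since $\Pi$ is a projector, $\|\Pi\ket{\phi_a^\perp}\| \leq \|\ket{\phi_a^\perp}\| \leq \sqrt{2\alpha}$, giving $\|\Pi\ket{\phi_a}\| \geq (1-\alpha)\beta - \sqrt{2\alpha}$.

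There is essentially no obstacle here; the argument is a routine application of orthogonal decomposition plus the triangle inequality. The only points requiring a moment's care are the elementary bound $1-(1-\alpha)^2 \leq 2\alpha$ and the observation that expressing everything in terms of $|\braket{\cdot|\cdot}|$ sidesteps any phase bookkeeping. One could alternatively phrase the argument using the phase-adjusted distance $\|\ket{\phi_a} - e^{i\theta}\ket{\phi_b}\| \le \sqrt{2\alpha}$ (for $e^{i\theta}$ the phase of $\braket{\phi_a|\phi_b}$), which would even yield the slightly stronger bound $\beta - \sqrt{2\alpha}$; we state the weaker form for uniformity with the way the fact is invoked in the proof of \cref{thm:two-dimensional}.
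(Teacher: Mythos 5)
Your proof is correct and follows essentially the same route as the paper's: an orthogonal decomposition of $\ket{\phi_a}$ along $\ket{\phi_b}$, then the triangle inequality and Cauchy--Schwarz (the paper proves part 2 by reducing to part 1 via $\ket{\phi_c} \coloneqq \Pi\ket{\phi_b}/\|\Pi\ket{\phi_b}\|$, whereas you apply $\Pi$ to the decomposition directly; the two are equivalent). Your version is marginally cleaner in that you keep the exact coefficient $\gamma = \braket{\phi_b|\phi_a}$ with $|\gamma| \ge 1-\alpha$, rather than writing the component along $\ket{\phi_b}$ as exactly $e^{i\theta}(1-\alpha)$ as the paper does.
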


\begin{proof}
To show the first part, write $\ket{\phi_a} = e^{i \theta}(1-\alpha)\ket{\phi_b} + \sqrt{2\alpha-\alpha^2}\ket{\phi_b^\bot}$ for some $\theta$ and $\ket{\phi_b^\bot}$ orthogonal to $\ket{\phi_b}$. 
Then

\begin{align*}
    |\braket{\phi_a | \phi_c}| &= |e^{i\theta}(1-\alpha)\braket{\phi_b | \phi_c} + \sqrt{2\alpha-\alpha^2}\braket{\phi_b^\bot | \phi_c}| \\ 
    &\geq |e^{i\theta}(1-\alpha)\braket{\phi_b | \phi_c}| - \sqrt{2\alpha-\alpha^2} \\
    &\geq \beta(1-\alpha) - \sqrt{2\alpha}.
\end{align*}

To show the second part, define \[\ket{\phi_c} \coloneqq \frac{\Pi\ket{\phi_b}}{\|\Pi\ket{\phi_b}\|},\] and note that \[|\braket{\phi_b | \phi_c}| = \frac{\bra{\phi_b}\Pi\ket{\phi_b}}{\| \Pi \ket{\phi_b}\|} = \frac{\|\Pi\ket{\phi_b}\|^2}{\|\Pi\ket{\phi_b}\|} = \|\Pi\ket{\phi_b}\| \geq \beta.\]

Thus, \[\|\Pi\ket{\phi_a}\| \geq \|\ketbra{\phi_c}{\phi_c}\ket{\phi_a}\| = |\braket{\phi_a | \phi_c}| \geq \beta(1-\alpha) -\sqrt{2\alpha},\] where the first inequality follows because $\ket{\phi_c} \in \mathsf{Im}(\Pi)$ and the second inequality follows from the first part.

\end{proof}

\begin{fact}\label{fact:robustness}
Let \[u_1 \coloneqq \begin{pmatrix} a_1 \\ a_2 \\ 0\end{pmatrix}, u_2 \coloneqq \begin{pmatrix} b_1 \\ 0 \\ b_2 \end{pmatrix}, u_3 \coloneqq \begin{pmatrix} 0 \\ c_1 \\ c_2 \end{pmatrix}\] be three unit vectors in $\bbR_{\geq 0}^3$. Then, \[u_1 \cdot u_2 + u_1 \cdot u_3 + u_2 \cdot u_3 \leq \frac{3}{2}.\] Moreover, for any $\delta \in [0,1/2]$, if \[u_1 \cdot u_2 + u_1 \cdot u_3 + u_2 \cdot u_3 \geq \frac{3}{2}-\frac{\delta^3}{2},\] then \[u_1 \cdot u_2 \geq \frac{1}{2}- \delta, ~~ u_1 \cdot u_3 \geq \frac{1}{2} - \delta, ~~ \text{and} ~~ u_2 \cdot u_3 \geq \frac{1}{2} - \delta.\]
\end{fact}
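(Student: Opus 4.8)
The plan is to derive both parts from a single algebraic identity. Writing $\eta_1 \coloneqq a_1-b_1$, $\eta_2 \coloneqq a_2-c_1$, $\eta_3 \coloneqq b_2-c_2$ and recalling that $u_1\cdot u_2 = a_1b_1$, $u_1\cdot u_3 = a_2c_1$, $u_2\cdot u_3 = b_2c_2$, I would first expand each $\eta_i^2$ and use the unit-norm constraints $a_1^2+a_2^2 = b_1^2+b_2^2 = c_1^2+c_2^2 = 1$ to obtain
\[\frac{3}{2} - \big(u_1\cdot u_2 + u_1\cdot u_3 + u_2\cdot u_3\big) = \frac{\eta_1^2 + \eta_2^2 + \eta_3^2}{2}.\]
Since the right-hand side is nonnegative, this is exactly the first claim; and under the hypothesis $u_1\cdot u_2 + u_1\cdot u_3 + u_2\cdot u_3 \ge 3/2 - \delta^3/2$ it yields $\eta_1^2 + \eta_2^2 + \eta_3^2 \le \delta^3$, which I will use throughout the second part.

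For the ``moreover'' statement I would first note that the configuration is symmetric under the group of permutations of the three coordinates of $\bbR^3$, which permutes $u_1,u_2,u_3$ and hence the three inner products; so it suffices to prove $u_1\cdot u_2 = a_1b_1 \ge 1/2 - \delta$. The main step is the identity $2a_1b_1 = a_1^2+b_1^2-\eta_1^2$, combined with $a_1^2+b_1^2 = (1-a_2^2)+(1-b_2^2) = 2-m^2$ where $m^2 \coloneqq a_2^2+b_2^2$, so that $2a_1b_1 = 2-m^2-\eta_1^2$ and it remains to bound $m$ from above. Using $c_1^2+c_2^2 = 1$ I would write $m^2-1 = (a_2^2-c_1^2)+(b_2^2-c_2^2) = \eta_2(a_2+c_1)+\eta_3(b_2+c_2)$ and apply Cauchy--Schwarz twice: once to get $m^2-1 \le \sqrt{\eta_2^2+\eta_3^2}\cdot\sqrt{(a_2+c_1)^2+(b_2+c_2)^2}$, and once to get $(a_2+c_1)^2+(b_2+c_2)^2 = m^2+1+2(a_2c_1+b_2c_2) \le m^2+1+2m = (m+1)^2$. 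Since $\sqrt{\eta_2^2+\eta_3^2} \le \delta^{3/2}$, this gives $m^2-1 \le \delta^{3/2}(m+1)$, hence $m-1 \le \delta^{3/2}$ after dividing by $m+1>0$, and therefore $m^2 \le (1+\delta^{3/2})^2 = 1+2\delta^{3/2}+\delta^3$ (using $m\ge 0$).

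Finally, combining with $\eta_1^2 \le \delta^3$ I would conclude $2a_1b_1 \ge 2-(1+2\delta^{3/2}+\delta^3)-\delta^3 = 1-2\delta^{3/2}-2\delta^3$, i.e. $a_1b_1 \ge 1/2 - \delta^{3/2} - \delta^3$, and close the argument via the elementary inequality $\delta^{3/2}+\delta^3 \le \delta$ for $\delta\in[0,1/2]$ — equivalently $\delta^{1/2}+\delta^2 \le 1$, which holds because the left side is increasing on $[0,1/2]$ and equals $1/\sqrt{2}+1/4 < 1$ at $\delta = 1/2$. The step I expect to be the main obstacle is exactly this matching of constants: a cruder estimate for $m$ (say, bounding $a_2+c_1 \le 2$ directly rather than using the self-referential bound $(a_2+c_1)^2+(b_2+c_2)^2 \le (m+1)^2$) loses a constant factor and the resulting bound $a_1b_1 \ge 1/2 - c\,\delta^{3/2} - \cdots$ with $c>1$ fails for $\delta$ near $1/2$; so squeezing out the sharp bound $m \le 1+\delta^{3/2}$ from the two nested Cauchy--Schwarz applications is what makes the numbers fit.
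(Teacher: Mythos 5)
Your proof is correct, and for the ``moreover'' part it takes a genuinely different route from the paper's. For the first inequality the two arguments are essentially the same computation: your identity $\tfrac{3}{2}-\sum_{i<j}u_i\cdot u_j=\tfrac{1}{2}(\eta_1^2+\eta_2^2+\eta_3^2)$ is the exact (sum-of-squares) form of the Cauchy--Schwarz step the paper applies to the rearranged $6$-dimensional vectors, and it has the advantage of handing you the quantitative input $\eta_1^2+\eta_2^2+\eta_3^2\le\delta^3$ for free. For the second part the paper argues by contrapositive: it fixes $u_1\cdot u_2=\tfrac12-\delta$, explicitly maximizes $u_1\cdot u_3+u_2\cdot u_3$ subject to that constraint (via a parametrization $a_1=\sqrt{1-x}$, $b_1=\sqrt{1-y}$ plus Cauchy--Schwarz and AM--GM, yielding the bound $\sqrt{1+2\delta}$), and then verifies the one-variable inequality $\tfrac12-\delta+\sqrt{1+2\delta}\le\tfrac32-\tfrac{\delta^3}{2}$. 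You instead work forward from the hypothesis, using $2a_1b_1=2-(a_2^2+b_2^2)-\eta_1^2$ and the self-referential double Cauchy--Schwarz bound $m^2-1\le\sqrt{\eta_2^2+\eta_3^2}\,(m+1)$ to get $m\le 1+\delta^{3/2}$, reducing everything to $\delta^{1/2}+\delta^2\le 1$ on $[0,1/2]$; your symmetry reduction (coordinate transpositions permute the roles of $u_1,u_2,u_3$ and hence the three inner products) is also sound. I checked the constants: $a_1b_1\ge\tfrac12-\delta^{3/2}-\delta^3$ and $\delta^{1/2}+\delta^2\le 1/\sqrt2+1/4<1$ at $\delta=1/2$ with the left side increasing, so the numbers do fit. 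What each approach buys: the paper's optimization isolates the extremal configuration cleanly but requires the contrapositive framing and a slightly delicate final polynomial inequality; your version is more direct (no case analysis on which product is small beyond the symmetry reduction, no optimization over a constraint surface) at the cost of the nested Cauchy--Schwarz trick needed to avoid losing a constant factor in the bound on $m$.
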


\begin{proof}
We begin with the first part of the claim. Let $v_1 \coloneqq (a_1 \ a_2 \ b_1 \ b_2 \ c_1 \ c_2)$ and $v_2 \coloneqq (b_1 \ c_1 \ a_1 \ c_2 \ a_2 \ b_2)$. Then,

\begin{align*}
    u_1 \cdot u_2 + u_1 \cdot u_3 + u_2 \cdot u_3 = \frac{1}{2}v_1 \cdot v_2^\top \leq \frac{1}{2}(a_1^2 + a_2^2 + b_1^2 + b_2^2 + c_1^2 + c_2^2) = \frac{3}{2},
\end{align*}

where the inequality is Cauchy-Schwartz.

Now, we prove the ``moreover'' part. This is trivial when $\delta = 1/2$, so suppose that $u_1 \cdot u_2 = 1/2 - \delta$ for some $\delta \in [0,1/2)$. We will show that this implies that \[u_1 \cdot u_2 + u_1 \cdot u_3 + u_2 \cdot u_3 \leq \frac{3}{2} - \frac{\delta^3}{2},\] which, by symmetry, would complete the proof.

Define the value \[m \coloneqq \max_{\substack{\begin{array}{c}u_1,u_2,u_3, \\ u_1 \cdot u_2 = 1/2 - \delta\end{array}}}\left\{u_1 \cdot u_3 + u_2 \cdot u_3\right\},\] and let $a_1 = \sqrt{1-x}, a_2 = \sqrt{x}, b_1 = \sqrt{1-y}$ and $b_2 = \sqrt{y}$ for some $x,y \in [0,1)$. Then,

\begin{align*}
    m &= \max_{x,y \in [0,1), \sqrt{1-x}\sqrt{1-y} = 1/2 - \delta}\left\{\sqrt{x}c_1 + \sqrt{y}c_2\right\} \\
    &\leq \max_{x,y \in [0,1), \sqrt{1-x}\sqrt{1-y} = 1/2 - \delta}\left\{\sqrt{x+y}\sqrt{c_1 + c_2}\right\} \\
    &= \max_{x,y \in [0,1), \sqrt{1-x}\sqrt{1-y} = 1/2 - \delta}\left\{\sqrt{x+y}\right\},
\end{align*}

where the inequality is Cauchy-Schwartz.





Next, we solve for \[y = 1 - \frac{(\frac{1}{2}-\delta)^2}{1-x},\] and see that 

\begin{align*}
    m^2 &= \max_{x \in [0,1)}\left\{x + 1 - \frac{(\frac{1}{2}-\delta)^2}{1-x}\right\} \\
    &= \max_{x \in [0,1)} \left\{2 - \frac{2}{1-x}\left(\frac{(1-x)^2 + \left(\frac{1}{2}-\delta\right)^2}{2}\right)\right\} \\
    &\leq 2 - 2\left(\frac{1}{2}-\delta\right) \\
    &= 1+2\delta,
\end{align*}

where the inequality is AM-GM.


Thus, to complete the proof it suffices to show that
\[\frac{1}{2} - \delta + \sqrt{1+2\delta} \leq \frac{3}{2} - \frac{\delta^3}{2}.\] If $\delta = 0$, then both sides are 1, so now assume that $\delta > 0$. Then

\begin{align*}
\frac{1}{2} - \delta + \sqrt{1+2\delta} \leq \frac{3}{2} - \frac{\delta^3}{2} &\iff \sqrt{1+2\delta} \leq 1 + \delta - \frac{\delta^3}{2} \\ &\iff 1+2\delta \leq 1+2\delta + \delta^2 - (1+\delta)\delta^3 + \frac{\delta^6}{4} \\ &\iff \delta + \delta^2 - \frac{\delta^4}{4} \leq 1,
\end{align*}

which is true for all $\delta \in (0,1/2)$.
\end{proof}

\section{Remaining Proofs from \cref{subsec:classical-verification}}\label{sec:appendix-CVQC}

In this appendix, we prove \cref{lemma:soundness}, \cref{lemma:Q-in}, and \cref{lemma:Q-out}. We proceed via three steps.

\begin{enumerate}
    \item Compile the information-theoretic protocol $\Pi^\QV$ from 
    \cref{subsec:QPIP} into a 4-message quantum ``commit-challenge-response'' protocol $\Pi^\CCR$ with a classical verifier. This compilation is achieved via the use of Mahadev's measurement protocol \cite{SIAMCOMP:Mahadev22}. As argued in \cite{TCC:Bartusek21}, the resulting protocol satisfies a ``computationally orthogonal projectors'' property, which was first described by \cite{TCC:ACGH20}.
    \item Apply parallel repetition to $\Pi^\CCR$ to obtain $\Pi^\parl$, and observe that the parallel repetition theorem of \cite{TCC:Bartusek21} implies that the analogues of \cref{lemma:soundness}, \cref{lemma:Q-in}, and \cref{lemma:Q-out} hold in $\Pi^\parl$.
    \item Apply Fiat-Shamir to $\Pi^\parl$ to obtain the protocol $\Pi^\CV$ from \proref{fig:CVQC}, and observe that Measure and Re-program (\cref{thm:measure-and-reprogram}) implies that \cref{lemma:soundness}, \cref{lemma:Q-in}, and \cref{lemma:Q-out} must also hold with respect to $\Pi^\CV$.
\end{enumerate}



\protocol{Commit-challenge-response protocol $\Pi^\CCR = (\sV_\Gen^\CCR, \sP_\Com^\CCR, \sP_\Prove^\CCR, \sV^\CCR_\Ver)$}{A quantum ``commit-challenge-response'' protocol for verifying quantum partitioning circuits.}{fig:CVQC-single}{

\textbf{Parameters}: Number of qubits $\ell = \ell(\secp)$ in the prover's state.\\

\begin{itemize}
    \item $\sV_\Gen^\CCR(1^\secp,Q) \to (\pp,\sparam)$: Sample $(h,S) \gets \sV_\Gen^\QV(1^\secp,Q)$ and $\{(\pk_j,\sk_j) \gets \TCF.\Gen(1^\secp,h_j)\}_{j \in [\ell]}$, and set \[\pp \coloneqq \{\pk_j\}_{j \in [\ell]}, \ \ \ \sparam \coloneqq (h,S,\{\sk_j\}_{j \in [\ell]}).\]
    \item $\sP_\Com^\CCR(1^\secp,Q,x,\pp) \to (\regB,\regZ,y)$: Prepare the state $\ket{\psi} \gets \sP^\QV(1^\secp,Q,x)$ on register $\regB = (\regB_1,\dots,\regB_\ell)$, which we write as 
    \[\ket{\psi} \coloneqq \sum_{v \in \{0,1\}^\ell}\alpha_v\ket{v}^{\regB},\] and then for each $j \in [\ell]$, apply $\TCF.\Eval[\pk_{j}](\regB_{j}) \to (\regB_{j},\regZ_{j},\regY_{j})$, resulting in the state
    \[\sum_{v \in \{0,1\}^\ell}\alpha_v\ket{v}^{\regB}\ket{\psi_{\pk_{1},v_1}}^{\regZ_{1},\regY_{1}},\dots,\ket{\psi_{\pk_{\ell},v_\ell}}^{\regZ_{\ell},\regY_{\ell}}.\] Finally, measure registers $\regY_1,\dots,\regY_\ell$ in the standard basis to obtain string $y \coloneqq \{y_j\}_{j \in [\ell]}$.
    \item The verifier samples a random bit $d \gets \{0,1\}$, and sends $d$ to the prover.
    \item $\sP_\Prove^\CCR(\regB,\regZ,d) \to z$: If $d = 0$, the prover measures registers $\regB,\regZ$ in the standard basis to obtain $z \coloneqq \{b_j,z_j\}_{j \in [\ell]}$. If $d = 1$, the prover applies $J(\cdot)$ coherently to each register $\regZ_j$ and then measures registers $\regB,\regZ$ in the Hadamard basis to obtain $z \coloneqq \{b_j,z_j\}_{j \in [\ell]}$.
    \item $\sV_\Ver^\CCR(Q,x,\sparam,y,d,z) \to \{\{q_t\}_{t \in [\secp]}\} \union \{\top,\bot\}$:
    
    \begin{itemize}
        \item Parse $y \coloneqq \{y_j\}_{j \in [\ell]}$ and $z \coloneqq \{b_j,z_j\}_{j \in [\ell]}$.
        \item If $d = 0$, for each $j \in [\ell]$ compute $\TCF.\Ch(\pk_{j},b_{j},z_{j},y_{j})$. If any are $\bot$, then output $\bot$, and otherwise output $\top$.
        \item If $d = 1$, do the following for each $j \in [\ell]$. 
        \begin{itemize}
            \item If $h_{j} = 0$, compute $\TCF.\Invert(0,\sk_{j},y_{j})$, abort and output $\bot$ if the output is $\bot$, and otherwise parse the output as $(m_{j},x_{j})$.
            \item If $h_{j} = 1$, compute $\TCF.\Invert(1,\sk_{j},y_{j})$, abort and output $\bot$ if the output is $\bot$, and otherwise parse the output as $(0,x_{j,0}),(1,x_{j,1})$. Then, check $\TCF.\IsValid(x_{j,0},x_{j,1},z_{j})$ and abort and output $\bot$ if the result is $\bot$. Next, set $m_{j} \coloneqq b_{j} \oplus z_{j} \cdot (J(x_{j,0}) \oplus J(x_{j,1}))$.
        \end{itemize}
        Then, let $m \coloneqq (m_1,\dots,m_\ell)$ and compute $\sV_\Ver^\QV(Q,x,h,m)$. Output $\bot$ if the result is $\bot$, and otherwise output $\{q_t\}_{t \in [\secp]} \coloneqq m[S]$.
    \end{itemize}
             
\end{itemize}

}

\begin{proof}(of \cref{lemma:soundness}, \cref{lemma:Q-in}, and \cref{lemma:Q-out})
\medskip

\noindent\underline{Step 1}. We first describe the syntax of a generic commit-challenge-response protocol between a quantum prover $\sP$ and a classical verifier $\sV$. 

\begin{itemize}
    \item Commit: $\sP(1^\secp)$ and $\sV(1^\secp;r)$ engage in a two-message commitment protocol, where $r$ are the random coins used by $\sV$ to generate the first message of the protocol, and the prover responds with a classical commitment string.
    \item Challenge: $\sV$ samples a random bit $d \gets \{0,1\}$ and sends it to $\sP$.
    \item Response: $\sP$ computes a (classical) response $z$ and sends it to $\sV$. 
    \item Output: $\sV$ receives $z$ and decides to either accept and output $\top$ or reject and output $\bot$.
\end{itemize}

Consider any QPT adversarial prover $\sP^*$, and let $\ket{\psi^{\sP^*}_{\secp,r}}^{\regA,\regC}$ be the (purified) state of the prover after interacting with $\sV(1^\secp;r)$ in the commit phase, where $\regC$ holds the (classical) prover message output during this phase, and $\regA$ holds its remaining state.

The remaining strategy of the prover can be described by family of unitaries $\left\{\sU^{\sP^*}_{\secp,0},\sU^{\sP^*}_{\secp,1}\right\}_{\secp \in \mathbb{N}}$, where $\sU^{\sP^*}_{\secp,0}$ is applied to $\ket{\psi^{\sP^*}_{\secp,r}}$ on challenge 0 (followed by a measurement of $z$), and $\sU^{\sP^*}_{\secp,1}$ is applied to $\ket{\psi^{\sP^*}_{\secp,r}}$ on challenge 1 (followed by a measurement of $z$).

Let $\sV_{\secp,r,0}$ denote the accept projector applied by the verifier to the prover messages when $d=0$, and define $\sV_{\secp,r,1}$ analogously. Then define the following projectors on registers $(\regA,\regC)$.

\[\Pi^{\sP^*}_{\secp,r,0} \coloneqq {\sU^{\sP^*}_{\secp,0}}^\dagger \sV_{\secp,r,0}\sU^{\sP^*}_{\secp,0}, \ \ \ \Pi^{\sP^*}_{\secp,r,1} \coloneqq {\sU^{\sP^*}_{\secp,1}}^\dagger \sV_{\secp,r,1}\sU^{\sP^*}_{\secp,1}.\]

\begin{definition}\label{def:computationally-orthogonal-projectors}
A \emph{commit-challenge-response} protocol has \emph{computationally orthogonal projectors} if for any QPT prover $\{\sP^*_\secp\}_{\secp \in \bbN}$, \[\E_{r}\left[\bra{\psi^{\sP^*}_{\secp,r}}\Pi^{\sP^*}_{\secp,r,0} \Pi^{\sP^*}_{\secp,r,1} \Pi^{\sP^*}_{\secp,r,0} \ket{\psi^{\sP^*}_{\secp,r}}\right] = \negl(\secp).\]
\end{definition}

Now, consider running protocol $\Pi^\CCR$ with some fixed circuit $Q$ and input $x$, and suppose that $P$ is a predicate such that $P(Q(\cdot))$ is pseudo-deterministic. We define the verifier acceptance predicates as follows.

\begin{itemize}
    \item $\sV_{\secp,r,0}$ runs $\sV_\Ver^\CCR$ on $d = 0$.
    \item $\sV_{\secp,r,1}$ runs $\sV_\Ver^\CCR$ on $d = 1$ to obtain either $\bot$ or $\{q_t\}_{t \in [\secp]}$. In the latter case, it outputs $\top$ if $\Maj\left(\{P(q_t)\}_{t \in [\secp]}\right) = 1-P(Q(x))$ and $\bot$ otherwise.
\end{itemize}

Then, by \cite[Lemma 4.4]{TCC:Bartusek21}, which uses the soundness of $\Pi^\QV$ (\cref{impthm:QV}) and the soundness of the measurement protocol (\cite{SIAMCOMP:Mahadev22}), we have the following claim.

\begin{claim}\label{claim:comp-orthogonal}
For any $\{\sP^*_\secp\}_{\secp \in \bbN}$ attacking $\Pi^\CCR$ (Protocol in~\cref{fig:CVQC-single}), it holds that 
\[\E_{r}\left[\bra{\psi^{\sP^*}_{\secp,r}}\Pi^{\sP^*}_{\secp,r,0} \Pi^{\sP^*}_{\secp,r,1} \Pi^{\sP^*}_{\secp,r,0} \ket{\psi^{\sP^*}_{\secp,r}}\right] = \negl(\secp),\] where the verifier acceptance predicates $\sV_{\secp,r,0},\sV_{\secp,r,1}$ used to define $\Pi_{\secp,r,0}^{\sP^*}$ and $\Pi_{\secp,r,1}^{\sP^*}$ are as described above.
\end{claim}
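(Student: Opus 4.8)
\textbf{Proof plan for \cref{claim:comp-orthogonal}.} The plan is to invoke \cite[Lemma 4.4]{TCC:Bartusek21} essentially as a black box, after checking that the protocol $\Pi^\CCR$ in \cref{fig:CVQC-single} is precisely (an instance of) the commit-challenge-response protocol analyzed there, and that its two ingredients---the underlying $\text{QPIP}_1$ protocol $\Pi^\QV$ and Mahadev's measurement protocol---satisfy the hypotheses required by that lemma. So the first step is to spell out the correspondence: the ``commit'' phase of a generic commit-challenge-response protocol is implemented here by $\sV_\Gen^\CCR$ sending $\pp = \{\pk_j\}_j$ (with random coins $r$ generating both $\{(\pk_j,\sk_j)\}_j$ and $(h,S)$) followed by the prover running $\sP_\Com^\CCR$ and outputting the classical string $y = \{y_j\}_j$; the challenge phase is the single bit $d \gets \{0,1\}$; and the response phase is $\sP_\Prove^\CCR$ producing $z = \{b_j,z_j\}_j$, with the verifier applying $\sV_\Ver^\CCR$ on $(Q,x,\sparam,y,d,z)$. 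Matching the accept projectors $\sV_{\secp,r,0},\sV_{\secp,r,1}$ to the two branches of $\sV_\Ver^\CCR$ (the $d=0$ branch just checks $\TCF.\Ch$, while the $d=1$ branch inverts, checks $\TCF.\IsValid$, recovers $m$, and then applies $\sV_\Ver^\QV$ followed by the ``wrong output'' test $\Maj(\{P(q_t)\}_t) = 1-P(Q(x))$) is then a direct transcription of the definitions already given in the excerpt.

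The second step is to record why the two needed soundness facts hold for this particular choice of acceptance predicates. The $d=1$ branch folds in two things: (i) soundness of the measurement protocol of \cite{SIAMCOMP:Mahadev22}, which guarantees that for any efficient committing prover, the distribution over the bits $m$ that the verifier decodes from a committed state is (computationally) close to the distribution obtained by an honest measurement of an actual committed quantum state in the bases $h$; and (ii) soundness of $\Pi^\QV$ (\cref{impthm:QV}), which says that for \emph{any} prover state, conditioned on $\sV_\Ver^\QV$ accepting, the event $\Maj(\{P(q_t)\}_t) = 1-P(Q(x))$ occurs with negligible probability. Composing these---exactly as done in \cite[Lemma 4.4]{TCC:Bartusek21}---yields that the ``$\Pi^\CCR$ with the wrong-output $d=1$ predicate'' protocol has computationally orthogonal projectors in the sense of \cref{def:computationally-orthogonal-projectors}. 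I would state this composition at the level of ``this is the content of \cite[Lemma 4.4]{TCC:Bartusek21}, applied with the present acceptance predicates,'' rather than re-deriving it.

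The main (and essentially only) obstacle is a bookkeeping one: making sure that the acceptance predicates $\sV_{\secp,r,0}$ and $\sV_{\secp,r,1}$ defined just before the claim are \emph{literally} the ones for which \cite{TCC:Bartusek21} proves the computationally-orthogonal-projectors property, and in particular that the extra post-processing in $\sV_{\secp,r,1}$ (running $\sV_\Ver^\QV$ and then testing $\Maj(\{P(q_t)\}_t) = 1-P(Q(x))$) is accounted for there. If \cite{TCC:Bartusek21} states the lemma for a slightly different but equivalent formulation (e.g., with the majority-of-outputs condition built into a single ``bad event''), I would include a one-line remark explaining that these coincide because $\sV_\Ver^\CCR$ on $d=1$ produces exactly the samples $\{q_t\}_t = m[S]$ that $\Pi^\QV$ soundness reasons about. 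Modulo this translation, the proof is: ``Apply \cite[Lemma 4.4]{TCC:Bartusek21} to $\Pi^\CCR$ with verifier acceptance predicates $\sV_{\secp,r,0},\sV_{\secp,r,1}$ as above; its hypotheses are the soundness of $\Pi^\QV$ (\cref{impthm:QV}) and the soundness of \cite{SIAMCOMP:Mahadev22}'s measurement protocol, both of which hold here, giving the claimed negligible bound.''
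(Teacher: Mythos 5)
Your proposal matches the paper's argument exactly: the paper proves \cref{claim:comp-orthogonal} by a direct citation of \cite[Lemma 4.4]{TCC:Bartusek21} applied with the acceptance predicates $\sV_{\secp,r,0},\sV_{\secp,r,1}$ defined just before the claim, noting that that lemma rests on the soundness of $\Pi^\QV$ (\cref{impthm:QV}) and of Mahadev's measurement protocol \cite{SIAMCOMP:Mahadev22}. Your additional bookkeeping about the correspondence of phases and predicates is consistent with (and somewhat more explicit than) what the paper writes.
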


\noindent\underline{Step 2}. In this step, we will use the following imported theorem.


\begin{importedtheorem}[\cite{TCC:Bartusek21}, Theorem 3.1]\label{thm:parallel-repetition} Let $\epsilon > 0$ and $0 < \delta < 1$ be constants. Let $\Pi$ be a commit-challenge-response protocol with computationally orthogonal projectors, and where the verifier's $d=0$ acceptance predicate is publicly computable given the verifier's first message. Let $\Pi^\parl$ be the $\secp^{1+\epsilon}$ parallel repetition of $\Pi$, where the verifier's challenge string $T$ is sampled as a uniformly random $\secp^{1+\epsilon}$ bit string with Hamming weight $\secp$. Then for any QPT adversarial prover $\sP^*$ attacking $\Pi^\parl$, the probability that the verifier accepts all rounds $i$ such that $T_i = 0$ and $\geq \delta \cdot \secp$ rounds $i$ such that $T_i = 1$ is $\negl(\secp)$.
\end{importedtheorem}

Now, we define the protocol $\Pi^\parl = (\sV_\Gen^\parl, \sP_\Com^\parl, \sP_\Prove^\parl, \sV^\parl_\Ver)$ to be the $\secp^2$ parallel repetition of $\Pi^\CCR$, where the verifier's challenge string $T$ is sampled as a uniformly random $\secp^2$ bit string with Hamming weight $\secp$. Then, we can prove the following lemmas about $\Pi^\parl$.

\begin{lemma}[$\Pi^\parl$ analogue of \cref{lemma:soundness}]\label{lemma:soundness-4-msg}
For any family $\{Q_\secp,P_\secp\}_{\secp \in \bbN}$ such that $\{P_\secp \circ Q_\secp\}_{\secp \in \bbN}$ is pseudo-deterministic, sequence of inputs $\{x_\secp\}_{\secp \in \bbN}$, and QPT adversary $\{\sA_\secp\}_{\secp \in \bbN}$, it holds that 
    \[\Pr\left[\begin{array}{l}\sV_\Ver^\parl(Q,x,\sparam,y,T,z) = \{\{q_{i,t}\}_{t \in [\secp]}\}_{i:T_i =1} ~~ \wedge \\ \MM_\secp\left(\{\{P(q_{i,t})\}_{t \in [\secp]}\}_{i:T_i =1}\right) = 1-P(Q(x)) \end{array} : \begin{array}{r}(\pp,\sparam) \gets \sV^\parl_\Gen(1^\secp,Q) \\ y \gets \sA(\pp) \\ T \gets \{0,1\}^{\binom{\secp^2}{\secp}} \\ z \gets \sA(T) \end{array}\right] = \negl(\secp),\]
    
    where $\sA$ maintains an internal state, which we leave implicit above.
\end{lemma}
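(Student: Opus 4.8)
\textbf{Proof plan for Lemma~\ref{lemma:soundness-4-msg}.} The plan is to reduce soundness of the $\secp^2$-fold parallel repetition $\Pi^\parl$ to the parallel repetition theorem for commit-challenge-response protocols with computationally orthogonal projectors (\cref{thm:parallel-repetition}), using the ``computationally orthogonal projectors'' property of $\Pi^\CCR$ established in \cref{claim:comp-orthogonal} (Step~1). First I would observe that $\Pi^\CCR$ is a commit-challenge-response protocol in the sense required by \cref{thm:parallel-repetition}: its commit phase is the two-message exchange where the verifier sends $\pp$ and the prover replies with $y$, the challenge is the single bit $d$, and the response is $z$. Moreover the verifier's $d=0$ acceptance predicate (running $\sV_\Ver^\CCR$ on $d=0$, i.e.\ checking $\TCF.\Ch(\pk_j,b_j,z_j,y_j)$ for all $j$) is publicly computable given only $\pp = \{\pk_j\}_j$, which is the verifier's first message, so the hypothesis of \cref{thm:parallel-repetition} about public computability of the $d=0$ predicate is satisfied.

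Next I would instantiate the verifier acceptance predicates $\sV_{\secp,r,0}, \sV_{\secp,r,1}$ exactly as in the paragraph preceding \cref{claim:comp-orthogonal}: $\sV_{\secp,r,0}$ runs $\sV_\Ver^\CCR$ on $d=0$, and $\sV_{\secp,r,1}$ runs $\sV_\Ver^\CCR$ on $d=1$ and then additionally outputs $\top$ only if $\Maj(\{P(q_t)\}_t) = 1 - P(Q(x))$ — that is, it accepts only when the prover has successfully produced a \emph{wrong} output sample-set in that round. With these predicates, \cref{claim:comp-orthogonal} gives us precisely that $\Pi^\CCR$ has computationally orthogonal projectors, so \cref{thm:parallel-repetition} applies to $\Pi^\parl$ (taking $\epsilon = 1$ so that $\secp^{1+\epsilon} = \secp^2$, and picking a suitable constant $\delta$, e.g.\ $\delta = 1/2$, so that $\delta \secp = \secp/2$). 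The conclusion of the theorem is that no QPT prover can make the verifier accept all rounds $i$ with $T_i = 0$ \emph{and} accept at least $\delta\secp$ rounds $i$ with $T_i = 1$, except with negligible probability.

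Then I would translate this into the statement of the lemma. If the adversary $\sA$ causes $\sV_\Ver^\parl(Q,x,\sparam,y,T,z) = \{\{q_{i,t}\}_{t}\}_{i : T_i = 1}$ (rather than $\bot$), then in particular $\sV_\Ver^\CCR$ did not output $\bot$ in any round — so all $T_i = 0$ rounds were accepted by the standard $\sV_\Ver^\CCR$ check. If in addition $\MM_\secp(\{\{P(q_{i,t})\}_t\}_{i:T_i=1}) = 1 - P(Q(x))$, then by the definition of $\MM_\secp$ as a ``majority of majorities'', for at least half of the $T_i = 1$ rounds we have $\Maj(\{P(q_{i,t})\}_t) = 1 - P(Q(x))$; since there are exactly $\secp$ such rounds (Hamming weight of $T$), this is at least $\secp/2 = \delta\secp$ rounds where the round-predicate $\sV_{\secp,r,1}$ accepts. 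Hence the event in the lemma implies the ``bad'' event ruled out by \cref{thm:parallel-repetition}, giving the claimed $\negl(\secp)$ bound. I expect the main obstacle to be the bookkeeping in this last translation step — making sure the combinatorics of ``majority of majorities'' really does force at least $\delta\secp$ of the Hadamard rounds to be ``bad'' in the sense of the predicate $\sV_{\secp,r,1}$, and checking that the $d=0$ acceptance requirement in \cref{thm:parallel-repetition} (accept \emph{all} $T_i=0$ rounds) is exactly what $\sV_\Ver^\parl \neq \bot$ guarantees — rather than anything conceptually deep, since \cref{claim:comp-orthogonal} and \cref{thm:parallel-repetition} do the real work.
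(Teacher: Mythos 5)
Your proposal is correct and follows exactly the paper's argument: instantiate the round acceptance predicates as in the paragraph before \cref{claim:comp-orthogonal}, invoke that claim for computational orthogonality, and apply \cref{thm:parallel-repetition} with $\epsilon=1$ and $\delta=1/2$, noting that $\sV_\Ver^\parl\neq\bot$ forces acceptance of all $T_i=0$ rounds and that $\MM_\secp$ outputting the wrong bit forces at least $\secp/2$ of the $T_i=1$ rounds to satisfy the ``wrong-output'' predicate. The only difference is that you spell out the public-computability hypothesis and the majority-of-majorities bookkeeping, which the paper leaves implicit.
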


\begin{proof}
We have to rule out a prover that makes the verifier of $\Pi^\CCR$ accept each of the $\secp^2-\secp$ rounds where $T_i = 0$, and, for a majority of the rounds $i$ where $T_i = 1$, accepts and outputs $\{q_{i,t}\}_{t \in [\secp]}$ such that $\Maj\left(\{P(q_{i,t})\}_{t \in [\secp]}\right) = 1 - P(Q(x))$. This is directly ruled out by \cref{claim:comp-orthogonal} and \cref{thm:parallel-repetition} with $\epsilon = 1$ and $\delta = 1/2$.
\end{proof}

\begin{lemma}[$\Pi^\parl$ analogue of \cref{lemma:Q-in}]\label{lemma:Q-in-4-msg}
For any family $\{Q_\secp,P_\secp\}_{\secp \in \bbN}$ such that $\{P_\secp \circ Q_\secp\}_{\secp \in \bbN}$ is pseudo-deterministic, sequence of inputs $\{x_\secp\}_{\secp \in \bbN}$, and QPT adversary $\{\sA_\secp\}_{\secp \in \bbN}$, it holds that 
        \[\Pr\left[\begin{array}{l}\sV^\parl_\Ver(Q,x,\sparam,y,T,z) \neq \bot ~~ \wedge \\ w \notin D_{\inside}[P,P(Q(x))]\end{array} : \begin{array}{r}(\pp,\sparam) \gets \sV^\parl_\Gen(1^\secp,Q) \\ y \gets \sA(\pp) \\ T \gets \{0,1\}^{\binom{\secp^2}{\secp}} \\ z \gets \sA(T) \\  w \coloneqq \mathsf{TestRoundOutputs}[\sparam](y,T,z)\end{array}\right] = \negl(\secp),\] where $\sA$ maintains an internal state, which we leave implicit above, and where $\TestOut$ is defined as in \cref{subsec:classical-verification}, except that string $T$ is explicitly given rather than being computed by a random oracle $H$.
\end{lemma}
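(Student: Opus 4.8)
\textbf{Proof proposal for \cref{lemma:Q-in-4-msg}.}

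The plan is to reduce the statement about $\Pi^\parl$ to the single-instance ``commit-challenge-response'' guarantee established in Step 1 (\cref{claim:comp-orthogonal}) via the parallel repetition theorem (\cref{thm:parallel-repetition}), in direct analogy with the proof of \cref{lemma:soundness-4-msg}. The key observation is that the event $w \notin D_{\inside}[P, P(Q(x))]$ can be rephrased as: among the \emph{test} rounds $i$ with $T_i = 0$, at least a $1/4$ fraction have the property that, parsing $w_i$ as $(w_{i,1},\dots,w_{i,\secp})$, it holds that $\Maj(\{P(w_{i,t})\}_t) \neq P(Q(x))$. So conditioned on $\sV^\parl_\Ver \neq \bot$ (which in particular means the verifier accepts on all $T_i = 0$ rounds), we want to rule out a prover that simultaneously (a) makes the $\Pi^\CCR$ verifier accept on every round $i$ with $T_i = 0$, and (b) produces, for a $\geq 1/4$ fraction of those accepting test rounds, test-round outputs whose per-round majority-of-$P$ disagrees with $P(Q(x))$.

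The main technical point — and the place where this differs superficially from \cref{lemma:soundness-4-msg} — is that here the ``bad'' behavior occurs on the \emph{test} ($d=0$) rounds rather than the \emph{Hadamard} ($d=1$) rounds, and we need a version of \cref{thm:parallel-repetition} that applies to it. First I would note that $T$ is a symmetric object: a uniformly random $\secp^2$-bit string of Hamming weight $\secp$ is distributed identically to its bitwise complement up to relabeling which bits are ``test'' versus ``challenge.'' More concretely, I would set up an auxiliary commit-challenge-response protocol $\widetilde{\Pi}^\CCR$ obtained from $\Pi^\CCR$ by swapping the roles of the two challenges — i.e. in $\widetilde{\Pi}^\CCR$ the ``$d=0$'' branch is $\Pi^\CCR$'s $d=1$ branch and the ``$d=1$'' branch is $\Pi^\CCR$'s $d=0$ branch, but with the verifier's $\widetilde{d}=1$-acceptance predicate now declaring $\top$ iff the $\Pi^\CCR$ test-round check passes \emph{and} $\Maj(\{P(q_t)\}_t) \neq P(Q(x))$ for the outputs decoded in that round. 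The computationally-orthogonal-projectors property of $\widetilde{\Pi}^\CCR$ is exactly \cref{claim:comp-orthogonal} (the projectors $\Pi^{\sP^*}_{\secp,r,0}$ and $\Pi^{\sP^*}_{\secp,r,1}$ are symmetric in the two challenges, and the modified $\widetilde d = 1$ predicate is a sub-predicate of $\Pi^\CCR$'s $d=0$ predicate, so the product only shrinks). However, \cref{thm:parallel-repetition} requires the $d=0$ acceptance predicate to be \emph{publicly computable given the verifier's first message}; in $\widetilde{\Pi}^\CCR$ the $\widetilde d = 0$ branch is $\Pi^\CCR$'s Hadamard branch, whose acceptance depends on the secret trapdoors $\{\sk_j\}$. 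I expect this to be the main obstacle: I will need to either (i) invoke the statement of \cref{thm:parallel-repetition} in the form actually proven in \cite{TCC:Bartusek21} (which, inspecting its proof, should go through as long as \emph{one} of the two branches has a publicly-checkable acceptance predicate — and $\Pi^\CCR$'s $d=0$ branch, i.e. $\widetilde{\Pi}^\CCR$'s $\widetilde d = 1$ branch, is publicly checkable via $\TCF.\Ch$ given the public keys $\pp$), or (ii) restructure the reduction so that the publicly-checkable branch plays the role that the theorem requires. I would pursue option (i), citing the relevant internal lemma of \cite{TCC:Bartusek21}, since the Fiat–Shamir compilation in Step 3 anyway relies on the $d=0$ branch (the $y$-check) being public.

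Granting this, the argument concludes as follows. Apply \cref{thm:parallel-repetition} to $\widetilde{\Pi}^\parl$ (the $\secp^2$-fold repetition with Hamming-weight-$\secp$ challenge string $\widetilde T$, which is $T$ with the two challenge values relabeled), with $\epsilon = 1$ and $\delta = 1/4$: no QPT prover can make the $\widetilde{\Pi}^\CCR$ verifier accept all $\widetilde T_i = 0$ rounds \emph{and} at least $\delta \secp = \secp/4$ of the $\widetilde T_i = 1$ rounds except with negligible probability. Translating back through the relabeling, this says precisely that no QPT prover attacking $\Pi^\parl$ can, except with negligible probability, cause $\sV^\parl_\Ver \neq \bot$ (which forces acceptance on all Hadamard rounds, and in particular the weaker requirement that the $\widetilde d = 0$ check passes on all those rounds) while simultaneously having, for $\geq \secp/4$ of the $\secp^2 - \secp$ test rounds, per-round $\Maj(\{P(w_{i,t})\}_t) \neq P(Q(x))$. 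Since $w \notin D_{\inside}[P, P(Q(x))]$ is (by the definition of $D_\inside$ in \cref{subsec:classical-verification}, noting that a $\geq 3/4$ fraction being ``in'' is the complement of a $> 1/4$ fraction being ``out,'' and that $\secp/4$ of $\secp^2-\secp$ total rounds is within a constant factor of a $1/4$ fraction — I would absorb the $\secp/(\secp^2-\secp) = 1/(\secp-1)$ slack into the choice of $\delta$, taking $\delta$ slightly below $1/4$) exactly this event, we conclude the probability in the lemma statement is $\negl(\secp)$. The final write-up should mirror the one-paragraph structure of the proof of \cref{lemma:soundness-4-msg}, with the extra paragraph handling the relabeling and the public-checkability caveat.
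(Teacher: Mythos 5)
There is a genuine gap here, and it sits exactly where you flagged your own discomfort. Your plan hinges on applying \cref{thm:parallel-repetition} to a challenge-swapped protocol, but the relabeled challenge string $\widetilde{T} = \bar{T}$ has Hamming weight $\secp^2 - \secp$, whereas the imported theorem is stated (and proven) only for challenge strings of Hamming weight $\secp$ out of $\secp^{1+\epsilon}$; there is no choice of $\epsilon$ that makes the swapped protocol fit its hypotheses. On top of that, the requirement that the $d=0$ predicate be \emph{publicly computable} is load-bearing in the proof of that theorem (the reduction must be able to check, without the trapdoors, that the prover succeeds on all the challenge-$0$ rounds), and in your swapped protocol the challenge-$0$ branch is the Hadamard branch, which needs $\{\sk_j\}$. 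Your ``option (i)'' — asserting that the theorem should go through if merely \emph{one} branch is publicly checkable — is an unverified claim about the internals of a theorem you are importing, and it is the entire content of the argument; as written, the proof does not close.

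The paper's proof avoids needing any swapped or generalized repetition theorem. The key observation you are missing is that for every round $i$ — test and Hadamard alike — the sample-position outputs $\{q_{i,t}\}_t$ are already \emph{determined by the prover's first message} $y_i$ together with $\sparam$, because at positions $j \in S_i$ we have $h_{i,j}=0$ and the TCF is injective, so each accepted $y_{i,j}$ has a unique preimage. One can therefore define, from $y$ alone, the fraction $a$ of good rounds over all $i \in [\secp^2]$ and the fraction $b$ over $i : T_i = 1$. The parallel repetition theorem is then applied in its original orientation with $\delta = 1/5$ (challenge-$1$ predicate ``accept and wrong per-round majority'') to get $\Pr[\text{accept} \wedge b < 4/5] = \negl(\secp)$, and since $T$ is a uniformly random weight-$\secp$ subset chosen \emph{after} $y$ is fixed, Hoeffding's inequality for sampling without replacement gives $\Pr[b - a > 1/30] = \negl(\secp)$. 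Combining these forces $a > 3/4 + 1/\secp$, hence $w \in D_{\inside}[P,P(Q(x))]$, whenever the verifier accepts. If you want to salvage your write-up, replace the challenge-swap reduction with this determinism-plus-concentration step.
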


\begin{proof}
First, we make the following observation. For every $i \in [\secp^2]$, the strings $\{q_{i,t}\}_{t \in [\secp]}$ that the verifier would output conditioned on accepting and on $T_i = 1$ are already determined by the prover's first message $y_i \coloneqq (y_{i,1},\cdots,y_{i,\ell})$ and the secret parameters $\sparam$. Indeed, recall from the description of $\Pi^\QV$ that the bits in $\{q_{i,t}\}_{t \in [\secp]}$ are computed from indices $j \in [\ell]$ where the basis $h_{i,j} = 0$ (that is, they are the result of standard basis measurements). Moreover, when $h_{i,j} = 0$, $\pk_{i,j}$ defines an injective function, which follows from \cref{def:clawfree}, correctness properties (a) and (c). Thus, each string $y_{i,j}$ either has one or zero pre-images. If it has zero, the verifier would never accept when $T_i = 1$, and if it has one, the verifier would only accept the first bit $b_{i,j}$ of the pre-image.

So, we can define $\{\{q_{i,t}\}_{t \in [\secp]}\}_{i \in [\secp^2]}$ based on the prover's first message $\{y_i\}_{i \in [\secp^2]}$. Then,
\begin{itemize}
    \item Let $a$ be the fraction of $\{q_{i,t}\}_{t \in [\secp]}$ such that $\Maj\left(\{P(q_{i,t})\}_{t \in [\secp]}\right) = P(Q(x))$ over $i \in [\secp^2]$.
    \item Let $b$ be the fraction of $\{q_{i,t}\}_{t \in [\secp]}$ such that $\Maj\left(\{P(q_{i,t})\}_{t \in [\secp]}\right) = P(Q(x))$ over $i : T_i = 1$.
\end{itemize}

By the definition of $D_{\mathsf{in}}[P,P(Q(x))]$, 

\[w \notin D_{\mathsf{in}}[P,P(Q(x))] \implies a \leq \frac{3}{4} + \frac{1}{\secp}.\]
Moreover, by \cref{claim:comp-orthogonal} and \cref{thm:parallel-repetition} with $\epsilon = 1$ and $\delta = 1/5$,

\[\Pr\left[\sV^\parl_\Ver(Q,x,\sparam,y,T,z) \neq \bot \wedge b < \frac{4}{5}\right] = \negl(\secp).\]

Thus, the proof is completed by showing that 

\[\Pr\left[b-a \geq \frac{4}{5} - \left(\frac{3}{4} + \frac{1}{\secp}\right) > \frac{1}{30}\right] \leq e^{-2(\secp/30)^2}  = \negl(\secp),\] where the expression inside the probability holds for large enough $\secp$, and the inequality is Hoeffding's inequality (using the case where the random variables are sampled without replacement).

\end{proof}

\begin{lemma}[$\Pi^\parl$ analogue of \cref{lemma:Q-out}]\label{lemma:Q-out-4-msg}
For any family $\{Q_\secp,P_\secp\}_{\secp \in \bbN}$ such that $\{P_\secp \circ Q_\secp\}_{\secp \in \bbN}$ is pseudo-deterministic, sequence of inputs $\{x_\secp\}_{\secp \in \bbN}$, and QPT adversary $\{\sA_\secp\}_{\secp \in \bbN}$, it holds that 
        \[\Pr\left[\begin{array}{l} \sV_\Ver^\parl(Q,x,\sparam,y,T,z) = \{\{q_{i,t}\}_{t \in [\secp]}\}_{i : T_i = 1} ~~ \wedge \\ \MM_\secp\left(\{\{P(q_{i,t})\}_{t \in [\secp]}\}_{i : T_i = 1}\right) = 1-P(Q(x)) ~~ \wedge \\ w \notin D_\out[P,P(Q(x))] \end{array} : \begin{array}{r}(\pp,\sparam) \gets \sV^\parl_\Gen(1^\secp,Q) \\ y \gets \sA(\pp,\sparam) \\ T \gets \{0,1\}^{\binom{\secp^2}{\secp}} \\ z \gets \sA(T) \\ w \coloneqq \mathsf{TestRoundOutputs}[\sparam](y,T,z)\end{array}\right] = \negl(\secp),\] where $\sA$ maintains an internal state, which we leave implicit above, and where $\TestOut$ is defined as in \cref{subsec:classical-verification}, except that string $T$ is explicitly given rather than being computed by a random oracle $H$.
\end{lemma}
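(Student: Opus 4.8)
\textbf{Proof strategy for \cref{lemma:Q-out-4-msg}.} The plan is to mirror the structure of the proof of \cref{lemma:Q-in-4-msg}, but now exploiting that the adversary is only penalized (by the extra clause $w \notin D_\out[P,P(Q(x))]$) when it \emph{also} succeeds in making $\sV_\Ver^\parl$ output samples whose combined value disagrees with $P(Q(x))$. First I would recall the same structural observation used in \cref{lemma:Q-in-4-msg}: for every round $i$, the strings $\{q_{i,t}\}_{t\in[\secp]}$ that the verifier would output conditioned on accepting and on $T_i = 1$ are already determined by the prover's first message $y_i$ together with $\sparam$, since the standard-basis positions ($h_{i,j} = 0$) are controlled by injective-mode keys $\pk_{i,j}$ by \cref{def:clawfree} correctness properties (a) and (c). Thus I can again define the full collection $\{\{q_{i,t}\}_{t\in[\secp]}\}_{i\in[\secp^2]}$ from $y$ alone, and define $a$ to be the fraction of rounds $i \in [\secp^2]$ with $\Maj(\{P(q_{i,t})\}_t) = P(Q(x))$, and $b$ to be that same fraction restricted to $i : T_i = 1$.

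The key logical step is translating the three events in the probability into constraints on $a$ and $b$. The clause $w \notin D_\out[P,P(Q(x))]$ means, by the definition of $D_\out$, that \emph{fewer than} $1/3$ of the test-round outputs are "dishonest", i.e.\ among $i : T_i = 0$ the fraction with $\Maj(\{P(q_{i,t})\}_t) = P(Q(x))$ exceeds $2/3$; combined with the observation that the test-round outputs are sampled from the same underlying pool as the full collection (and using that $T$ is sampled uniformly among weight-$\secp$ strings), a Hoeffding bound for sampling without replacement shows that with all but negligible probability $a > 2/3 - 1/\secp$. On the other hand, the event $\MM_\secp(\{\{P(q_{i,t})\}_t\}_{i:T_i=1}) = 1 - P(Q(x))$ forces $b \leq 1/2$: the outer majority disagrees with $P(Q(x))$ only if at least half of the Hadamard-round outputs have $\Maj$ disagreeing. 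Now I would apply \cref{claim:comp-orthogonal} together with \cref{thm:parallel-repetition}: the verifier's acceptance predicate $\sV_{\secp,r,1}$ can be instrumented to accept precisely when the round's output disagrees with $P(Q(x))$, so a prover that makes $\sV_\Ver^\CCR$ accept all $T_i = 0$ rounds and at least, say, $\delta\secp$ of the $T_i = 1$ rounds with disagreeing output contradicts the theorem for suitable $\delta$; since the event in the lemma with $b \leq 1/2$ requires at least $\secp/2$ such rounds, this rules it out up to $\negl(\secp)$. Finally I would combine: conditioned on accepting, $b$ is close to the true disagreeing-fraction which is bounded below by what \cref{thm:parallel-repetition} forces, while $a > 2/3 - 1/\secp$; but the gap between $a$ and $b$ over a uniformly random size-$\secp$ subsample is concentrated, so $|a - b| \leq 1/30$ except with negligible probability, yielding a contradiction with the required separation.

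\textbf{Main obstacle.} The subtle point — and where I expect to spend the most care — is the direction of the Hoeffding argument and exactly which acceptance predicate to feed into \cref{thm:parallel-repetition}. In \cref{lemma:Q-in-4-msg} the predicate counted "output agrees with $P(Q(x))$" and we needed a \emph{lower} bound on the Hadamard-round agreeing fraction; here we need to count "output disagrees" and we need a lower bound on the Hadamard-round \emph{disagreeing} fraction, which is not symmetric because the verifier is given $\sparam$ (as noted after \cref{lemma:Q-out}), so this does not follow trivially from plain soundness. I would handle this by carefully redefining $\sV_{\secp,r,1}$ to output $\top$ exactly when $\Maj(\{P(q_t)\}_t) = 1 - P(Q(x))$ — which is exactly the instrumented predicate already used in the statement of \cref{claim:comp-orthogonal} — and checking that this predicate is still publicly computable from $\sparam$ so that \cref{thm:parallel-repetition} applies even to a prover given $\sparam$. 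The remaining bookkeeping — constants in the Hoeffding bound, the $1/\secp$ slack from the $3/4$ versus $2/3$ thresholds, and verifying the numeric gap is strictly positive for large $\secp$ — is routine and parallels the corresponding lines in the proof of \cref{lemma:Q-in-4-msg}.

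\textbf{Step 3 (descent to $\Pi^\CV$).} Once \cref{lemma:soundness-4-msg}, \cref{lemma:Q-in-4-msg}, and \cref{lemma:Q-out-4-msg} are established for $\Pi^\parl$, I would obtain \cref{lemma:soundness}, \cref{lemma:Q-in}, and \cref{lemma:Q-out} for $\Pi^\CV$ by applying Fiat--Shamir via the measure-and-reprogram theorem (\cref{thm:measure-and-reprogram}): an adversary for $\Pi^\CV$ that makes $q = \poly(\secp)$ queries to $H$ and outputs a proof $\pi$ violating one of the three properties yields, through $\Sim[\sA]$, a two-stage attacker on $\Pi^\parl$ succeeding with probability at least a $(2q+1)^{-2}$ fraction, which must therefore be $\negl(\secp)$; since $(2q+1)^{-2}$ is non-negligible, the original $\Pi^\CV$ success probability is negligible. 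For \cref{lemma:Q-out} specifically, the predicate $V$ plugged into \cref{thm:measure-and-reprogram} is the conjunction of "$\sV_\Ver^\CV$ accepts", "$\MM_\secp$ of the combined outputs equals $1 - P(Q(x))$", and "$w \notin D_\out$", and the adversary is additionally given $\sparam$ — which is permitted because \cref{lemma:Q-out-4-msg} already allows the $\Pi^\parl$ prover to see $\sparam$.
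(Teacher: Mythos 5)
Your proposal contains the correct statistical skeleton, but the step where you invoke \cref{claim:comp-orthogonal} and \cref{thm:parallel-repetition} is both invalid and unnecessary, and since you single it out as the main technical hurdle to be overcome, this is a genuine gap. \cref{thm:parallel-repetition} is a \emph{computational} soundness statement about QPT provers who see only the public parameters; its proof goes through \cref{claim:comp-orthogonal}, which rests on key indistinguishability and the adaptive hardcore bit of the TCF. In \cref{lemma:Q-out-4-msg} the adversary receives $\sparam$ --- including every $\sk_{i,j}$ and the basis string $h$ --- so the computational orthogonality of the projectors simply fails: a prover who knows which positions are in Hadamard mode can pass all test rounds while arranging arbitrary disagreeing Hadamard-round outputs. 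The condition you propose to verify (that the instrumented $d=1$ predicate is ``publicly computable from $\sparam$'') is not the relevant hypothesis; the public-computability requirement in \cref{thm:parallel-repetition} concerns the $d=0$ predicate given the verifier's \emph{first message}, and nothing in the theorem extends it to $\sparam$-aware provers. So the claim that ``$b$ is \ldots bounded below by what \cref{thm:parallel-repetition} forces'' cannot be justified.

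The fix is to delete that step entirely, which is what the paper does. No lower bound on the disagreeing Hadamard-round fraction needs to be \emph{derived}: the event $\MM_\secp(\{\{P(q_{i,t})\}_{t\in[\secp]}\}_{i:T_i=1}) = 1-P(Q(x))$ already \emph{is} the statement that at least half of the Hadamard rounds have disagreeing majority, i.e.\ the subsample disagreeing fraction is at least $1/2$. Meanwhile $w \notin D_\out[P,P(Q(x))]$ says the population disagreeing fraction over all $\secp^2$ rounds is at most $1/3 + 1/\secp$ (the $1/\secp$ slack absorbing the $\secp$ zeroed-out Hadamard entries of $w$). Since the per-round values $\{q_{i,t}\}$ are fixed by $y$ and $\sparam$ before $T$ is drawn, and $T$ is a uniformly random weight-$\secp$ subset, Hoeffding for sampling without replacement bounds the probability that the subsample fraction exceeds the population fraction by more than $1/10$ by $e^{-2(\secp/10)^2}$. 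That single concentration bound finishes the proof; the paper explicitly remarks that the argument is entirely statistical and therefore survives giving $\sA$ the secret parameters. Your closing Hoeffding step is essentially this argument, so the repair is to make it the whole proof rather than a supplement to a computational step that cannot work.
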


\begin{proof}
We again define $\{\{q_{i,t}\}_{t \in [\secp]}\}_{i \in [\secp^2]}$ based on the prover's first message $\{y_i\}_{i \in [\secp^2]}$, and 

\begin{itemize}
    \item Let $a$ be the fraction of $\{q_{i,t}\}_{t \in [\secp]}$ such that $\Maj\left(\{P(q_{i,t})\}_{t \in [\secp]}\right) = 1-P(Q(x))$ over $i \in [\secp^2]$.
    \item Let $b$ be the fraction of $\{q_{i,t}\}_{t \in [\secp]}$ such that $\Maj\left(\{P(q_{i,t})\}_{t \in [\secp]}\right) = 1-P(Q(x))$ over $i : T_i = 1$.
\end{itemize}

By the definition of $D_{\out}[P,P(Q(x))]$, \[w \notin D_{\out}[P,P(Q(x))] \implies a \leq \frac{1}{3}+\frac{1}{\secp}.\]
Thus, the proof is completed by showing that 
\[\Pr\left[b - a \geq \frac{1}{2} - \left(\frac{1}{3} + \frac{1}{\secp}\right) > \frac{1}{10}\right] \leq e^{-2(\secp/10)^2} = \negl(\secp),\]
which again follows from Hoeffding's inequality. Note that this argument is entirely statistical, and holds even if $\sA_\secp$ has $\sparam$. 
\end{proof}

\noindent\underline{Step 3}. Note that the protocol $\Pi^\CV$ is exactly Fiat-Shamir applied to $\Pi^\parl$. That is, take $\Pi^\parl$ and let the verifier's challenge $T$ be computed by applying a random oracle $H$ to the prover's first message $y$. This results in exactly the protocol $\Pi^\CV$, where we have re-defined the prover operations $(\sP_\Com^\parl,\sP_\Prove^\parl)$ as $(\sP_\Prep^\CV,\sP_\Prove^\CV,\sP_\Meas^\CV)$. Then, straightforward applications of Measure-and-Reprogram (\cref{thm:measure-and-reprogram}) show that \cref{lemma:soundness-4-msg}, \cref{lemma:Q-in-4-msg}, and \cref{lemma:Q-out-4-msg} imply \cref{lemma:soundness}, \cref{lemma:Q-in}, and \cref{lemma:Q-out} respectively.

In more detail, suppose that \cref{lemma:soundness} is false, and fix $P,Q,x$, and an adversary $\sA$ that breaks that claim. Define a predicate $V$ that takes as input $y$, $H(y)$, the rest of the transcript of the protocol, and the verifier's secret parameters $\sparam$, and outputs whether \[\sV^\CV_\Ver(Q,x,\sparam,\pi) = \{\{q_{i,t}\}_{t \in [\secp]}\}_{i : T_i = 1} ~~ \wedge ~~ \MM_\secp(\{\{P(q_{i,t})\}_{t \in [\secp]}\}_{i : T_i = 1}) = 1-P(Q(x)) .\]

Define adversary $\sB^H$ to run an interaction between $\sA$ and the verifier $\sV^\CV$, forwarding random oracles calls to an external oracle $H$, and output $y$ along with auxiliary information $\mathsf{aux}$ that includes the rest of the transcript and $\sparam$. Then we have that  

\[\Pr\left[V(y,H(y),\mathsf{aux}) = 1 : (y,\mathsf{aux}) \gets \sB^H\right] = \nonnegl(\secp).\]

Since $\sB$ makes $\poly(\secp)$ queries to $H$, \cref{thm:measure-and-reprogram} implies that there exists a simulator $\Sim$ such that 

\[\Pr\left[V(y,T,\mathsf{aux}) = 1: \begin{array}{r} (y,\state) \gets \Sim[\sB] \\ T \gets \{0,1\}^{\binom{\secp^2}{\secp}} \\ \mathsf{aux} \gets \Sim[\sB](T,\state) \end{array}\right] = \nonnegl(\secp).\]

Moreover, by definition (\cref{thm:measure-and-reprogram}), $\Sim[\sB]$ runs $\sB$ honestly except that it simulates $H$ and measures one of $\sB$'s queries to $H$. Thus, $\Sim[\sB]$ can be used as an adversarial prover interacting in $\Pi^{\parl}$, where $y$ is sent to the verifier as the prover's first message, and $T$ is sampled and given in response. Thus, $\Sim[\sB]$ can be used to violate \cref{lemma:soundness-4-msg}.

Finally, the fact that \cref{lemma:Q-in-4-msg} implies \cref{lemma:Q-in} and \cref{lemma:Q-out-4-msg} implies \cref{lemma:Q-out} can be shown in exactly the same way, by defining the appropriate predicate $V$. This completes the proof.

\end{proof}

\end{document}